\documentclass[11pt]{article}

\usepackage[margin=1in]{geometry}
\usepackage{amsmath,amssymb,amsthm,mathtools}
\usepackage{booktabs,tabularx,array,ragged2e,float}
\usepackage{xcolor}
\usepackage{microtype}
\usepackage{hyperref}
\usepackage{tikz}
\usetikzlibrary{arrows.meta}

\hypersetup{
plainpages = false,
bookmarksopen = true,
colorlinks = true,
citecolor = purple,
linkcolor = teal,
urlcolor  = brown,
}
\usepackage{doi}

\allowdisplaybreaks
\newcommand{\C}{\mathbb C}
\newcommand{\D}{\mathrm D}
\newcommand{\Sym}{\operatorname{Sym}}
\newcommand{\Tr}{\operatorname{Tr}}
\newcommand{\conv}{\operatorname{conv}}
\newcommand{\SEP}{\operatorname{SEP}}
\newcommand{\SEPH}{\operatorname{SEP}_{\mathrm H}}
\newcommand{\SEPP}{\operatorname{SEP}_{\mathrm P}}
\newcommand{\BOS}{\operatorname{BOS}}
\newcommand{\EXCH}{\operatorname{EXCH}}
\newcommand{\EXT}{\operatorname{EXT}}
\newcommand{\BEXT}{\operatorname{BEXT}}
\newcommand{\Dtr}{D_{\mathrm{tr}}}
\newcommand{\DHS}{D_{\mathrm{HS}}}
\newcommand{\DTV}{D_{\mathrm{TV}}}
\newcommand{\DHtr}{D_{\mathrm H}^{\mathrm{tr}}}
\newcommand{\DHHS}{D_{\mathrm H}^{\mathrm{HS}}}
\newcommand{\cH}{\mathcal H}
\newcommand{\cX}{\mathcal X}
\newcommand{\cY}{\mathcal Y}

\newcommand{\Lin}{\mathrm L}
\newcommand{\Herm}{\operatorname{Herm}}
\newcommand{\ket}[1]{\lvert #1\rangle}
\newcommand{\bra}[1]{\langle #1\rvert}
\newcommand{\braket}[2]{\langle #1,#2\rangle}
\newcommand{\ketbra}[1]{\lvert #1\rangle\langle #1\rvert}
\newcommand{\norm}[1]{\left\lVert #1\right\rVert}
\newcommand{\abs}[1]{\left\lvert #1\right\rvert}
\newcommand{\ot}{\otimes}
\newcommand{\Mat}{\operatorname{Mat}}

\DeclareMathOperator{\rank}{rank}
\renewcommand{\epsilon}{\varepsilon}

\theoremstyle{plain}
\newtheorem{theorem}{Theorem}[section]
\newtheorem{introtheorem}[theorem]{Theorem}
\newtheorem{introcorollary}[theorem]{Corollary}
\newtheorem{proposition}[theorem]{Proposition}
\newtheorem{lemma}[theorem]{Lemma}
\newtheorem{fact}[theorem]{Fact}
\newtheorem{corollary}[theorem]{Corollary}
\theoremstyle{definition}

\theoremstyle{remark}
\newtheorem{remark}[theorem]{Remark}
\numberwithin{equation}{section}

\newcommand{\proofparagraph}[1]{\par\medskip\noindent\textbf{#1}\enspace}

\title{Optimal Quantum de Finetti Theorems via Argmax Rounding}
\author{
Fernando Granha Jeronimo%
\thanks{University of Illinois Urbana--Champaign.
Email: \url{granha@illinois.edu}}
\and
Pei Wu%
\thanks{The Pennsylvania State University.
Email: \url{pei.wu@psu.edu}}
\and
Haochen Xu%
\thanks{The Pennsylvania State University.
Email: \url{hpx5065@psu.edu}}
}
\date{}

\begin{document}
\maketitle

\begin{abstract}
We prove optimal finite quantum de Finetti upper bounds.  Given a bosonic state
$\rho_N\in\D(\Sym^N(\C^d))$, there is a probability measure $\nu$ on
the unit sphere such that
\[
 \left\|
  \rho_N^{(2)}-\int_{\norm{u}=1}\ketbra{u}^{\ot2}\,d\nu(u)
 \right\|_1
 \le \frac{\sqrt{d-1}}{N-1}.
\]
By purification, the bosonic theorem also gives the optimal $O(d/N)$
upper bound for arbitrary exchangeable states.  These results settle
the dimension dependence left open by Christandl, K\"onig,
Mitchison, and Renner (CMP 2007).  The proof casts de Finetti approximation
as sum-of-squares rounding and applies the argmax method of Jeronimo, Wu,
and Xu (manuscript 2026).

Among the consequences of the optimal bounds, we record three advances
on long-standing problems in quantum complexity that have resisted
progress for decades.
\begin{enumerate}
\item We refute Watrous's disentangler conjecture.  A disentangler is a
quantum channel whose outputs are all close to separable states and
whose image approximates every separable state.  For every fixed
$\epsilon\in(0,1)$, we construct such a channel with exact coverage,
that is, an $(\epsilon,0)$-disentangler
$\D(\C^D)\to\D(\C^d\ot\C^d)$ with
\[
 D=\exp\bigl(O_\epsilon(\sqrt d\log d)\bigr)=\exp(o(d)).
\]
The conjecture, recorded by Aaronson, Beigi, Drucker, Fefferman, and Shor
(CCC 2008), predicted instead that every constant-error disentangler must
have $D=\exp(\Omega(d))$.

\item We give the first algorithm for general explicit Best Separable
State with arbitrary completeness and soundness parameters whose running
time is subexponential in the local dimension: at every fixed accuracy,
it runs in time $\exp(\widetilde O(\sqrt d))$.  This removes the
perfect-completeness assumption from the earlier algorithm of Barak,
Kothari, and Steurer (STOC 2017), which achieved such a running time
only in that case.

\item We give the first algorithm for trace-norm separability testing of
an explicitly given state whose running time is subexponential in the
local dimension: at every fixed additive accuracy, it runs in time
$\exp(\widetilde O(\sqrt d))$.
Computational separability has been studied since at least the work of
Gurvits (STOC 2003), yet no such algorithm was previously known for
trace-norm testing.

\end{enumerate}

We also prove the first dimension-free bosonic quantum de Finetti theorem
in Hilbert--Schmidt norm, with upper bound $O(N^{-1/2})$.
This dimension-free result follows from our
optimal trace-norm theorem by spectral truncation and recovers the
dimension-free classical phenomenon of Diaconis and
Freedman (\emph{The Annals of Probability} 1980), which is impossible in
quantum trace norm.
\end{abstract}

\newpage
\setcounter{tocdepth}{2}
\tableofcontents
\newpage

\section{Introduction and main results}
\label{sec:intro}

\subsection{Background}

The quantum de Finetti theorem is one of the fundamental structural
results in quantum information.  It gives a quantitative expression of
the monogamy of entanglement: highly symmetric quantum systems cannot
share much entanglement among many subsystems.  The theorem captures
this principle by showing that the marginal on any fixed number of
sites approaches a mixture of tensor-power states as the total number
of sites grows.

The standard finite bounds make this statement quantitative.  Let
$\rho_N\in\D((\C^d)^{\ot N})$ be permutation invariant, and write
$\rho_N^{(t)}$ for its $t$-site marginal.  The quantum de Finetti
theorem of Christandl, K\"onig, Mitchison, and
Renner~\cite[Theorems~II.7 and~II.8]{CKMR07} states that, for every
$1\le t\le N$, there is a
probability measure $\mu$ on $\D(\C^d)$ such that
\[
 \frac12
 \left\|
  \rho_N^{(t)}
  -
  \int_{\D(\C^d)}\sigma^{\ot t}\,d\mu(\sigma)
 \right\|_1
 \le
 \frac{2d^2t}{N}.
\]
If $\rho_N$ is supported on $\Sym^N(\C^d)$, there is instead a
probability measure $\nu$ on the unit sphere of $\C^d$ such that
\[
 \frac12
 \left\|
  \rho_N^{(t)}
  -
  \int_{\norm{u}=1}\ketbra{u}^{\ot t}\,d\nu(u)
 \right\|_1
 \le
 \frac{2dt}{N}.
\]

The infinite quantum de Finetti theorem goes back to St{\o}rmer and to
Hudson and Moody~\cite{Sto69,HM76}; Caves, Fuchs, and Schack later
recast it in quantum-information language~\cite{CFS02}.  Finite
quantitative versions developed in the early 2000s, with the work of
Christandl, K\"onig, Mitchison, and Renner forming a culmination of the
theory of approximation by mixtures of exact tensor
powers~\cite{KR05,CKMR07}.  In a complementary direction, Renner's
exponential de Finetti theorem obtained exponentially small error by
allowing almost-product states, with applications to quantum
cryptography~\cite{Ren05,Ren07}.

In quantum complexity, finite de Finetti bounds are a basic tool for
analyzing SDP hierarchies based on symmetric extensions: they certify
convergence to separable or product states and thereby yield algorithms
for separability membership and product-state
optimization~\cite{DPS04,NOP09,BCY11algorithm,BH13,FF19}.  They also
enter the study of QMA with multiple unentangled proofs and of
disentangler constructions~\cite{KMY09,ABDFS08,BH13,LS15,HNW2019limitations,JLW26}.

An important refinement changes the metric so that the approximation
need only fool a restricted class of measurements.  Brand\~ao,
Christandl, and Yard obtained an improved de Finetti-type bound with
only logarithmic local-dimension dependence under one-way LOCC
measurements, yielding quasipolynomial separability algorithms for the
one-way LOCC and Hilbert--Schmidt norms~\cite{BCY11,BCY11algorithm}.
Brand\~ao and Christandl extended these quantitative
symmetric-extension bounds to the multipartite setting~\cite{BC12}.
Brand\~ao and Harrow subsequently proved
stronger de Finetti bounds under local measurements, with applications
to quantum complexity and polynomial optimization~\cite{BH13}.  Li and
Smith extended this theory to fully adaptive one-way LOCC measurements;
among other consequences, they show that polynomially many unentangled
proofs add no power to QMA when the verifier is restricted to one-way
LOCC~\cite{LS15}.  Such results are closely related to distinguishability
norms and quantum data hiding~\cite{MWW09,BCY11,BH13,LS15}.

The quantitative content of a finite de Finetti theorem is its
convergence rate.  Classically, the $t$-variable marginal of an
exchangeable sequence of length $N$ is within $O(t^2/N)$ in total
variation of a mixture of i.i.d.\ laws, independently of the size of
the sample space~\cite{DF80}.  Quantum trace-norm bounds necessarily
depend on the local Hilbert-space dimension, but the correct dependence
was left open.  Christandl, K\"onig, Mitchison, and Renner devoted a
section to optimality, asked whether their bounds could be improved,
and concluded that the dimension dependence was
unclear~\cite[Secs.~II.C and~V]{CKMR07}.  For bosonic states on
$\Sym^N(\C^d)$, the standard constructive theorem gives error $O(d/N)$
for two-site marginals~\cite{CKMR07,Chi11,LNR15}; Lewin, Nam, and
Rougerie later singled out improving this $d$-dependence as an open
problem~\cite[Remark~2.2]{LNR15}.

We settle this question by proving optimal trace-norm upper bounds.  The
main result improves the two-site bosonic error from $O(d/N)$ to
$O(\sqrt d/N)$, and purification gives an $O(d/N)$ upper bound for
arbitrary exchangeable states.
The square-root improvement is exactly what moves several
quantum-complexity problems from exponential to subexponential
dependence on the local dimension.

The sharp bosonic theorem is the engine behind the paper, but it is not
the end of the development.  We next prove a de Finetti theorem for
two-sided Bose-symmetric $(n,m)$-extendible states: every such
bipartite state on $A\ot B$ is within
\[
 O\left(
 \sqrt{\frac{\min\{d_A,d_B\}}{nm}}
 \right)
\]
in trace distance of the separable states.  For balanced extensions
with equal local dimensions, this becomes $O(\sqrt d/n)$.  The
two-sided theorem provides a common interface for three complexity
applications.
Realizing the two-sided relaxation as the image of a quantum channel
gives a subexponential disentangler.  Optimizing an accepting operator
over the same relaxation gives the subexponential Best Separable State
algorithm, and the Harrow--Montanaro reduction then yields trace-norm
separability testing~\cite{HM13}.

The optimal bosonic trace-norm theorem is also the starting point for
the first dimension-free bosonic de Finetti theorem in
Hilbert--Schmidt norm, with upper bound $O(N^{-1/2})$.  We also obtain
trace-norm bounds for higher marginals, with error $O(t\sqrt d/N)$ for
bosonic states and $O(td/N)$ for exchangeable states.  We state these
results precisely next.

\subsection{Our results}

\subsubsection{Optimal trace-norm de Finetti}

Let $\cH=\C^d$.  We use the following notations for trace distance
\[
 \Dtr(\rho,\sigma):=\frac12\norm{\rho-\sigma}_1,
 \qquad
 \Dtr(\rho,\mathcal K):=\inf_{\tau\in\mathcal K}\Dtr(\rho,\tau).
\]
A state in $\D(\Sym^N\cH)$ is \emph{bosonic}, while a state on
$\cH^{\ot N}$ is \emph{exchangeable} if it is permutation invariant.
For either type, we write $\rho_N^{(t)}$ for the $t$-site marginal.

For $2\le t\le N$, define the Hartree mixtures and tensor-power
mixtures by
\begin{align}
 \SEPH^{(t)}(d)
 &:=\conv\{\ketbra{u}^{\ot t}:u\in\C^d,\ \norm{u}=1\},
 \label{eq:intro-SEPH-definition}\\
 \SEPP^{(t)}(d)
 &:=\conv\{\sigma^{\ot t}:\sigma\in\D(\C^d)\}.
 \label{eq:intro-SEPP-definition}
\end{align}
The subscripts $\mathrm H$ and $\mathrm P$ stand for Hartree and
tensor power, respectively.  The sets of $t$-site marginals of
$N$-site bosonic and exchangeable states are
\begin{align}
 \BOS_N^{(t)}(d)
 &:=
 \left\{\rho_N^{(t)}:
 \rho_N\in\D\bigl(\Sym^N(\C^d)\bigr)\right\},
 \label{eq:intro-BOS-definition}\\
 \EXCH_N^{(t)}(d)
 &:=
 \left\{\sigma_N^{(t)}:
 \begin{array}{l}
 \sigma_N\in\D((\C^d)^{\ot N}),\\[-2pt]
 U_\pi \sigma_N U_\pi^\dagger=\sigma_N
 \quad\text{for every }\pi\in\mathfrak S_N
 \end{array}
 \right\}.
 \label{eq:intro-EXCH-definition}
\end{align}
Every Hartree mixture has an $N$-site bosonic extension, and every
tensor-power mixture has an $N$-site exchangeable extension.  Thus
both marginal sets are outer relaxations:
\[
 \SEPH^{(t)}(d)\subseteq\BOS_N^{(t)}(d),
 \qquad
 \SEPP^{(t)}(d)\subseteq\EXCH_N^{(t)}(d).
\]
For the nested pair $\mathcal L\subseteq \mathcal K$ below, $\DHtr(\mathcal K,\,\mathcal L):=\sup_{\rho\in\mathcal K} \Dtr(\rho,\mathcal L)$ denotes
the worst-case trace distance from a state in $\mathcal K$ to
$\mathcal L$.  We call this the \emph{worst-case de Finetti error}.

\begin{introtheorem}[Bosonic two-site de Finetti upper bound;
cf.\ Theorem~\ref{thm:bos-main}]
\label{thm:intro-bos-two-site}
For every $N,d\ge2$,
\[
 \DHtr\Bigl(
  \BOS_N^{(2)}(d),\,
  \SEPH^{(2)}(d)
 \Bigr)
 \le
 \frac{\sqrt{d-1}}{2(N-1)}.
\]
\end{introtheorem}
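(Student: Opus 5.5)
We view the two-site marginal of $\rho_N\in\D(\Sym^N(\C^d))$ as a degree-four pseudo-moment object and round it by an argmax rule. Write $\rho_N^{(2)}$ as an operator on $\Sym^2(\C^d)$.

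The starting point is the Chiribella/CKMR coherent-state (Husimi) measure: $\mu_N(u)\propto\langle u^{\ot N}|\rho_N|u^{\ot N}\rangle\,du$. The standard estimate gives the $O(d/N)$ bound. The identity behind it is
\[
 \int\ketbra{u}^{\ot 2}\,d\mu_N(u)=c_{N}\,\bigl(\rho_{N+2}\text{-type lift}\bigr)^{(2)},
\]
which, by exact $\mathrm{SU}(d)$ representation theory, equals an affine combination $a\,\rho_N^{(2)}+b\,(\rho_N^{(1)}\ot I)_{\Sym}+c\,P_{\Sym^2}$ with $a=1-O(1/N)$ and $b,c=O(1/N)$.

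The loss in the CKMR bound comes from bounding the correction terms in trace norm by their crude norm. This costs a factor $d$, since $P_{\Sym^2}/\dim$ and the symmetrized $\rho^{(1)}\ot I$ each have trace norm of order one but are spread over $d^2$ dimensions. My plan replaces this fixed rounding by a family of roundings parametrized by a "tilt". For $\lambda\ge0$, take $\nu_\lambda$ to be the law of $u/\norm{u}$ where $u$ maximizes a random quadratic-form objective. Concretely, let $g$ be Gaussian and $u=\operatorname{argmax}_{\norm{u}=1}\mathrm{Re}\langle g,\cdot\rangle$ evaluated on an $N$-fold symmetric lift of $\rho_N$. This is the argmax method of Jeronimo--Wu--Xu, which I assume provides the following: the second moments of the argmax rounding reproduce the pseudo-moments up to a correction governed by the operator norm rather than the trace norm of $\rho_N^{(1)}$.

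The key steps, in order, are as follows.

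\textbf{(i) Reduction to a symmetric-subspace identity.} Express $\rho_N^{(2)}$ and $\mathbb E_{\nu}\ketbra{u}^{\ot2}$ both as operators on $\Sym^2(\C^d)$. Reduce the trace distance to $\sup$ over $-I\le M\le I$ on $\Sym^2$ of $\Tr M(\cdot)$.

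\textbf{(ii) Computing the rounding error.} Show that the argmax rounding yields
\[
 \mathbb E_\nu\ketbra{u}^{\ot2}-\rho_N^{(2)}=\frac{1}{N-1}\bigl(X-\rho_N^{(2)}\bigr),
\]
where $X$ is positive semidefinite, has unit trace, and has the explicit form $2P_{\Sym^2}(\rho_N^{(1)}\ot\sigma)P_{\Sym^2}$ with $\sigma$ determined by the rounding.

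\textbf{(iii) Optimizing the weighting.} Choose the tilt (equivalently, mix $\mu_N$ with an exact correction) so that the error operator becomes traceless and has Hilbert--Schmidt norm at most $\sqrt{(d-1)/d}$ times its natural scale. The expected mechanism is that the component along $P_{\Sym^2}$ cancels. Then apply Cauchy--Schwarz on the relevant rank-$\le d$ support: the correction lives in $\rho_N^{(1)}\ot\cdot$, which effectively has local rank $d$. This yields
\[
 \norm{\cdot}_1\le\sqrt{d-1}\,\norm{\cdot}_2\le\frac{\sqrt{d-1}}{N-1}.
\]
Dividing by two gives the stated bound.

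The main obstacle is step (iii): turning the trace-norm loss of order $d$ into order $\sqrt d$. The correction operator must be controlled in Hilbert--Schmidt norm, and its effective support must have dimension $d$, not $d^2$. The $d-1$ should come from the traceless part after subtracting the component proportional to $\rho_N^{(1)}$.

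My first attempt will be a direct computation. I will write $\rho_N^{(1)}=\sum_i p_i\ketbra{e_i}$, work in the eigenbasis, and verify that the correction is block-structured with blocks indexed by $i$. Each block should have Hilbert--Schmidt norm of order $\sqrt{p_i}$, so that $\sum_i\sqrt{p_i}\cdot(\cdot)$ combined with Cauchy--Schwarz gives $\sqrt{d-1}$. If this fails, I will fall back to a convex-duality argument: minimax over $M$ and $\nu$, so that for each fixed $M$ I only need a measure $\nu_M$. I would take $\nu_M$ to be the argmax of $\langle u^{\ot2}|M|u^{\ot2}\rangle$ reweighted by the Husimi density. Here the SOS-degree-four certificate bounds the gap by $\sqrt{d-1}/(N-1)$.
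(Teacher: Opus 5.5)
There is a genuine gap. The proposal never proves the central estimate, and the concrete claims it does make would fail.

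First, the rounding is not defined: ``the argmax of $\mathrm{Re}\langle g,\cdot\rangle$ evaluated on an $N$-fold symmetric lift of $\rho_N$'' is not a well-defined object. Its key property is then simply assumed from the Jeronimo--Wu--Xu method, and that assumption is essentially the theorem. Second, step (ii) cannot hold for any measure $\nu$. If $\mathbb E_\nu\ketbra{u}^{\ot2}-\rho_N^{(2)}=\frac{1}{N-1}(X-\rho_N^{(2)})$ with $X\succeq0$ and $\Tr X=1$, the triangle inequality already gives $\Dtr\le\frac{1}{N-1}$, independently of $d$. But the symmetric purifications of the rectangular Werner states of Christandl, K\"onig, Mitchison, and Renner (with $d=r^2$, $N=ar$) have two-site distance exactly $\frac{r^2-1}{2r(ar-1)}$, which exceeds $\frac{1}{N-1}$ once $r\ge3$. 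The same inconsistency appears in your account of the Husimi rounding. There the weight on $\rho_N^{(2)}$ is $\binom{N}{2}/\binom{N+d+1}{2}=1-\Theta(d/N)$, not $1-O(1/N)$, and that is exactly where the $d/N$ loss originates. Third, step (iii) misuses $\norm{Y}_1\le\sqrt{\rank Y}\,\norm{Y}_2$. An operator such as $P_{\Sym^2}(\rho_N^{(1)}\ot\sigma)P_{\Sym^2}$ can have rank $d(d+1)/2$ on $\Sym^2(\C^d)$ (take $\rho_N^{(1)}=\sigma=I/d$). So ``local rank $d$'' does not reduce the operator rank, and this inequality only returns a factor of order $d$.

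Your fallback has the right first move. By duality the error equals $\frac12\sup_{\norm{M}_\infty\le1}\{\lambda_{\max}(M^{[N]})-h(M)\}$, where $M^{[N]}$ is the normalized pair lift on $\Sym^N(\C^d)$; this matches the paper. But for fixed $M$ the best approximant is just a point mass at a maximizer of $\bra{u^{\ot2}}M\ket{u^{\ot2}}$, so there is nothing to reweight. All the content lies in the integrality-gap bound $\lambda_{\max}(M^{[N]})\le h(M)+\frac{\sqrt{d-1}}{N-1}\norm{M}_\infty$. You attribute this to an unspecified degree-four SoS certificate, although the quantity to be bounded is the value of the level-$N$ relaxation.

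The missing idea is to round the optimizer of that relaxation, not the state or the witness:
\begin{itemize}
\item Take a unit top eigenvector $\psi$ of $M^{[N]}$, choose $u$ maximizing $\abs{\braket{u^{\ot N}}{\psi}}$, and fix the phase so that $c=\braket{u^{\ot N}}{\psi}>0$.
\item First-order optimality shows that $\eta_u=(\bra{u}^{\ot(N-2)}\ot I)\psi$ equals $c\,u^{\ot2}+\zeta_u$ with $\zeta_u\in\Sym^2(u^\perp)$.
\item Second-order optimality gives $\abs{\braket{v^{\ot2}}{\zeta_u}}\le c/(N-1)$ for every unit $v\perp u$. By the Takagi decomposition this means $\norm{\zeta_u}_\varepsilon\le c/(N-1)$.
\item The eigenvalue equation then yields the exact identity $(\lambda_{\max}(M^{[N]})-\bra{u^{\ot2}}M\ket{u^{\ot2}})\,c=\braket{B_u}{\zeta_u}$, with $B_u=P_{\Sym^2(u^\perp)}Mu^{\ot2}$.
\item Projective--injective duality finishes the proof.
\end{itemize}
The $\sqrt{d-1}$ comes from the Schmidt rank of the \emph{vector} $B_u\in u^\perp\ot u^\perp$: $\norm{B_u}_\pi\le\sqrt{d-1}\,\norm{B_u}\le\sqrt{d-1}\,\norm{M}_\infty$. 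It does not come from the rank of any operator on $\Sym^2(\C^d)$.
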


\begin{introtheorem}[Exchangeable two-site de Finetti upper bound;
cf.\ Corollary~\ref{thm:perm-main}]
\label{thm:intro-ex-two-site}
For every $N,d\ge2$,
\[
 \DHtr\Bigl(
  \EXCH_N^{(2)}(d),\,
  \SEPP^{(2)}(d)
 \Bigr)
 \le
 \frac{\sqrt{d^2-1}}{2(N-1)}.
\]
\end{introtheorem}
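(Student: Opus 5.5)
The plan is to deduce this from the bosonic two-site bound (Theorem~\ref{thm:intro-bos-two-site}) by purification, at local dimension $d^2$.

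\textbf{Step 1: purify into a bosonic state.} Let $\sigma_N$ be an exchangeable state on $(\C^d)^{\ot N}$. Introduce a reference space $\C^{d}$ for each site, so each site carries $\C^d\ot\C^d\cong\C^{d^2}$. We need a purification $\ket{\Psi}$ of $\sigma_N$ that is a \emph{symmetric} vector in $\Sym^N(\C^{d^2})$. The standard choice works. Take $\ket{\Psi}=(\sqrt{\sigma_N}\ot I)\ket{\Omega}^{\ot N}$, where $\ket{\Omega}=\sum_i\ket{ii}$ is the unnormalized maximally entangled vector on each $\C^d\ot\C^d$. Since $\sqrt{\sigma_N}$ commutes with every $U_\pi$, and $\ket{\Omega}^{\ot N}$ is invariant under the simultaneous permutation $U_\pi\ot U_\pi$, the vector $\ket{\Psi}$ is invariant under all site permutations of $\C^{d^2}$. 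Its normalization is $\Tr\sigma_N=1$. So $\rho_N=\ketbra{\Psi}\in\D(\Sym^N(\C^{d^2}))$, and tracing out the reference halves gives $\sigma_N$.

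\textbf{Step 2: apply the bosonic bound.} Applying Theorem~\ref{thm:intro-bos-two-site} with dimension $d^2$ gives a measure $\nu$ on unit vectors $u\in\C^d\ot\C^d$ such that
\[
\Dtr\Bigl(\rho_N^{(2)},\int\ketbra{u}^{\ot2}\,d\nu(u)\Bigr)\le\frac{\sqrt{d^2-1}}{2(N-1)}.
\]

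\textbf{Step 3: trace out the references.} Apply the partial trace over the two reference copies. This is a CPTP map, and trace distance is monotone under CPTP maps. It sends $\rho_N^{(2)}$ to $\sigma_N^{(2)}$, since taking the marginal commutes with the reference partial trace. It sends each $\ketbra{u}^{\ot2}$ to $\tau_u^{\ot2}$, where $\tau_u=\Tr_{\mathrm{ref}}\ketbra{u}\in\D(\C^d)$. The mixture therefore maps into $\SEPP^{(2)}(d)$.

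Taking the supremum over $\sigma_N$ gives the claim. The only point needing care is checking that the purification is genuinely symmetric, including the ordering of tensor factors when identifying $(\C^d\ot\C^d)^{\ot N}$ with $(\C^d)^{\ot N}\ot(\C^d)^{\ot N}$. That check is the one invariance argument in Step~1.
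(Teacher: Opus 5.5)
Your proposal is correct and follows the paper's proof essentially step for step. The paper also passes to a bosonic state of local dimension $d^2$ by symmetric purification, applies the bosonic two-site bound, and then traces out the purifying registers sitewise, using contractivity of trace distance. The one difference is that you construct the symmetric purification explicitly as $(\sqrt{\sigma_N}\ot I)\ket{\Omega}^{\ot N}$ and verify its permutation invariance, whereas the paper cites this as Lemma~\ref{lem:symmetric-purification} from CKMR07; the paper uses the same canonical-purification argument later, in the two-sided extension corollary.
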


The exchangeable upper bound follows by applying the standard
purification and then using the bosonic theorem with dimension $d^2$;
the rectangular Werner examples of Christandl, K\"onig, Mitchison,
and Renner give the matching $\Omega(d/N)$ lower bound, and their
symmetric purifications give the matching $\Omega(\sqrt d/N)$
bosonic lower bound~\cite[Lemma~II.5 and Corollary~III.9]{CKMR07}.

\paragraph{Higher marginals.}

The two-site upper-bound dependence on $N$ and $d$ persists for
arbitrary $t$, with only a linear loss in $t$.
The case $t=2$ is covered by the preceding two-site theorems; the
higher-marginal argument treats $t\ge3$.

\begin{introtheorem}[$t$-site de Finetti upper bounds;
cf.\ Theorem~\ref{thm:higher-main}]
\label{thm:intro-higher}
For every $d\ge2$ and $2\le t\le N$,
\begin{align*}
 \DHtr\Bigl(
 \BOS_N^{(t)}(d),\,
 \SEPH^{(t)}(d)
 \Bigr)
 &\le
 \frac{4t\sqrt{d-1}}{N},\\
 \DHtr\Bigl(
 \EXCH_N^{(t)}(d),\,
 \SEPP^{(t)}(d)
 \Bigr)
 &\le
 \frac{4t\sqrt{d^2-1}}{N}.
\end{align*}
\end{introtheorem}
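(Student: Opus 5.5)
The plan is to reduce the $t$-site statement to the two-site theorem by a block-grouping argument, using Theorem~\ref{thm:intro-bos-two-site} as a black box. The exchangeable bound then follows from the bosonic one by the same purification step used for Theorem~\ref{thm:intro-ex-two-site}. A symmetric purification of $\sigma_N$ lives in $\Sym^N(\C^{d}\ot\C^{d})$. Its $t$-site marginal is approximated by a mixture of $\ketbra{w}^{\ot t}$ with $w\in\C^{d^2}$, and tracing out the purifying registers is trace-nonincreasing. This turns such a mixture into a mixture of $\sigma^{\ot t}$ with $\sigma=\Tr_2\ketbra{w}$, and replaces $d$ by $d^2$. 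So it suffices to treat bosonic states.

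For the bosonic case with $t\ge3$, I would not group sites into blocks. A block of $k$ sites is a vector in $\Sym^k(\C^d)$ of dimension $\binom{d+k-1}{k}$, and that dimension would destroy the $\sqrt d$ dependence. Instead I would run the argmax/SOS rounding that underlies Theorem~\ref{thm:bos-main} directly at level $t$, and track where the loss enters. The outline is as follows.

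\textbf{Step 1.} Apply the two-site argmax rounding to $\rho_N$. This yields a measure $\nu$ on unit vectors with $\Dtr\bigl(\rho_N^{(2)},\int\ketbra{u}^{\ot2}d\nu\bigr)\le\sqrt{d-1}/(2(N-1))$. The measure comes from a coherent-state-like decomposition $\rho_N\approx\int \ketbra{u}^{\ot N}\,d\nu(u)$ (the lower/upper symbol picture) together with an error operator whose size is controlled.

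\textbf{Step 2.} Compare $\rho_N^{(t)}$ with $\int\ketbra{u}^{\ot t}d\nu$ through a telescoping or hybrid chain. Replace the sites one pair at a time, each time paying a two-site-type error. The $t$-site error is then at most about $t/2$ copies of a two-body error. Each such error is controlled by the same $\sqrt{d-1}/N$ estimate, but now with $N$ replaced by the number of remaining sites, $N-t+O(1)$.

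\textbf{Step 3.} Handle the two regimes of $t$. When $t\le N/2$, we have $N-t\ge N/2$, so the errors sum to at most $t\cdot 2\sqrt{d-1}/N$, which fits within $4t\sqrt{d-1}/N$. When $t>N/2$, the right-hand side $4t\sqrt{d-1}/N\ge 2\sqrt{d-1}\ge1$ already exceeds the trivial bound $\Dtr\le1$, so there is nothing to prove. The constant $4$ is chosen to absorb exactly this split.

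The main obstacle is Step 2: making the hybrid argument rigorous. A marginal of a Hartree mixture is again a Hartree mixture, but the rounded measure at level $2$ need not produce the correct $t$-body marginals. I would first try the variant that sidesteps this. Apply Theorem~\ref{thm:intro-bos-two-site}'s rounding to the $t$-body reduced operator, viewed as a two-body problem in the argmax SOS certificate, and bound the certificate's error polynomial of degree $t$ by $\binom t2$-weighted pair terms. A Cauchy--Schwarz estimate should then collapse these weighted terms to a linear-in-$t$ factor, because the relevant second-moment operator has trace $t$, not $t^2$.
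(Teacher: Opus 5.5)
Your purification reduction from exchangeable to bosonic states is correct and is the one the paper uses. The bosonic part, however, has a genuine gap.

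\textbf{What goes wrong in your outline.} Step~1 misreads what argmax rounding provides. The two-site proof is a duality argument. For each witness $M$ it rounds a top eigenvector of $M^{[N]}$, not $\rho_N$, and the approximating Hartree mixture exists only through minimax (Lemma~\ref{lem:dual-bos}). There is no decomposition $\rho_N\approx\int\ketbra{u}^{\ot N}d\nu(u)$ with a controlled error operator; at the $N$-site level the trace error can be close to $1$. The coherent-state (upper/lower symbol) measure you allude to is the universal rounding of the standard proofs, and it only yields $O(dt/N)$.

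Consequently Step~2 has nothing to run on. A bound on the two-site marginal says nothing about correlations between the replaced pair and the other $t-2$ sites. There is also no quantum conditioning that would let you reapply the two-site theorem to a residual $(N-2)$-site bosonic state with a consistent measure. The natural conditioning—measure a pair with the coherent-state POVM and re-prepare $\ketbra{u}^{\ot2}$—is exactly the measure-and-prepare step that costs $O(d/N)$.

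Your fallback of running argmax directly on the $t$-body lift also stalls. Write $\lambda c=\braket{Mu^{\ot t}}{\eta_u^{(t)}}$ with $\eta_u^{(t)}=(\bra{u}^{\ot(N-t)}\ot I)\psi$. First-order optimality kills the component with one factor in $u^\perp$, and second-order optimality bounds the component with two such factors by a multiple of $c$. Components with three or more factors in $u^\perp$ are not controlled at all, let alone proportionally to $c$, which is what dividing by $c$ requires. The error does not reduce to $\binom t2$ weighted pair terms.

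\textbf{The two missing ideas.} First, your reason for avoiding blocks is mistaken. In the rectangular argmax estimate (Proposition~\ref{prop:rectangular-argmax}) the loss is $\frac{1}{\sqrt{nm}}\norm{(P_{x^\perp}\ot P_{y^\perp})M(x\ot y)}_\pi$, and vectors in $x^\perp\ot y^\perp$ have Schmidt rank at most $\min\{d_A,d_B\}-1$. With the $1:(t-1)$ cut $\cH_A=\cH$, $\cH_B=\Sym^{t-1}\cH$, the dimension of $\Sym^{t-1}\cH$ therefore never enters. Grouping the sites into $n\approx N/2$ single sites and $m\approx N/(2(t-1))$ blocks of $t-1$ makes $\rho_N^{(t)}$ Bose-symmetric $(n,m)$-extendible with $nm\ge N^2/(8(t-1))$.

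Second, the rounded product $x\ot y$ need not be a tensor power. The paper handles this by adding the symmetry test to the witness, $M_\theta=(1-\theta)\Pi_t+\theta M$, which the bosonic marginal passes with certainty. It then proves the linear estimate $1-\abs{\braket{x^{\ot(t-1)}}{y}}^2\le t\,\delta_{\mathrm{sym}}(x,y)$ (Lemma~\ref{lem:symmetry-projector}). Set $\beta=\sqrt{(d-1)/(nm)}$. The paper keeps the penalty $-(1-\theta)\delta_{\mathrm{sym}}$ coupled with the costs $\theta\sqrt{t\,\delta_{\mathrm{sym}}}$ and $\beta(1-\theta)\sqrt{\delta_{\mathrm{sym}}}$, all through the same $\delta_{\mathrm{sym}}$. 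Optimizing over $\theta$ then gives $(\sqrt t+1)\beta<4t\sqrt{d-1}/N$.

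Without some analogue of these two ingredients, your outline does not yield a proof of the bosonic $t$-site bound.
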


Thus no higher tensor-power loss is necessary: for fixed $t$, the
two-site dependence on $N$ and $d$ persists, and the two-site examples
above show that this dependence is optimal.  The linear factor in $t$
is also unavoidable in a joint large-dimension regime; see
Theorem~\ref{thm:rectangular-higher-lower} in the appendix.

\subsubsection{Two-sided Bose-symmetric extendible states and
trace-norm applications}

Let $\cH_A=\C^{d_A}$ and $\cH_B=\C^{d_B}$.  Recall that a bipartite
state is \emph{separable} if it is a convex combination of product
states, and write $\SEP(d_A,d_B)$ for the set of such states.  For
$n,m\ge1$, let $\EXT_{n,m}(d_A,d_B)$ be the set of states
$\rho_{AB}$ admitting an extension on
$\cH_A^{\ot n}\ot\cH_B^{\ot m}$ that is invariant under permutations
within the two blocks and whose $A_1B_1$ marginal is
$\rho_{AB}$~\cite{TDS03,JV13,JSZ22}.  Its Bose-symmetric subfamily
$\BEXT_{n,m}(d_A,d_B)$ consists of the states admitting such an extension
supported on
$\Sym^n(\cH_A)\ot\Sym^m(\cH_B)$.  This is the natural two-sided
analogue of the one-sided Bose-symmetric extension
hierarchy~\cite{DPS04,NOP09}.  Here
``Bose-symmetric'' denotes the same support condition called
``bosonic'' above.  Thus $\BEXT_{n,m}$ differs from $\EXT_{n,m}$ only
by this symmetric-support requirement.  Since pure product states have tensor-power
extensions,
\[
 \SEP(d_A,d_B)
 \subseteq
 \BEXT_{n,m}(d_A,d_B)
 \subseteq
 \EXT_{n,m}(d_A,d_B).
\]
The familiar one-sided $k$-extendibility hierarchy is the $(1,k)$ case,
up to exchanging the two parties.  Our bound applies to every pair of
extension levels, depends on the smaller local dimension, and captures
the multiplicative gain when both extension levels grow.  At the
one-sided endpoints, specialized de Finetti bounds can be stronger.

\begin{introtheorem}[De Finetti upper bound for two-sided
Bose-symmetric extendible states;
cf.\ Theorem~\ref{thm:two-sided-bose-extension}]
\label{thm:intro-two-sided-bext}
For every $n,m,d_A,d_B\ge1$,
\[
 \DHtr\Bigl(
  \BEXT_{n,m}(d_A,d_B),\,
  \SEP(d_A,d_B)
 \Bigr)
 \le
 \frac12
 \sqrt{\frac{\min\{d_A,d_B\}-1}{nm}}.
\]
\end{introtheorem}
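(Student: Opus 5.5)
The plan is to run the same argmax/sum-of-squares rounding that gives Theorem~\ref{thm:intro-bos-two-site}, now on a bihomogeneous polynomial instead of a homogeneous one. By symmetry of the statement I assume $d_A\le d_B$, so that only $d_A$ should appear in the bound. Since both $\BEXT_{n,m}$ and the trace distance to the convex set $\SEP$ behave well under mixing, it suffices to treat a pure extension $\ket\psi\in\Sym^n(\cH_A)\ot\Sym^m(\cH_B)$, with $\rho_{AB}$ its $A_1B_1$ marginal. I identify $\psi$ with a bihomogeneous polynomial $p_\psi(\bar u,\bar v)=\langle u^{\ot n}\ot v^{\ot m},\psi\rangle$, of degree $n$ in $u\in\C^{d_A}$ and degree $m$ in $v\in\C^{d_B}$.

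\textbf{Step 1 (rounding measure).} Sample $(u,v)$ on the product of unit spheres from the two-sided coherent-state (Husimi) density proportional to $\abs{p_\psi(u,v)}^2$. This is the outcome distribution of the product covariant POVM $\dim\Sym^n\cdot\dim\Sym^m\int \ketbra{u}^{\ot n}\ot\ketbra{v}^{\ot m}$. The candidate separable state is
\[
\sigma=\int \ketbra{\hat u}\ot\ketbra{\hat v}\,d\nu(u,v),
\]
where $(\hat u,\hat v)$ is the argmax-rounded pair attached to the sample, exactly as in the one-sided argmax method. Concretely, I take $\hat u$ as the direction maximizing the degree-$(1,1)$ local form around $(u,v)$, obtained by differentiating $p_\psi$ once in each block. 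Because the rounding is done jointly, the first-order correction couples an $A$-derivative with a $B$-derivative. I expect this coupling to be where the product $nm$, rather than $\min\{n,m\}$, enters.

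\textbf{Step 2 (error identity).} I will write $\rho_{AB}-\sigma$ as an explicit operator using the bosonic creation/annihilation calculus on each block. The reduced one-site operators of the Husimi measure are related to the true marginals through the standard identities $\int\ketbra u\,|p|^2\propto n\rho_A+\mathbb 1$, and similarly on $B$. In the two-sided version, the pure $1/n$ and pure $1/m$ terms are cancelled by the argmax choice, just as the $1/N$ term is cancelled in the one-sided proof. What remains should be a mixed term of the schematic form
\[
\frac{1}{\sqrt{nm}}\,\mathbb E_\nu\bigl[X_{u,v}\ot Y_{u,v}\bigr],
\]
where $X_{u,v}$ is traceless on $\C^{d_A}$. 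This structure would come from the SOS certificate: degree-$2$ fluctuations on each side, combined by Cauchy--Schwarz.

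\textbf{Step 3 (trace-norm estimate).} I bound $\norm{\rho_{AB}-\sigma}_1$ by Cauchy--Schwarz in the form $\norm{Z}_1\le\sqrt{\rank}\,\norm{Z}_2$, applied on the $A$ side. I use the variance bound $\mathbb E\,\Tr X^2\le(d_A-1)/n$, which is the traceless part of a $d_A\times d_A$ covariance and explains the factor $d_A-1$. On the $B$ side I use the companion bound $\mathbb E\norm{Y}^2\le 1/m$, which requires no dimension factor. This produces $\frac12\sqrt{(d_A-1)/(nm)}$, including the $\tfrac12$ from $\Dtr$.

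\textbf{Main obstacle.} The hardest step is Step 2: showing that the argmax rounding kills the separate $1/n$ and $1/m$ terms simultaneously, so that only the mixed term survives, and that the dimension factor can be placed entirely on the smaller side. My first attempt is to run the one-sided proof of Theorem~\ref{thm:bos-main} on the $A$ block conditioned on $v$. I would then average over $v$ and check that the conditional error operator has $B$-dependence controlled by the $m$-mode Husimi variance. If that conditioning loses the product structure, I would fall back to a genuinely bihomogeneous SOS certificate for $\Tr[(\rho_{AB}-\sigma)W]$, taken over $\norm{W}_\infty\le1$.
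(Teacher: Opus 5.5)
There is a genuine gap. Everything rests on Step~2, and Step~2 is asserted rather than proved.

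The uncorrected two-sided Husimi mixture is genuinely biased, and the bias involves the \emph{larger} dimension. For $\psi=e_1^{\ot n}\ot f_1^{\ot m}$ the mixture puts weight $\frac{(n+1)(m+1)}{(n+d_A)(m+d_B)}$ on $e_1\ot f_1$. Its marginal therefore sits at trace distance $1-\frac{(n+1)(m+1)}{(n+d_A)(m+d_B)}\approx\frac{d_A-1}{n}+\frac{d_B-1}{m}$ from $\rho_{AB}$, and at distance of order one when $n=1$. As far as it is specified, your correction seems to replace each sample $(u,v)$ by the top singular pair of the mixed derivative of $p_\psi$ at that point. That happens to repair this product example. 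But nothing in the sketch shows that it cancels the $1/n$ and $1/m$ terms for general $\psi$, or that it removes all dependence on $d_B$. Nor does anything show that the remainder has the form $\frac1{\sqrt{nm}}\mathbb E_\nu[X\ot Y]$ with the variances you state.

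Step~3 shows that the error term has not been pinned down. Taken literally, a prefactor $1/\sqrt{nm}$ times fluctuations of sizes $\sqrt{(d_A-1)/n}$ and $\sqrt{1/m}$ gives $\sqrt{d_A-1}/(nm)$, not $\sqrt{(d_A-1)/(nm)}$. For $n=m$ that would even beat the $\sqrt d/n$ rate the paper notes is optimal for balanced extensions.

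The one-sided argmax argument you are modeling this on is not a Husimi construction, and it involves no cancellation of a $1/N$ bias. Universal Haar- or Husimi-averaged approximants are exactly what give the old $O(d/N)$ rate.

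The missing idea is to exchange the order of quantifiers. Even your fallback bounds $\Tr[(\rho_{AB}-\sigma)W]$ with $\sigma$ fixed before $W$ is chosen. The paper instead uses trace-distance duality. It then suffices to show $\Tr(M\rho)-h_{\SEP}(M)\le\sqrt{(\min\{d_A,d_B\}-1)/(nm)}$ for each Hermitian $M$ with $\norm{M}_\infty\le1$, that is, to find for each witness separately one product vector that does almost as well as $\rho$.

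The argument then runs as follows.
\begin{enumerate}
\item Since $\Tr(M\rho)=\Tr(M^{[n,m]}\widetilde\rho)\le\lambda_{\max}(M^{[n,m]})$ for the rectangular lift $M^{[n,m]}=\frac1{nm}\sum_{i,j}M_{A_iB_j}$, the argmax is applied to a unit top eigenvector $\psi$ of $M^{[n,m]}$, not to the state. Choose unit $x,y$ maximizing $\abs{\braket{x^{\ot n}\ot y^{\ot m}}{\psi}}=:c$.
\item Contract $\psi$ with $x$ on $n-1$ sites and with $y$ on $m-1$ sites to get $\eta=c\,x\ot y+\zeta$. First-order stationarity kills the $x^\perp\ot y$ and $x\ot y^\perp$ components.
\item Take the second-order condition along $(z,w)=(r\sqrt m\,e^{i(\theta+\phi)},r\sqrt n\,e^{i(\theta-\phi)})$ and average over $\phi$ to remove the pure $z^2$ and $w^2$ terms. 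This compares the cross term $nm\,\overline z\,\overline w\,\beta$ with $c^2(n\abs z^2+m\abs w^2)$ and gives $\norm{\zeta}_\varepsilon\le c/\sqrt{nm}$. This is where the product $nm$ enters.
\item The eigenvalue equation gives the exact identity $\bigl(\lambda_{\max}(M^{[n,m]})-\bra{x\ot y}M\ket{x\ot y}\bigr)c=\braket{(P_{x^\perp}\ot P_{y^\perp})M(x\ot y)}{\zeta}$.
\item Projective--injective duality, together with the Schmidt-rank bound $\norm{B}_\pi\le\sqrt{\min\{d_A,d_B\}-1}\,\norm{B}$ for $B\in x^\perp\ot y^\perp$, finishes the proof.
\end{enumerate}
The dimension factor thus comes from the Schmidt rank of a single vector in $x^\perp\ot y^\perp$, not from a one-sided covariance. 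No approximating separable state ever has to be constructed.
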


For balanced extension levels and equal local dimensions, the
$\sqrt d/n$ dependence is optimal up to universal constants.  The
rectangular Werner construction of Christandl, K\"onig, Mitchison, and
Renner gives the matching order after symmetric purification and
grouping into two equal blocks~\cite{CKMR07}.

\begin{introcorollary}[De Finetti upper bound for two-sided extendible
states;
cf.\ Corollary~\ref{cor:two-sided-extension}]
\label{cor:intro-two-sided-ext}
For every $n,m,d_A,d_B\ge1$,
\[
 \DHtr\Bigl(
  \EXT_{n,m}(d_A,d_B),\,
  \SEP(d_A,d_B)
 \Bigr)
 \le
 \frac12
 \sqrt{\frac{\min\{d_A^2,d_B^2\}-1}{nm}}.
\]
\end{introcorollary}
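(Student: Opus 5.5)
The plan is to reduce Corollary~\ref{cor:intro-two-sided-ext} to Theorem~\ref{thm:intro-two-sided-bext} by the standard symmetric purification, in the same way that the exchangeable two-site theorem follows from the bosonic one. We may assume $d_A\le d_B$; the other case follows by exchanging the roles of the two parties.

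\textbf{Step 1: purify the extension.} Let $\rho_{AB}\in\EXT_{n,m}(d_A,d_B)$, and let $\omega$ be an extension on $\cH_A^{\ot n}\ot\cH_B^{\ot m}$ that is invariant under both block permutation groups. We take the canonical purification $\ket{\psi}=(\sqrt{\omega}\ot I)\ket{\Phi}$, where $\ket{\Phi}$ is the unnormalized maximally entangled vector on $(\cH_A^{\ot n}\ot\cH_B^{\ot m})\ot(\cH_{A'}^{\ot n}\ot\cH_{B'}^{\ot m})$, and $A'\cong A$, $B'\cong B$. Because $\omega$ commutes with $U_\pi\ot U_\tau$, so does $\sqrt\omega$. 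Since $(U\ot\bar U)\ket\Phi=\ket\Phi$, the vector $\ket\psi$ is invariant under the simultaneous permutation of the pairs $(A_iA'_i)$ and, separately, of the pairs $(B_jB'_j)$. Regrouping $\tilde A_i=A_iA'_i$ and $\tilde B_j=B_jB'_j$, the pure state $\ketbra{\psi}$ is then supported on $\Sym^n(\C^{d_A^2})\ot\Sym^m(\C^{d_B^2})$. Its $\tilde A_1\tilde B_1$ marginal $\tilde\rho$ therefore lies in $\BEXT_{n,m}(d_A^2,d_B^2)$, and tracing out $A'_1B'_1$ from $\tilde\rho$ returns $\rho_{AB}$.

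\textbf{Step 2: apply the theorem and trace out.} Theorem~\ref{thm:intro-two-sided-bext} gives a separable state $\tilde\sigma=\sum_k p_k\,\alpha_k\ot\beta_k$ on $\tilde A\ot\tilde B$ with
\[
\Dtr(\tilde\rho,\tilde\sigma)\le\frac12\sqrt{\frac{\min\{d_A^2,d_B^2\}-1}{nm}}.
\]
The partial trace over $A'B'$ is a local channel, so it maps $\tilde\sigma$ to the separable state $\sum_k p_k\Tr_{A'}\alpha_k\ot\Tr_{B'}\beta_k$. Trace distance is contractive under CPTP maps, so this separable state is within the same distance of $\rho_{AB}$. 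Taking the supremum over $\rho_{AB}$ proves the claim.

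\textbf{Main obstacle.} The only delicate point is verifying in Step 1 that the support lies in the symmetric subspace, and not merely that the state is permutation invariant. This holds because $\ket\psi$ is a single vector fixed by every pair permutation. Invariance under all transpositions forces $\ket\psi$ into the $+1$ eigenspace of each block's permutation group, which is exactly $\Sym^n\ot\Sym^m$. The sign concern does not arise here: the conjugation $U_\pi\ot\bar U_\pi$ fixes $\ket\Phi$, and permutation matrices are real, so every permutation acts with eigenvalue $+1$ on $\ket\psi$.
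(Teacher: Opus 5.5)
Your proposal is correct and follows essentially the same route as the paper: take the canonical purification of the two-block extension and use that $\sqrt{\omega}$ commutes with the block permutations while the copied permutations fix the maximally entangled vector. This places the purified vector in $\Sym^n(\C^{d_A^2})\ot\Sym^m(\C^{d_B^2})$; you then apply Theorem~\ref{thm:intro-two-sided-bext} in dimensions $d_A^2,d_B^2$ and finish with contractivity of trace distance and the fact that local partial traces preserve separability. The preliminary reduction to $d_A\le d_B$ is unnecessary, since the bound is already symmetric in the two parties, but it does no harm.
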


This follows by taking the canonical purification of the two-block
extension and then applying the preceding theorem with local
dimensions $d_A^2$ and $d_B^2$.
In the balanced case, the resulting $d/n$ dependence for arbitrary
two-sided extensions is also optimal in a joint high-dimensional
regime, as witnessed by the rectangular Werner examples
of~\cite{CKMR07}.

The Bose-symmetric theorem above is the common interface for the three
trace-norm applications.  For the disentangler,
$\BEXT_{n,n}(d_A,d_B)$ is realized as the image of a channel: the
theorem controls every possible output, while tensor-power inputs give
exact coverage of the separable states.  For BSS, we optimize the
verifier's accepting operator over $\BEXT_{n,n}(d_A,d_B)$.
Because every feasible state is close to separable, the resulting
relaxation approximates $h_{\SEP}(M)$ with an additive error independent
of the accepting operator $M$.  Trace-norm weak membership then follows
from this BSS approximation.

All three applications therefore have resource bounds at the same
scale, subexponential in the local dimension.  When $d_A=d_B=d$, the
BSS SDP acts on
$\Sym^n(\C^d)\ot\Sym^n(\C^d)$, whose dimension is
\[
 \binom{d+n-1}{n}^{2}.
\]
The disentangler instead takes $2n$ unrestricted $d$-dimensional input
registers, grouped as $A_1,\ldots,A_n$ and $B_1,\ldots,B_n$, so its
input dimension is $D=d^{2n}$.  Both resources have logarithm
$O(n\log d)$.
The sharp error $O(\sqrt d/n)$ allows
$n=\Theta(\sqrt d/\epsilon)$, so both resource bounds are
$\exp(\widetilde O(\sqrt d/\epsilon))$.  An $O(d/n)$ convergence bound
would instead require $n=\Theta(d/\epsilon)$, returning to the
exhaustive-search scale.

\paragraph{Disentanglers.}
A disentangler was proposed as a way to replace two unentangled quantum
proofs by one ordinary quantum proof.  It is a channel
$\Lambda:\D(\C^D)\to\D(\C^d\ot\C^d)$ with two guarantees.  Every output
of $\Lambda$ is $\epsilon$-close to some separable state, so an
entangled input cannot create much entanglement at the output.
Conversely, every separable state is $\delta$-close to some output, so
the image covers the entire separable set.  Such a channel is called an
$(\epsilon,\delta)$-disentangler.  For sufficiently small errors, if
$\log D$ were polylogarithmic in $d$ and the channel could be
implemented efficiently, composing it with a $\mathrm{QMA}(2)$ verifier would
simulate the two unentangled proofs using one ordinary QMA witness.

Watrous's conjecture, recorded by Aaronson, Beigi, Drucker, Fefferman,
and Shor in 2008~\cite{ABDFS08}, proposed an information-theoretic
obstruction: it predicted that
$D=\exp(\Omega(d))$ whenever the fixed errors satisfy
$\epsilon+\delta<1$, where $d$ is the local output dimension.  The
conjecture had survived for almost two decades.  The two-sided
Bose-symmetric de Finetti theorem gives a channel with exact coverage,
$\delta=0$, whose input dimension is subexponential in the local output
dimension $d$.

\begin{introtheorem}[Subexponential disentanglers]
\label{thm:intro-disentangler}
For every fixed $\epsilon\in(0,1)$ and every $d\ge2$, there is an
$(\epsilon,0)$-disentangler
\[
 \Lambda:\D(\C^D)\longrightarrow\D(\C^d\ot\C^d)
\]
with
\begin{equation*}
 \log D=O_{\epsilon}(\sqrt d\log d).
\end{equation*}
In particular, there is a $(1/4,0)$-disentangler satisfying
the same dimension bound.
\end{introtheorem}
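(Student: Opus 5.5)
We build the channel directly from Theorem~\ref{thm:intro-two-sided-bext} by realizing $\BEXT_{n,n}(d,d)$ as a channel image. Take input space $\C^D=(\C^d)^{\ot 2n}$, with registers $A_1,\dots,A_n,B_1,\dots,B_n$, so $D=d^{2n}$. Let $P_A$ and $P_B$ be the orthogonal projections onto $\Sym^n(\C^d)$ on the $A$- and $B$-blocks. The channel $\Lambda$ acts as follows:
\begin{itemize}
\item it measures the two-outcome POVM $\{P_A\ot P_B,\ I-P_A\ot P_B\}$;
\item on outcome $P_A\ot P_B$ it outputs the $A_1B_1$ marginal of the normalized post-measurement state $(P_A\ot P_B)\omega(P_A\ot P_B)$;
\item on the other outcome it outputs a fixed separable state, say $\ketbra{0}\ot\ketbra{0}$.
\end{itemize}
This is a completely positive trace-preserving map, being the composition of an instrument, partial traces, and state preparation.

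\textbf{Soundness.} Fix an input $\omega$. Write $p=\Tr[(P_A\ot P_B)\omega]$ and $\tau=(P_A\ot P_B)\omega(P_A\ot P_B)/p$ when $p>0$. Then
\[
\Lambda(\omega)=p\,\tau^{A_1B_1}+(1-p)\ketbra{0}^{\ot 2}.
\]
The state $\tau$ is supported on $\Sym^n(\C^d)\ot\Sym^n(\C^d)$. Symmetric support already gives invariance under permutations within each block, so $\tau^{A_1B_1}\in\BEXT_{n,n}(d,d)$. By Theorem~\ref{thm:intro-two-sided-bext}, some $\sigma\in\SEP(d,d)$ satisfies $\Dtr(\tau^{A_1B_1},\sigma)\le\frac12\sqrt{d-1}/n$.

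Joint convexity of trace distance then gives
\[
\Dtr\bigl(\Lambda(\omega),\,p\sigma+(1-p)\ketbra{0}^{\ot2}\bigr)\le\frac{\sqrt{d-1}}{2n}.
\]
The state $p\sigma+(1-p)\ketbra{0}^{\ot 2}$ is separable. Choosing
\[
n=\left\lceil\frac{\sqrt{d-1}}{2\epsilon}\right\rceil
\]
makes every output $\epsilon$-close to $\SEP$.

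\textbf{Exact coverage.} Write a separable state as $\sigma=\sum_i p_i\ketbra{a_i}\ot\ketbra{b_i}$ with unit vectors $a_i,b_i$. Take as input
\[
\omega=\sum_i p_i\ketbra{a_i}^{\ot n}\ot\ketbra{b_i}^{\ot n}.
\]
Each term is supported in $\Sym^n\ot\Sym^n$, so the measurement yields the first outcome with certainty and leaves $\omega$ unchanged. The $A_1B_1$ marginal of $\omega$ is exactly $\sigma$, hence $\delta=0$.

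\textbf{Dimension count.} We have
\[
\log D=2n\log d=O\bigl(\epsilon^{-1}\sqrt d\log d\bigr)=O_\epsilon(\sqrt d\log d).
\]
Setting $\epsilon=1/4$ gives the stated special case.

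The only delicate point is that the channel must be defined on all inputs, including those not supported on the symmetric subspace. The projective-measurement-and-reject construction handles this, with the rejected branch mapped to a separable state, and convexity absorbs it. Everything else is immediate from Theorem~\ref{thm:intro-two-sided-bext}.
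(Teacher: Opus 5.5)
Your proof is correct and is essentially the paper's construction. The paper replaces the rejected mass by $\ketbra{a_0}^{\ot n}\ot\ketbra{b_0}^{\ot n}$ before tracing down to $A_1B_1$, which yields exactly your output $p\,\tau^{A_1B_1}+(1-p)\ketbra{a_0}\ot\ketbra{b_0}$, so the whole output lies in $\BEXT_{n,n}(d,d)$ and Theorem~\ref{thm:intro-two-sided-bext} applies directly; you instead apply the theorem to the accepted branch and absorb the rejected branch by joint convexity, which is equivalent, and your choice $n=\lceil\sqrt{d-1}/(2\epsilon)\rceil$, exact-coverage argument, and count $\log D=2n\log d$ all coincide with the paper's.
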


The image of the channel is $\BEXT_{n,n}(d,d)$.
Theorem~\ref{thm:intro-two-sided-bext} gives
soundness, while tensor-power inputs give exact coverage; the explicit
construction appears in Subsection~\ref{sec:disentanglers}.  For an
$m$-qubit output register, where $d=2^m$, our channel still needs
$\Theta_\epsilon(2^{m/2}m)$ input qubits, so the result does not imply
$\mathrm{QMA}(2)=\mathrm{QMA}$.  Its significance is instead structural: the
conjectured exponential obstruction in the local dimension fails,
leaving open the possibility of substantially smaller disentanglers.

Akibue, Kato, and Tani recently proved the conjectured exponential
lower bound for \emph{strong disentanglers}~\cite{AKT24}.  For every
reference system $R$ and every joint input, their definition requires
the output on $ABR$ to be close to a state separable across $A:BR$;
that is, one output must be approximately disentangled from both the
other output and the reference.  Our channel guarantees only that the
reduced output on $AB$ is close to separable across $A:B$ and need not
satisfy this stronger requirement.  Thus the two results separate the
original and strong notions of disentangling rather than contradicting
one another.

\paragraph{Best Separable State.}
The second application asks for the largest acceptance probability of
a measurement on a separable state.  Given its accepting operator
$M$, satisfying $0\preceq M\preceq I$ on $\C^d\ot\C^d$, the
\emph{Best Separable State} (BSS) value is
\[
 h_{\SEP}(M)
 :=
 \max_{\sigma\in\SEP(d,d)}\Tr(M\sigma)
 =
 \max_{\substack{\norm{x}=1\\\norm{y}=1}}
 \bra{x\ot y}M\ket{x\ot y}.
\]
The equality holds because the linear objective
$\Tr(M\sigma)$ is maximized over the separable set by a
pure product state.  Operationally, $M$ is the accepting operator of a
verifier, and $h_{\SEP}(M)$ is the largest acceptance probability
achievable by two unentangled messages.  This is exactly the
optimization core of $\mathrm{QMA}(2)$.  From a classical perspective,
it is also a basic tensor-optimization problem, linked through standard
reductions to the landscape surrounding the $2\to4$ norm and Unique
Games~\cite{BBHKSZ12}.

For constant accuracy, the direct net-search algorithm takes
$\exp(O(d))$ time.  Before this work, no $\exp(o(d))$ algorithm was
known for general explicit BSS---in other words, no nontrivial
improvement over exhaustive search was known.  Barak, Kothari, and
Steurer came closest: under the perfect-completeness promise
$h_{\SEP}(M)=1$, they obtained an algorithm running in time
subexponential in the local dimension and explicitly asked whether
that assumption could be removed~\cite[Remark~1.4]{BKS17}.  The
two-sided Bose-symmetric extension SDP hierarchy removes it.

\begin{introtheorem}[Best separable state without perfect completeness;
cf.\ Theorem~\ref{thm:bss-main}]
\label{thm:intro-bss}
Let $0\preceq M\preceq I$ be explicitly given on
$\C^d\ot\C^d$.  For every $\epsilon\in(0,1]$, the value
$h_{\SEP}(M)$ can be approximated to additive error $\epsilon$
deterministically.  Up to a factor polynomial in the input bit
complexity, the running time is
\begin{equation*}
 \exp\left(
 \widetilde O\left(\frac{\sqrt d}{\epsilon}\right)
 \right).
\end{equation*}
\end{introtheorem}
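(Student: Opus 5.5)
We approximate $h_{\SEP}(M)$ by the value of the two-sided Bose-symmetric extension relaxation
\[
 h_n(M):=\max\Bigl\{\Tr(M\rho_{AB}):\rho_{AB}\in\BEXT_{n,n}(d,d)\Bigr\}.
\]
This is an SDP over a density matrix $\Omega$ on $\Sym^n(\C^d)\ot\Sym^n(\C^d)$. Its $A_1B_1$ marginal is a linear function of $\Omega$, and $\Omega$ is automatically invariant under permutations within each block, because it is supported on the symmetric subspaces.

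\emph{Step 1 (sandwich).} Every pure product state $\ket{x\ot y}$ has the extension $\ketbra{x}^{\ot n}\ot\ketbra{y}^{\ot n}$, so $\SEP\subseteq\BEXT_{n,n}$ and hence $h_{\SEP}(M)\le h_n(M)$.

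Conversely, let $\rho$ be optimal for $h_n(M)$. Theorem~\ref{thm:intro-two-sided-bext} gives a separable $\sigma$ with
\[
 \Dtr(\rho,\sigma)\le \tfrac12\sqrt{(d-1)}/n.
\]
Since $0\preceq M\preceq I$, we have $\abs{\Tr(M(\rho-\sigma))}\le \Dtr(\rho,\sigma)$. This uses that for $0\preceq M\preceq I$, $\abs{\Tr M X}\le\frac12\norm{X}_1$ whenever $\Tr X=0$. Therefore
\[
 h_n(M)\le h_{\SEP}(M)+\sqrt{d-1}/(2n).
\]
Choosing $n=\lceil \sqrt d/\epsilon\rceil$ makes the relaxation error at most $\epsilon/2$. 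The remaining $\epsilon/2$ of the budget goes to the numerical SDP error.

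\emph{Step 2 (size of the SDP).} The variable lives in dimension
\[
 \binom{d+n-1}{n}^2\le \bigl(e(d+n)/n\bigr)^{2n},
\]
so its logarithm is $O(n\log d)=O((\sqrt d/\epsilon)\log d)$, which is $\widetilde O(\sqrt d/\epsilon)$. To write the objective we need the matrix of $\Tr_{A_2\ldots A_nB_2\ldots B_n}$ composed with the isometric embedding of the symmetric subspaces. Equivalently, we need the operator
\[
 \tilde M = P_{\Sym}\bigl(M\ot I\bigr)P_{\Sym}
\]
expressed in an orthonormal basis of occupation-number states. Its entries are explicit rational combinations of entries of $M$ with square-root normalization factors. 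They can be computed in time polynomial in the dimension, or we can work in the full space $(\C^d)^{\ot 2n}$ with a symmetric-projector constraint at cost $d^{O(n)}$, which is the same scale.

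\emph{Step 3 (deterministic solving).} We solve the SDP to additive accuracy $\epsilon/2$ with the ellipsoid method or a deterministic interior-point method. The feasible set is the set of density matrices on a known subspace, which is bounded and contains a ball around the maximally mixed state in the trace-one affine hull, so the standard running-time guarantees apply. The time is polynomial in the dimension, in $\log(1/\epsilon)$ and in the bit complexity of $M$, that is $\exp(\widetilde O(\sqrt d/\epsilon))$ times a polynomial in the input size. Outputting the computed value minus $\epsilon/4$, suitably recentred, gives an additive $\epsilon$-approximation.

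\emph{Main obstacle.} The main obstacle is bookkeeping in Step 3: the numerical SDP must have a certifiable additive error, with bit complexity controlled despite the irrational normalizations in the symmetric basis. To sidestep the irrationalities, I would first try working in the computational basis of $(\C^d)^{\ot 2n}$ with the rational projector $P_{\Sym^n}\ot P_{\Sym^n}$ as a constraint, imposed as $\Omega=P\,\Omega\,P$. The conceptual content, the $O(\sqrt d/n)$ soundness, is supplied entirely by Theorem~\ref{thm:intro-two-sided-bext}.
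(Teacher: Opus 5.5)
Your proposal is correct and is essentially the paper's proof. It uses the same $\BEXT_{n,n}(d,d)$ relaxation, the same sandwich $h_{\SEP}(M)\le h_n(M)\le h_{\SEP}(M)+\sqrt{d-1}/(2n)$ from Theorem~\ref{thm:intro-two-sided-bext} and $0\preceq M\preceq I$, the same choice $n\approx\sqrt d/\epsilon$, and the same dimension count $\binom{d+n-1}{n}^2$. The paper adds one simplification: your SDP is a linear optimization over density matrices on $\Sym^n(\C^d)\ot\Sym^n(\C^d)$, so its value is exactly $\lambda_{\max}(M^{[n,n]})$. A single top-eigenvalue computation therefore replaces the ellipsoid or interior-point step and removes most of the bookkeeping you flagged as the main obstacle.
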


The guarantee in Theorem~\ref{thm:intro-bss} is proved in
Subsection~\ref{sec:qma2-applications} by evaluating
$h_{\BEXT_{n,n}}(M)$.  The additive error is independent of the value
of $h_{\SEP}(M)$, so the guarantee applies to every choice of
completeness and soundness parameters $c>s$ and removes perfect
completeness for explicit BSS.
The generic de Finetti argument does not exploit perfect completeness
and therefore does not recover the ETH-optimal bound established by the
direct SoS analysis~\cite{JWX26}.

\paragraph{Trace-norm separability testing.}
The third application is separability testing: given an explicitly
described bipartite state $\rho$, decide whether it is separable.
Since exact membership is not numerically robust, we use the standard
weak-membership formulation, with distance to $\SEP(d,d)$ measured in
trace distance.  Given a tolerance $\epsilon>0$, the task is to
distinguish
\[
 \rho\in\SEP(d,d)
 \qquad\text{from}\qquad
 \Dtr\bigl(\rho,\SEP(d,d)\bigr)\ge\epsilon,
\]
with no requirement when $\rho$ lies in the intervening
$\epsilon$-neighborhood.  Approximating the distance to the separable
set is a stronger, quantitative version of this problem.

The direction we need---from BSS optimization to this weak-membership
problem---is due to Harrow and
Montanaro~\cite[Proposition~16]{HM13}.  They show that an additive
approximation to $h_{\SEP}(M)$ valid for every accepting operator $M$
gives trace-norm weak membership for the separable set with only a
constant-factor loss in accuracy.
This is sharper than generic convex-optimization reductions, which lose
polynomial factors in the dimension~\cite{GLS93}.  In our setting, the reduction is
visible from trace-distance duality:
\[
 \Dtr\bigl(\rho,\SEP(d,d)\bigr)
 =
 \max_{0\preceq M\preceq I}
 \left\{\Tr(M\rho)-h_{\SEP}(M)\right\}.
\]
Here $M$ is the accepting operator of the two-outcome measurement that
best separates $\rho$ from every separable state, while
$h_{\SEP}(M)$ is the largest acceptance probability of that same
measurement on a separable state.  Our new input to the
Harrow--Montanaro reduction is a
subexponential-time approximation to $h_{\SEP}(M)$ valid for every
accepting operator $0\preceq M\preceq I$.  Optimizing the resulting
estimate over $M$ gives the distance to separability.

Weak membership for the separable set with promise gap
$1/\operatorname{poly}(d)$ is NP-hard~\cite{gurvits03sep,Gharibian10sep}.
Much less is known in the constant-gap regime.  In particular, before
this work no trace-norm algorithm running in time subexponential in the
local dimension was known nor ruled out.  The standard
$O(d/N)$ de Finetti analysis reaches constant error only at level
$N=\Omega(d)$, again at exponential cost; the quasipolynomial
algorithms in weaker LOCC and Hilbert--Schmidt geometries do not imply a
trace-norm algorithm~\cite{BCY11algorithm}. Our subexponential algorithm for
$h_{\SEP}(M)$ gives the first such subexponential
trace-norm weak-membership algorithm.

\begin{introtheorem}[Trace-norm separability testing;
cf.\ Theorem~\ref{thm:trace-norm-separability-algorithm}]
\label{thm:intro-trace-testing}
Let $\rho$ be an explicitly given state on $\C^d\ot\C^d$.  For every
$\epsilon\in(0,1]$, the trace distance from $\rho$ to the separable
states can be approximated to additive error $\epsilon$
deterministically.  Up to a factor polynomial in the input bit
complexity, the running time is
\begin{equation*}
 \exp\left(
 \widetilde O\left(\frac{\sqrt d}{\epsilon}\right)
 \right).
\end{equation*}
\end{introtheorem}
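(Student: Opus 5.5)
Theorem~\ref{thm:intro-trace-testing} will follow from Theorem~\ref{thm:intro-bss} through the trace-distance duality
\[
 \Dtr\bigl(\rho,\SEP(d,d)\bigr)=\max_{0\preceq M\preceq I}\bigl\{\Tr(M\rho)-h_{\SEP}(M)\bigr\}.
\]
The aim is to avoid running the BSS algorithm as an oracle inside an outer optimization over $M$. Instead, $h_{\SEP}(M)$ is replaced by the SDP relaxation value $h_{\BEXT_{n,n}}(M)$, and the whole expression is solved as a single convex program. Concretely, I would set
\[
 \widehat D(\rho):=\max_{0\preceq M\preceq I}\bigl\{\Tr(M\rho)-h_{\BEXT_{n,n}}(M)\bigr\}=\min_{\tau\in\BEXT_{n,n}(d,d)}\Dtr(\rho,\tau).
\]
The second equality is the same minimax/duality argument as for $\SEP$. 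It uses only that $\BEXT_{n,n}(d,d)$ is a compact convex set of states, so that $\max_{M}\min_\tau\Tr(M(\rho-\tau))=\min_\tau\max_M$ by Sion's theorem, and that $\max_{0\preceq M\preceq I}\Tr(M(\rho-\tau))=\Dtr(\rho,\tau)$.

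\textbf{Sandwiching.} Since $\SEP\subseteq\BEXT_{n,n}$, we have $\widehat D(\rho)\le\Dtr(\rho,\SEP)$. For the other direction, take the minimizer $\tau\in\BEXT_{n,n}$. Theorem~\ref{thm:intro-two-sided-bext} gives a separable $\sigma$ with $\Dtr(\tau,\sigma)\le\frac12\sqrt{(d-1)}/n$. The triangle inequality then yields
\[
 \Dtr(\rho,\SEP)\le\widehat D(\rho)+\tfrac12\sqrt{d-1}/n.
\]
I choose $n=\lceil \sqrt d/\epsilon\rceil$, so that the relaxation error is at most $\epsilon/2$.

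\textbf{Computation.} The quantity $\min_\tau\Dtr(\rho,\tau)$ is an explicit SDP, and I would write it out. The variables are a density operator $\Omega$ on $\Sym^n(\C^d)\ot\Sym^n(\C^d)$ and a PSD operator $P$ on $\C^d\ot\C^d$. The constraints are $P\succeq\rho-\Tr_{\text{rest}}\Omega$ and $P\succeq0$, and the objective $\Tr P$ is minimized; its value equals the trace distance. The $A_1B_1$ marginal map is linear in $\Omega$. The matrix dimension is $\binom{d+n-1}{n}^2$, whose logarithm is $O(n\log d)=\widetilde O(\sqrt d/\epsilon)$. Solving to additive accuracy $\epsilon/4$ by the ellipsoid or interior-point method takes time polynomial in this dimension and in the bit complexity, which gives $\exp(\widetilde O(\sqrt d/\epsilon))$. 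Outputting $\widehat D(\rho)+\epsilon/4$ or a similar centered value then approximates $\Dtr(\rho,\SEP)$ within $\epsilon$, after rescaling the constants. Alternatively, one could cite the Harrow--Montanaro reduction~\cite[Proposition~16]{HM13} directly with the BSS oracle of Theorem~\ref{thm:intro-bss}.

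\textbf{Main obstacle.} I expect the only delicate point to be the rigorous deterministic SDP solving. This requires a bounded feasible region containing a ball, or an equivalent treatment of the subspace restriction to the symmetric subspaces, together with the rounding of accuracy parameters. The standard ellipsoid guarantees~\cite{GLS93} apply because all variables are trace-bounded. The de Finetti content is entirely supplied by Theorem~\ref{thm:intro-two-sided-bext}.
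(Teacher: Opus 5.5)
Your proposal is correct and is essentially the paper's proof. Your $\widehat D(\rho)$ is exactly the paper's $\widehat D_n(\rho)$ from Proposition~\ref{prop:dual-separability-relaxation}, with the same choice $n\approx\sqrt d/\epsilon$, the same sandwich $\Dtr(\rho,\SEP)-\delta_n\le\widehat D_n(\rho)\le\Dtr(\rho,\SEP)$, and the same $\exp(O(n\log d))$ SDP size. The difference is cosmetic: the paper derives the sandwich from the support-function bound $h_{\SEP}\le h_{\BEXT_{n,n}}\le h_{\SEP}+\delta_n$ and solves the witness SDP (maximize $\Tr(M\rho)-t$ subject to $0\preceq M\preceq I$, $M^{[n,n]}\preceq tI$), whereas you use minimax to solve the equal-valued primal $\min_{\tau\in\BEXT_{n,n}}\Dtr(\rho,\tau)$ and get the sandwich from the triangle inequality.
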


Harrow and Montanaro connected BSS through reductions with more than a
dozen problems, including injective tensor norms, channel output norms,
mean-field ground-state energy, minimum output R\'enyi-$2$ and
min-entropies, and planted clique~\cite[Sec.~4.2]{HM13}.  Our algorithm
therefore transfers to every problem in this network that reduces to
BSS, with the resulting running time determined by the dimension and
accuracy losses of the reduction.

\subsubsection{From optimal trace norm to dimension-free
Hilbert--Schmidt norm}

The classical finite de Finetti theorem is dimension free, while
dimension-free quantum convergence cannot hold in trace norm.
More practically, the familiar $O(\sqrt d/N)$ trace-norm bounds require
$N=\Omega(\sqrt d)$ even for constant accuracy, making de Finetti-based
algorithms prohibitively expensive in many applications.  These two
obstacles motivated de Finetti theorems under restricted
measurements~\cite{BC12,BH13,LS15}.  Rougerie further asked whether a
dimension-independent quantum analogue of the classical
Diaconis--Freedman theorem could be
found~\cite[Remark~4.2(4)]{Rou15}. The optimal trace-norm theorem motivates another
look at this question.  We show that Hilbert--Schmidt geometry restores
dimension-free convergence, with the optimal dimension-independent
rate.

We use Hilbert--Schmidt distance
\[
 \DHS(\rho,\sigma):=\norm{\rho-\sigma}_2,
 \qquad
 \DHS(\rho,\mathcal K)
 :=\inf_{\tau\in\mathcal K}\DHS(\rho,\tau),
\]
and use $\DHHS(\mathcal K,\,\mathcal L)$ for the analogous worst-case
de Finetti error in Hilbert--Schmidt distance.
\begin{introtheorem}[Dimension-free bosonic Hilbert--Schmidt de
Finetti upper bound; cf.\ Theorem~\ref{thm:HS-main}]
\label{thm:intro-HS-deFinetti}
For every $N,d\ge2$,
\[
 \DHHS\Bigl(
  \BOS_N^{(2)}(d),\,
  \SEPH^{(2)}(d)
 \Bigr)
 =O(N^{-1/2}),
\]
with a universal implied constant.
\end{introtheorem}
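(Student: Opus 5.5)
The plan is to derive the Hilbert--Schmidt bound from Theorem~\ref{thm:intro-bos-two-site} by spectral truncation of the one-body density matrix. Fix $\rho_N\in\D(\Sym^N(\C^d))$ and let $\gamma=\rho_N^{(1)}$ have eigenvalues $\lambda_1\ge\lambda_2\ge\cdots$. Let $P$ be the spectral projection onto the top $k$ eigenvectors, with $k$ to be chosen of order $N$. The tail satisfies $\sum_{j>k}\lambda_j=:\eta$. Since $\lambda_j\le 1/j$, the tail is small in Hilbert--Schmidt norm in the following sense: $\Tr(\gamma(I-P)) $ need not be small, but $\norm{\gamma (I-P)}_2^2=\sum_{j>k}\lambda_j^2\le \lambda_{k}\eta\le 1/k$.

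\textbf{Localization.} Use the bosonic (Fock-space) localization of $\rho_N$ to the range of $P$: $\Sym^N(\C^d)\cong\bigoplus_{\ell}\Sym^\ell(P\C^d)\ot\Sym^{N-\ell}((I-P)\C^d)$. Tracing out the complementary factor yields a probability distribution $p_\ell$ and bosonic states $\rho^{P}_\ell\in\D(\Sym^\ell(\C^k))$. The key identity is that the $PP$-block of $\rho_N^{(2)}$ equals $\sum_\ell p_\ell\frac{\ell(\ell-1)}{N(N-1)}(\rho_\ell^P)^{(2)}$. Apply Theorem~\ref{thm:intro-bos-two-site} in dimension $k$ to each $\rho^P_\ell$ with $\ell\ge2$. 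This gives trace error $\sqrt{k}/\ell$, hence Hilbert--Schmidt error at most twice that, and one obtains Hartree mixtures $\int\ketbra{u}^{\ot2}d\nu_\ell$ on $\operatorname{ran}P$. The reweighting by $\ell(\ell-1)/N(N-1)$ is handled as in Lewin--Nam--Rougerie: rescale $u\mapsto\sqrt{\ell/N}\,u$ and add the missing mass on $(I-P)$ directions, or absorb it into the error. The weighted error is $\sum_\ell p_\ell\frac{\ell^2}{N^2}\cdot\frac{\sqrt k}{\ell}\le \sqrt k/N$.

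\textbf{Off-diagonal blocks.} The remaining blocks of $\rho_N^{(2)}$, such as $(P\ot Q)\rho^{(2)}(\cdot)$ and $(Q\ot Q)\rho^{(2)}(\cdot)$ with $Q=I-P$, must be bounded in Hilbert--Schmidt norm by $O(k^{-1/2})$. For a positive operator $X$, the inequality $\norm{(A\ot B)X(A'\ot B')}_2^2\le$ Cauchy--Schwarz products reduces these to quantities like $\norm{(Q\ot I)\rho^{(2)}}_2$. Using $\rho^{(2)}\preceq$ something controlled by $\gamma$, namely $\Tr_2$ and operator monotonicity along the lines of $\norm{(Q\ot I)\rho^{(2)}(Q\ot I)}_2\le\norm{Q\gamma Q}_2$ (the largest eigenvalue bound $\rho^{(2)}\le$ via $\norm{\rho^{(2)}}_\infty\le\lambda_{\max}$-type arguments), gives $O(\norm{Q\gamma Q}_2)=O(k^{-1/2})$. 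I expect this step to be the main obstacle: Hilbert--Schmidt norms of two-body blocks are not directly controlled by one-body spectra. The first thing to try is $\norm{Y}_2^2\le\norm{Y}_\infty\norm{Y}_1$ together with $\norm{(Q\ot I)\rho^{(2)}(Q\ot I)}_\infty\le\norm{Q\gamma Q}_\infty\le 1/k$ and trace at most $1$.

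\textbf{Optimization.} Summing the two contributions gives an error of $O(\sqrt{k}/N+k^{-1/2})$, and choosing $k\asymp N$ yields $O(N^{-1/2})$ independent of $d$. The approximating measure is a Hartree mixture, so it lies in $\SEPH^{(2)}(d)$, and the universal constant comes out explicit.
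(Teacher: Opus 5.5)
Your architecture matches the paper's. You cut the spectrum of $\gamma=\rho_N^{(1)}$ at scale $1/N$; your top-$k$ cut with $k\asymp N$ gives the same flatness $Q\gamma Q\preceq Q/(k+1)$. You localize into slices $\Sym^\ell(P\cH)\ot\Sym^{N-\ell}(Q\cH)$ with pair weight $\ell(\ell-1)/(N(N-1))$ and apply the $\sqrt d/N$ theorem in dimension $k$ on each slice, so the factor $\ell-1$ cancels. The discarded blocks are then paid for through an operator-norm bound times trace. The head step is fine.

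\textbf{The gap.} It is exactly the step you flagged: the principle you propose there is false for bosonic marginals. Neither $\norm{(Q\ot I)\rho^{(2)}(Q\ot I)}_\infty\le\norm{Q\gamma Q}_\infty$ nor its Hilbert--Schmidt analogue holds. Take $Q=I$, $N$ even, and the pair condensate $\psi=\chi^{(m)}_{N/2}\in\Sym^N(\C^m)$ from the lower-bound appendix. $O(m)$-invariance forces $\gamma=I/m$, yet $\bra{b_m}\rho^{(2)}\ket{b_m}=\norm{C_{b_m}(\psi)}^2=\frac{m+N-2}{m(N-1)}\ge\frac1{N-1}$. So the two-body tail can sit at height of order $1/N$ while the one-body spectrum is arbitrarily flat. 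No operator-monotonicity bound in terms of $\gamma$ can produce the needed $O(1/N)$ estimate; that estimate contains a genuinely bosonic additive $1/N$ floor.

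\textbf{How the paper closes it.} It reduces to pure $\psi$ and splits the tail into $P\cH\vee Q\cH$ and $\Sym^2(Q\cH)$. It then bounds $\bra\phi\rho_2\ket\phi=\norm{C_\phi(\psi)}^2$ for unit $\phi$ in each summand.
\begin{itemize}
\item Mixed block: a Schmidt decomposition $\phi=\sum_i\sqrt{\mu_i}\,p_i\vee q_i$ has at most $\operatorname{rank}P<N$ terms. Cauchy--Schwarz, the insertion bound $\sum_{i\le\ell}\norm{C_{p_i}^*(\Xi)}^2\le\frac{\ell+N-2}{N-1}$, and $\norm{C_{q_i}(\psi)}^2=\bra{q_i}\gamma\ket{q_i}\le1/N$ give $4/N$.
\item Tail--tail block: take a Takagi decomposition and handle the first $N$ terms the same way. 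For the remaining terms use $\mu_i\le1/(N+1)$, the identity $\norm{C_u^*(\Xi)}^2=\frac1{N-1}\bigl(\norm\Xi^2+(N-2)\norm{C_u(\Xi)}^2\bigr)$, and Bessel. This is precisely where the $1/N$ floor enters, giving $8/(N-1)$.
\item Positive-block estimates then give $\norm{\rho_2^{\mathrm{tail}}}_\infty\le12/(N-1)$ and $\norm{\rho_2-\rho_2^{\mathrm{head}}}_2^2\le2\norm{\rho_2^{\mathrm{tail}}}_\infty$. This disposes of the off-diagonal head--tail block at once.
\end{itemize}

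\textbf{A smaller point.} The missing trace $\delta=1-\Tr\rho_2^{\mathrm{head}}$ can be of order one, e.g.\ for $\gamma=I/d$ with $d\gg N$. So it cannot be ``absorbed into the error''. It must be restored by a Hartree mixture of small Hilbert--Schmidt norm, such as $\delta$ times the Haar average on $\Sym^2(Q\cH)$. That mixture has norm $\delta\sqrt{2/(s(s+1))}$, where $s=\dim Q\cH$. Since $\delta\le2\Tr(Q\gamma)\le2s/(k+1)$, this norm is $O(1/k)$.
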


This gives a qualitatively new, dimension-free de Finetti theorem for
bosonic states.  Its $N^{-1/2}$ dependence is optimal when the
dimension is allowed to grow: the lower bound in
Theorem~\ref{thm:HS-lower} adapts the bosonic pair construction of
Jiang, Tacla, and Caves~\cite{JTC17}.  Unlike in trace norm, the
purification argument does not give the same conclusion for arbitrary
exchangeable states:
tracing out the purifying registers need not contract
Hilbert--Schmidt norm.  Whether an analogous dimension-free theorem
holds in the exchangeable setting remains open.

\begin{introtheorem}[Dimension-free Hilbert--Schmidt upper bound for
two-sided Bose-symmetric extendible states; cf.\
Corollary~\ref{cor:HS-two-sided-bose-extension}]
\label{thm:intro-HS-two-sided-bext}
For every $n,d_A,d_B\ge1$,
\[
 \DHHS\Bigl(
  \BEXT_{n,n}(d_A,d_B),\,
  \SEP(d_A,d_B)
 \Bigr)
 =O(n^{-1/2}),
\]
with a universal implied constant.
\end{introtheorem}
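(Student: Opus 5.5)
We reduce Theorem~\ref{thm:intro-HS-two-sided-bext} to the one-sided dimension-free bosonic bound, Theorem~\ref{thm:intro-HS-deFinetti}, together with a spectral truncation step that mirrors the trace-norm argument. Fix $\rho_{AB}\in\BEXT_{n,n}(d_A,d_B)$ with extension $\Omega$ supported on $\Sym^n(\cH_A)\ot\Sym^n(\cH_B)$.

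\textbf{Building a bosonic state.} Purify $\Omega$ symmetrically and regroup the sites into pairs. Concretely, embed $\Sym^n(\cH_A)\ot\Sym^n(\cH_B)$ into $\Sym^n(\cH_A\ot\cH_B)$ through the pair sites $A_iB_i$. This requires symmetrizing jointly. Averaging $\Omega$ over simultaneous permutations $\pi\in\mathfrak S_n$ acting on both blocks leaves $\Omega$ unchanged, because each block is already symmetric. However, the resulting state lives on $(\cH_A\ot\cH_B)^{\ot n}$ and is exchangeable but not bosonic. A direct embedding therefore does not work.

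\textbf{Working with the two-sided structure directly.} Instead we run the argmax/SoS rounding that underlies Theorem~\ref{thm:two-sided-bose-extension}, but measure the error in Hilbert--Schmidt norm. Decompose $\rho_{AB}=\sum_k\lambda_k\ketbra{\psi_k}$. Set a threshold $\tau$ and split $\rho_{AB}=\rho_{\ge\tau}+\rho_{<\tau}$. The head $\rho_{\ge\tau}$ has rank at most $1/\tau$, so it is effectively supported on a subspace whose Schmidt-type local dimension is controlled by $1/\tau$. On the reduced local supports $P_A\cH_A$ and $P_B\cH_B$, of dimension $r\lesssim 1/\tau$, we apply the trace-norm Theorem~\ref{thm:intro-two-sided-bext}. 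Restricting a Bose-symmetric extension to local subspaces $P_A^{\ot n}\ot P_B^{\ot n}$ preserves Bose symmetry after renormalization. This gives trace error $O(\sqrt{r}/n)$. The tail satisfies $\norm{\rho_{<\tau}}_2^2\le\tau\Tr\rho_{<\tau}\le\tau$.

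\textbf{Balancing.} We convert the trace bound to an HS bound via $\norm{X}_2\le\norm{X}_1$. Combining the pieces gives total error $O(\sqrt{r}/n)+O(\sqrt{\tau})+(\text{projection loss})$. Taking $r\sim 1/\tau$ and $\tau\sim 1/n$ yields $O(n^{-1/2})$.

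\textbf{Main obstacle: the projection loss.} Compressing the extension to local supports loses trace weight, namely the mass of $\rho_A$ and $\rho_B$ outside $P_A$ and $P_B$. We should therefore choose $P_A$ and $P_B$ as the top spectral projections of the marginals $\rho_A$ and $\rho_B$, not of $\rho_{AB}$. The discarded part is then small only in HS norm, not in trace norm, so a gentle-measurement-type estimate must be carried out in HS geometry. Our plan is to bound $\norm{\rho_{AB}-(P_A\ot P_B)\rho_{AB}(P_A\ot P_B)}_2$ by $O(\norm{\rho_A(I-P_A)}_2^{1/2}+\dots)$. We will also use the fact that the renormalization factor $\Tr[(P_A\ot P_B)\rho_{AB}]$ multiplies the separable approximant. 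This keeps it in $\SEP$ up to rescaling, which we fix by mixing in a product state. If the square-root loss here is too costly, we would instead follow the route of Theorem~\ref{thm:HS-main}, applied verbatim with the two-sided argmax rounding.
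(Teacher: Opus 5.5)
There is a genuine gap. The two steps that must be dimension free are either incorrect or not carried out.

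\textbf{The head.} Compressing the whole extension by $P_A^{\ot n}\ot P_B^{\ot n}$ and renormalizing does give a state in $\BEXT_{n,n}(r_A,r_B)$. But its $A_1B_1$ marginal is $\rho_{AB}$ conditioned on \emph{all} $2n$ sites lying in the heads. It is not the compression $(P_A\ot P_B)\rho_{AB}(P_A\ot P_B)$ that you need to approximate. That event can be exponentially rare. Take the mixture of $u^{\ot n}\ot w^{\ot n}$ with $u=(e_0+e^{i\theta}e_j)/\sqrt2$, $\theta$ uniform and $j$ uniform over $D\ge n$ directions. Then $P_A=\ketbra{e_0}$, and the event has probability $2^{-n}$, while the compression has trace $1/2$. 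The head has to be handled slice by slice. Decompose the extension by the numbers $(k,l)$ of $A$- and $B$-sites in the heads (Fact~\ref{fact:symmetric-orthogonal-sum}). Apply Theorem~\ref{thm:two-sided-bose-extension} at level $(k,l)$. Then use that the pair weight $kl/n^2$ cancels the error $\sqrt{(r-1)/(kl)}$, as in Lemma~\ref{lem:HS-round-head}. Your preliminary split of $\rho_{AB}$ by its own spectrum cannot feed the trace-norm theorem at all. The rank of the head says nothing about its local supports, since a single eigenvector can have Schmidt rank $\min\{d_A,d_B\}$.

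\textbf{The discarded part.} This is the more serious problem. The Hilbert--Schmidt gentle-measurement bound you plan, $\norm{\rho_{AB}-(P_A\ot P_B)\rho_{AB}(P_A\ot P_B)}_2\lesssim\norm{\rho_A(I-P_A)}_2^{1/2}+\cdots$, is false for general states. For the maximally entangled state on $\C^d\ot\C^d$ with $d>n$, one has $P_A=P_B=0$, so the left side is $1$ while the right side is $O(d^{-1/4})$. Even if the bound held for extendible states, $\norm{\rho_A(I-P_A)}_2\le n^{-1/2}$ would only give $n^{-1/4}$.

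What is needed is the mechanism of Lemmas~\ref{lem:HS-spectral-tail-blocks} and~\ref{lem:HS-remove-tail}, adapted to two blocks:
\begin{itemize}
\item Push the flatness $Q_A\rho_AQ_A\preceq Q_A/n$ through partial inner products of a pure extension and the Bessel bounds of Fact~\ref{fact:symmetric-contraction-bessel}, with a singular-value split after $n$ terms.
\item This bounds the \emph{diagonal} tail blocks by $O(1/n)$ in operator norm.
\item Positivity (Fact~\ref{fact:positive-block-discard}) then bounds the entire discarded part, off-diagonal blocks included, by $O(n^{-1/2})$ in Hilbert--Schmidt norm.
\end{itemize}
The trace restoration also needs care. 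The missing trace can be $\Theta(1)$; it is $1/2$ in the example above. Mixing in a product state then costs $\Theta(1)$, unless the mass is spread over the high-dimensional tail as in Lemma~\ref{lem:HS-residual-mass}.

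Your fallback, ``follow Theorem~\ref{thm:HS-main} verbatim,'' is therefore where the whole proof would live. It is feasible, and the paper mentions it as future work, but it is not done here.

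\textbf{The paper's route.} The paper sidesteps all of this. It places the extension in $\Sym^{2n}\bigl((\ket0\ot\cH_A)\oplus(\ket1\ot\cH_B)\bigr)$. There $\rho$ appears, via the isometry \eqref{eq:label-bundling-isometry}, as a \emph{compression} $V^\dagger\widehat\rho_{2n}^{(2)}V=\frac{n}{2n-1}\rho$ of the two-site marginal, rather than as a partial trace. It then applies Theorem~\ref{thm:HS-main} through Hilbert--Schmidt duality with traceless witnesses, using $h(VMV^\dagger)=\frac12h_{\SEP}(M)\ge0$.

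\textbf{The pair embedding.} Your reason for rejecting it is wrong. After regrouping, $\Sym^n(\cH_A)\ot\Sym^n(\cH_B)\subseteq\Sym^n(\cH_A\ot\cH_B)$, so the state is bosonic. The real obstruction is that extracting the cross marginal requires a partial trace, which is not Hilbert--Schmidt contractive.
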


This rate is likewise optimal when the local dimensions may grow.
Theorem~\ref{thm:HS-two-sided-bose-extension-lower} groups the same
adapted pair construction into two equal blocks.

Because the Hilbert--Schmidt norm is its own dual, the state
approximation also controls $h_{\SEP}(M)$ for every Hermitian objective
matrix $M$ of bounded Hilbert--Schmidt norm.  Adding a multiple of the
identity to $M$ merely shifts every feasible value by the same known
constant, so only its traceless part matters.  For positive-semidefinite
objectives, Shi and Wu previously obtained a polynomial-time
approximation at fixed accuracy and bounded Hilbert--Schmidt norm by an
$\epsilon$-net method~\cite{SW15}.  Here the conclusion for arbitrary
Hermitian objectives follows instead from a constant-level
symmetric-extension SDP: at every fixed accuracy, it suffices to take
$n=O(\epsilon^{-2})$, so the resulting semidefinite program is
polynomial in the local dimensions.  The same de Finetti estimate also
improves the quasipolynomial weak-membership guarantee of Brand\~ao,
Christandl, and Yard~\cite{BCY11algorithm}.
Concurrent work of Malavolta~\cite{Malavolta26} gives a randomized
polynomial-time algorithm for constant-gap Euclidean separability.

\begin{introtheorem}[Polynomial-time $h_{\SEP}$ approximation;
cf.\ Theorem~\ref{thm:HS-BSS}]
\label{thm:intro-HS-hsep}
For every fixed $\epsilon>0$, $h_{\SEP}(M)$ admits a deterministic
additive-$\epsilon$ approximation in time polynomial in the local
dimensions whenever $M$ is explicitly given and its traceless part has
Hilbert--Schmidt norm $O(1)$.
\end{introtheorem}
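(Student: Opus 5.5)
The plan is to deduce Theorem~\ref{thm:intro-HS-hsep} from the dimension-free two-sided bound of Theorem~\ref{thm:intro-HS-two-sided-bext}. We combine it with Hilbert--Schmidt self-duality and solve the level-$n$ symmetric-extension SDP for a constant $n$.

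\textbf{Reduction to the traceless part.} Write $M=\frac{\Tr M}{d_Ad_B}I+M_0$ with $\Tr M_0=0$. For every state $\sigma$ we have $\Tr(M\sigma)=\frac{\Tr M}{d_Ad_B}+\Tr(M_0\sigma)$. The constant is known exactly, so it suffices to approximate $h_{\SEP}(M_0)$ up to $\epsilon$. Assume $\norm{M_0}_2\le C$.

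\textbf{Relaxation.} Define $h_{\BEXT_{n,n}}(M_0):=\max_{\rho\in\BEXT_{n,n}(d_A,d_B)}\Tr(M_0\rho)$. Since $\SEP\subseteq\BEXT_{n,n}$, this value is at least $h_{\SEP}(M_0)$. For the reverse direction, take a maximizer $\rho$. Theorem~\ref{thm:intro-HS-two-sided-bext} gives some $\tau\in\SEP$ with $\norm{\rho-\tau}_2\le c\,n^{-1/2}$; the infimum over the compact set $\SEP$ is attained. Cauchy--Schwarz in the Hilbert--Schmidt inner product then gives
\[
 \Tr(M_0\rho)\le\Tr(M_0\tau)+\norm{M_0}_2\norm{\rho-\tau}_2\le h_{\SEP}(M_0)+\frac{cC}{\sqrt n}.
\]
Choosing $n=\lceil (2cC/\epsilon)^2\rceil=O(\epsilon^{-2})$ makes the relaxation gap at most $\epsilon/2$. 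This $n$ is a constant for fixed $\epsilon$.

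\textbf{SDP formulation and complexity.} The value $h_{\BEXT_{n,n}}(M_0)$ equals the SDP
\[
 \max\Bigl\{\Tr\bigl((M_0\ot I)\,\Omega\bigr):\ \Omega\succeq0,\ \Tr\Omega=1,\ \Omega \text{ on }\Sym^n(\C^{d_A})\ot\Sym^n(\C^{d_B})\Bigr\}.
\]
The objective only involves the $A_1B_1$ marginal, which is a linear map. Invariance under permutations within each block is automatic from the symmetric-subspace support. The matrix dimension is $\binom{d_A+n-1}{n}\binom{d_B+n-1}{n}\le (d_Ad_B)^{n}$, which is polynomial for fixed $n$. Entries of the partial-trace map in the symmetric basis are computable in polynomial time. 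A deterministic ellipsoid or interior-point solver reaches additive accuracy $\epsilon/2$ in time polynomial in the dimension, the input bit length and $\log(1/\epsilon)$. The feasible set is bounded (trace one, PSD), so the standard strict-feasibility and bounded-region hypotheses hold, for example with the maximally mixed state on the symmetric space as an interior point. The output approximates $h_{\SEP}(M)$ within $\epsilon$.

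\textbf{Main obstacle.} The only substantive input is the dimension-free estimate of Theorem~\ref{thm:intro-HS-two-sided-bext}. Given that result, the most delicate part is the bookkeeping in the SDP step. We must check that the solver accuracy and bit complexity are polynomial when the dimension is $(d_Ad_B)^{O(\epsilon^{-2})}$. We must also note that the exponent depends on $C$ and $\epsilon$ only, which is why the statement requires fixed $\epsilon$ and $\norm{M_0}_2=O(1)$.
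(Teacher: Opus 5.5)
Your proposal is correct and follows the paper's proof of Theorem~\ref{thm:HS-BSS} step for step. You center $M$ to its traceless part, bound $h_{\BEXT_{n,n}}(M_0)-h_{\SEP}(M_0)$ by $\norm{M_0}_2$ times the dimension-free two-sided Hilbert--Schmidt de Finetti error via Hilbert--Schmidt Cauchy--Schwarz, and take $n=O(\norm{M_0}_2^2/\epsilon^2)$. The only cosmetic difference is that the paper computes the relaxation directly as $\lambda_{\max}(M^{[n,n]})$ on $\Sym^n(\C^{d_A})\ot\Sym^n(\C^{d_B})$ rather than calling a general SDP solver, which is the same computation.
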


\begin{introtheorem}[Polynomial-time Hilbert--Schmidt weak membership;
cf.\ Theorem~\ref{thm:HS-separability-testing}]
\label{thm:intro-HS-weak-membership}
For every fixed $\epsilon>0$, given an explicitly described state
$\rho\in\D(\C^{d_A}\ot\C^{d_B})$, its Hilbert--Schmidt distance
$\DHS\bigl(\rho,\SEP(d_A,d_B)\bigr)$ can be approximated to additive
error $\epsilon$ in deterministic time
polynomial in the local dimensions.  In the same running time one can
solve Hilbert--Schmidt weak membership, distinguishing
\[
 \rho\in\SEP(d_A,d_B)
 \qquad\text{from}\qquad
 \DHS\bigl(\rho,\SEP(d_A,d_B)\bigr)\ge\epsilon,
\]
with no requirement in the intervening regime.
\end{introtheorem}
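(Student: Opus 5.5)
The reduction is to Corollary~\ref{cor:HS-two-sided-bose-extension} (Theorem~\ref{thm:intro-HS-two-sided-bext}), in the spirit of the Brand\~ao--Christandl--Yard algorithm but in Hilbert--Schmidt geometry. We first invoke Hilbert--Schmidt duality for the convex compact set $\SEP=\SEP(d_A,d_B)$. For a state $\rho$,
\[
 \DHS(\rho,\SEP)=\max_{\norm{M}_2\le 1,\ M=M^\dagger}\bigl\{\Tr(M\rho)-h_{\SEP}(M)\bigr\},
\]
where $h_{\SEP}(M)=\max_{\sigma\in\SEP}\Tr(M\sigma)$. This follows from the separating hyperplane theorem in the real Hilbert space of Hermitian matrices, since the HS norm is self-dual. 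Because $\Tr\rho=\Tr\sigma=1$, we may restrict to traceless $M$ without changing the value.

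Next we fix $n=\lceil C/\epsilon^2\rceil$, with $C$ chosen from the universal constant in Corollary~\ref{cor:HS-two-sided-bose-extension}, so that every $\tau\in\BEXT_{n,n}$ is within HS distance $\epsilon/4$ of $\SEP$. We replace $\SEP$ by the outer relaxation $\BEXT_{n,n}\supseteq\SEP$ and compute
\[
 \widehat D(\rho):=\min_{\tau\in\BEXT_{n,n}(d_A,d_B)}\norm{\rho-\tau}_2 .
\]
The inclusion $\SEP\subseteq\BEXT_{n,n}$ gives $\widehat D\le\DHS(\rho,\SEP)$. In the other direction, take an optimal $\tau$ and any $\sigma\in\SEP$ with $\norm{\tau-\sigma}_2\le\epsilon/4$. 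The triangle inequality then gives $\DHS(\rho,\SEP)\le\widehat D+\epsilon/4$. So $\widehat D$ is an additive $\epsilon/4$ approximation, which lets us skip the dual formulation entirely; duality serves only as a sanity check.

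Computing $\widehat D$ is a convex program, in fact an SDP with a second-order-cone objective. The variable is a density operator $\Omega$ on $\Sym^n(\C^{d_A})\ot\Sym^n(\C^{d_B})$. Its $A_1B_1$ marginal is a linear map, computed by embedding the symmetric subspaces in $(\C^{d_A})^{\ot n}\ot(\C^{d_B})^{\ot n}$ and partially tracing. The block-permutation invariance is automatic from the symmetric support, so $\BEXT_{n,n}$ is exactly the image of this feasible set. The dimension is $\binom{d_A+n-1}{n}\binom{d_B+n-1}{n}\le (d_A d_B)^{O(n)}=\operatorname{poly}(d_A,d_B)$ for fixed $\epsilon$. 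Ellipsoid or interior-point methods therefore solve it to additive accuracy $\epsilon/4$ in deterministic polynomial time, given the explicit bit description of $\rho$.

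Finally, weak membership is obtained by thresholding. We output ``separable'' iff the computed estimate is below $\epsilon/2$. If $\rho\in\SEP$, the estimate is at most $\epsilon/4$. If $\DHS(\rho,\SEP)\ge\epsilon$, the estimate is at least $\epsilon-\epsilon/4-\epsilon/4=\epsilon/2$.

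The main obstacle I anticipate is the numerical step. We must verify that the SDP solver's guarantees (a bounded feasible region, which holds since the trace is $1$, and a known interior or a strict-feasibility workaround) give additive error within the claimed deterministic polynomial time. We must also verify that the linear marginal map can be written down explicitly in a basis of symmetric subspaces in polynomial time. Both are standard, but they must be stated with care.
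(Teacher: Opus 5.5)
Your proposal is correct and is essentially the paper's proof. Both take $n=O(\epsilon^{-2})$ from Corollary~\ref{cor:HS-two-sided-bose-extension}, relax $\SEP$ to $\BEXT_{n,n}(d_A,d_B)$, solve an SDP whose size is governed by $\binom{d_A+n-1}{n}\binom{d_B+n-1}{n}$, and threshold the estimate for weak membership. The only difference is that you solve the primal problem $\min_{\tau\in\BEXT_{n,n}}\norm{\rho-\tau}_2$ and conclude by the triangle inequality, while the paper solves the dual witness program $\max\{\Tr(M\rho)-\lambda_{\max}(M^{[n,n]})\}$ over traceless $\norm{M}_2\le1$ via Theorem~\ref{thm:HS-BSS}; by Lemma~\ref{lem:compact-convex-distance-duality} with $\mathcal L=\BEXT_{n,n}(d_A,d_B)$, the two programs have the same value.
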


\begin{figure}[H]
\centering
\begin{tikzpicture}[
 theorem/.style={
  draw=teal!65!black,
  fill=teal!6,
  rounded corners=2pt,
  align=center,
  inner sep=5pt,
  minimum height=1.15cm,
  text width=3.05cm,
  font=\small
 },
 application/.style={
  draw=brown!70!black,
  fill=brown!7,
  rounded corners=2pt,
  align=center,
  inner sep=5pt,
  minimum height=1.15cm,
  text width=3.05cm,
  font=\small
 },
 transferred/.style={
  draw=black!40,
  fill=black!3,
  dashed,
  rounded corners=2pt,
  align=center,
  inner sep=5pt,
  minimum height=1.15cm,
  text width=3.05cm,
  font=\small
 },
 resultarrow/.style={
  -{Latex[length=2mm,width=1.4mm]},
  semithick,
  draw=black!60
 },
 transferarrow/.style={
  -{Latex[length=2mm,width=1.4mm]},
  semithick,
  dashed,
  draw=black!40
 }
]
\node[theorem] (trace) at (0,0) {
 {\bfseries Two-site trace-norm de Finetti}\\[-1pt]
 {\scriptsize Theorems~\ref{thm:intro-bos-two-site},
 \ref{thm:intro-ex-two-site}}
};

\node[theorem] (higher) at (4.25,3.75) {
 {\bfseries \(t\)-site trace-norm de Finetti}\\[-1pt]
 {\scriptsize Theorem~\ref{thm:intro-higher}}
};
\node[theorem] (twosided) at (4.25,0) {
 {\bfseries Two-sided Bose-symmetric extendible states}\\[-1pt]
 {\scriptsize Theorem~\ref{thm:intro-two-sided-bext}}
};
\node[theorem] (hsdf) at (4.25,-3.75) {
 {\bfseries Bosonic Hilbert--Schmidt de Finetti}\\[-1pt]
 {\scriptsize Theorem~\ref{thm:intro-HS-deFinetti}}
};

\node[application] (disentangler) at (8.50,1.45) {
 {\bfseries Subexponential disentanglers}\\[-1pt]
 {\footnotesize Theorem~\ref{thm:intro-disentangler}}
};
\node[application] (bss) at (8.50,-1.15) {
 {\bfseries Subexponential $h_{\SEP}$ algorithm}\\[-1pt]
 {\footnotesize Theorem~\ref{thm:intro-bss}}
};
\node[application] (tracetest) at (12.75,-1.15) {
 {\bfseries Trace-norm separability testing}\\[-1pt]
 {\footnotesize Theorem~\ref{thm:intro-trace-testing}}
};
\node[transferred] (hmnetwork) at (12.75,2.30) {
 {\bfseries Further}\\[-1pt]
 {\bfseries consequences}\\[-1pt]
 {\scriptsize via Harrow--Montanaro}\\[-3pt]
 {\scriptsize reductions}\\[-1pt]
 {\scriptsize mean-field energy, tensor/channel norms,}\\[-1pt]
 {\scriptsize R\'enyi entropies, planted clique}
};
\node[application] (hsbss) at (8.50,-3.75) {
 {\bfseries Hilbert--Schmidt $h_{\SEP}$}\\[-1pt]
 {\footnotesize Theorem~\ref{thm:intro-HS-hsep}}
};
\node[application] (hstest) at (12.75,-3.75) {
 {\bfseries Hilbert--Schmidt\\separability\\testing}\\[-1pt]
 {\footnotesize Theorem~\ref{thm:intro-HS-weak-membership}}
};

\coordinate (fork) at (2.15,0);
\coordinate (applicationfork) at (6.40,0);
\coordinate (bssfork) at (10.65,-1.15);
\draw[resultarrow] (trace.east) -- (fork) |- (higher.west);
\draw[resultarrow] (fork) |- (twosided.west);
\draw[resultarrow] (fork) |- (hsdf.west);
\draw[resultarrow] (twosided.east) -- (applicationfork) |- (disentangler.west);
\draw[resultarrow] (applicationfork) |- (bss.west);
\draw[resultarrow] (bss.east) -- (bssfork) -- (tracetest.west);
\draw[transferarrow] (bssfork) |- (hmnetwork.west);
\draw[resultarrow] (hsdf.east) -- (hsbss.west);
\draw[resultarrow] (hsbss.east) -- (hstest.west);
\end{tikzpicture}
\begin{minipage}{0.8\textwidth}
\small
\caption{Dependency structure of the main results.  Solid arrows show
the proof organization used in the main body.  Blue nodes are de Finetti
statements; tan nodes are applications.  The dashed gray branch records further
consequences available through the reductions of
Harrow--Montanaro~\cite[Sec.~4.2]{HM13}; their parameters depend on the
losses in those reductions.}
\label{fig:main-results-tree}
\end{minipage}
\end{figure}

\subsection{Technical overview}

The sharp two-site bosonic trace-norm theorem is the engine of the
paper.  We first explain its argmax proof, obtain the exchangeable
theorem by purification, and develop the rectangular version of
argmax rounding.  The rectangular estimate directly gives the
two-sided de Finetti theorem and its trace-norm applications.  The
higher-marginal proof then combines the same estimate with a symmetry
test.  Finally, a separate spectral-truncation argument builds on the
sharp trace-norm bound to prove the dimension-free bosonic
Hilbert--Schmidt theorem.

\subsubsection{Optimal two-site trace-norm de Finetti}

We begin with the optimal bosonic and exchangeable two-site bounds
stated in Theorems~\ref{thm:intro-bos-two-site}
and~\ref{thm:intro-ex-two-site}.

\paragraph{From de Finetti to a symmetric-tensor SDP integrality gap.}
The route to the proof began with a different problem.  In our recent
work on sum-of-squares algorithms~\cite{JWX26}, we developed argmax rounding in
pursuit of an ETH-optimal analysis of the SoS algorithm for Best Separable State
in the perfect-completeness case.  There the idea is quite natural: build a
high-moment polynomial from the SoS solution, choose the direction
where it is largest, and use the local optimality conditions at this
direction for rounding.

The connection between symmetric extensions and SoS/SDP relaxations
reaches back at least to the Doherty--Parrilo--Spedalieri (DPS)
hierarchy, which casts symmetric extendibility as an SDP test for
separability~\cite{DPS04}.  The duality between SoS relaxations on the
sphere and the DPS hierarchy is explicitly recorded by Fang and
Fawzi~\cite{FF19}.  By comparison, the associated rounding viewpoint
seems to have been explored less extensively; see, however,
Rao~\cite{Rao24} for related recent work on Hermitian SoS rounding and
quantum de Finetti theorems.

In our setting, trace-norm duality makes this interface concrete: the
Hartree optimum of a distinguishing witness is a polynomial
optimization over the sphere, while the level-$N$ bosonic optimum
is the corresponding symmetric-tensor SDP relaxation.  The argmax
viewpoint is especially natural here because the de Finetti theorem is
information-theoretic: unlike in the algorithmic setting, there is no
need to find the maximizing direction efficiently.  We may choose it
exactly and exploit its local optimality conditions without regard to
running time.  This shared geometry leads to the optimal de Finetti
rate, and the parallel with the earlier analyses goes further.

The BKS argument obtains its rounding direction by Gaussian rounding,
averaging over random directions and then conditioning~\cite{BKS17}.
Standard constructive de Finetti proofs likewise first choose a
universal rounding map, usually by averaging tensor powers over
Haar-random directions.  Given a bosonic state, this fixed map produces
one mixture that must approximate its marginal against every
measurement at once.  The resulting argument gives the familiar
$O(d/N)$ bosonic trace-norm bound~\cite{CKMR07,Chi11,Har13,LNR15}.

Our proof replaces this averaging by an objective-dependent extremal
choice: the rounding is tailored to the distinguishing witness.  To make
that choice, we reverse the order.  Trace-norm duality first exposes a
bounded Hermitian witness $M$, with $\norm M_\infty\le1$, for
the approximation error; only then do we
round an optimizer of the symmetric-tensor SDP relaxation for $M$.
That such dependence can reveal
extra information is already suggested by measurement-restricted de
Finetti theorems~\cite{BH13,LS15} and the value de Finetti theorem for
stoquastic verification~\cite{GGJ26}.  We do not need a single map that
succeeds simultaneously for all witnesses.  This is a weaker demand, but
the dual formulation below says that it is exactly enough.

To make this precise, let $M\in\Herm(\Sym^2\cH)$ and define
\begin{align*}
 h(M)
 &:=
 \max_{\norm{u}=1}
 \bra{u^{\ot2}}M\ket{u^{\ot2}},\\
 M^{[N]}
 &:=
 \binom N2^{-1}
 \sum_{1\le i<j\le N}M_{ij}
 \quad\text{on }\Sym^N\cH,
\end{align*}
where $M_{ij}$ acts as $M$ on sites $i,j$ and as the identity on every
other site.  Thus $h(M)$ is the original quartic optimization, while
\[
 \lambda_{\max}(M^{[N]})
 =
 \max_{\rho_N\in\D(\Sym^N\cH)}
 \Tr(M^{[N]}\rho_N)
\]
is the value of its level-$N$ symmetric-tensor SDP relaxation, the
bosonic form of the corresponding SoS relaxation.  After dualizing trace distance
and using permutation symmetry, the worst-case de Finetti error is exactly
\begin{equation}
 \DHtr\Bigl(
  \BOS_N^{(2)}(d),\,
  \SEPH^{(2)}(d)
 \Bigr)
 =
 \frac12\sup_{\norm{M}_\infty\le1}
 \left\{\lambda_{\max}(M^{[N]})-h(M)\right\}.
 \label{eq:intro-dual-bos}
\end{equation}
So the de Finetti problem has become a familiar TCS question: bound the
integrality gap of an SDP relaxation.  We no longer need to construct
the approximating mixture.  For each bounded Hermitian witness $M$, it
is enough to round an optimizer of the corresponding relaxation
with loss $O(\sqrt d/N)$.

\paragraph{The argmax rounding object.}
Fix $M$ and let $\psi\in\Sym^N(\cH)$ be a unit top eigenvector of
$M^{[N]}$.  The argmax principle gives a natural rounding object:
choose the tensor power $u^{\ot N}$ having maximum overlap with
$\psi$.  After fixing its phase, write
\[
 c:=\braket{u^{\ot N}}{\psi}>0.
\]
The argmax lemma provides the basic tools for analyzing this rounding
object.  Take the partial inner product of $\psi$ with $u$ on all but
two sites:
\[
 \eta_u
 :=
 (\bra u^{\ot(N-2)}\ot I_{\cH^{\ot2}})\psi
 \in\Sym^2(\cH).
\]
Concretely, the lemma says
\begin{equation}
 \eta_u=c\,u^{\ot2}+\zeta_u,
 \qquad
 \zeta_u\in\Sym^2(u^\perp),
 \qquad
 \sup_{\substack{v\perp u\\\norm{v}=1}}
 \abs{\braket{v^{\ot2}}{\zeta_u}}
 \le\frac{c}{N-1}.
 \label{eq:intro-argmax-contraction}
\end{equation}
Here is why.  Perturb $u$ toward any orthogonal direction $v$ and
differentiate the overlap with $\psi$.  The first derivative says that
the component with one copy of $v$ vanishes.  The second derivative
says that the component with two copies of $v$ is only $O(c/N)$.
Doing this for every $v\perp u$ gives
\eqref{eq:intro-argmax-contraction}.  In words, the partial inner
product is the tensor square $u^{\ot2}$, scaled by $c$, plus an error
living entirely in directions orthogonal to $u$ and having small
overlap with every orthogonal square $v^{\ot2}$.  These are precisely
the two tools
supplied by the argmax lemma: the mixed component vanishes, and the
fully orthogonal component is small against every square.  This is the
bosonic counterpart of the pseudo-covariance lemma underlying our SoS
argmax rounding~\cite{JWX26}.

\paragraph{Bounding the integrality gap.}
We now use the information supplied by the argmax lemma to compare
$\lambda_{\max}(M^{[N]})$ with $h(M)$.  By symmetry, the lift
$M^{[N]}$ can be evaluated on sites $1,2$, and hence
\begin{equation}
 \lambda_{\max}(M^{[N]})c
 =
 \braket{Mu^{\ot2}}{\eta_u}.
 \label{eq:intro-eigen-contraction}
\end{equation}
Let
\[
 B_u:=P_{\Sym^2(u^\perp)}Mu^{\ot2}.
\]
The component of $Mu^{\ot2}$ along $u^{\ot2}$ contributes the feasible
tensor-square value
$\bra{u^{\ot2}}M\ket{u^{\ot2}}\le h(M)$; the component in
$u\vee u^\perp$ is orthogonal to $\eta_u$.  Equations
\eqref{eq:intro-argmax-contraction} and
\eqref{eq:intro-eigen-contraction} therefore leave the exact identity
\[
 \bigl(
  \lambda_{\max}(M^{[N]})
  -\bra{u^{\ot2}}M\ket{u^{\ot2}}
 \bigr)c
 =
 \braket{B_u}{\zeta_u}.
\]

Choose an orthonormal basis of $u^\perp$ and identify symmetric
two-tensors with complex symmetric matrices.  Write $C_u$ for the
matrix representing $B_u$ and $Z_u$ for the matrix representing
$\zeta_u$.  The argmax bound
\eqref{eq:intro-argmax-contraction} says
$\norm{Z_u}_\infty\le c/(N-1)$.  Moreover,
$C_u$ is a $(d-1)\times(d-1)$ matrix and
$\norm{C_u}_2=\norm{B_u}\le\norm M_\infty$.  Its trace norm is
therefore at most $\sqrt{d-1}\norm M_\infty$.  Schatten duality now gives
\[
 \abs{\braket{B_u}{\zeta_u}}
 =
 \abs{\Tr(C_u^\dagger Z_u)}
 \le
 \frac{c\sqrt{d-1}}{N-1}\norm M_\infty.
\]
Consequently,
\[
 \lambda_{\max}(M^{[N]})
 \le
 h(M)+\frac{\sqrt{d-1}}{N-1}\norm M_\infty.
\]
Together with \eqref{eq:intro-dual-bos}, this proves the sharp
$O(\sqrt d/N)$ theorem.  The important point is that argmax rounding
leaves the single matrix pairing $\Tr(C_u^\dagger Z_u)$.  Passing from
the Hilbert--Schmidt norm of the $(d-1)\times(d-1)$ matrix $C_u$ to its
trace norm costs only $\sqrt{d-1}$, which is exactly the improvement
we need.

%
The exchangeable theorem follows by purifying each site into
dimension $d^2$, applying the bosonic theorem, and tracing out the
purifying registers.

\paragraph{Two-sided extendibility.}
The same proof extends directly to two symmetric blocks.  For a
Hermitian operator $M$ on $\cH_A\ot\cH_B$, average $M$ over the $nm$
cross-block pairs to form the rectangular lift $M^{[n,m]}$.  Maximizing
the overlap of a top eigenvector with $x^{\ot n}\ot y^{\ot m}$, the
same first- and second-order argument gives
\[
 \lambda_{\max}\bigl(M^{[n,m]}\bigr)-h_{\SEP}(M)
 \le
 \sqrt{\frac{\min\{d_A,d_B\}-1}{nm}}
 \norm{M}_\infty.
\]
Trace-distance duality yields
Theorem~\ref{thm:intro-two-sided-bext}, and a canonical purification
of the two blocks yields Corollary~\ref{cor:intro-two-sided-ext}.

\paragraph{The disentangler from the two-sided theorem.}
The disentangler in Theorem~\ref{thm:intro-disentangler} is obtained by
realizing $\BEXT_{n,n}(d,d)$ as the image of a channel: project the
$A_1,\ldots,A_n$ registers and the $B_1,\ldots,B_n$ registers onto
$\Sym^n(\C^d)$, sending rejected mass to a fixed product state, and
then retain only $A_1B_1$.  Every output is Bose-symmetric
$(n,n)$-extendible, while every separable state has an exact preimage
obtained from tensor powers of its product-state decomposition.
Choosing $n=O(\sqrt d/\epsilon)$ gives an $(\epsilon,0)$-disentangler
with input dimension $D=d^{2n}$, and hence
$\log D=O_\epsilon(\sqrt d\log d)$.

\paragraph{BSS and weak membership from the same theorem.}
For an accepting operator $0\preceq M\preceq I$ on $AB$, optimize
over $\BEXT_{n,n}(d_A,d_B)$ and write
\[
 h_{\BEXT_{n,n}}(M)
 :=
 \max_{\rho\in\BEXT_{n,n}(d_A,d_B)}\Tr(M\rho).
\]
Every separable state is feasible, while the two-sided theorem places
every feasible state within
$O(\sqrt{\min\{d_A,d_B\}}/n)$ of the separable states.
Consequently,
\[
 h_{\SEP}(M)
 \le
 h_{\BEXT_{n,n}}(M)
 \le
 h_{\SEP}(M)
 +O\left(\frac{\sqrt{\min\{d_A,d_B\}}}{n}\right).
\]
This bound, valid for every $M$, gives the BSS approximation in
Theorem~\ref{thm:intro-bss}.  The Harrow--Montanaro
reduction~\cite[Proposition~16]{HM13} then gives trace-norm weak
membership, and the equivalent trace-distance dual formulation gives
the additive distance approximation in
Theorem~\ref{thm:intro-trace-testing}.

\subsubsection{\texorpdfstring{$t$}{t}-site trace-norm de Finetti}

\paragraph{The \texorpdfstring{$1:(t-1)$}{1:(t-1)} cut.}
For the $t$-site bounds in Theorem~\ref{thm:intro-higher}, trace-distance
duality asks us to compare the value of a witness $M$ with its Hartree
optimum.  For $0\preceq M\preceq I$ on $\Sym^t(\cH)$, this optimum is
\[
 h_t(M):=\max_{\norm{u}=1}\bra{u^{\ot t}}M\ket{u^{\ot t}}.
\]
The two-site argument suggests first enforcing unentanglement across a
bipartition.  The natural choice is the $1:(t-1)$ cut
\[
 \cH_A:=\cH,
 \qquad
 \cH_B:=\Sym^{t-1}(\cH),
 \qquad
 \Sym^t(\cH)\subseteq\cH_A\ot\cH_B.
\]
This cut exposes one factor of dimension $d$, so the loss in rectangular
argmax rounding depends on $\sqrt{d-1}$ rather than on the much larger
dimension of $\Sym^{t-1}(\cH)$.  It also fits the bosonic extension naturally:
organizing the $N$ sites into $n$ individual sites and $m$ blocks of
$t-1$ sites makes $\rho_N^{(t)}$ Bose-symmetric $(n,m)$-extendible across
$\cH_A:\cH_B$.  The associated rectangular lift is therefore the two-sided
symmetric-extension relaxation of product optimization across this cut.

The argmax principle supplies the main rounding step.  The value of the
relaxation is the largest eigenvalue of the rectangular lift.  Let $\psi$ be
a unit top eigenvector of the rectangular lift.  Argmax rounding chooses
\[
 (x,y)
 \in
 \operatorname*{arg\,max}_{\substack{\norm{v}=1\\\norm{w}=1}}
 \abs{\braket{v^{\ot n}\ot w^{\ot m}}{\psi}},
\]
where $v,x\in\cH_A$ and $w,y\in\cH_B$, and returns the product vector
$x\ot y$.  This enforces unentanglement across the $1:(t-1)$ cut, but
it does not yet imply that $y=x^{\ot(t-1)}$, as required for a Hartree
vector.

\paragraph{Symmetry propagates unentanglement.}
To propagate this unentanglement to the remaining cuts, we use the
symmetry test in the spirit of the disentangler from unentanglement of
Jeronimo and Wu~\cite{JW24}.  Let $\Pi_t$ be the orthogonal projector from
$\cH_A\ot\cH_B$ onto $\Sym^t(\cH)$.  A product vector in this symmetric
subspace must be a tensor power.  Define the symmetry-test rejection
probability by
\[
 \delta_{\mathrm{sym}}
 :=
 1-\bra{x\ot y}\Pi_t\ket{x\ot y}.
\]
Permutation averaging gives
\[
 1-\abs{\braket{x^{\ot(t-1)}}{y}}^2
 \le t\delta_{\mathrm{sym}}.
\]
Thus, when the rounded product passes the symmetry test with high
probability, its unentanglement across the $1:(t-1)$ cut propagates to
the remaining cuts, and $x\ot y$ is close to the tensor power $x^{\ot t}$.
This is the content of
Lemma~\ref{lem:symmetry-projector}; its linear dependence on $t$ sharpens
the corresponding one-versus-many estimate in~\cite[Theorem~23]{JW24}.

\paragraph{Bounding the integrality gap.}
We incorporate the symmetry test before rectangular argmax rounding so
that the witness value and the rejection probability are
controlled through the same rounded product.  Extend $M$ by zero outside
$\Sym^t(\cH)$ and introduce the mixed test
\[
 M_\theta:=(1-\theta)\Pi_t+\theta M,
 \qquad 0<\theta<1.
\]
It performs the symmetry test with weight $1-\theta$ and evaluates $M$
with weight $\theta$.  Every bosonic $t$-site marginal passes the
symmetry test with certainty.  If $\widetilde\rho_{n,m}$ is its grouped
extension, then
\[
 1-\theta+\theta\Tr\bigl(M\rho_N^{(t)}\bigr)
 =
 \Tr\bigl(M_\theta^{[n,m]}\widetilde\rho_{n,m}\bigr)
 \le
 \lambda_{\max}\bigl(M_\theta^{[n,m]}\bigr).
\]
The right-hand side is the rectangular symmetric-extension relaxation
of $h_{\SEP}(M_\theta)$.  Applying rectangular argmax rounding to this
lift produces $x\ot y$.  On that product vector,
\[
 \bra{x\ot y}M_\theta\ket{x\ot y}
 =
 (1-\theta)(1-\delta_{\mathrm{sym}})
 +\theta\bra{x\ot y}M\ket{x\ot y},
\]
so rejection lowers the mixed-test value.  Lemma~\ref{lem:symmetry-projector}
controls the replacement of $x\ot y$ by $x^{\ot t}$ in terms of the same
quantity; the proof keeps these two contributions coupled rather than
first establishing a separate bound on $\delta_{\mathrm{sym}}$.

Let $\beta$ denote the error in rectangular argmax rounding.  Combining
the rectangular argmax estimate with the symmetry-test bounds gives
\begin{equation}
 \theta\left(\Tr\bigl(M\rho_N^{(t)}\bigr)-h_t(M)\right)
 \le
 -(1-\theta)\delta_{\mathrm{sym}}
 +\bigl(\theta\sqrt t+\beta(1-\theta)\bigr)
  \sqrt{\delta_{\mathrm{sym}}}
 +\beta\theta.
 \label{eq:intro-higher-master}
\end{equation}
Choosing the two extension levels and $\theta$ appropriately in this
inequality gives the claimed $O(t\sqrt d/N)$ bound.  Trace-distance
duality proves the bosonic theorem, and symmetric purification gives
the exchangeable theorem.

\subsubsection{From optimal trace norm to dimension-free
Hilbert--Schmidt norm}

The dimension-free bosonic Hilbert--Schmidt de Finetti theorem stated
in Theorem~\ref{thm:intro-HS-deFinetti} uses the sharp trace-norm
theorem in a different way.  On a subspace of dimension less than $N$,
the $\sqrt d/N$ trace-norm bound is already $O(N^{-1/2})$.  The spectral
subspace of the one-site marginal corresponding to eigenvalues above
$1/N$ has dimension less than $N$.  The proof applies the trace-norm
theorem on this spectral head and controls every block meeting the
complementary tail directly in Hilbert--Schmidt norm.

\paragraph{Spectral truncation.}
We first reduce to a pure bosonic state $\rho_N=\ketbra\psi$ and look
at its one-site marginal $\rho_1$.  Define
\[
 P:=\mathbf 1_{(1/N,\infty)}(\rho_1),
 \qquad
 Q:=I-P.
\]
We call $P\cH$ the spectral head and $Q\cH$ the spectral tail.  Since
$\Tr\rho_1=1$,
\[
 \rank P<N,
 \qquad
 Q\rho_1Q\preceq\frac1NQ.
\]
Let
\[
 \rho_2^{\mathrm{head}}
 :=
 (P\ot P)\rho_2(P\ot P)
\]
be the compression of the two-site marginal to the head.
The proof constructs a Hartree mixture
$\tau_{\mathrm{head}}$, supported on
$\Sym^2(P\cH)$ and having the same trace as
$\rho_2^{\mathrm{head}}$.  A second Hartree mixture
$\tau_{\mathrm{res}}$, supported on $\Sym^2(Q\cH)$, fills the missing
trace.  Thus
\[
 \tau:=\tau_{\mathrm{head}}+\tau_{\mathrm{res}}
 \in\SEPH^{(2)}(d).
\]
The three estimates enter through the error decomposition
\begin{equation}
 \norm{\rho_2-\tau}_2
 \le
 \underbrace{
  \norm{\rho_2^{\mathrm{head}}-\tau_{\mathrm{head}}}_2
 }_{\text{round the high-eigenvalue compression}}
 +
 \underbrace{
  \norm{\rho_2-\rho_2^{\mathrm{head}}}_2
 }_{\substack{\text{remove all blocks meeting}\\
               \text{the spectral tail}}}
 +
 \underbrace{\norm{\tau_{\mathrm{res}}}_2}_{\text{restore the missing trace}}.
 \label{eq:intro-HS-decomposition}
\end{equation}

The head compression is not itself an $N$-site marginal, so we decompose
$\psi$ according to the number $k$ of sites in the head.  In the
component with exactly $k$ sites in $P\cH$, the probability that a
prescribed pair lies in the head is
$\frac{k(k-1)}{N(N-1)}$, whereas the sharp trace-norm theorem costs only the
square root of the head dimension divided by $k-1$.  The factor $k-1$
cancels, and the fact that the head dimension is below $N$ leaves an
$O(N^{-1/2})$ Hilbert--Schmidt error.  This cancellation is exactly
where the optimal two-site theorem enters the dimension-free argument.

For the remaining terms, decompose $\Sym^2\cH$ as
\[
 \Sym^2(P\cH)
 \oplus(P\cH\vee Q\cH)
 \oplus\Sym^2(Q\cH).
\]
The cutoff $Q\rho_1Q\preceq Q/N$ gives operator-norm control on the two
diagonal compressions containing at least one $Q$ direction.  An ordinary
Schmidt decomposition handles $P\cH\vee Q\cH$, while its symmetric form,
or Takagi decomposition, handles $\Sym^2(Q\cH)$.  These decompositions
reduce the estimates to partial inner products of
$\psi$ with orthogonal vectors in $Q\cH$, each controlled by the cutoff.
Positive-block estimates first combine the mixed and tail--tail
diagonal bounds into an operator-norm bound on the full tail block and
then control the off-diagonal head--tail blocks.  Thus removing every
block meeting $Q\cH$ costs only $O(N^{-1/2})$ in Hilbert--Schmidt norm.
Finally, restore the missing
trace by adding the Haar-averaged Hartree mixture on $\Sym^2(Q\cH)$; its
Hilbert--Schmidt norm is $O(1/N)$.  Substitution into
\eqref{eq:intro-HS-decomposition} proves the dimension-free
$O(N^{-1/2})$ theorem.

\paragraph{Two-sided extensions.}
The dimension-free Hilbert--Schmidt de Finetti theorem above extends
to two-sided Bose-symmetric extendible states:
\[
 \DHHS\bigl(
  \BEXT_{n,n}(d_A,d_B),\SEP(d_A,d_B)
 \bigr)
 =O(n^{-1/2}).
\]
This is the content of Theorem~\ref{thm:intro-HS-two-sided-bext}.
Since the Hilbert--Schmidt norm is self-dual, the two-sided
Hilbert--Schmidt bound gives analogues of the BSS and weak-membership
results above.
Taking $n=O(\epsilon^{-2})$ yields the polynomial-time
$h_{\SEP}$ approximation and separability testing in
Theorems~\ref{thm:intro-HS-hsep}
and~\ref{thm:intro-HS-weak-membership}.

\section{Preliminaries}
All Hilbert spaces in this note are finite-dimensional and complex.  We denote
them by $\cH$, adding subscripts to indicate their roles, and take their inner
products to be conjugate-linear in the first argument.  For $z\in\C$, we write
$\Re(z)$ and $\Im(z)$ for its real and imaginary parts, respectively.  We write
$\Lin(\cH)$ for the space of linear operators on $\cH$ and
$\Herm(\cH)$ for its real subspace of Hermitian operators; and
$\D(\cH)\subseteq\Herm(\cH)$ the density operators.  For a vector $v$, the notation $\norm{v}$
denotes its Hilbert-space norm.  For $u\in\cH$, we write
$u^\perp:=\{v\in\cH:\braket{u}{v}=0\}$; more generally,
$\mathcal U^\perp$ denotes the orthogonal complement of a subspace
$\mathcal U$, with the ambient Hilbert space understood from context.
For $X\in\Lin(\cH)$, let
$\abs{X}:=\sqrt{X^\dagger X}$ and define the Schatten norms by
\[
 \norm{X}_p:=\bigl(\Tr\abs{X}^p\bigr)^{\frac{1}{p}}
 \quad(1\le p<\infty),
 \qquad
 \norm{X}_\infty:=\max_j s_j(X),
\]
where $s_j(X)$ are the singular values of $X$.  Thus $\norm{X}_1$,
$\norm{X}_2$, and $\norm{X}_\infty$ are the trace, Hilbert--Schmidt, and
operator norms, respectively.
We use normalized trace distance and unnormalized Hilbert--Schmidt
distance:
\begin{equation}
 \Dtr(\rho,\sigma):=\frac12\norm{\rho-\sigma}_1,
 \label{eq:trace-distance}
\end{equation}
and
\begin{equation}
 \DHS(\rho,\sigma):=\norm{\rho-\sigma}_2.
 \label{eq:HS-distance}
\end{equation}
For probability distributions, $\DTV$ denotes total-variation
distance.
Thus every displayed trace-distance constant is twice as large when
the error is measured by $\norm{\rho-\sigma}_1$ instead.

For $k\ge1$, let $\mathfrak S_k$ act on $\cH^{\ot k}$ by permuting
the tensor factors, and write $U_\pi$ for the operator corresponding
to $\pi\in\mathfrak S_k$.  The symmetric tensor power is
\[
 \Sym^k\cH
 :=
 \left\{
  \Psi\in\cH^{\ot k}:
  U_\pi\Psi=\Psi\ \text{for every }\pi\in\mathfrak S_k
 \right\},
\]
and we set $\Sym^0\cH:=\C$.

\subsection{Separable and extendible states}

The de Finetti theorems compare $t$-site marginals of bosonic or
exchangeable states, and bipartite states admitting two-sided
extensions, with specific separable state families.  We now define the
states appearing in these comparisons.

\paragraph{Separable state families.}
Fix $d\ge1$ and $t\ge1$.  The two separable families used below are
\[
\begin{aligned}
 \SEPH^{(t)}(d)
 &:={}
 \conv\left\{
  \ketbra{u}^{\ot t}:u\in\C^d,\ \norm{u}=1
 \right\},\\
 \SEPP^{(t)}(d)
 &:={}
 \conv\left\{
  \sigma^{\ot t}:\sigma\in\D(\C^d)
 \right\}.
\end{aligned}
\]
The subscripts $\mathrm H$ and $\mathrm P$ stand for Hartree and
tensor power, respectively.  Since pure states are density operators,
\[
 \SEPH^{(t)}(d)
 \subseteq
 \SEPP^{(t)}(d).
\]
The families $\SEPH^{(t)}(d)$ and $\SEPP^{(t)}(d)$ are the approximants
in the bosonic and exchangeable de Finetti theorems, respectively.

\paragraph{Bosonic states.}
Let $N\ge t$.  A state $\rho_N$ is \emph{bosonic} if
$\rho_N\in\D(\Sym^N(\C^d))$.  Its marginal on the first $t$ sites is
\[
 \rho_N^{(t)}
 :=
 \Tr_{t+1,\ldots,N}\rho_N.
\]
The set of these marginals is
\[
 \BOS_N^{(t)}(d)
 :=
 \left\{
  \rho_N^{(t)}:
  \rho_N\in\D\bigl(\Sym^N(\C^d)\bigr)
 \right\}.
\]
Extending Hartree states by tensor powers and tracing out extension
sites give
\[
 \SEPH^{(t)}(d)
 \subseteq
 \BOS_{N+1}^{(t)}(d)
 \subseteq
 \BOS_N^{(t)}(d).
\]

\paragraph{Exchangeable states.}
A state $\rho_N\in\D((\C^d)^{\ot N})$ is \emph{exchangeable} if
\[
 U_\pi\rho_NU_\pi^\dagger=\rho_N
 \qquad\text{for every }\pi\in\mathfrak S_N.
\]
The set of all such $t$-site marginals is
\[
 \EXCH_N^{(t)}(d)
 :=
 \left\{
  \rho_N^{(t)}:
  \begin{array}{l}
   \rho_N\in\D\bigl((\C^d)^{\ot N}\bigr),\\[-2pt]
   U_\pi\rho_NU_\pi^\dagger=\rho_N
   \quad\text{for every }\pi\in\mathfrak S_N
  \end{array}
 \right\}.
\]
Tensor powers extend exchangeably, and every bosonic state is
exchangeable.  Thus
\[
\begin{gathered}
 \SEPP^{(t)}(d)
 \subseteq
 \EXCH_{N+1}^{(t)}(d)
 \subseteq
 \EXCH_N^{(t)}(d),\\
 \BOS_N^{(t)}(d)
 \subseteq
 \EXCH_N^{(t)}(d).
\end{gathered}
\]

\paragraph{Two-sided extendible states.}
For $d_A,d_B\ge1$, define the bipartite separable states by
\[
 \SEP(d_A,d_B)
 :=
 \conv\left\{
  \ketbra{u}\ot\ketbra{v}:
  u\in\C^{d_A},\ v\in\C^{d_B},\
  \norm{u}=\norm{v}=1
 \right\}.
\]
For $n,m\ge1$, the $(n,m)$-extendible states are the marginals
\[
 \EXT_{n,m}(d_A,d_B)
 :=
 \left\{
  \widetilde\rho_{A_1B_1}:
  \begin{array}{l}
   \widetilde\rho
   \in\D\bigl((\C^{d_A})^{\ot n}\ot(\C^{d_B})^{\ot m}\bigr),\\[-2pt]
   \widetilde\rho\text{ is invariant under permutations within each block}
  \end{array}
 \right\}.
\]
This is the standard notion of two-sided
extendibility~\cite{TDS03,JV13,JSZ22}.  Requiring the extension to be
supported on the symmetric subspace of each block gives the
Bose-symmetric extendible states
\begin{equation}
 \BEXT_{n,m}(d_A,d_B)
 :=
 \left\{
  \widetilde\rho_{A_1B_1}:
  \widetilde\rho
  \in\D\bigl(\Sym^n(\C^{d_A})\ot\Sym^m(\C^{d_B})\bigr)
 \right\}.
 \label{eq:two-sided-bext-definition}
\end{equation}
This is the two-sided analogue of Bose-symmetric extendibility
in~\cite{NOP09}.  If $n'\ge n$ and $m'\ge m$, tracing out extension
sites and extending every term in a pure-product decomposition by
tensor powers give
\begin{equation}
 \begin{aligned}
  \SEP(d_A,d_B)
  &\subseteq
  \BEXT_{n',m'}(d_A,d_B)
  \subseteq
  \BEXT_{n,m}(d_A,d_B)
  \subseteq
  \EXT_{n,m}(d_A,d_B),\\
  \EXT_{n',m'}(d_A,d_B)
  &\subseteq
  \EXT_{n,m}(d_A,d_B).
 \end{aligned}
 \label{eq:sep-in-two-sided-bext}
\end{equation}

\begin{figure}[H]
\centering
\begin{tikzpicture}[
  font=\scriptsize,
  every node/.style={align=center}
]
\begin{scope}[xshift=-2.05cm]
 \draw[semithick, fill=black!4]
  (5.05,0) ellipse (2.55cm and 1.32cm);
 \draw[semithick, fill=green!10, fill opacity=0.78]
  (4.68,-0.08) ellipse (1.74cm and 0.78cm);
 \draw[semithick, fill=blue!8, fill opacity=0.78]
  (5.42,-0.08) ellipse (1.74cm and 0.78cm);
 \draw[semithick, fill=orange!14]
  (5.05,-0.08) ellipse (0.68cm and 0.31cm);
 \node at (5.05,1.10) {$\EXCH_N^{(t)}(d)$};
 \node at (6.08,0.34) {$\BOS_N^{(t)}(d)$};
 \node at (4.02,0.34) {$\SEPP^{(t)}(d)$};
 \node at (5.05,-0.08) {$\SEPH^{(t)}(d)$};
 \draw[
  {Stealth[length=2.5pt]}-{Stealth[length=2.5pt]},
  densely dotted, black!65, line width=0.5pt
 ]
  (2.57,-0.08) -- (2.91,-0.08);
 \draw[
  {Stealth[length=2.5pt]}-{Stealth[length=2.5pt]},
  densely dotted, black!65, line width=0.5pt
 ]
  (5.73,-0.08) -- (7.08,-0.08);
 \node[font=\scriptsize] at (5.05,-1.58)
  {(a) Bosonic and exchangeable states};
\end{scope}

\begin{scope}[xshift=-1.15cm]
 \draw[semithick, fill=black!4]
  (10.15,0) ellipse (2.35cm and 1.26cm);
 \draw[semithick, fill=blue!8]
  (10.15,-0.12) ellipse (1.68cm and 0.84cm);
 \draw[semithick, fill=orange!14]
  (10.15,-0.32) ellipse (0.98cm and 0.40cm);
 \node at (10.15,0.92) {$\EXT_{n,m}(d_A,d_B)$};
 \node at (10.15,0.32) {$\BEXT_{n,m}(d_A,d_B)$};
 \node at (10.15,-0.32) {$\SEP(d_A,d_B)$};
 \draw[
  {Stealth[length=2.5pt]}-{Stealth[length=2.5pt]},
  densely dotted, black!65, line width=0.5pt
 ]
  (8.54,-0.08) -- (9.28,-0.08);
 \draw[
  {Stealth[length=2.5pt]}-{Stealth[length=2.5pt]},
  densely dotted, black!65, line width=0.5pt
 ]
  (11.02,-0.08) -- (12.42,-0.08);
 \node[font=\scriptsize] at (10.15,-1.58)
  {(b) Two-sided extendible states};
\end{scope}
\end{tikzpicture}
\begin{minipage}{0.8\textwidth}
\scriptsize
\caption{State-set containments underlying the de Finetti theorems.
Nesting records containment.  From left to right, the dotted markers
represent the worst-case de Finetti errors for
$(\EXCH_N^{(t)},\SEPP^{(t)})$, $(\BOS_N^{(t)},\SEPH^{(t)})$,
$(\BEXT_{n,m},\SEP)$, and $(\EXT_{n,m},\SEP)$.  Marker lengths are
schematic.}
\label{fig:state-set-relations}
\end{minipage}
\end{figure}

\subsection{Hausdorff notation and convex duality}

For $\star\in\{\mathrm{tr},\mathrm{HS}\}$ and a nonempty compact set
$\mathcal K$ of states, write
\begin{equation}
 D_\star(\rho,\mathcal K)
 :=
 \inf_{\sigma\in\mathcal K}D_\star(\rho,\sigma).
 \label{eq:distance-to-set}
\end{equation}
For two nonempty compact sets $\mathcal K$ and $\mathcal L$ of states,
their Hausdorff distance is
\begin{equation}
\begin{aligned}
 D_{\mathrm H}^{\star}(\mathcal K,\,\mathcal L)
 :=
 \max\left\{
  \sup_{\rho\in\mathcal K}D_\star(\rho,\mathcal L),
  \sup_{\sigma\in\mathcal L}D_\star(\sigma,\mathcal K)
 \right\}.
\end{aligned}
 \label{eq:Hausdorff-distance}
\end{equation}
In particular, if $\mathcal L\subseteq\mathcal K$, then
\begin{equation}
 D_{\mathrm H}^{\star}(\mathcal K,\,\mathcal L)
 =
 \sup_{\rho\in\mathcal K}D_\star(\rho,\mathcal L).
 \label{eq:Hausdorff-outer-relaxation}
\end{equation}

Figure~\ref{fig:state-set-relations} summarizes the relevant
containments.  Consequently, every de Finetti error in the paper has
the form $D_{\mathrm H}^{\star}(\mathcal K,\,\mathcal L)$, with the
larger finite-extension set $\mathcal K$ first and its separable subset
$\mathcal L$ second.  The four instances are
\[
 \begin{gathered}
  \bigl(\BOS_N^{(t)}(d),\SEPH^{(t)}(d)\bigr),\qquad
  \bigl(\EXCH_N^{(t)}(d),\SEPP^{(t)}(d)\bigr),\\
  \bigl(\BEXT_{n,m}(d_A,d_B),\SEP(d_A,d_B)\bigr),\qquad
 \bigl(\EXT_{n,m}(d_A,d_B),\SEP(d_A,d_B)\bigr).
 \end{gathered}
\]
These quantities are exactly the corresponding worst-case de Finetti
errors.

For a nonempty compact set $\mathcal C\subseteq\D(\cH)$, write
\begin{equation}
 h_{\mathcal C}(M)
 :=
 \max_{\rho\in\mathcal C}\Tr(M\rho)
 \qquad
 (M\in\Herm(\cH))
 \label{eq:state-set-support-function}
\end{equation}
for its support function.  The following lemma collects the distance
dualities used throughout the paper.

\begin{lemma}[Distance duality for compact convex state sets]
\label{lem:compact-convex-distance-duality}
Let $\mathcal L\subseteq\D(\cH)$ be nonempty, compact, and convex.
For every $\rho\in\D(\cH)$,
\begin{align}
 \Dtr(\rho,\mathcal L)
 &=
 \max_{0\preceq M\preceq I}
 \left\{\Tr(M\rho)-h_{\mathcal L}(M)\right\}
 \notag\\
 &=
 \frac12
 \max_{\substack{M\in\Herm(\cH)\\\norm{M}_\infty\le1}}
 \left\{\Tr(M\rho)-h_{\mathcal L}(M)\right\},
 \label{eq:compact-duality-state-trace}\\
 \DHS(\rho,\mathcal L)
 &=
 \max_{\substack{M\in\Herm(\cH)\\\norm{M}_2\le1}}
 \left\{\Tr(M\rho)-h_{\mathcal L}(M)\right\}
 \notag\\
 &=
 \max_{\substack{M\in\Herm(\cH)\\
                   \Tr M=0,\ \norm{M}_2\le1}}
 \left\{\Tr(M\rho)-h_{\mathcal L}(M)\right\}.
 \label{eq:compact-duality-state-HS}
\end{align}
If $\mathcal K\subseteq\D(\cH)$ is also nonempty, compact, and convex,
and $\mathcal L\subseteq\mathcal K$, then
\begin{align}
 D_{\mathrm H}^{\mathrm{tr}}(\mathcal K,\,\mathcal L)
 &=
 \max_{0\preceq M\preceq I}
 \left\{h_{\mathcal K}(M)-h_{\mathcal L}(M)\right\}
 \notag\\
 &=
 \frac12
 \max_{\substack{M\in\Herm(\cH)\\\norm{M}_\infty\le1}}
 \left\{h_{\mathcal K}(M)-h_{\mathcal L}(M)\right\},
 \label{eq:compact-duality-sets-trace}\\
 D_{\mathrm H}^{\mathrm{HS}}(\mathcal K,\,\mathcal L)
 &=
 \max_{\substack{M\in\Herm(\cH)\\
                   \Tr M=0,\ \norm{M}_2\le1}}
 \left\{h_{\mathcal K}(M)-h_{\mathcal L}(M)\right\}.
 \label{eq:compact-duality-sets-HS}
\end{align}
\end{lemma}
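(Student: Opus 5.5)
The plan is to derive every identity from the minimax theorem applied to a bilinear function on a product of compact convex sets. The point-to-set identities come first, and the set-to-set identities then follow by exchanging two suprema.

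\textbf{Trace distance.} Fix $\rho$. For Hermitian $X$ we have the variational formula
\[
 \frac12\norm{X}_1=\max_{\norm{M}_\infty\le1}\frac12\Tr(MX).
\]
If moreover $\Tr X=0$, this also equals $\max_{0\preceq M\preceq I}\Tr(MX)$. To see this, substitute $M=(I+M')/2$; the identity part contributes $\Tr X/2=0$. Apply this with $X=\rho-\sigma$ and write
\[
 \Dtr(\rho,\mathcal L)=\min_{\sigma\in\mathcal L}\max_{M}\Tr\bigl(M(\rho-\sigma)\bigr).
\]
Here $M$ ranges either over $\{0\preceq M\preceq I\}$ or over the operator-norm ball with an overall factor $\frac12$. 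Both $M$-sets are compact and convex, $\mathcal L$ is compact and convex, and the function is bilinear. Sion's (or von Neumann's) minimax theorem therefore lets us swap the order:
\[
 \max_M\min_{\sigma\in\mathcal L}\Tr\bigl(M(\rho-\sigma)\bigr)=\max_M\bigl\{\Tr(M\rho)-h_{\mathcal L}(M)\bigr\}.
\]
The maxima are attained because the function $M\mapsto\Tr(M\rho)-h_{\mathcal L}(M)$ is continuous: $h_{\mathcal L}$ is a support function and hence Lipschitz. This proves \eqref{eq:compact-duality-state-trace}.

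\textbf{Hilbert--Schmidt distance.} Use $\norm{X}_2=\max_{\norm{M}_2\le1}\Tr(MX)$ over Hermitian $M$, which holds since $X$ is Hermitian. The same minimax exchange over the compact HS ball gives the first line of \eqref{eq:compact-duality-state-HS}. For the traceless refinement, note that $X=\rho-\sigma$ is traceless, so $X$ lies in the traceless subspace. The maximizer $M=X/\norm{X}_2$ is then traceless, and restricting the max to traceless $M$ changes nothing in the primal formula. Running the minimax over the traceless unit ball, which is also compact and convex, gives the second line.

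\textbf{Set-to-set identities.} By \eqref{eq:Hausdorff-outer-relaxation}, $D_{\mathrm H}^\star(\mathcal K,\mathcal L)=\sup_{\rho\in\mathcal K}D_\star(\rho,\mathcal L)$. Insert the point dualities just proved and exchange the two suprema:
\[
 \sup_{\rho\in\mathcal K}\max_M\bigl\{\Tr(M\rho)-h_{\mathcal L}(M)\bigr\}=\max_M\bigl\{h_{\mathcal K}(M)-h_{\mathcal L}(M)\bigr\}.
\]
No minimax theorem is needed here, because two suprema always commute. Compactness of $\mathcal K$ and continuity in $M$ give attainment. The HS identity \eqref{eq:compact-duality-sets-HS} uses the traceless form.

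\textbf{Main obstacle.} The only delicate point is verifying the hypotheses of the minimax theorem (compactness, convexity, bilinearity) and the traceless reduction. Once these are in place, the rest is routine.
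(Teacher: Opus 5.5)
Your proposal is correct and follows essentially the same route as the paper. Both arguments use the variational formulas for $\norm{\rho-\sigma}_1$ and $\norm{\rho-\sigma}_2$ (with $\Tr(\rho-\sigma)=0$ giving the $0\preceq M\preceq I$ form), apply Sion's minimax theorem to swap $\min_\sigma$ and $\max_M$, and obtain the set-to-set identities by interchanging suprema via \eqref{eq:Hausdorff-outer-relaxation}. The only difference is the traceless restriction: you impose it on the primal side, since $\rho-\sigma$ is traceless, and then invoke minimax again, whereas the paper derives it afterwards by replacing $M$ with $M-\frac{\Tr M}{\dim\cH}I$, which leaves $\Tr(M\rho)-h_{\mathcal L}(M)$ unchanged and does not increase $\norm{M}_2$; both are valid.
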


\begin{proof}
For states $\rho$ and $\sigma$, trace-norm duality and
$\Tr(\rho-\sigma)=0$ give
\[
 \Dtr(\rho,\sigma)
 =
 \max_{0\preceq M\preceq I}\Tr\bigl(M(\rho-\sigma)\bigr)
 =
 \frac12
 \max_{\substack{M\in\Herm(\cH)\\\norm{M}_\infty\le1}}
 \Tr\bigl(M(\rho-\sigma)\bigr).
\]
Hilbert--Schmidt self-duality similarly gives
\[
 \DHS(\rho,\sigma)
 =
 \max_{\substack{M\in\Herm(\cH)\\\norm{M}_2\le1}}
 \Tr\bigl(M(\rho-\sigma)\bigr).
\]
In each case, Sion's finite-dimensional minimax
theorem~\cite{Sio58} exchanges the minimum over
$\sigma\in\mathcal L$ with the maximum over the corresponding compact
convex witness set.  This proves both identities in
\eqref{eq:compact-duality-state-trace} and the first identity in
\eqref{eq:compact-duality-state-HS}.

To obtain the formula with $\Tr M=0$, let $q=\dim\cH$ and set
\[
 M_0:=M-\frac{\Tr M}{q}I.
\]
Replacing $M$ by $M_0$ leaves
$\Tr(M\rho)-h_{\mathcal L}(M)$ unchanged, while orthogonality to the
identity gives $\norm{M_0}_2\le\norm M_2$.  Thus the maximum may be
restricted to traceless witnesses.

Finally, take the maximum over $\rho\in\mathcal K$ in the state-to-set
formulas.  The two remaining maxima may be interchanged, and
maximization over $\rho$ replaces $\Tr(M\rho)$ by
$h_{\mathcal K}(M)$.  Equation~\eqref{eq:Hausdorff-outer-relaxation}
then gives \eqref{eq:compact-duality-sets-trace} and
\eqref{eq:compact-duality-sets-HS}.
\end{proof}

\subsection{Linear-algebra estimates}

\begin{fact}[Positive block estimates]
\label{fact:positive-block}
Let $A\succeq0$ act on $\cH_A$, let $B\succeq0$ act on $\cH_B$, and
let $X:\cH_B\to\cH_A$.  Then
\[
 \begin{pmatrix}
  A&X\\
  X^\dagger&B
 \end{pmatrix}
 \succeq0
\]
if and only if
\[
 X=A^{1/2}KB^{1/2}
\]
for a contraction $K:\cH_B\to\cH_A$.  In this case,
\begin{equation}
 \left\|
 \begin{pmatrix}
  A&X\\
  X^\dagger&B
 \end{pmatrix}
 \right\|_\infty
 \le
 \norm{A}_\infty+\norm{B}_\infty,
 \qquad
 \norm{X}_2^2
 \le
 \norm{B}_\infty\Tr A.
 \label{eq:positive-block-consequences}
\end{equation}
\end{fact}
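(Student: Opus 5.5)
The plan is to establish the characterization first and then derive the two consequences in \eqref{eq:positive-block-consequences}.

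\emph{Characterization.} If $X=A^{1/2}KB^{1/2}$ with $\norm K_\infty\le1$, factor the block matrix as
\[
 \begin{pmatrix}A^{1/2}&0\\0&B^{1/2}\end{pmatrix}
 \begin{pmatrix}I&K\\K^\dagger&I\end{pmatrix}
 \begin{pmatrix}A^{1/2}&0\\0&B^{1/2}\end{pmatrix}.
\]
The middle factor is positive semidefinite exactly when $K$ is a contraction, since $\langle (a,b),(a+Kb,K^\dagger a+b)\rangle\ge\norm a^2+\norm b^2-2\norm a\norm b\ge0$. Hence the block matrix is positive semidefinite.

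For the converse, positivity of the block matrix applied to vectors $(a,b)$ gives
\[
 \abs{\braket{a}{Xb}}^2\le\braket aAa\,\braket bBb.
\]
First, I will use this to show that $X$ maps $\ker B$ to $0$ and that $\operatorname{ran}X\subseteq\operatorname{ran}A$. Second, I will define $K:=A^{-1/2}XB^{-1/2}$ on the supports, using pseudo-inverses, and extend it by $0$. Substituting $a=A^{-1/2}a'$ and $b=B^{-1/2}b'$ in the inequality gives $\abs{\braket{a'}{Kb'}}\le\norm{a'}\norm{b'}$, so $K$ is a contraction. Handling the kernels carefully is the only delicate point. A limiting argument with $A+\epsilon I$ and $B+\epsilon I$ is an alternative route, using compactness of the set of contractions.

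\emph{Operator norm bound.} Write the block matrix $T$ as the sum
\[
 T=\begin{pmatrix}A&0\\0&0\end{pmatrix}+\begin{pmatrix}0&X\\X^\dagger&B\end{pmatrix}.
\]
This split does not work directly, because the second summand is not positive. Instead I will use positivity of $T$. For a unit vector $(a,b)$,
\[
 \braket{(a,b)}{T(a,b)}\le\braket aAa+\braket bBb+2\sqrt{\braket aAa\braket bBb}\le 2\bigl(\braket aAa+\braket bBb\bigr).
\]
This only yields a factor of $2$, so I switch to the factorization. Using $T=D^{1/2}GD^{1/2}$ with $D=\mathrm{diag}(A,B)$ and $G=\begin{pmatrix}I&K\\K^\dagger&I\end{pmatrix}$, I will instead write $T=C^\dagger C$ with
\[
 C=\begin{pmatrix}A^{1/2}&KB^{1/2}\\0&(I-K^\dagger K)^{1/2}B^{1/2}\end{pmatrix}.
\]
One checks that $C^\dagger C=T$ because $B^{1/2}K^\dagger KB^{1/2}+B^{1/2}(I-K^\dagger K)B^{1/2}=B$. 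Then $\norm T_\infty=\norm{CC^\dagger}_\infty$. The diagonal blocks of $CC^\dagger$ are $A+KBK^\dagger$ and $(I-K^\dagger K)^{1/2}B(I-K^\dagger K)^{1/2}$. This route is not obviously tight either.

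The simplest correct argument is the one I will use. Since $T\succeq0$, we have $\norm T_\infty=\max\braket{w}{Tw}$ over unit $w$. Write $T=P_1+P_2$ with
\[
 P_1=\begin{pmatrix}A&X\\X^\dagger&X^\dagger A^{-1}X\end{pmatrix},\qquad P_2=\begin{pmatrix}0&0\\0&B-X^\dagger A^{-1}X\end{pmatrix},
\]
where $A^{-1}$ is the pseudo-inverse. Here $P_2\succeq0$ by the Schur complement. $P_1$ is positive semidefinite of the form $R^\dagger R$ with $R=(A^{1/2},\,A^{-1/2}X)$. Hence
\[
 \norm{P_1}_\infty=\norm{RR^\dagger}_\infty=\norm{A+A^{-1/2}XX^\dagger A^{-1/2}}_\infty\le\norm A_\infty+\norm{KBK^\dagger}_\infty.
\]
This bound is at most $\norm A_\infty+\norm B_\infty$ but double counts once $P_2$ is added. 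I expect this norm bound to be the main obstacle.

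My first attempt at a fix is the direct estimate. For a unit vector $(a,b)$ with $\alpha=\norm a$ and $\beta=\norm b$,
\[
 \braket{(a,b)}{T(a,b)}\le \alpha^2\norm A+\beta^2\norm B+2\alpha\beta\sqrt{\norm A\norm B}.
\]
This is a quadratic form with matrix $\begin{pmatrix}\norm A&\sqrt{\norm A\norm B}\\\sqrt{\norm A\norm B}&\norm B\end{pmatrix}$. Its largest eigenvalue is the trace $\norm A+\norm B$, since the matrix has rank one. That settles the first inequality cleanly, with no factor of $2$.

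\emph{Hilbert--Schmidt bound.} Finally,
\[
 \norm X_2^2=\Tr\bigl(A^{1/2}KBK^\dagger A^{1/2}\bigr)\le\norm B_\infty\Tr\bigl(A^{1/2}KK^\dagger A^{1/2}\bigr)\le\norm B_\infty\Tr A,
\]
using $KBK^\dagger\preceq\norm B_\infty KK^\dagger\preceq\norm B_\infty I$.
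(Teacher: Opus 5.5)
Your proposal is correct and, once the abandoned detours are deleted (the factor-$2$ bound, the $C^\dagger C$ factorization, and the Schur-complement split $P_1+P_2$), it is essentially the paper's proof: your rank-one $2\times2$ quadratic form is the paper's estimate $(\sqrt a\norm u+\sqrt b\norm v)^2\le(a+b)(\norm u^2+\norm v^2)$, and your Hilbert--Schmidt step is the paper's $XX^\dagger=A^{1/2}KBK^\dagger A^{1/2}\preceq\norm{B}_\infty A$ followed by a trace. The only difference is that you prove the factorization $X=A^{1/2}KB^{1/2}$ directly (the positivity Cauchy--Schwarz inequality, the kernel and range inclusions, and pseudo-inverses), whereas the paper simply cites Bhatia's Proposition~1.3.2.
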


\begin{proof}
The factorization statement is
\cite[Proposition~1.3.2]{Bha07}.  It remains to verify the norm
estimates.  For $a:=\norm{A}_\infty$ and $b:=\norm{B}_\infty$,
\[
 \bra{u\oplus v}
 \begin{pmatrix}A&X\\X^\dagger&B\end{pmatrix}
 \ket{u\oplus v}
 \le
 \left(\sqrt a\norm u+\sqrt b\norm v\right)^2
 \le
 (a+b)\left(\norm u^2+\norm v^2\right),
\]
which proves the operator-norm bound.  The factorization also gives
\[
 XX^\dagger
 =
 A^{1/2}KBK^\dagger A^{1/2}
 \preceq
 \norm{B}_\infty A,
\]
and taking traces proves the bound on $\norm{X}_2$.
\end{proof}

\begin{fact}[Discarding a positive block]
\label{fact:positive-block-discard}
Under the hypotheses of Fact~\ref{fact:positive-block}, suppose that
\[
 M:=
 \begin{pmatrix}
  A&X\\
  X^\dagger&B
 \end{pmatrix}
 \succeq0.
\]
Then
\begin{equation}
 \left\|
 M-
 \begin{pmatrix}
  A&0\\
  0&0
 \end{pmatrix}
 \right\|_2^2
 \le
 \norm{B}_\infty\bigl(2\Tr A+\Tr B\bigr).
 \label{eq:positive-block-discard}
\end{equation}
In particular, if $\Tr M=1$, then the right-hand side is at most
$2\norm{B}_\infty$.
\end{fact}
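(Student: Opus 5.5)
We compute $\norm{M-\mathrm{diag}(A,0)}_2^2$ directly from the block structure and then bound each piece using Fact~\ref{fact:positive-block}. The difference
\[
 M-\begin{pmatrix}A&0\\0&0\end{pmatrix}
 =\begin{pmatrix}0&X\\X^\dagger&B\end{pmatrix}
\]
has squared Hilbert--Schmidt norm equal to the sum of the squared Hilbert--Schmidt norms of its blocks. Hence
\[
 \left\|M-\begin{pmatrix}A&0\\0&0\end{pmatrix}\right\|_2^2
 =2\norm{X}_2^2+\norm{B}_2^2 .
\]

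The two pieces are bounded separately. For the off-diagonal blocks, the second estimate in \eqref{eq:positive-block-consequences} gives $\norm{X}_2^2\le\norm{B}_\infty\Tr A$, which contributes $2\norm{B}_\infty\Tr A$. For the diagonal block, $B\succeq0$, so its eigenvalues $b_j$ are nonnegative and
\[
 \norm{B}_2^2=\sum_j b_j^2\le\max_j b_j\sum_j b_j=\norm{B}_\infty\Tr B .
\]
Adding the two contributions yields \eqref{eq:positive-block-discard}.

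For the final clause, if $\Tr M=\Tr A+\Tr B=1$, then $2\Tr A+\Tr B=1+\Tr A\le2$, because $\Tr A\le1$ (as $\Tr B\ge0$). The right-hand side is therefore at most $2\norm{B}_\infty$.

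There is no real obstacle. The only substantive input is the factorization bound on $\norm{X}_2$, which is already supplied by Fact~\ref{fact:positive-block}.
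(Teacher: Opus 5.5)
Your proof is correct and follows the paper's argument exactly: block orthogonality gives $2\norm{X}_2^2+\norm{B}_2^2$, the factorization bound from Fact~\ref{fact:positive-block} controls $\norm{X}_2^2\le\norm{B}_\infty\Tr A$, and positivity gives $\norm{B}_2^2\le\norm{B}_\infty\Tr B$. Your explicit check of the trace-one clause via $2\Tr A+\Tr B=1+\Tr A\le 2$ is a small detail the paper leaves implicit.
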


\begin{proof}
Block orthogonality in Hilbert--Schmidt inner product gives
\[
 \left\|
 \begin{pmatrix}
  0&X\\
  X^\dagger&B
 \end{pmatrix}
 \right\|_2^2
 =
 2\norm{X}_2^2+\norm{B}_2^2.
\]
Since $\norm{B}_2^2\le\norm{B}_\infty\Tr B$,
the bound on $\norm{X}_2$ in
\eqref{eq:positive-block-consequences} proves
\eqref{eq:positive-block-discard}.
\end{proof}

\subsection{Tensor norms and symmetric tensors}
\label{sec:tensor-norms}

\paragraph{Injective and projective tensor norms.}
Let $\cX$ and $\cY$ be finite-dimensional Hilbert spaces.  For
$z\in\cX\ot\cY$, define the injective and projective tensor norms by
\begin{align}
 \norm{z}_{\varepsilon}
 &:={}
 \sup_{\substack{\norm{x}=1\\\norm{y}=1}}
 \abs{\braket{x\ot y}{z}},
 \label{eq:injective-tensor-norm}\\
 \norm{z}_{\pi}
 &:={}
 \inf\left\{
  \sum_j\abs{a_j}:
  z=\sum_j a_jx_j\ot y_j,
  \ \norm{x_j}=\norm{y_j}=1
 \right\}.
 \label{eq:projective-tensor-norm}
\end{align}
The unadorned norm $\norm{z}$ remains the Hilbert-space norm.  After
choosing orthonormal bases of $\cX$ and $\cY$, let $Z$ be the coefficient
matrix of $z$.  The singular-value decomposition, equivalently the
Schmidt decomposition of $z$, gives
\begin{equation}
 \norm{z}_{\varepsilon}=\norm{Z}_\infty,
 \qquad
 \norm{z}=\norm{Z}_2,
 \qquad
 \norm{z}_{\pi}=\norm{Z}_1.
 \label{eq:tensor-norm-matrix}
\end{equation}
In particular, if $\operatorname{Schmidt-rank}(z)$ denotes the number
of nonzero Schmidt coefficients of $z$, Schatten-norm duality and
Cauchy--Schwarz give
\begin{equation}
 \abs{\braket{z}{w}}
 \le
 \norm{z}_{\pi}\norm{w}_{\varepsilon},
 \qquad
 \norm{z}_{\pi}
 \le
 \sqrt{\operatorname{Schmidt-rank}(z)}\,\norm{z}.
 \label{eq:tensor-norm-duality-rank}
\end{equation}

The orthogonal projection onto $\Sym^k\cH$ is the symmetrization
operator
\begin{equation}
 \Pi_{\mathrm{sym}}^{(k)}
 :=
 \frac1{k!}\sum_{\pi\in\mathfrak S_k}U_\pi.
 \label{eq:symmetrization-projector}
\end{equation}
For a standard treatment of the symmetric subspace, see
Harrow~\cite{Har13}.

\medskip
\paragraph{Orthogonal decompositions.}
An orthogonal decomposition of the one-site space induces a
corresponding decomposition of every symmetric tensor power; see, for
example,~\cite[Appendix~B, Eq.~(B.2)]{FH91}.  We record the precise
form and its number-operator interpretation used in
Section~\ref{sec:hs-definetti}.

\begin{fact}[Symmetric power of an orthogonal sum]
\label{fact:symmetric-orthogonal-sum}
Let $\cH=\cH_0\oplus\cH_1$ be an orthogonal decomposition.  For every
integer $n\ge1$, there is a natural orthogonal decomposition
\begin{equation}
 \Sym^n\cH
 \cong
 \bigoplus_{m=0}^n
 \Sym^m(\cH_0)\ot\Sym^{n-m}(\cH_1).
 \label{eq:symmetric-orthogonal-sum}
\end{equation}
If $P$ is the projection onto $\cH_0$, the summand indexed by $m$ is
the eigenvalue-$m$ eigenspace of $\sum_{j=1}^nP_j$ on $\Sym^n\cH$,
where
\[
 P_j
 =
 I^{\ot(j-1)}\ot P\ot I^{\ot(n-j)}.
\]
For $n=2$, the decomposition is
\begin{equation}
 \Sym^2\cH
 =
 \Sym^2(\cH_0)
 \oplus(\cH_0\vee\cH_1)
 \oplus\Sym^2(\cH_1),
 \label{eq:symmetric-two-summands}
\end{equation}
where $\cH_0\vee\cH_1$ is spanned by the vectors
\[
 p\vee q
 :=
 \frac{p\ot q+q\ot p}{\sqrt2},
 \qquad
 p\in\cH_0,\quad q\in\cH_1.
\]
\end{fact}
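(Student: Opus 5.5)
The statement to prove is Fact~\ref{fact:symmetric-orthogonal-sum}. I would identify each summand concretely inside $\Sym^n\cH$ and then check orthogonality, spanning, and the eigenvalue description together. For $0\le m\le n$, define the linear map
\[
 J_m:\Sym^m(\cH_0)\ot\Sym^{n-m}(\cH_1)\to\Sym^n\cH,
 \qquad
 J_m(a\ot b):=\sqrt{\binom nm}\,\Pi_{\mathrm{sym}}^{(n)}(a\ot b),
\]
where $a\in\Sym^m(\cH_0)\subseteq\cH_0^{\ot m}\subseteq\cH^{\ot m}$ and similarly for $b$. The claim is that each $J_m$ is an isometry and that the images are the eigenspaces of the number operator $\mathcal N:=\sum_{j=1}^n P_j$ restricted to $\Sym^n\cH$.

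First, expand $\cH^{\ot n}=\bigoplus_{S\subseteq[n]}\cH_S$, where $\cH_S$ is the tensor product with $\cH_0$ on the sites in $S$ and $\cH_1$ on the sites outside $S$. These summands are mutually orthogonal. On $\cH_S$ the operator $\mathcal N$ acts as the scalar $\abs S$, and $\mathcal N$ commutes with every $U_\pi$. Consequently $\mathcal N$ preserves $\Sym^n\cH$, and its eigenvalue-$m$ eigenspace there is $\Sym^n\cH\cap\bigoplus_{\abs S=m}\cH_S$. Since $\mathcal N$ is Hermitian on $\Sym^n\cH$ with spectrum contained in $\{0,\dots,n\}$, these eigenspaces are mutually orthogonal and sum to all of $\Sym^n\cH$.

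It remains to identify the eigenvalue-$m$ eigenspace with the image of $J_m$. Write $U_\pi$ for permutations sending $\cH_{[m]}$ to $\cH_S$. A symmetric vector $\Psi$ in this eigenspace is determined by its component $\Psi_{[m]}$ in $\cH_{[m]}=\cH_0^{\ot m}\ot\cH_1^{\ot(n-m)}$, because every other component is obtained as $\Psi_S=U_\pi\Psi_{[m]}$. Moreover, invariance of $\Psi$ under $\mathfrak S_m\times\mathfrak S_{n-m}$ forces $\Psi_{[m]}\in\Sym^m(\cH_0)\ot\Sym^{n-m}(\cH_1)$. Conversely, symmetrizing such a component gives
\[
 \Pi_{\mathrm{sym}}^{(n)}(a\ot b)=\binom nm^{-1}\sum_{\abs S=m}U_{\pi_S}(a\ot b),
\]
which is a sum of $\binom nm$ orthogonal terms, each of norm $\norm{a\ot b}$. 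The squared norm is therefore $\binom nm^{-1}\norm{a\ot b}^2$, so the normalization $\sqrt{\binom nm}$ makes $J_m$ isometric; the same computation works for general elements by linearity. Hence $J_m$ is a unitary onto the eigenspace, and summing over $m$ gives the decomposition \eqref{eq:symmetric-orthogonal-sum}. Naturality holds because $J_m$ is defined without any choice of basis.

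For $n=2$, the cases $m=2$ and $m=0$ give $\Sym^2(\cH_0)$ and $\Sym^2(\cH_1)$. For $m=1$, the formula gives $J_1(p\ot q)=\sqrt2\,\Pi_{\mathrm{sym}}^{(2)}(p\ot q)=(p\ot q+q\ot p)/\sqrt2=p\vee q$, which yields \eqref{eq:symmetric-two-summands}. The only delicate point in the argument is the normalization bookkeeping: verifying that the $\binom nm$ terms in the symmetrization are orthogonal and of equal norm. This follows from the orthogonality of the summands $\cH_S$.
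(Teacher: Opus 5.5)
Your proof is correct. The paper does not prove this fact: it records it as a standard multilinear-algebra identity, citing Fulton--Harris, Appendix~B, and states the number-operator description without argument.

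Your route is a self-contained Hilbert-space proof, in three steps:
\begin{enumerate}
\item Splitting $\cH^{\ot n}$ into the orthogonal sectors $\cH_S$ diagonalizes $\sum_jP_j$.
\item The coset decomposition of the symmetrizer over $\mathfrak S_m\times\mathfrak S_{n-m}$ gives $\Pi^{(n)}_{\mathrm{sym}}\Phi=\binom nm^{-1}\sum_{\abs S=m}U_{\pi_S}\Phi$.
\item The factor $\sqrt{\binom nm}$ then makes $J_m$ an isometry onto the eigenvalue-$m$ eigenspace.
\end{enumerate}

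What this buys over the citation is what the paper later relies on. First, the identification is unitary, not merely a linear isomorphism of representations. Second, the eigenspace characterization of $\sum_jP_j$ is proved rather than assumed. Both are used, for example, in the slice decomposition with the projections $\Pi_k$ and the identity $\sum_{i\ne j}P_iP_j=k(k-1)I$ in Lemma~\ref{lem:HS-round-head}.

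One small wording point. The isometry of $J_m$ on non-product elements does not follow ``by linearity'' from the product case, since norms are not linear. What holds by linearity is the coset formula for every $\Phi\in\Sym^m(\cH_0)\ot\Sym^{n-m}(\cH_1)$. Your orthogonality and equal-norm computation then applies verbatim to $\Phi$, because it uses only that $\Phi\in\cH_{[m]}$ and that $\Phi$ is stabilizer-invariant.
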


\medskip
\paragraph{Partial inner products.}

For $k\ge1$ and $u\in\cH$, define the partial-inner-product map
\[
 C_u:\Sym^k\cH\longrightarrow\Sym^{k-1}\cH,
 \qquad
 \Psi\longmapsto(\bra u\ot I)\Psi.
\]
For $k\ge2$ and $\phi\in\Sym^2\cH$, similarly define
\[
 C_\phi:\Sym^k\cH\longrightarrow\Sym^{k-2}\cH,
 \qquad
 \Psi\longmapsto(\bra\phi\ot I)\Psi.
\]
Here $I$ acts on the remaining tensor factors, and the symmetric
degree will always be clear from context.  When $C_u$ acts on a tensor
of symmetric degree zero, we set it equal to zero.

Equivalent calculations in the standard bosonic formalism appear
in~\cite[Sec.~4]{LNR15}; we use only partial-inner-product notation
here.

\begin{fact}[Partial inner products and marginals]
\label{fact:symmetric-partial-inner-products}
For $\Xi\in\Sym^{k-1}\cH$, the adjoint of $C_u$ is
\begin{equation}
 C_u^*(\Xi)
 =
 \Pi_{\mathrm{sym}}^{(k)}(u\ot\Xi)
 =
 \frac1k\sum_{j=1}^k U_{1j}(u\ot\Xi),
 \label{eq:symmetric-insertion}
\end{equation}
where $U_{1j}$ swaps the first and $j$th tensor factors, with
$U_{11}=I$.  Moreover, if $k\ge2$, $\Psi\in\Sym^k\cH$ is a unit vector, and
$\rho_1,\rho_2$ are the one- and two-site marginals of
$\ketbra{\Psi}$, then
\begin{equation}
 \bra u\rho_1\ket u=\norm{C_u(\Psi)}^2,
 \qquad
 \bra\phi\rho_2\ket\phi=\norm{C_\phi(\Psi)}^2.
 \label{eq:contraction-marginals}
\end{equation}
\end{fact}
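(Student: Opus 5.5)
The plan is to verify both claims by direct computation from the definitions, using only that $\Psi$ is invariant under every $U_\pi$ and that $\Pi_{\mathrm{sym}}^{(k)}$ is the orthogonal projector onto $\Sym^k\cH$.

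\emph{Well-definedness and the adjoint.} First I would check that $C_u$ maps $\Sym^k\cH$ into $\Sym^{k-1}\cH$. For $\sigma\in\mathfrak S_{k-1}$ acting on sites $2,\dots,k$, we have $(I\ot U_\sigma)(\bra u\ot I)\Psi=(\bra u\ot I)(I\ot U_\sigma)\Psi$, and $(I\ot U_\sigma)\Psi=\Psi$. For the adjoint, take $\Xi\in\Sym^{k-1}\cH$ and $\Psi\in\Sym^k\cH$. Then
\[
 \braket{\Xi}{C_u\Psi}=\braket{u\ot\Xi}{\Psi}=\braket{u\ot\Xi}{\Pi_{\mathrm{sym}}^{(k)}\Psi}=\braket{\Pi_{\mathrm{sym}}^{(k)}(u\ot\Xi)}{\Psi},
\]
using that $\Pi_{\mathrm{sym}}^{(k)}$ is self-adjoint and fixes $\Psi$. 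Since $\Pi_{\mathrm{sym}}^{(k)}(u\ot\Xi)$ already lies in $\Sym^k\cH$, it is the adjoint of $C_u$ as a map between the symmetric spaces.

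\emph{The coset formula.} For the second expression I would decompose $\mathfrak S_k$ into left cosets of the stabilizer $\mathfrak S_{k-1}$ of site $1$, with representatives the transpositions $(1\,j)$ for $j=1,\dots,k$. Each $\pi$ factors as $(1\,j)\sigma$ with $\sigma$ fixing $1$, so
\[
 \Pi_{\mathrm{sym}}^{(k)}=\frac1k\sum_{j=1}^kU_{1j}\,\Bigl(I\ot\Pi_{\mathrm{sym}}^{(k-1)}\Bigr).
\]
Because $(I\ot\Pi_{\mathrm{sym}}^{(k-1)})(u\ot\Xi)=u\ot\Xi$, this gives \eqref{eq:symmetric-insertion}.

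\emph{Marginals.} I would expand $\Psi=\sum_ie_i\ot\psi_i$ in an orthonormal basis $\{e_i\}$ of the first site, with $\psi_i\in\cH^{\ot(k-1)}$. The first-site marginal is then $\rho_1=\sum_{i,j}\braket{\psi_j}{\psi_i}\,\ket{e_i}\bra{e_j}$, and $C_u\Psi=\sum_i\braket{u}{e_i}\psi_i$. Expanding $\norm{C_u\Psi}^2$ gives
\[
 \sum_{i,j}\overline{\braket{u}{e_j}}\braket{u}{e_i}\braket{\psi_j}{\psi_i}=\bra u\rho_1\ket u.
\]
Equivalently, this is the identity $\Tr\bigl((\ketbra u\ot I)\ketbra\Psi\bigr)=\norm{(\bra u\ot I)\Psi}^2$. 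The two-site identity follows identically, splitting off the first two sites and replacing $u$ by $\phi$. By permutation invariance of $\Psi$ these are the marginals on any sites, so "the" one- and two-site marginals are unambiguous.

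There is no real obstacle. The only care needed is the conjugate-linear convention, ensuring the factors $\braket{u}{e_i}$ appear consistently, and the observation that the symmetrized vector is a legitimate adjoint because it lands in the symmetric subspace.
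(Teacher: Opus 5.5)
Your proof is correct. The paper states this fact without proof, as a standard computation, and your argument supplies exactly the routine verification it leaves implicit: self-adjointness of $\Pi_{\mathrm{sym}}^{(k)}$ gives the adjoint, the coset decomposition of $\mathfrak S_k$ over the stabilizer of site $1$ gives the insertion formula, and the basis expansion of the partial trace gives the marginal identities. These are the same identities the paper then uses in the proof of Fact~\ref{fact:symmetric-contraction-bessel}. The only blemish is notational: in the well-definedness step, the operator applied to $(\bra u\ot I)\Psi\in\cH^{\ot(k-1)}$ should be $U_\sigma$ itself, not $I\ot U_\sigma$.
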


\begin{fact}[Bounds for symmetric partial inner products]
\label{fact:symmetric-contraction-bessel}
For every $u\in\cH$ and $\Xi\in\Sym^{k-1}\cH$,
\begin{equation}
 \norm{C_u^*(\Xi)}^2
 =
 \frac1k
 \left(
  \norm{u}^2\norm{\Xi}^2
  +(k-1)\norm{C_u(\Xi)}^2
 \right).
 \label{eq:contraction-adjoint-norm}
\end{equation}
For every orthonormal family $(u_i)$ and every symmetric tensor
$\Psi$,
\begin{equation}
 \sum_i\norm{C_{u_i}(\Psi)}^2
 \le
 \norm{\Psi}^2.
 \label{eq:contraction-Bessel}
\end{equation}
Consequently, if $u_1,\ldots,u_\ell$ are orthonormal, then
\begin{equation}
 \sum_{i=1}^\ell\norm{C_{u_i}^*(\Xi)}^2
 \le
 \frac{\ell+k-1}{k}\norm{\Xi}^2.
 \label{eq:insertion-Bessel}
\end{equation}
In particular, the right-hand side is at most $2\norm{\Xi}^2$ when
$\ell\le k+1$.
\end{fact}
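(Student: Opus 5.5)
The three claims are proved in order, by direct computation from the definitions, and the third follows by summing the first and invoking the second.

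\emph{The adjoint norm identity.} By \eqref{eq:symmetric-insertion} and the fact that $\Pi_{\mathrm{sym}}^{(k)}$ is an orthogonal projection,
\[
 \norm{C_u^*(\Xi)}^2=\bra{u\ot\Xi}\Pi_{\mathrm{sym}}^{(k)}\ket{u\ot\Xi}
 =\frac1k\sum_{j=1}^k\bra{u\ot\Xi}U_{1j}\ket{u\ot\Xi}.
\]
The $j=1$ term equals $\norm u^2\norm\Xi^2$.

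For $j\ge2$, the symmetry of $\Xi$ lets us conjugate $U_{1j}$ by a permutation of sites $2,\dots,k$. This permutation fixes $u\ot\Xi$, so every such term equals the $j=2$ term.

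To evaluate the $j=2$ term, fix an orthonormal basis $(e_a)$ of $\cH$ and expand $\Xi=\sum_a e_a\ot\Xi_a$ with $\Xi_a:=C_{e_a}(\Xi)$. Then $U_{12}(u\ot e_a\ot\Xi_a)=e_a\ot u\ot\Xi_a$, and therefore
\[
 \bra{u\ot\Xi}U_{12}\ket{u\ot\Xi}
 =\sum_{a,b}\overline{\braket{e_a}{u}}\,\braket{e_b}{u}\,\braket{\Xi_a}{\Xi_b}.
\]
By antilinearity of the bra, $C_u(\Xi)=\sum_a\braket{u}{e_a}\Xi_a$, so the right-hand side is exactly $\norm{C_u(\Xi)}^2$. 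Summing the $k-1$ equal terms gives \eqref{eq:contraction-adjoint-norm}. The degenerate case $k=1$ is consistent with the convention $C_u=0$ in degree zero. The only real care needed in this step is keeping the conjugations straight, and that is the main bookkeeping point.

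\emph{The Bessel inequality.} For unit vectors $u_i$, the operator $C_{u_i}^*C_{u_i}$ is the compression to $\Sym^k\cH$ of $\ketbra{u_i}\ot I$. Hence
\[
 \sum_i\norm{C_{u_i}(\Psi)}^2=\Bigl\langle\Psi,\Bigl(\sum_i\ketbra{u_i}\ot I\Bigr)\Psi\Bigr\rangle\le\norm\Psi^2,
\]
because $\sum_i\ketbra{u_i}\preceq I$ for an orthonormal family. This proves \eqref{eq:contraction-Bessel}.

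\emph{The insertion bound.} Sum \eqref{eq:contraction-adjoint-norm} over $i=1,\dots,\ell$, using $\norm{u_i}=1$:
\[
 \sum_{i=1}^\ell\norm{C_{u_i}^*(\Xi)}^2
 =\frac1k\Bigl(\ell\norm\Xi^2+(k-1)\sum_{i=1}^\ell\norm{C_{u_i}(\Xi)}^2\Bigr).
\]
Applying \eqref{eq:contraction-Bessel} to $\Psi=\Xi$ bounds the inner sum by $\norm\Xi^2$, which gives \eqref{eq:insertion-Bessel}. If $\ell\le k+1$, then $(\ell+k-1)/k\le 2k/k=2$.
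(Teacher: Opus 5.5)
Your argument follows the paper's proof step for step: expand $\Pi_{\mathrm{sym}}^{(k)}$ via the insertion formula, use the symmetry of $\Xi$ to make all $j\ge2$ terms equal, prove Bessel from $\sum_i\ketbra{u_i}\preceq I$, and then sum. The one thing to fix is your coordinate evaluation of the $j=2$ term. There the conjugations you flagged as the delicate point are in fact swapped. Pairing $u\ot e_a\ot\Xi_a$ with $e_b\ot u\ot\Xi_b$ gives
\[
 \bra{u\ot\Xi}U_{12}\ket{u\ot\Xi}
 =\sum_{a,b}\braket{e_a}{u}\,\braket{u}{e_b}\,\braket{\Xi_a}{\Xi_b}.
\]
Your display instead equals $\norm{\sum_a\braket{e_a}{u}\Xi_a}^2$. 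That is the contraction against the entrywise conjugate of $u$, which is not $\norm{C_u(\Xi)}^2$ in general: for $k=2$ and $u=\Xi=(e_1+ie_2)/\sqrt2$ it gives $0$ instead of $1$. With the corrected coefficients, the identification with $\norm{C_u(\Xi)}^2=\sum_{a,b}\overline{\braket{u}{e_a}}\,\braket{u}{e_b}\,\braket{\Xi_a}{\Xi_b}$ is immediate. Alternatively, you can avoid coordinates altogether. Contracting $U_{12}(u\ot\Xi)$ with $u$ on the first site leaves $u\ot C_u(\Xi)$, and pairing this with $\Xi$ gives $\braket{C_u(\Xi)}{C_u(\Xi)}$. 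The rest of the proposal is correct as written.
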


\begin{proof}
The case $k=1$ of \eqref{eq:contraction-adjoint-norm} is immediate.
For $k\ge2$, expanding the symmetrizer in
\eqref{eq:symmetric-insertion} gives
\begin{align*}
 \norm{C_u^*(\Xi)}^2
 &=
 \frac1k\sum_{j=1}^k
 \left\langle
  u\ot\Xi,U_{1j}(u\ot\Xi)
 \right\rangle\\
 &=\frac1k
 \left(
  \norm{u}^2\norm{\Xi}^2
  +(k-1)\norm{C_u(\Xi)}^2
 \right).
\end{align*}
Indeed, symmetry of $\Xi$ makes the terms with $j\ge2$ equal, and
their common value is
$\langle\Xi,(\ketbra{u}\ot I)\Xi\rangle
=\norm{C_u(\Xi)}^2$.  This proves
\eqref{eq:contraction-adjoint-norm}.

For \eqref{eq:contraction-Bessel}, let
$P_U:=\sum_i\ketbra{u_i}$ be the orthogonal projection onto the span
of the orthonormal family $(u_i)$.  Then
\begin{align*}
 \sum_i\norm{C_{u_i}(\Psi)}^2
 &=
 \sum_i
 \left\langle
  \Psi,\bigl(\ketbra{u_i}\ot I\bigr)\Psi
 \right\rangle\\
 &=
 \left\langle\Psi,(P_U\ot I)\Psi\right\rangle\\
 &\le
 \norm{\Psi}^2,
\end{align*}
because $0\preceq P_U\preceq I$.  Summing
\eqref{eq:contraction-adjoint-norm} over $u_1,\ldots,u_\ell$ and
applying \eqref{eq:contraction-Bessel} to $\Xi$ gives
\eqref{eq:insertion-Bessel}.
\end{proof}

\section{Two-site de Finetti upper bounds}
\label{sec:upper}

We first prove the bosonic two-site bound by argmax rounding.
Symmetric purification gives the exchangeable bound.  Rectangular
argmax rounding gives the two-sided Bose-symmetric bound, and a second
symmetric purification removes the Bose-symmetry assumption.

\subsection{Bosonic states: the argmax principle}

\begin{theorem}[Bosonic two-site de Finetti upper bound]
\label{thm:bos-main}
For every $N,d\ge2$,
\begin{equation}
 \DHtr\Bigl(
  \BOS_N^{(2)}(d),\,
  \SEPH^{(2)}(d)
 \Bigr)
 \le
 \frac{\sqrt{d-1}}{2(N-1)}.
 \label{eq:bos-upper-main}
\end{equation}
\end{theorem}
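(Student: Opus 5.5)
We follow the route sketched in the technical overview and reduce Theorem~\ref{thm:bos-main} to an integrality-gap bound.

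\textbf{Duality.} By Lemma~\ref{lem:compact-convex-distance-duality} applied with $\mathcal K=\BOS_N^{(2)}(d)$ and $\mathcal L=\SEPH^{(2)}(d)$, both compact, convex, and nested, the worst-case error equals
\[
\tfrac12\max_{\norm{M}_\infty\le1}\{h_{\mathcal K}(M)-h_{\mathcal L}(M)\}.
\]
Every state involved is supported on $\Sym^2\cH$, so we may compress $M$ to $\Sym^2\cH$; this does not increase $\norm{M}_\infty$.

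The support function $h_{\mathcal L}(M)$ equals $h(M)=\max_{\norm u=1}\bra{u^{\ot2}}M\ket{u^{\ot2}}$. By permutation symmetry of $\rho_N$, $\Tr(M\rho_N^{(2)})=\Tr(M^{[N]}\rho_N)$, so $h_{\mathcal K}(M)=\lambda_{\max}(M^{[N]})$ on $\Sym^N\cH$. It therefore suffices to show
\[
\lambda_{\max}(M^{[N]})\le h(M)+\frac{\sqrt{d-1}}{N-1}\norm{M}_\infty .
\]

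\textbf{Argmax rounding.} Let $\psi$ be a unit top eigenvector of $M^{[N]}$. Choose a unit $u$ maximizing $\abs{\braket{u^{\ot N}}{\psi}}$; it exists by compactness. Fix the phase of $u$ so that $c:=\braket{u^{\ot N}}{\psi}>0$, which is possible since $c\ge$ a Haar average of the overlap, and that average is positive.

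Put $\eta_u:=(\bra u^{\ot(N-2)}\ot I)\psi\in\Sym^2\cH$ and decompose it by Fact~\ref{fact:symmetric-orthogonal-sum} along $\C u\oplus u^\perp$. Its $u^{\ot2}$ coefficient is $c$.

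\emph{First-order condition.} Perturb along $u_s=(u+sv)/\norm{u+sv}$ with $v\perp u$ a unit vector and $s\in\C$. The first-order expansion of $\abs{\braket{u_s^{\ot N}}{\psi}}^2$ in $s$ and $\bar s$ forces $\braket{u^{\ot(N-1)}\ot v}{\psi}=0$ for all such $v$. Hence the component of $\eta_u$ in $u\vee u^\perp$ vanishes.

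\emph{Second-order condition.} Write $\zeta_u$ for the $\Sym^2(u^\perp)$ component of $\eta_u$. Expanding to second order in $s$, with the normalization contributing $-\tfrac N2\abs s^2 c$, gives
\[
\Re\Bigl(\bar s^2\tbinom N2\braket{u^{\ot(N-2)}\ot v^{\ot2}}{\psi}\Bigr)\le \tfrac N2\abs s^2c .
\]
Optimizing the phase of $s$ yields $\abs{\braket{v^{\ot2}}{\zeta_u}}\le c/(N-1)$. I expect this second-order computation to be the main obstacle. The bookkeeping of multinomial coefficients, and the check that the term carrying $\abs s^2$ from a single $v$ against a single $u$ drops out, must be done carefully: it drops out because $\Sym^N$ vectors with exactly one $v$ pair to zero with $\psi$, or because it contributes only to the $\abs s^2$ coefficient that is already accounted for. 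The constant $N-1$ must come out exactly.

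\textbf{Gap bound.} Contract the eigen-equation $M^{[N]}\psi=\lambda\psi$ against $u^{\ot N}$. Symmetry reduces $M^{[N]}$ to $M_{12}$, and self-adjointness of $M$ then gives
\[
\lambda c=\braket{Mu^{\ot2}}{\eta_u}=c\bra{u^{\ot2}}M\ket{u^{\ot2}}+\braket{B_u}{\zeta_u},
\qquad B_u:=P_{\Sym^2(u^\perp)}Mu^{\ot2}.
\]
The mixed component drops out because that part of $\eta_u$ is zero.

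Now identify $\Sym^2(u^\perp)\subseteq u^\perp\ot u^\perp$ with complex symmetric $(d-1)\times(d-1)$ matrices. The injective norm of $\zeta_u$ over all products $x\ot y$ equals its sup over squares $v^{\ot2}$; this is the Takagi/Banach fact that for symmetric matrices $\max_{x,y}\abs{x^TZy}=\max_v\abs{v^TZv}$. Hence $\norm{\zeta_u}_\varepsilon\le c/(N-1)$.

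By \eqref{eq:tensor-norm-duality-rank},
\[
\abs{\braket{B_u}{\zeta_u}}\le\norm{B_u}_\pi\norm{\zeta_u}_\varepsilon\le\sqrt{d-1}\,\norm{B_u}\,\frac{c}{N-1},
\]
and $\norm{B_u}\le\norm{Mu^{\ot2}}\le\norm{M}_\infty$. Dividing by $c>0$ gives
\[
\lambda_{\max}(M^{[N]})\le h(M)+\frac{\sqrt{d-1}}{N-1}\norm{M}_\infty,
\]
and the factor $\tfrac12$ from duality yields \eqref{eq:bos-upper-main}.
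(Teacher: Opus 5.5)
Your proposal is correct and follows essentially the same route as the paper. It reduces by exact duality to bounding $\lambda_{\max}(M^{[N]})-h(M)$, uses argmax rounding whose first- and second-order optimality conditions give $\eta_u=c\,u^{\ot2}+\zeta_u$ with $\norm{\zeta_u}_\varepsilon\le c/(N-1)$, and closes with the pairing bound $\abs{\braket{B_u}{\zeta_u}}\le\sqrt{d-1}\,\norm{M}_\infty\,c/(N-1)$ via Takagi and projective/injective duality. The differences are cosmetic: you obtain $c>0$ from a Haar-average lower bound rather than from tensor powers spanning $\Sym^N\cH$, and the stray $\abs{s}^2$ cross term you worry about is the $N^2\abs{s}^2\abs{\alpha}^2$ term, which vanishes once $\alpha=0$.
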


To prove the theorem, associate with each
$M\in\Herm(\Sym^2(\cH))$ the quantities $a_M$, $h(M)$, and the
normalized $N$-site lift $M^{[N]}$ defined by
\begin{align}
 a_M(u)&:=\bra{u^{\ot2}}M\ket{u^{\ot2}},\\
 h(M)&:=\max_{\norm{u}=1}a_M(u),
 \label{eq:hA}\\
 M^{[N]}&:=\binom N2^{-1}\sum_{1\le i<j\le N}M_{ij}
 \quad\text{on }\Sym^N(\cH).
 \label{eq:N-site-lift}
\end{align}
Here $M_{ij}$ acts as $M$ on sites $i,j$ and as the identity elsewhere.  Since every two-site marginal of a bosonic state is supported on $\Sym^2(\cH)$, no extension of $M$ outside that subspace is needed.

The proof has three steps.  First, trace-norm duality identifies the de
Finetti error with the largest gap
$\lambda_{\max}(M^{[N]})-h(M)$ over bounded Hermitian $M$.  Next, given
a top eigenvector $\psi$ of $M^{[N]}$, the argmax principle chooses as
the rounding object a unit vector $u$ maximizing
$\abs{\braket{u^{\ot N}}{\psi}}$.  The complex-sphere argmax lemma
provides the basic tools for analyzing this rounding object.  After
taking the partial inner product with $u$ on the other $N-2$ sites,
first-order optimality eliminates the mixed component of the resulting
two-site vector, while second-order optimality controls its component
in $\Sym^2(u^\perp)$.  Finally, Takagi decomposition converts this
structural information into the integrality-gap bound
\[
 \lambda_{\max}(M^{[N]})-h(M)
 \le
 \frac{\sqrt{d-1}}{N-1}\norm{M}_\infty.
\]

\subsubsection{An exact dual problem}

Trace-norm duality reduces the approximation problem to a gap detected
by a bounded Hermitian witness $M$.  Its maximum over Hartree states is
$h(M)$, whereas its maximum over two-site marginals of bosonic
$N$-site states is $\lambda_{\max}(M^{[N]})$.  The following lemma makes
this reduction exact.

\begin{lemma}[Exact trace-norm duality]
\label{lem:dual-bos}
For every $N,d\ge2$,
\begin{equation}
 \DHtr\Bigl(
  \BOS_N^{(2)}(d),\,
  \SEPH^{(2)}(d)
 \Bigr)
 =\frac12
 \sup_{\substack{M\in\Herm(\Sym^2\cH)\\
                  \norm{M}_\infty\le1}}
 \left\{\lambda_{\max}(M^{[N]})-h(M)\right\}.
 \label{eq:dual-bos}
\end{equation}
\end{lemma}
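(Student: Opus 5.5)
The plan is to apply the set-to-set duality \eqref{eq:compact-duality-sets-trace} of Lemma~\ref{lem:compact-convex-distance-duality} with $\mathcal K=\BOS_N^{(2)}(d)$ and $\mathcal L=\SEPH^{(2)}(d)$. Then we identify the two support functions $h_{\mathcal K}(M)$ and $h_{\mathcal L}(M)$ with $\lambda_{\max}(M^{[N]})$ and $h(M)$.

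One subtlety is the ambient space. Both sets consist of states supported on $\Sym^2\cH$, since two-site marginals of symmetric vectors lie there. We therefore apply the lemma with ambient Hilbert space $\Sym^2\cH$ rather than $\cH^{\ot2}$. Trace distance between operators supported on $\Sym^2\cH$ is the same in either ambient space, so the Hausdorff distance is unchanged. The witnesses then range over $M\in\Herm(\Sym^2\cH)$ with $\norm{M}_\infty\le1$, exactly as in \eqref{eq:dual-bos}.

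Next we check the hypotheses of the lemma. $\mathcal L$ is the convex hull of the compact image of the unit sphere under $u\mapsto\ketbra u^{\ot2}$, hence compact and convex. $\mathcal K$ is the image of the compact convex set $\D(\Sym^N\cH)$ under the linear partial-trace map, hence also compact and convex. The inclusion $\mathcal L\subseteq\mathcal K$ was recorded in the preliminaries.

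We then compute the support functions. A linear functional on a convex hull is maximized at an extreme generator, so $h_{\mathcal L}(M)=\max_{\norm u=1}\Tr(M\ketbra u^{\ot2})=h(M)$. For $\mathcal K$, fix $\rho_N\in\D(\Sym^N\cH)$. Permutation invariance of $\rho_N$ gives $\Tr(M_{ij}\rho_N)=\Tr(M\rho_N^{(2)})$ for every pair $i<j$. Here $M_{ij}$ is well defined on $\Sym^N\cH$: conjugating $M_{12}$ by a permutation yields $M_{ij}$, and $M$ is only needed on $\Sym^2\cH$, because the restriction of any two sites of a symmetric vector lies in $\Sym^2\cH\ot\cH^{\ot(N-2)}$. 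Averaging over pairs gives $\Tr(M\rho_N^{(2)})=\Tr(M^{[N]}\rho_N)$. Maximizing over $\rho_N$ gives $h_{\mathcal K}(M)=\lambda_{\max}(M^{[N]})$, since $M^{[N]}$ is Hermitian on $\Sym^N\cH$. Substituting both identities into \eqref{eq:compact-duality-sets-trace} yields \eqref{eq:dual-bos}, with the supremum attained.

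The main obstacle is not any single computation but the bookkeeping of ambient spaces. We must make sure that restricting witnesses to $\Herm(\Sym^2\cH)$ loses nothing. Doing so is legitimate because the lemma holds for any finite-dimensional $\cH$, and both sets live in $\D(\Sym^2\cH)$. We also need $M^{[N]}$ to preserve $\Sym^N\cH$, which holds because the pair average commutes with all $U_\pi$.
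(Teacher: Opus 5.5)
Your proposal is correct and follows the paper's proof: you identify $h_{\SEPH^{(2)}(d)}(M)=h(M)$ and, by permutation symmetry, $h_{\BOS_N^{(2)}(d)}(M)=\lambda_{\max}(M^{[N]})$, and then invoke the set-to-set duality of Lemma~\ref{lem:compact-convex-distance-duality}. Your explicit treatment of the ambient space $\Sym^2\cH$, the compactness and convexity hypotheses, and the fact that $M^{[N]}$ preserves $\Sym^N\cH$ supplies bookkeeping that the paper leaves implicit.
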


\begin{proof}
The support function of the Hartree mixtures is
\[
 h_{\SEPH^{(2)}(d)}(M)=h(M).
\]
For the bosonic marginal set, permutation symmetry gives
\[
 h_{\BOS_N^{(2)}(d)}(M)
 =
 \max_{\rho_N\in\D(\Sym^N\cH)}
 \Tr(M^{[N]}\rho_N)
 =
 \lambda_{\max}(M^{[N]}).
\]
By duality (Lemma~\ref{lem:compact-convex-distance-duality}), these two
identities prove \eqref{eq:dual-bos}.
\end{proof}

\subsubsection{The argmax rounding object}

The tensor-power vectors $\{u^{\ot N}:u\in\cH\}$ span
 $\Sym^N(\cH)$; see also~\cite[Theorem~3]{Har13}.  Thus, for every
 nonzero $\psi\in\Sym^N(\cH)$, the argmax principle gives a natural
 rounding object: choose
\[
 u\in\operatorname*{arg\,max}_{\norm{v}=1}
 \abs{\braket{v^{\ot N}}{\psi}},
\]
and use $u^{\ot N}$ as the rounding object.
Fix a unit vector $\psi\in\Sym^N(\cH)$, and let $u^{\ot N}$ be this
argmax rounding object.  After adjusting the phase of $u$, set
\begin{equation}
 c:=\braket{u^{\ot N}}{\psi}>0.
 \label{eq:c-def}
\end{equation}
Taking the partial inner product with $u$ on $N-2$ sites gives
\begin{equation}
 \eta_u:=
 (\bra{u}^{\ot(N-2)}\ot I_{\cH^{\ot2}})\psi
 \in\Sym^2(\cH).
 \label{eq:eta-def}
\end{equation}
Write
\[
 u\vee u^\perp
 :=\operatorname{span}\{u\ot v+v\ot u:v\perp u\}
 \subseteq\Sym^2(\cH).
\]

The next lemma provides the basic tools for analyzing this rounding
object.  It is the bosonic counterpart of the reweighted
pseudo-covariance lemma underlying our SoS argmax
rounding~\cite{JWX26}.  Here $\eta_u$ plays the role of the reweighted
pseudo-covariance.  The mechanism is concrete:
first-order optimality eliminates its mixed $u\vee u^\perp$ component,
while second-order optimality bounds the remaining
$\Sym^2(u^\perp)$ component against every square $v^{\ot2}$.
\begin{lemma}[Complex-sphere argmax estimate]
\label{lem:complex-sphere-argmax}
With the notation above,
\begin{equation}
 \eta_u=c\,u^{\ot2}+\zeta_u,
 \qquad
 \zeta_u\in\Sym^2(u^\perp),
 \label{eq:eta-decomp}
\end{equation}
and
\begin{equation}
 \sup_{\substack{v\perp u\\\norm{v}=1}}
 \abs{\braket{v^{\ot2}}{\zeta_u}}
 \le\frac{c}{N-1}.
 \label{eq:argmax-injective}
\end{equation}
\end{lemma}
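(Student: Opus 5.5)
The plan is a direct first- and second-order optimality analysis along the complex sphere at the maximizer $u$. Fix a unit vector $v\perp u$ and a complex parameter $t$, and consider
\[
 g(t):=\frac{\abs{\braket{(u+tv)^{\ot N}}{\psi}}^2}{(1+\abs t^2)^{N}},
\]
which is $\abs{\braket{w^{\ot N}}{\psi}}^2$ for the unit vector $w=(u+tv)/\sqrt{1+\abs t^2}$. By the choice of $u$, $g$ has a global maximum at $t=0$, with $g(0)=c^2$.

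Expanding binomially and using the symmetry of $\psi$ (the inner product is conjugate-linear in the first slot) gives
\[
 \braket{(u+tv)^{\ot N}}{\psi}
 = c+N\bar t\,a_1(v)+\binom N2\bar t^{2}a_2(v)+O(\abs t^3),
\]
with $a_1(v):=\braket{u^{\ot(N-1)}\ot v}{\psi}$ and $a_2(v):=\braket{u^{\ot(N-2)}\ot v^{\ot2}}{\psi}$. Since $\psi$ is symmetric, these coefficients are also coefficients of $\eta_u$:
\[
 c=\braket{u^{\ot2}}{\eta_u},\qquad
 a_1(v)=\braket{u\ot v}{\eta_u},\qquad
 a_2(v)=\braket{v^{\ot2}}{\eta_u}.
\]
Decompose $\eta_u\in\Sym^2\cH$ along $\operatorname{span}\{u^{\ot2}\}\oplus(u\vee u^\perp)\oplus\Sym^2(u^\perp)$, which is Fact~\ref{fact:symmetric-orthogonal-sum} with $\cH_0=\C u$. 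The $u^{\ot2}$-coefficient is $c$. The mixed component is determined by the numbers $a_1(v)$ for $v\perp u$. The $\Sym^2(u^\perp)$ component is $\zeta_u$, and $\braket{v^{\ot2}}{\zeta_u}=a_2(v)$ because $v^{\ot2}$ is orthogonal to the other two summands.

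\emph{First order.} To first order, $g(t)=c^2+2cN\,\Re(\bar t\,a_1(v))+O(\abs t^2)$; the denominator only contributes at order $\abs t^2$. Since $c>0$ and $t$ ranges over all of $\C$, maximality at $t=0$ forces $a_1(v)=0$ for every $v\perp u$. Hence the $u\vee u^\perp$ component vanishes, which gives \eqref{eq:eta-decomp}.

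\emph{Second order.} With $a_1=0$, the numerator is $c^2+2c\binom N2\Re(\bar t^2a_2(v))+O(\abs t^3)$, and $(1+\abs t^2)^{-N}=1-N\abs t^2+O(\abs t^4)$. So
\[
 g(t)=c^2+N(N-1)c\,\Re\bigl(\bar t^2a_2(v)\bigr)-Nc^2\abs t^2+O(\abs t^3)\le c^2 .
\]
Choose the phase of $t$ so that $\bar t^2a_2(v)=\abs t^2\abs{a_2(v)}$. Dividing by $\abs t^2$ and letting $t\to0$ then gives $N(N-1)c\abs{a_2(v)}\le Nc^2$, that is, $\abs{\braket{v^{\ot2}}{\zeta_u}}\le c/(N-1)$. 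Taking the supremum over unit $v\perp u$ proves \eqref{eq:argmax-injective}.

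The main obstacle is the bookkeeping in the second-order expansion. The cross term $\abs{N\bar t a_1}^2$ and the interaction of the denominator with the linear term must be checked to disappear exactly because $a_1=0$. The choice of complex phase for $t$ is what turns the real-part condition into a bound on $\abs{a_2}$. Everything else is the symmetric-tensor identification of the coefficients with components of $\eta_u$.
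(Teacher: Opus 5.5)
Your proposal is correct and follows essentially the same route as the paper. You use first-order optimality along $(u+tv)/\sqrt{1+\abs{t}^2}$ to kill the $u\vee u^\perp$ component of $\eta_u$, then the second-order condition with the phase of $t$ aligned to $\braket{v^{\ot2}}{\eta_u}$ to get the $c/(N-1)$ bound; working with the squared modulus directly and identifying the coefficients with components of $\eta_u$ up front rather than at the end are only cosmetic differences.
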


\begin{proof}
Fix a unit vector $v\perp u$ and, for $t\in\C$, set
\[
 u_t:=\frac{u+t v}{\sqrt{1+\abs t^2}},
 \qquad
 g(t):=\braket{u_t^{\ot N}}{\psi}.
\]
By the choice of $u$, the function $\abs{g(t)}^2$ is maximized at
$t=0$.  For $0\le k\le N$, write
\[
 \gamma_k:=
 \braket{v^{\ot k}\ot u^{\ot(N-k)}}{\psi}.
\]
Symmetry of $\psi$ gives the exact expansion
\[
 g(t)
 =
 (1+\abs t^2)^{-N/2}
 \sum_{k=0}^N\binom Nk\overline t^{\,k}\gamma_k.
\]
In particular, $\gamma_0=c$.  Set
\begin{align*}
 \alpha&:=\gamma_1
 =\braket{v\ot u^{\ot(N-1)}}{\psi},\\
 \beta&:=\gamma_2
 =\braket{v^{\ot2}\ot u^{\ot(N-2)}}{\psi}.
\end{align*}
The exact expansion yields
\begin{equation}
 g(t)
 =(1+\abs t^2)^{-N/2}
 \left(c+N\overline t\,\alpha+\binom N2\overline t^{\,2}\beta
 +O(\abs t^3)\right).
 \label{eq:g-expansion}
\end{equation}
The normalization factor has no linear term, and hence the first-order
term in $\abs{g(t)}^2$ is
\[
 2Nc\,\Re(\overline t\alpha).
\]
To see why this forces $\alpha=0$, fix any $\xi\in\C$ and restrict to
the real one-parameter curve $t=r\xi$, $r\in\mathbb{R}$.  Since $t=0$ is a
local maximum of $\abs{g(t)}^2$, the differentiable function
$r\mapsto\abs{g(r\xi)}^2$ has derivative zero at $r=0$.  Thus
\[
 0
 =
 \left.\frac{d}{dr}\abs{g(r\xi)}^2\right|_{r=0}
 =
 2Nc\,\Re(\overline\xi\alpha).
\]
Taking $\xi=1$ gives $\Re(\alpha)=0$, while taking $\xi=i$ gives
$\Im(\alpha)=0$.  Therefore $\alpha=0$.  Keeping the normalization
factor explicit in the first line and then Taylor-expanding it in the
second, we obtain
\begin{align}
 \abs{g(t)}^2
 &=(1+\abs t^2)^{-N}
 \left(
 c^2
 +2\binom N2c\,\Re(\overline t^{\,2}\beta)
 +O(\abs t^3)
 \right)\notag\\
 &=
 \left(1-N\abs t^2+O(\abs t^4)\right)
 \left(
 c^2
 +2\binom N2c\,\Re(\overline t^{\,2}\beta)
 +O(\abs t^3)
 \right)\notag\\
 &=c^2-Nc^2\abs t^2
 +2\binom N2c\,\Re(\overline t^{\,2}\beta)
 +O(\abs t^3).
\label{eq:g2-expansion}
\end{align}
Here we used
$(1+\abs t^2)^{-N}=1-N\abs t^2+O(\abs t^4)$.  Its product with the
constant term $c^2$ contributes
$-Nc^2\abs t^2$.  Its products with terms of order $\abs t^2$ or
higher are $O(\abs t^4)$ and are therefore absorbed into the
$O(\abs t^3)$ remainder.

Now write $t=re^{i\theta}$ and choose $\theta$ so that
$e^{-2i\theta}\beta=\abs\beta$.  Along this real one-parameter
perturbation,
\[
 \abs{g(re^{i\theta})}^2
 =
 c^2+
 \left(
 -Nc^2+2\binom N2c\abs\beta
 \right)r^2
 +O(r^3).
\]
Since $r=0$ is a local maximum, the quadratic coefficient is
nonpositive:
\[
 2\binom N2c\abs\beta\le Nc^2.
\]
Therefore $\abs\beta\le c/(N-1)$.

It remains to translate these variational conditions into a statement
about $\eta_u$.  The orthogonal decomposition
\[
 \Sym^2(\cH)
 =
 \C u^{\ot2}
 \oplus (u\vee u^\perp)
 \oplus \Sym^2(u^\perp)
\]
is adapted to this partial inner product.  By definition,
\[
 \braket{u^{\ot2}}{\eta_u}=c,
 \qquad
 \braket{v\ot u}{\eta_u}=\alpha=0
 \quad(v\perp u).
\]
Since $\eta_u$ is symmetric, the second identity eliminates its entire
$u\vee u^\perp$ component.  Thus
$\eta_u=c\,u^{\ot2}+\zeta_u$ for some
$\zeta_u\in\Sym^2(u^\perp)$.  Finally, for every unit $v\perp u$,
\[
 \braket{v^{\ot2}}{\zeta_u}
 =\braket{v^{\ot2}}{\eta_u}
 =\beta,
\]
so the bound on $\beta$ is exactly \eqref{eq:argmax-injective}.
\end{proof}


\subsubsection{Bounding the integrality gap}

We now use the information supplied by the argmax lemma to bound the
integrality gap.  The remaining ingredient is the Takagi
decomposition, the symmetric form of the singular-value, or Schmidt,
decomposition.  It says that a symmetric two-tensor can be written as
\[
 \zeta=\sum_i s_i u_i^{\ot2},
\]
where the $u_i$ are orthonormal and $s_i\ge0$.  This form identifies
the relevant tensor norms with matrix Schatten norms.

To record the corresponding matrix statement, let $\cH$ be an
$m$-dimensional complex Hilbert space with an orthonormal basis
$(e_i)_{i=1}^m$.  Every $\zeta\in\Sym^2(\cH)$ has a unique representation
\begin{equation}
 \zeta=\sum_{i,j=1}^m Z_{ij}e_i\ot e_j,
 \qquad Z^\mathsf T=Z.
 \label{eq:matrix-tensor}
\end{equation}
We write $Z=\Mat(\zeta)$.  Although $Z$ need
not be Hermitian, its complex symmetry implies a Takagi
decomposition~\cite[Cor.~4.4.4]{HJ12}
\begin{equation}
 Z=U\Sigma U^\mathsf T,
 \label{eq:takagi-factorization}
\end{equation}
where $U$ is unitary and $\Sigma$ is diagonal with nonnegative entries.
The diagonal entries of $\Sigma$ are the singular values of $Z$.  A
change of orthonormal basis replaces $Z$ by a unitary congruence, so its
Schatten norms are basis independent.

The injective and projective tensor norms from
Section~\ref{sec:tensor-norms} have especially simple descriptions on
$\Sym^2(\cH)$.  Namely, Takagi decomposition shows that they may be
computed using only symmetric rank-one tensors.

\begin{fact}[Takagi decomposition and symmetric tensor norms]
\label{lem:takagi}
For every $\zeta\in\Sym^2(\cH)$,
\begin{align}
 \norm{\zeta}_{\varepsilon}
 &=
 \sup_{\norm{v}=1}
 \abs{\braket{v^{\ot2}}{\zeta}}
 =\norm{\Mat(\zeta)}_\infty,
 \label{eq:takagi-op}\\
 \norm{\zeta}_{\pi}
 &=
 \inf\left\{
  \sum_k\abs{a_k}:
  \zeta=\sum_k a_kv_k^{\ot2},\
  \norm{v_k}=1
 \right\}
 =\norm{\Mat(\zeta)}_1.
 \label{eq:takagi-projective}
\end{align}
\end{fact}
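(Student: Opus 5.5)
The plan is to reduce both identities to the Takagi factorization \eqref{eq:takagi-factorization} and the matrix identities \eqref{eq:tensor-norm-matrix}. Fix an orthonormal basis $(e_i)$ and set $Z=\Mat(\zeta)=U\Sigma U^\mathsf T$ with singular values $s_1\ge\dots\ge s_m\ge0$. Let $u_i:=\sum_j U_{ji}e_j$ be the columns of $U$; these are orthonormal, and $Z=U\Sigma U^\mathsf T$ translates into
\[
 \zeta=\sum_i s_i\,u_i\ot u_i=\sum_i s_i u_i^{\ot2}.
\]
Regarding $\zeta$ as an element of $\cH\ot\cH$, its coefficient matrix is $Z$. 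Hence \eqref{eq:tensor-norm-matrix} gives $\norm{\zeta}_\varepsilon=\norm Z_\infty$ and $\norm\zeta_\pi=\norm Z_1$. So only the middle expressions, which use symmetric rank-one tensors, remain to be identified.

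For \eqref{eq:takagi-op}, restricting the supremum in \eqref{eq:injective-tensor-norm} to $x=y=v$ shows that $\sup_{\norm v=1}\abs{\braket{v^{\ot2}}{\zeta}}\le\norm\zeta_\varepsilon$. For the reverse inequality, take $v=u_1$. Orthonormality gives $\braket{u_1^{\ot2}}{\zeta}=s_1=\norm Z_\infty$, so the supremum is attained. This is the only step that genuinely uses symmetry: the general bilinear supremum is achieved on the diagonal, and Takagi supplies the maximizing vector.

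For \eqref{eq:takagi-projective}, every decomposition $\zeta=\sum_k a_kv_k^{\ot2}$ is in particular a product decomposition. Therefore the symmetric infimum is at least $\norm\zeta_\pi=\norm Z_1$. Conversely, the Takagi decomposition $\sum_i s_iu_i^{\ot2}$ is a symmetric decomposition with cost $\sum_i s_i=\norm Z_1$, so the infimum is attained there.

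The main point requiring care is bookkeeping rather than analysis. One must check that $U\Sigma U^\mathsf T$, not $U\Sigma U^\dagger$, corresponds to $\sum_i s_iu_i\ot u_i$ under \eqref{eq:matrix-tensor}. Indeed $(u_i\ot u_i)$ has coefficient matrix $u_iu_i^\mathsf T$, which confirms the identification. With that checked, basis independence of the Schatten norms follows from the unitary-congruence remark preceding the fact.
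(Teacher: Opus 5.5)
Your proposal is correct and essentially matches the paper's proof: both use the Takagi factorization $Z=U\Sigma U^{\mathsf T}$ to write $\zeta=\sum_i s_iu_i^{\ot2}$, and both argue the projective identity the same way. The only difference is cosmetic and lies in the injective part: you bound the supremum over squares by restricting the bilinear supremum to $x=y=v$ and then invoking \eqref{eq:tensor-norm-matrix}, whereas the paper substitutes $y=U^{\mathsf T}\overline v$ and bounds $\abs{\sum_i\Sigma_{ii}y_i^2}\le\max_i\Sigma_{ii}$ directly, with both arguments attaining the value at a Takagi vector.
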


\begin{proof}
Write $Z=U\Sigma U^\mathsf T$ as in
\eqref{eq:takagi-factorization}.  Our inner-product convention gives
\[
 \braket{v^{\ot2}}{\zeta}
 =\overline v^{\mathsf T}Z\overline v.
\]
As $v$ ranges over the unit sphere, so does
$y=U^\mathsf T\overline v$.  Hence
\[
 \abs{\braket{v^{\ot2}}{\zeta}}
 =
 \abs{\sum_i\Sigma_{ii}y_i^2}
 \le
 \sum_i\Sigma_{ii}\abs{y_i}^2
 \le
 \max_i\Sigma_{ii}.
\]
Taking $y$ to be a coordinate vector corresponding to the largest
diagonal entry shows that the supremum over squares equals
$\norm{Z}_\infty$.  By \eqref{eq:tensor-norm-matrix}, this is precisely
$\norm{\zeta}_\varepsilon$, proving \eqref{eq:takagi-op}.

If $u_i$ are the columns of $U$, the Takagi decomposition gives
\[
 \zeta=\sum_i\Sigma_{ii}u_i^{\ot2},
\]
so the infimum over square decompositions is at most
$\sum_i\Sigma_{ii}=\norm{Z}_1=\norm{\zeta}_\pi$.  Conversely, every
square decomposition is admissible in the definition of the ordinary
projective norm \eqref{eq:projective-tensor-norm}, so the infimum is at
least $\norm{\zeta}_\pi$.  This proves
\eqref{eq:takagi-projective}.
\end{proof}

For $M\in\Herm(\Sym^2(\C^d))$, define
\begin{equation}
 \tau(M):=
 \sup_{\norm{u}=1}
 \norm{P_{\Sym^2(u^\perp)}Mu^{\ot2}}_{\pi},
 \label{eq:tau-def}
\end{equation}
where the projective norm is taken across the two copies of $u^\perp$.

\begin{proposition}[Integrality gap from argmax rounding]
\label{prop:tau-bound}
For every $N,d\ge2$ and every $M\in\Herm(\Sym^2(\C^d))$,
\begin{equation}
 \lambda_{\max}(M^{[N]})-h(M)
 \le\frac{\tau(M)}{N-1}.
 \label{eq:tau-spectral}
\end{equation}
Consequently,
\begin{equation}
 \lambda_{\max}(M^{[N]})-h(M)
 \le\frac{\sqrt{d-1}}{N-1}\norm{M}_\infty.
 \label{eq:uniform-spectral}
\end{equation}
\end{proposition}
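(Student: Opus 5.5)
We take $\psi\in\Sym^N(\cH)$ to be a unit top eigenvector of $M^{[N]}$, with eigenvalue $\lambda:=\lambda_{\max}(M^{[N]})$. Let $u$ be the argmax rounding object for $\psi$, with its phase fixed so that $c=\braket{u^{\ot N}}{\psi}>0$; note $c>0$ because tensor powers span $\Sym^N(\cH)$. The first step is the eigen-contraction identity
\[
 \lambda c=\braket{u^{\ot N}}{M^{[N]}\psi}.
\]
By permutation symmetry of both $\psi$ and $u^{\ot N}$, every pair term $M_{ij}$ contributes the same amount as $M_{12}$. The normalized average therefore collapses to $\braket{(M u^{\ot2})\ot u^{\ot(N-2)}}{\psi}$, using that $M$ is Hermitian. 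Contracting on the last $N-2$ sites rewrites this as $\braket{Mu^{\ot2}}{\eta_u}$.

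Next we decompose $Mu^{\ot2}$ along the orthogonal decomposition $\C u^{\ot2}\oplus(u\vee u^\perp)\oplus\Sym^2(u^\perp)$ and apply Lemma~\ref{lem:complex-sphere-argmax}.
\begin{itemize}
 \item The $u^{\ot2}$ component pairs with $c\,u^{\ot2}$ to give $c\,a_M(u)$.
 \item The mixed component is orthogonal to $\eta_u$.
 \item The $\Sym^2(u^\perp)$ component $B_u:=P_{\Sym^2(u^\perp)}Mu^{\ot2}$ pairs only with $\zeta_u$.
\end{itemize}
Hence
\[
 (\lambda-a_M(u))\,c=\braket{B_u}{\zeta_u}.
\]
We bound the right-hand side by the projective/injective duality \eqref{eq:tensor-norm-duality-rank}, with both norms taken on $u^\perp\ot u^\perp$. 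Fact~\ref{lem:takagi} applied on the space $u^\perp$ shows that $\norm{\zeta_u}_\varepsilon$ equals the supremum over squares $v^{\ot2}$ with $v\perp u$, which is at most $c/(N-1)$ by \eqref{eq:argmax-injective}. Thus $\abs{\braket{B_u}{\zeta_u}}\le \norm{B_u}_\pi\, c/(N-1)\le \tau(M)\,c/(N-1)$. Dividing by $c>0$ and using $a_M(u)\le h(M)$ gives \eqref{eq:tau-spectral}.

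For \eqref{eq:uniform-spectral}, we bound $\tau(M)$ using the rank part of \eqref{eq:tensor-norm-duality-rank}. The coefficient matrix of $B_u$ in a basis of $u^\perp$ is $(d-1)\times(d-1)$, so its Schmidt rank is at most $d-1$. Hence $\norm{B_u}_\pi\le\sqrt{d-1}\,\norm{B_u}\le\sqrt{d-1}\,\norm{Mu^{\ot2}}\le\sqrt{d-1}\,\norm M_\infty$.

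The only delicate point is the first step. We must check that symmetrization of $M^{[N]}$ against $u^{\ot N}$ is legitimate, which holds since $u^{\ot N}$ and $\psi$ are both symmetric. We must also make sure the injective norm in the duality is the one on $u^\perp\ot u^\perp$, so that the square-only bound of Lemma~\ref{lem:complex-sphere-argmax} applies through Fact~\ref{lem:takagi}; this holds because $\zeta_u\in\Sym^2(u^\perp)$.
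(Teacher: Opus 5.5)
Your proposal is correct and follows the paper's proof essentially step for step: the eigen-contraction identity $\lambda c=\braket{Mu^{\ot2}}{\eta_u}$, the three-way orthogonal decomposition of $Mu^{\ot2}$ that leaves only the pairing $\braket{B_u}{\zeta_u}$, projective/injective duality with the square-only bound from Lemma~\ref{lem:complex-sphere-argmax} via Fact~\ref{lem:takagi}, and finally $\norm{B_u}_\pi\le\sqrt{d-1}\,\norm{B_u}\le\sqrt{d-1}\,\norm{M}_\infty$. The only cosmetic difference is that you use the Schmidt-rank form of \eqref{eq:tensor-norm-duality-rank} directly, whereas the paper passes through the trace and Hilbert--Schmidt norms of the $(d-1)\times(d-1)$ Takagi matrix; these are the same inequality.
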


\begin{proof}
Let $\psi\in\Sym^N(\cH)$ be a unit top eigenvector of $M^{[N]}$, with
eigenvalue $\lambda$.  Apply the argmax principle to $\psi$, and let
$u^{\ot N}$ be the resulting rounding object, with $c$ and $\eta_u$ as
in Lemma~\ref{lem:complex-sphere-argmax}.  Since both $u^{\ot N}$ and
$\psi$ are symmetric,
\begin{align}
 \lambda c
 &=\braket{u^{\ot N}}{M^{[N]}\psi}
 =\braket{u^{\ot N}}{M_{12}\psi}
 =\braket{Mu^{\ot2}}{\eta_u}.
 \label{eq:eigen-contraction}
\end{align}
Decompose orthogonally
\begin{equation}
 Mu^{\ot2}
 =a_M(u)u^{\ot2}+\xi_u+B_u,
 \qquad
 \xi_u\in u\vee u^\perp,
 \quad
 B_u\in\Sym^2(u^\perp).
 \label{eq:Au-decomp}
\end{equation}
Combining \eqref{eq:eigen-contraction} and \eqref{eq:eta-decomp} with
the orthogonal decomposition \eqref{eq:Au-decomp} gives
\begin{align}
 (\lambda-a_M(u))c
 &=\braket{Mu^{\ot2}}{\eta_u}-a_M(u)c
 \notag\\
 &=
 \braket{a_M(u)u^{\ot2}+\xi_u+B_u}
         {c\,u^{\ot2}+\zeta_u}
 -a_M(u)c
 \notag\\
 &=\braket{B_u}{\zeta_u}.
 \label{eq:key-pairing}
\end{align}
Lemma~\ref{lem:complex-sphere-argmax} and \eqref{eq:takagi-op} imply
\[
 \norm{\zeta_u}_{\varepsilon}
 \le\frac{c}{N-1}.
\]
Taking absolute values in \eqref{eq:key-pairing}, using
\eqref{eq:tensor-norm-duality-rank}, and dividing by $c>0$ gives
\[
 \abs{\lambda-a_M(u)}
 \le\frac{\norm{B_u}_{\pi}}{N-1}
 \le\frac{\tau(M)}{N-1}.
\]
Because $a_M(u)\le h(M)$,
\[
 \lambda-h(M)
 \le \lambda-a_M(u)
 \le \abs{\lambda-a_M(u)},
\]
which proves \eqref{eq:tau-spectral}.

Choose an orthonormal basis of $u^\perp$, and let $\Mat_u(B_u)$ be the
corresponding complex symmetric matrix from \eqref{eq:matrix-tensor}.
Its Schatten norms are independent of this choice.  By
\eqref{eq:takagi-projective}, and because $\Mat_u(B_u)$ has size
$(d-1)\times(d-1)$,
\begin{align*}
 \norm{B_u}_{\pi}
 &=\norm{\Mat_u(B_u)}_1\\
 &\le\sqrt{d-1}\norm{\Mat_u(B_u)}_2\\
 &=\sqrt{d-1}\norm{B_u}\\
 &\le\sqrt{d-1}\norm{M}_\infty.
\end{align*}
Taking the supremum over $u$ gives \eqref{eq:uniform-spectral}.
\end{proof}

\begin{proof}[Proof of Theorem~\ref{thm:bos-main}]
Insert \eqref{eq:uniform-spectral} into \eqref{eq:dual-bos}.  This gives
\[
 \DHtr\Bigl(
  \BOS_N^{(2)}(d),\,
  \SEPH^{(2)}(d)
 \Bigr)
 \le\frac{\sqrt{d-1}}{2(N-1)}.\qedhere
\]
\end{proof}

\begin{remark}
\label{rem:tau}
The dimension factor enters only through the inequality
$\norm{\xi}_{\pi}\le\sqrt{d-1}\norm{\xi}$ for
$\xi\in\Sym^2(u^\perp)$.  Thus the stronger estimate
\eqref{eq:tau-spectral} is dimension independent for classes of
Hermitian operators
satisfying $\tau(M)=O(\norm{M}_\infty)$.
\end{remark}

\subsubsection{Exchangeable states via symmetric purification}
\label{sec:perm}

\begin{corollary}[Exchangeable two-site de Finetti upper bound]
\label{thm:perm-main}
For every $N,d\ge2$,
\begin{equation}
 \DHtr\Bigl(
  \EXCH_N^{(2)}(d),\,
  \SEPP^{(2)}(d)
 \Bigr)
 \le
 \frac{\sqrt{d^2-1}}{2(N-1)}.
 \label{eq:perm-upper-main}
\end{equation}
\end{corollary}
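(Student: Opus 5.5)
Fix an exchangeable $\rho_N\in\D((\C^d)^{\ot N})$.  We purify it symmetrically, apply Theorem~\ref{thm:bos-main} in dimension $d^2$, and then trace out the purifying registers.

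\emph{Step 1: symmetric purification.}  Let $\cH'=\C^d$ be a purifying copy and set $\widehat{\cH}:=\cH\ot\cH'\cong\C^{d^2}$.  Define the canonical purification $\ket{\Psi}:=(\rho_N^{1/2}\ot I)\ket{\Omega}$, where $\ket{\Omega}:=\sum_{i}\ket{i}\ket{i}$ is the unnormalized maximally entangled vector between $\cH^{\ot N}$ and $\cH'^{\ot N}$.  Regard $\Psi$ as a vector in $\widehat{\cH}^{\ot N}$ by pairing site $j$ with its copy $j'$.  A simultaneous permutation $U_\pi\ot U'_\pi$ fixes $\ket\Omega$, because $U_\pi\ot\overline{U_\pi}$ fixes $\Omega$ and $\overline{U_\pi}=U_\pi$ for permutation matrices.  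It also commutes with $\rho_N^{1/2}\ot I$, since $\rho_N^{1/2}$ commutes with $U_\pi$ by exchangeability.  Hence $\Psi$ is invariant under the diagonal action of $\mathfrak S_N$ on $\widehat{\cH}^{\ot N}$, so $\Psi\in\Sym^N(\widehat{\cH})$, and $\Tr_{\cH'^{\ot N}}\ketbra{\Psi}=\rho_N$.  This is the only step needing care: the permutation of $\widehat{\cH}$-sites must act on both registers simultaneously, and the invariance of $\Omega$ must be checked.

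\emph{Step 2: apply the bosonic theorem.}  Theorem~\ref{thm:bos-main} with local dimension $d^2\ge2$ gives a probability measure $\nu$ on unit vectors $w\in\widehat{\cH}$ with
\[
 \Dtr\Bigl(\ketbra\Psi^{(2)},\int\ketbra{w}^{\ot2}\,d\nu(w)\Bigr)
 \le\frac{\sqrt{d^2-1}}{2(N-1)}.
\]

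\emph{Step 3: trace out the purifying registers.}  Apply the channel $\Tr_{\cH'_1\cH'_2}$.  It maps the two-site marginal of $\ketbra\Psi$ to $\rho_N^{(2)}$, because partial traces over disjoint registers commute.  It maps each $\ketbra{w}^{\ot2}$ to $\sigma_w^{\ot2}$, where $\sigma_w:=\Tr_{\cH'}\ketbra{w}\in\D(\C^d)$, because the partial trace factorizes over the two sites.  The image of the mixture is therefore $\int\sigma_w^{\ot2}\,d\nu(w)$.  This lies in $\SEPP^{(2)}(d)$: by Carathéodory a finite-dimensional integral of a compact family can be taken as a finite convex combination, or one can simply use that the set is closed and convex.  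Trace distance is contractive under channels, so
\[
 \Dtr\bigl(\rho_N^{(2)},\SEPP^{(2)}(d)\bigr)\le\frac{\sqrt{d^2-1}}{2(N-1)}.
\]
Taking the supremum over $\rho_N$ gives \eqref{eq:perm-upper-main}.
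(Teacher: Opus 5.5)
Your proposal is correct and follows the paper's proof: symmetric purification into $\Sym^N(\C^d\ot\C^d)$, Theorem~\ref{thm:bos-main} in local dimension $d^2$, then contractivity of the sitewise partial trace, which maps Hartree mixtures to $\SEPP^{(2)}(d)$. The one difference is that you prove the symmetric-purification step via the canonical purification, whereas the paper cites it from \cite{CKMR07} and writes out that argument only in the two-block setting of Corollary~\ref{cor:two-sided-extension}.
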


The corollary follows from Theorem~\ref{thm:bos-main} through the
standard symmetric-purification lemma of Christandl, K\"onig,
Mitchison, and Renner~\cite[Lemma~II.5]{CKMR07}.

\begin{lemma}[Symmetric purification]
\label{lem:symmetric-purification}
Let $\sigma_N$ be exchangeable on $\cH^{\ot N}$, where $\dim\cH=d$.  There is a Hilbert space $\cH_E\cong\cH$ and a unit vector
\[
 \psi_N\in\Sym^N(\cH_E\ot\cH)
\]
such that
\[
 \Tr_{\cH_E^{\ot N}}\ketbra{\psi_N}=\sigma_N.
\]
\end{lemma}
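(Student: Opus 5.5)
We purify $\sigma_N$ using the commutant structure of the symmetric group action.

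\textbf{Step 1: the canonical purification.} Let $\cH_E\cong\cH$ with a fixed antiunitary identification, and form
\[
 \ket{\Omega}:=(I_{E^{\ot N}}\ot\sigma_N^{1/2})\ket{\Phi}^{\ot N},
 \qquad
 \ket{\Phi}:=\sum_{i=1}^d e_i\ot e_i\in\cH_E\ot\cH .
\]
Here the $E$ registers and $\cH$ registers are interleaved so that the $j$th site of $(\cH_E\ot\cH)^{\ot N}$ is $E_j\ot\cH_j$. Since $\Tr_{E}\ketbra{\Phi}=I$, tracing out $E^{\ot N}$ gives $\sigma_N^{1/2}I\sigma_N^{1/2}=\sigma_N$. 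Normalization follows because $\Tr\sigma_N=1$.

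\textbf{Step 2: symmetry.} Let $\pi\in\mathfrak S_N$ act on the combined sites by $U_\pi^{E}\ot U_\pi^{\cH}$. The vector $\ket{\Phi}^{\ot N}$ is invariant under this joint permutation, since it is a tensor power of a single vector on each combined site.

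Exchangeability gives $U_\pi\sigma_NU_\pi^\dagger=\sigma_N$. By functional calculus, $U_\pi$ therefore also commutes with $\sigma_N^{1/2}$. Hence
\[
 (U_\pi^E\ot U_\pi^{\cH})\ket\Omega
 =(I\ot U_\pi^{\cH}\sigma_N^{1/2}U_\pi^{\cH\dagger})(U_\pi^E\ot U_\pi^{\cH})\ket{\Phi}^{\ot N}
 =\ket\Omega .
\]
So $\psi_N:=\Omega$ is fixed by every permutation of the $N$ sites of $\cH_E\ot\cH$, i.e.\ $\psi_N\in\Sym^N(\cH_E\ot\cH)$.

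\textbf{Main point to check.} The only delicate step is the bookkeeping of the register ordering: the permutation acting on $(\cH_E\ot\cH)^{\ot N}$ must be identified with $U_\pi^E\ot U_\pi^{\cH}$ after regrouping. This is a fixed unitary reshuffle and commutes with the construction, so the argument is complete.
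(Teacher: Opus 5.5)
Your proof is correct and is essentially the paper's argument. The paper cites this lemma to Christandl--K\"onig--Mitchison--Renner (Lemma~II.5) rather than proving it, but it spells out the same canonical-purification computation in the proof of Corollary~\ref{cor:two-sided-extension}: the permutation commutes with $\sqrt{\widetilde\rho}$, and the unnormalized maximally entangled vector is fixed by joint permutations of the copied and original registers. The ``antiunitary identification'' you mention is unnecessary; a copied orthonormal basis suffices to define $\ket{\Phi}$.
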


\begin{proof}[Proof of Corollary~\ref{thm:perm-main}]
Apply Lemma~\ref{lem:symmetric-purification} and set
$\widetilde\rho_N=\ketbra{\psi_N}$.  This is a bosonic state with
one-site dimension $d^2$.  Theorem~\ref{thm:bos-main} gives a
probability measure $\mu$ on unit vectors $\phi\in\cH_E\ot\cH$ such
that
\[
 \Dtr\left(
 \widetilde\rho_N^{(2)},
 \int\ketbra{\phi}^{\ot2}\,d\mu(\phi)
 \right)
 \le\frac{\sqrt{d^2-1}}{2(N-1)}.
\]
For each $\phi$, define
$\tau_\phi=\Tr_{\cH_E}\ketbra\phi\in\D(\cH)$.  Sitewise partial trace
maps $\ketbra\phi^{\ot2}$ to $\tau_\phi^{\ot2}$, so contractivity gives
\[
 \Dtr\left(
 \sigma_N^{(2)},
 \int\tau_\phi^{\ot2}\,d\mu(\phi)
 \right)
 \le\frac{\sqrt{d^2-1}}{2(N-1)}.
\]
The approximating state belongs to $\SEPP^{(2)}(d)$, which proves
\eqref{eq:perm-upper-main}.
\end{proof}

\subsection{Two-sided extendibility}
\label{sec:two-sided-bose-trace}

\begin{theorem}[De Finetti upper bound for two-sided Bose-symmetric
extendible states]
\label{thm:two-sided-bose-extension}
For every $n,m,d_A,d_B\ge1$,
\begin{equation}
 \DHtr\Bigl(
  \BEXT_{n,m}(d_A,d_B),\,
  \SEP(d_A,d_B)
 \Bigr)
 \le
 \frac12
 \sqrt{\frac{\min\{d_A,d_B\}-1}{nm}}.
 \label{eq:two-sided-bext-distance}
\end{equation}
\end{theorem}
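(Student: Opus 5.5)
We follow the three-step template of the bosonic case.  First, by Lemma~\ref{lem:compact-convex-distance-duality} with $\mathcal K=\BEXT_{n,m}(d_A,d_B)$ and $\mathcal L=\SEP(d_A,d_B)$, the left side of \eqref{eq:two-sided-bext-distance} equals $\frac12\sup_{\norm{M}_\infty\le1}\{h_{\BEXT_{n,m}}(M)-h_{\SEP}(M)\}$.  Here $h_{\SEP}(M)=\max_{\norm x=\norm y=1}\bra{x\ot y}M\ket{x\ot y}$.  For $h_{\BEXT_{n,m}}(M)$ we form the rectangular lift
\[
 M^{[n,m]}:=\frac1{nm}\sum_{i=1}^n\sum_{j=1}^m M_{A_iB_j}
 \quad\text{on }\Sym^n(\cH_A)\ot\Sym^m(\cH_B).
\]
Since every $A_iB_j$ marginal of a state on this space coincides with its $A_1B_1$ marginal, block symmetry gives $h_{\BEXT_{n,m}}(M)=\lambda_{\max}(M^{[n,m]})$.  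The theorem therefore reduces to the integrality-gap bound
\[
 \lambda_{\max}(M^{[n,m]})-h_{\SEP}(M)\le\sqrt{\frac{\min\{d_A,d_B\}-1}{nm}}\,\norm M_\infty.
\]
The case $\min\{d_A,d_B\}=1$ is trivial, since every state is then separable.  By symmetry of the roles of $A$ and $B$, we may assume $d_A\le d_B$.

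Second, we run argmax rounding.  Let $\psi$ be a unit top eigenvector of $M^{[n,m]}$, with eigenvalue $\lambda$.  Choose $(x,y)$ maximizing $\abs{\braket{x^{\ot n}\ot y^{\ot m}}{\psi}}$, and fix phases so that $c:=\braket{x^{\ot n}\ot y^{\ot m}}{\psi}>0$.  Set
\[
 \eta:=(\bra x^{\ot(n-1)}\ot\bra y^{\ot(m-1)}\ot I)\psi\in\cH_A\ot\cH_B .
\]
Since $M$ can be evaluated on the pair $A_1B_1$, we get $\lambda c=\braket{M(x\ot y)}{\eta}$.  We then perturb $x\mapsto (x+sv)/\sqrt{1+\abs s^2}$ and $y\mapsto (y+tw)/\sqrt{1+\abs t^2}$, with $v\perp x$ and $w\perp y$ unit vectors, and expand as in Lemma~\ref{lem:complex-sphere-argmax}.

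The first-order conditions force $\braket{v\ot x^{\ot(n-1)}\ot y^{\ot m}}{\psi}=0$ and the analogous $B$-side quantity to vanish.  Consequently $\eta=c\,x\ot y+\zeta$ with $\zeta\in x^\perp\ot y^\perp$, since the mixed components $v\ot y$ and $x\ot w$ are killed.  The new feature is the second-order condition.  The quadratic form in $(s,t)$ contains the diagonal terms $-nc^2\abs s^2-mc^2\abs t^2$, the pure terms $\binom n2$ and $\binom m2$ (involving $vv$ and $ww$ overlaps), and the cross term $2nm\,c\,\Re(\overline s\,\overline t\,\gamma)$, where $\gamma=\braket{v\ot x^{\ot(n-1)}\ot w\ot y^{\ot(m-1)}}{\psi}=\braket{v\ot w}{\zeta}$.

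The main obstacle is to isolate the cross term against the unwanted $\binom n2,\binom m2$ terms.  I would handle this by averaging over the phase rotation $s\mapsto e^{i\phi}s$, $t\mapsto e^{-i\phi}t$.  This rotation preserves $\overline s\,\overline t$ but averages out $\overline s^2$ and $\overline t^2$.  Since $t=0$ is a maximum along every curve, the phase-averaged second-order coefficient is also nonpositive, which gives
\[
 2nm\,c\,\abs s\abs t\,\abs\gamma\le c^2\bigl(n\abs s^2+m\abs t^2\bigr).
\]
Optimizing the ratio $\abs s/\abs t$, i.e.\ AM--GM, yields $\abs\gamma\le c/\sqrt{nm}$.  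Hence $\norm\zeta_\varepsilon\le c/\sqrt{nm}$ in the sense of \eqref{eq:injective-tensor-norm} on $x^\perp\ot y^\perp$.

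Third, decompose $M(x\ot y)$ orthogonally into its components along $x\ot y$, $x^\perp\ot y$, $x\ot y^\perp$, and $B\in x^\perp\ot y^\perp$.  Pairing this decomposition with $\eta$ gives $(\lambda-\bra{x\ot y}M\ket{x\ot y})c=\braket{B}{\zeta}$.  By \eqref{eq:tensor-norm-duality-rank},
\[
 \abs{\braket B\zeta}\le\norm B_\pi\norm\zeta_\varepsilon\le\sqrt{\min\{d_A,d_B\}-1}\,\norm M_\infty\,\frac{c}{\sqrt{nm}},
\]
since the Schmidt rank of $B$ is at most $\min\{d_A-1,d_B-1\}$ and $\norm B\le\norm M_\infty$.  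Dividing by $c$ and using $\bra{x\ot y}M\ket{x\ot y}\le h_{\SEP}(M)$ gives the integrality-gap bound, and inserting it into the duality formula yields \eqref{eq:two-sided-bext-distance}.

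Boundary cases need a check.  When $n=1$ or $m=1$, the $\binom n2$ or $\binom m2$ terms vanish, and the argument goes through unchanged.
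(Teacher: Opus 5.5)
Your proposal is correct and follows the paper's proof essentially step for step. Duality reduces the statement to the rectangular integrality gap, and the product-sphere argmax estimate finishes it: first-order conditions kill the mixed components, averaging over opposite phases on the two blocks removes the $\binom n2$ and $\binom m2$ terms, the ratio $\abs s:\abs t=\sqrt m:\sqrt n$ is optimal, and projective--injective duality with the Schmidt-rank bound completes the argument. The only step you leave implicit is that $c>0$ can be arranged, which holds because the vectors $x^{\ot n}\ot y^{\ot m}$ span $\Sym^n(\cH_A)\ot\Sym^m(\cH_B)$.
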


The proof follows the same pattern as the bosonic argument above:
rectangular argmax rounding bounds the integrality gap, and the same
estimate will also be used for higher marginals in
Section~\ref{sec:higher-marginals}.  Trace-distance duality then gives
the theorem, while a two-block symmetric purification removes the
Bose-symmetry assumption.

\subsubsection{Rectangular argmax rounding}
\label{sec:rectangular-argmax-main}

Let $\cH_A$ and $\cH_B$ be finite-dimensional Hilbert spaces, and write
$d_A:=\dim\cH_A$ and $d_B:=\dim\cH_B$.  For
$M\in\Herm(\cH_A\ot\cH_B)$, define
\begin{align}
 a_M(x,y)
 &:=\bra{x\ot y}M\ket{x\ot y},\\
 h_{\SEP}(M)
 &:=
 \max_{\substack{\norm{x}=1\\\norm{y}=1}}
 a_M(x,y).
 \label{eq:rectangular-hsep}
\end{align}
Since $\SEP(d_A,d_B)$ is the convex hull of pure product states,
\begin{equation}
 h_{\SEP}(M)
 =
 \max_{\sigma\in\SEP(d_A,d_B)}\Tr(M\sigma).
 \label{eq:rectangular-hsep-support}
\end{equation}
For $n,m\ge1$, its normalized rectangular lift is
\begin{equation}
 M^{[n,m]}
 :=
 \frac1{nm}
 \sum_{i=1}^n\sum_{j=1}^m M_{A_iB_j}
 \quad\text{on }\Sym^n(\cH_A)\ot\Sym^m(\cH_B).
 \label{eq:rectangular-lift}
\end{equation}
Here $M_{A_iB_j}$ acts as $M$ on $A_iB_j$ and as the identity on all
other registers.
If
$\omega_{n,m}\in
\D\bigl(\Sym^n(\cH_A)\ot\Sym^m(\cH_B)\bigr)$,
permutation symmetry within the $A$- and $B$-blocks gives
\begin{equation}
 \Tr\bigl(M^{[n,m]}\omega_{n,m}\bigr)
 =
 \Tr\bigl(M(\omega_{n,m})_{A_1B_1}\bigr).
 \label{eq:rectangular-lift-expectation}
\end{equation}

Given a unit top eigenvector of $M^{[n,m]}$, rectangular argmax rounding
selects a product vector from the corresponding overlap maximization.
The following estimate is the direct rectangular analogue of
Proposition~\ref{prop:tau-bound}.  Its proof is deferred to
Appendix~\ref{sec:rectangular-argmax-appendix}.

\begin{proposition}[Rectangular argmax estimate]
\label{prop:rectangular-argmax}
Let $M\in\Herm(\cH_A\ot\cH_B)$ and $n,m\ge1$, and let $\psi$ be a unit
top eigenvector of $M^{[n,m]}$.  Choose the argmax rounding object
$x\ot y$ by
\begin{equation}
 (x,y)
 \in
 \operatorname*{arg\,max}_{\substack{v\in\cH_A,\ w\in\cH_B\\
                                      \norm{v}=\norm{w}=1}}
 \abs{\braket{v^{\ot n}\ot w^{\ot m}}{\psi}}.
 \label{eq:rectangular-argmax-object}
\end{equation}
Then
\begin{equation}
 \lambda_{\max}\bigl(M^{[n,m]}\bigr)
 -a_M(x,y)
 \le
 \frac1{\sqrt{nm}}
 \norm{
  (P_{x^\perp}\ot P_{y^\perp})M(x\ot y)
 }_{\pi}.
 \label{eq:rectangular-argmax}
\end{equation}
\end{proposition}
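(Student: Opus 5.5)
The proof mirrors Lemma~\ref{lem:complex-sphere-argmax} and Proposition~\ref{prop:tau-bound}, with two independent perturbation directions in place of one. Fix the phases so that $c:=\braket{x^{\ot n}\ot y^{\ot m}}{\psi}>0$. Define the partial inner product on all but $A_1B_1$,
\[
 \eta:=\bigl(\bra{x}^{\ot(n-1)}\ot\bra{y}^{\ot(m-1)}\ot I_{A_1B_1}\bigr)\psi\in\cH_A\ot\cH_B .
\]
By block symmetry of $\psi$ and $x^{\ot n}\ot y^{\ot m}$, the eigenvalue equation gives
\[
 \lambda c=\braket{x^{\ot n}\ot y^{\ot m}}{M^{[n,m]}\psi}=\braket{M(x\ot y)}{\eta},
\]
exactly as in \eqref{eq:eigen-contraction}. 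Decompose $\cH_A\ot\cH_B$ orthogonally into $\C\,x\ot y$, $x\ot y^\perp$, $x^\perp\ot y$, and $x^\perp\ot y^\perp$.

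\textbf{Variational step.} For unit $v\perp x$, $w\perp y$ and $s,t\in\C$, set
\[
 g(s,t):=\braket{x_s^{\ot n}\ot y_t^{\ot m}}{\psi},
 \qquad x_s:=\frac{x+sv}{\sqrt{1+\abs s^2}},
 \quad y_t:=\frac{y+tw}{\sqrt{1+\abs t^2}}.
\]
Write $\alpha_A:=\braket{v\ot x^{\ot(n-1)}\ot y^{\ot m}}{\psi}$, $\alpha_B$ analogously, $\beta:=\braket{v\ot x^{\ot(n-1)}\ot w\ot y^{\ot(m-1)}}{\psi}$, and $\gamma_A,\gamma_B$ for the pure second-order terms. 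The expansion is
\[
 g=(\text{norm})\Bigl(c+n\bar s\alpha_A+m\bar t\alpha_B+nm\,\bar s\bar t\beta+\tbinom n2\bar s^2\gamma_A+\tbinom m2\bar t^2\gamma_B+O(3)\Bigr).
\]
First-order optimality gives $\alpha_A=\alpha_B=0$. This kills the components of $\eta$ in $x^\perp\ot y$ and $x\ot y^\perp$ and leaves $\eta=c\,x\ot y+\zeta$ with $\zeta\in x^\perp\ot y^\perp$ and $\braket{v\ot w}{\zeta}=\beta$.

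The second-order condition is that the Hessian of $\abs{g}^2$ in real coordinates is negative semidefinite. The quadratic form is
\[
 -c^2\bigl(n\abs s^2+m\abs t^2\bigr)
 +2c\,\Re\Bigl(nm\,\bar s\bar t\beta+\tbinom n2\bar s^2\gamma_A+\tbinom m2\bar t^2\gamma_B\Bigr).
\]
To eliminate the pure terms, average over the phase rotation $(s,t)\mapsto(se^{i\phi},te^{-i\phi})$. This fixes $\bar s\bar t$ and averages $\bar s^2$, $\bar t^2$ to zero. The averaged form is still nonpositive, because it is an average of nonpositive values. Choosing the phases so that $\bar s\bar t\beta=\abs{s}\abs{t}\abs\beta$ gives
\[
 2nm\,c\abs\beta\,\abs s\abs t\le c^2\bigl(n\abs s^2+m\abs t^2\bigr).
\]
Optimizing with $\abs s/\abs t=\sqrt{m/n}$ makes the right side $2c^2\sqrt{nm}\abs s\abs t$, so $\abs\beta\le c/\sqrt{nm}$. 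Since $v,w$ are arbitrary, this is $\norm{\zeta}_\varepsilon\le c/\sqrt{nm}$ across $x^\perp\ot y^\perp$.

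\textbf{Pairing.} Decompose $M(x\ot y)=a_M(x,y)\,x\ot y+\xi+B$ with $\xi$ in the two mixed pieces and $B=(P_{x^\perp}\ot P_{y^\perp})M(x\ot y)$. Orthogonality gives $(\lambda-a_M(x,y))c=\braket{B}{\zeta}$, and then \eqref{eq:tensor-norm-duality-rank} bounds this by $\norm B_\pi\norm\zeta_\varepsilon\le c\norm B_\pi/\sqrt{nm}$. Dividing by $c$ gives \eqref{eq:rectangular-argmax}.

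\textbf{Obstacle.} The main obstacle is the second-order step. In the square case the pure term alone was relevant. Here, the pure $\bar s^2$ and $\bar t^2$ terms must be decoupled from the mixed term, and the weights $n$ and $m$ must be balanced to obtain $\sqrt{nm}$ rather than $\min\{n,m\}$. The phase-averaging argument above is what I would try first. The remaining care points are that $n=1$ or $m=1$ removes some terms, which only helps, and that $c>0$ because product powers span the space.
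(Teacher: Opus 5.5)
Your proposal is correct and follows the paper's proof essentially step for step. The paper also uses first-order variation to kill the mixed components, and it averages over the counter-rotating phase $(z,w)\mapsto(ze^{i\phi},we^{-i\phi})$ to eliminate the pure second-order terms. It then takes $\abs z/\abs w=\sqrt{m/n}$ to obtain $\abs\beta\le c/\sqrt{nm}$, and concludes from the pairing $(\lambda-a_M(x,y))c=\braket{B}{\zeta}$ together with projective--injective duality.
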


As in the symmetric case, the pointwise estimate immediately gives an
integrality-gap bound.  Define
\begin{equation}
 \tau(M)
 :=
 \sup_{\substack{\norm{x}=1\\\norm{y}=1}}
 \norm{
  (P_{x^\perp}\ot P_{y^\perp})M(x\ot y)
 }_\pi.
 \label{eq:rectangular-tau-def}
\end{equation}

\begin{corollary}[Integrality gap from rectangular argmax rounding]
\label{cor:rectangular-integrality-gap}
For every $M\in\Herm(\cH_A\ot\cH_B)$ and $n,m\ge1$,
\begin{equation}
 \lambda_{\max}\bigl(M^{[n,m]}\bigr)-h_{\SEP}(M)
 \le
 \frac{\tau(M)}{\sqrt{nm}},
 \label{eq:rectangular-tau-gap}
\end{equation}
and hence
\begin{equation}
 \lambda_{\max}\bigl(M^{[n,m]}\bigr)-h_{\SEP}(M)
 \le
 \sqrt{\frac{\min\{\dim\cH_A,\dim\cH_B\}-1}{nm}}
 \norm{M}_\infty.
 \label{eq:rectangular-uniform}
\end{equation}
\end{corollary}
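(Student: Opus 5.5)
The plan is to derive the corollary directly from Proposition~\ref{prop:rectangular-argmax}, in the same way Proposition~\ref{prop:tau-bound} was closed out in the bosonic case. Let $\psi$ be a unit top eigenvector of $M^{[n,m]}$ and choose the argmax rounding object $x\ot y$ as in \eqref{eq:rectangular-argmax-object}. Such a maximizer exists because the overlap is continuous on a product of compact spheres. Proposition~\ref{prop:rectangular-argmax} then gives
\[
 \lambda_{\max}\bigl(M^{[n,m]}\bigr)-a_M(x,y)
 \le
 \frac{1}{\sqrt{nm}}
 \norm{(P_{x^\perp}\ot P_{y^\perp})M(x\ot y)}_\pi .
\]
Since $a_M(x,y)\le h_{\SEP}(M)$ by \eqref{eq:rectangular-hsep}, the left-hand side is at least $\lambda_{\max}(M^{[n,m]})-h_{\SEP}(M)$. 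The right-hand side is at most $\tau(M)/\sqrt{nm}$ by the definition \eqref{eq:rectangular-tau-def}. Together these give \eqref{eq:rectangular-tau-gap}.

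For \eqref{eq:rectangular-uniform}, it suffices to bound $\tau(M)$ uniformly. Fix unit vectors $x$ and $y$, and set $B:=(P_{x^\perp}\ot P_{y^\perp})M(x\ot y)$, which lies in $x^\perp\ot y^\perp$. Its Schmidt rank across the cut $x^\perp:y^\perp$ is at most $\min\{d_A-1,d_B-1\}$. By \eqref{eq:tensor-norm-duality-rank},
\[
 \norm{B}_\pi\le\sqrt{\min\{d_A,d_B\}-1}\,\norm{B}.
\]
The projection is a contraction and $x\ot y$ is a unit vector, so $\norm B\le\norm{M(x\ot y)}\le\norm M_\infty$. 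Taking the supremum over $x$ and $y$ yields $\tau(M)\le\sqrt{\min\{d_A,d_B\}-1}\,\norm M_\infty$, and substituting into \eqref{eq:rectangular-tau-gap} proves \eqref{eq:rectangular-uniform}.

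There is one degenerate case. If $\min\{d_A,d_B\}=1$, then one of $x^\perp$ and $y^\perp$ is zero, so $B=0$. The bound then reads $\lambda_{\max}(M^{[n,m]})\le h_{\SEP}(M)$, which is consistent because every state is separable when one factor is one-dimensional; no separate argument is needed.

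All the substantive work sits inside Proposition~\ref{prop:rectangular-argmax}, which is assumed here. Its proof would mirror Lemma~\ref{lem:complex-sphere-argmax}, using perturbations of $x$ and $y$ separately and jointly in the first- and second-order conditions. The corollary itself has no real obstacle. The only point to check is that the rank bound uses the Schmidt rank across $x^\perp\ot y^\perp$, not the full dimensions $d_A$ and $d_B$.
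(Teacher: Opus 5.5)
Your proposal is correct and follows the paper's proof essentially verbatim. You apply Proposition~\ref{prop:rectangular-argmax}, use $a_M(x,y)\le h_{\SEP}(M)$ and the definition of $\tau(M)$, and then bound the projected vector's projective norm by the Schmidt-rank estimate \eqref{eq:tensor-norm-duality-rank} together with $\norm{B}\le\norm{M}_\infty$; your remark on the degenerate case $\min\{d_A,d_B\}=1$ is a harmless addition, which the paper covers inside Lemma~\ref{lem:product-sphere-argmax}.
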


\begin{proof}
Apply Proposition~\ref{prop:rectangular-argmax} and use the definitions
of $h_{\SEP}(M)$ and $\tau(M)$.  For the second bound, the projected
vector
$(P_{x^\perp}\ot P_{y^\perp})M(x\ot y)$ has Schmidt rank at most
$\min\{\dim\cH_A,\dim\cH_B\}-1$.  Therefore,
\eqref{eq:tensor-norm-duality-rank} gives
\begin{align*}
 \norm{(P_{x^\perp}\ot P_{y^\perp})M(x\ot y)}_\pi
 &\le
 \sqrt{\min\{\dim\cH_A,\dim\cH_B\}-1}
 \norm{(P_{x^\perp}\ot P_{y^\perp})M(x\ot y)}\\
 &\le
 \sqrt{\min\{\dim\cH_A,\dim\cH_B\}-1}
 \norm{M}_\infty.
\end{align*}
\end{proof}

\subsubsection{Two-sided de Finetti bounds}

The rectangular gap immediately gives a de Finetti bound for arbitrary
extension levels on the two sides.

\begin{proof}[Proof of Theorem~\ref{thm:two-sided-bose-extension}]
Fix $\rho\in\BEXT_{n,m}(d_A,d_B)$ and choose an extension
$\widetilde\rho$ from \eqref{eq:two-sided-bext-definition}.
By duality (Lemma~\ref{lem:compact-convex-distance-duality}),
\begin{equation}
 \Dtr\bigl(\rho,\SEP(d_A,d_B)\bigr)
 =
 \frac12
 \max_{\substack{M\in\Herm(\cH_A\ot\cH_B)\\
                   \norm{M}_\infty\le1}}
 \left\{\Tr(M\rho)-h_{\SEP}(M)\right\}.
 \label{eq:two-sided-sep-dual}
\end{equation}
For every Hermitian $M$, the rectangular-lift identity
\eqref{eq:rectangular-lift-expectation} and
Corollary~\ref{cor:rectangular-integrality-gap} give
\begin{align}
 \Tr(M\rho)-h_{\SEP}(M)
 &=
 \Tr\bigl(M^{[n,m]}\widetilde\rho\bigr)-h_{\SEP}(M)
 \notag\\
 &\le
 \lambda_{\max}\bigl(M^{[n,m]}\bigr)-h_{\SEP}(M)
 \notag\\
 &\le
 \sqrt{\frac{\min\{d_A,d_B\}-1}{nm}}
 \norm{M}_\infty.
 \label{eq:two-sided-bext-rectangular-gap}
\end{align}
Inserting this estimate into \eqref{eq:two-sided-sep-dual} and taking
the supremum over $\rho$ proves \eqref{eq:two-sided-bext-distance}.
\end{proof}

We now remove the Bose-symmetry assumption by symmetric purification.

\begin{corollary}[De Finetti upper bound for two-sided extendible states]
\label{cor:two-sided-extension}
For every $n,m,d_A,d_B\ge1$,
\begin{equation}
 \DHtr\Bigl(
  \EXT_{n,m}(d_A,d_B),\,
  \SEP(d_A,d_B)
 \Bigr)
 \le
 \frac12
 \sqrt{\frac{\min\{d_A^2,d_B^2\}-1}{nm}}.
 \label{eq:two-sided-ext-distance}
\end{equation}
\end{corollary}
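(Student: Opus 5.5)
We reduce Corollary~\ref{cor:two-sided-extension} to Theorem~\ref{thm:two-sided-bose-extension} by a two-block version of the symmetric purification in Lemma~\ref{lem:symmetric-purification}, mirroring the proof of Corollary~\ref{thm:perm-main}.

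Fix $\rho\in\EXT_{n,m}(d_A,d_B)$ and an extension $\widetilde\rho$ on $\cH_A^{\ot n}\ot\cH_B^{\ot m}$ invariant under $\mathfrak S_n\times\mathfrak S_m$ acting within the blocks. The purification step is to construct a unit vector
\[
 \psi\in\Sym^n(\cH_{E_A}\ot\cH_A)\ot\Sym^m(\cH_{E_B}\ot\cH_B),
 \qquad \cH_{E_A}\cong\cH_A,\ \cH_{E_B}\cong\cH_B,
\]
whose reduction after tracing out all purifying registers is $\widetilde\rho$. Take the canonical purification
\[
 \psi=(\widetilde\rho^{1/2}\ot I)\ket{\Omega},
\]
where $\ket{\Omega}$ is the unnormalized maximally entangled vector between $\cH_A^{\ot n}\ot\cH_B^{\ot m}$ and a copy $\cH_{E_A}^{\ot n}\ot\cH_{E_B}^{\ot m}$. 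This vector has the required reduction by construction. It remains to check the symmetry.
\begin{itemize}
\item $\ket\Omega$ factors site by site, so it is invariant under simultaneously permuting the paired sites $(E_{A_i}A_i)$ within the $A$-block, and likewise the $(E_{B_j}B_j)$ within the $B$-block.
\item Because $\widetilde\rho$ commutes with every block permutation $U_\pi\ot U_{\pi'}$, so does $\widetilde\rho^{1/2}$.
\item Together these give $(U_\pi\ot U_\pi)\psi=\psi$ on the paired $A$-sites, and the same on the paired $B$-sites.
\end{itemize}
This is exactly the argument of \cite[Lemma~II.5]{CKMR07} run separately in each block. Verifying this two-block invariance is the only step needing care, and I expect it to be straightforward.

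Next, regard $\ketbra\psi$ as a state on $\Sym^n(\C^{d_A^2})\ot\Sym^m(\C^{d_B^2})$. Its $A_1B_1$ marginal, now including the paired purifying registers $E_{A_1}$ and $E_{B_1}$, is a state
\[
 \widehat\rho\in\BEXT_{n,m}(d_A^2,d_B^2).
\]
Tracing out $E_{A_1}E_{B_1}$ from $\widehat\rho$ returns $\rho$.

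Theorem~\ref{thm:two-sided-bose-extension} then provides a separable state $\sigma=\sum_k p_k\ketbra{\phi_k}\ot\ketbra{\chi_k}$ on $(\cH_{E_A}\ot\cH_A)\ot(\cH_{E_B}\ot\cH_B)$ with
\[
 \Dtr(\widehat\rho,\sigma)\le\frac12\sqrt{\frac{\min\{d_A^2,d_B^2\}-1}{nm}}.
\]
Apply the channel $\Tr_{E_A}\ot\Tr_{E_B}$. It is local across the $A{:}B$ cut, so it maps each product term $\ketbra{\phi_k}\ot\ketbra{\chi_k}$ to $\Tr_{E_A}\ketbra{\phi_k}\ot\Tr_{E_B}\ketbra{\chi_k}$. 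Each such term is a product of mixed states, hence separable, so the image of $\sigma$ lies in $\SEP(d_A,d_B)$.

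Trace distance contracts under channels, so $\Dtr(\rho,\SEP(d_A,d_B))$ obeys the same bound. Taking the supremum over $\rho\in\EXT_{n,m}(d_A,d_B)$ gives \eqref{eq:two-sided-ext-distance}.
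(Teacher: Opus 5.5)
Your proposal is correct and follows the paper's proof essentially step for step. Both arguments take the canonical purification $(\sqrt{\widetilde\rho}\otimes I)\ket\Omega$ of the two-block extension and obtain invariance under simultaneous permutations of the paired sites within each block from two facts: $\sqrt{\widetilde\rho}$ commutes with the block permutations, and $\ket\Omega$ is invariant under them. Both then apply Theorem~\ref{thm:two-sided-bose-extension} in local dimensions $d_A^2,d_B^2$ and finish by tracing out the purifying registers, a local channel that preserves separability and contracts trace distance.
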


\begin{proof}[Proof of Corollary~\ref{cor:two-sided-extension}]
Let $\widetilde\rho$ be an $(n,m)$-extension of
$\rho\in\EXT_{n,m}(d_A,d_B)$.  Choose copies
$\cH_{E_A}\cong\cH_A$ and $\cH_{E_B}\cong\cH_B$, with copied
orthonormal bases, and let $\ket\Omega$ be the unnormalized maximally
entangled vector between
$\cH_{E_A}^{\ot n}\ot\cH_{E_B}^{\ot m}$ and
$\cH_A^{\ot n}\ot\cH_B^{\ot m}$.  The canonical purification
\[
 \ket\Psi
 :=
 \bigl(I\ot\sqrt{\widetilde\rho}\bigr)\ket\Omega
\]
satisfies
$\Tr_{E_A^{\ot n}E_B^{\ot m}}\ketbra\Psi=\widetilde\rho$.

Regroup the registers into the composite sites
\[
 \cX:=\cH_{E_A}\ot\cH_A,
 \qquad
 \cY:=\cH_{E_B}\ot\cH_B,
\]
where the composite registers $X_i=E_{A,i}A_i$ and
$Y_j=E_{B,j}B_j$ carry $\cX$ and $\cY$, respectively.
Fix $\pi\in\mathfrak S_n$ and regard the $B$- and $E_B$-registers as
spectators.  Invariance of $\widetilde\rho$ under the $A$-permutation
$U_\pi^A$ implies that $U_\pi^A$ commutes with
$\sqrt{\widetilde\rho}$.  The same permutation on the copied
$E_A$-registers and on the original $A$-registers fixes $\ket\Omega$,
so the canonical-purification identity gives
\[
 (U_\pi^{\cX}\ot I_{\cY^{\ot m}})\ket\Psi=\ket\Psi,
\]
where $U_\pi^{\cX}$ permutes the $n$ composite $X$-sites.  The same argument
with the two blocks exchanged gives
\[
 (I_{\cX^{\ot n}}\ot U_\tau^{\cY})\ket\Psi=\ket\Psi
 \qquad(\tau\in\mathfrak S_m).
\]
Thus the symmetrizing projectors for the $X$- and $Y$-sites both fix
$\ket\Psi$, and hence
\[
 \ket\Psi
 \in
 \Sym^n(\cX)
 \ot
 \Sym^m(\cY).
\]

The $X_1Y_1$ marginal therefore belongs to
$\BEXT_{n,m}(d_A^2,d_B^2)$ and reduces to $\rho$ after tracing out
$E_{A,1}E_{B,1}$.
Applying Theorem~\ref{thm:two-sided-bose-extension} in local dimensions
$d_A^2$ and $d_B^2$, then using contractivity of trace distance and
preservation of separability under partial trace, gives
\eqref{eq:two-sided-ext-distance}.
\end{proof}

\section{\texorpdfstring{$t$}{t}-site de Finetti upper bounds}
\label{sec:higher-marginals}

We now extend the two-site de Finetti bounds to $t$-site marginals.

\begin{theorem}[$t$-site de Finetti upper bounds]
\label{thm:higher-main}
For every $d\ge2$ and $3\le t\le N$,
\begin{align}
 \DHtr\Bigl(
 \BOS_N^{(t)}(d),\,
 \SEPH^{(t)}(d)
 \Bigr)
 &\le
 \frac{4t\sqrt{d-1}}{N},
 \label{eq:higher-bos-main}\\
 \DHtr\Bigl(
 \EXCH_N^{(t)}(d),\,
 \SEPP^{(t)}(d)
 \Bigr)
 &\le
 \frac{4t\sqrt{d^2-1}}{N}.
 \label{eq:higher-ex-main}
\end{align}
\end{theorem}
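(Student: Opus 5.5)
The plan is to follow the route sketched in the technical overview: treat the bosonic bound \eqref{eq:higher-bos-main} first and obtain \eqref{eq:higher-ex-main} afterwards by symmetric purification (Lemma~\ref{lem:symmetric-purification}). By Lemma~\ref{lem:compact-convex-distance-duality} it suffices to show, for every witness $0\preceq M\preceq I$ on $\Sym^t(\cH)$ and every $\rho_N\in\D(\Sym^N\cH)$, that
\[
 \Tr\bigl(M\rho_N^{(t)}\bigr)-h_t(M)\le\frac{4t\sqrt{d-1}}{N},
 \qquad
 h_t(M):=\max_{\norm u=1}\bra{u^{\ot t}}M\ket{u^{\ot t}}.
\]
Use the $1:(t-1)$ cut $\cH_A=\cH$, $\cH_B=\Sym^{t-1}\cH$. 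Take $m:=\lfloor (N-n)/(t-1)\rfloor$ and $n:=N-m(t-1)$, and group the sites into $n$ singletons and $m$ blocks of $t-1$. Then $\rho_N$, viewed on these sites (discarding nothing, or tracing out at most $t-2$ sites), becomes a state $\widetilde\rho\in\D(\Sym^n\cH_A\ot\Sym^m\cH_B)$, because a bosonic vector stays symmetric under permutations of singletons and of blocks. Its $A_1B_1$ marginal is $\rho_N^{(t)}$, so $\rho_N^{(t)}\in\BEXT_{n,m}(d,\dim\Sym^{t-1}\cH)$. I will choose $n\approx N/2$ and $m\approx N/(2(t-1))$, so that $nm\gtrsim N^2/(4t)$.

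Next, form the mixed test $M_\theta=(1-\theta)\Pi_t+\theta M$ on $\cH_A\ot\cH_B$, with $M$ extended by zero. Since $\rho_N^{(t)}$ is supported on $\Sym^t$, \eqref{eq:rectangular-lift-expectation} gives
\[
 1-\theta+\theta\Tr(M\rho_N^{(t)})\le\lambda_{\max}(M_\theta^{[n,m]}).
\]
Apply Proposition~\ref{prop:rectangular-argmax} to $M_\theta$ to obtain $x\ot y$ with
\[
 \lambda_{\max}-a_{M_\theta}(x,y)\le\frac{1}{\sqrt{nm}}\,\norm{(P_{x^\perp}\ot P_{y^\perp})M_\theta(x\ot y)}_\pi .
\]
The Schmidt rank of this vector is at most $d-1$, so the right side is at most $\sqrt{(d-1)/(nm)}\,\norm{(P_{x^\perp}\ot P_{y^\perp})M_\theta(x\ot y)}$. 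This is the only source of $\sqrt{d-1}$. I aim to bound this norm not by $1$ but by roughly $\theta+(1-\theta)O(\sqrt{\delta_{\mathrm{sym}}})$. The reason is that the $\Pi_t$ part satisfies $\Pi_t(x\ot y)=x\ot y-(I-\Pi_t)(x\ot y)$, and the first term is killed by $P_{x^\perp}\ot P_{y^\perp}$, leaving a vector of norm $\sqrt{\delta_{\mathrm{sym}}}$. Setting $\beta:=\sqrt{(d-1)/(nm)}$, this yields the $\beta(1-\theta)\sqrt{\delta_{\mathrm{sym}}}+\beta\theta$ terms of \eqref{eq:intro-higher-master}.

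Then expand $a_{M_\theta}(x,y)=(1-\theta)(1-\delta_{\mathrm{sym}})+\theta\bra{x\ot y}M\ket{x\ot y}$ and compare $\bra{x\ot y}M\ket{x\ot y}$ with $\bra{x^{\ot t}}M\ket{x^{\ot t}}\le h_t(M)$. Since $M=\Pi_tM\Pi_t$, only $\Pi_t(x\ot y)$ matters. I need a symmetry-projector lemma (the one the overview calls Lemma~\ref{lem:symmetry-projector}) showing that $\Pi_t(x\ot y)$ lies within distance $O(\sqrt{t\delta_{\mathrm{sym}}})$ of a phase times $x^{\ot t}$, with its norm squared equal to $1-\delta_{\mathrm{sym}}$. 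I would prove this by permutation averaging: write $y=\gamma x^{\ot(t-1)}+y'$ with $y'\perp x^{\ot(t-1)}$, and compute $\norm{\Pi_t(x\ot y')}^2$ using the insertion identity \eqref{eq:contraction-adjoint-norm} with $u=x$. This gives $1-\abs\gamma^2\le t\delta_{\mathrm{sym}}$. The resulting difference is $\theta(\bra{x\ot y}M\ket{x\ot y}-h_t)\le\theta\sqrt t\sqrt{\delta_{\mathrm{sym}}}$ up to constants, which gives exactly \eqref{eq:intro-higher-master}.

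Finally, maximize the right side of \eqref{eq:intro-higher-master} over $s=\sqrt{\delta_{\mathrm{sym}}}\ge0$. The quadratic $-(1-\theta)s^2+(\theta\sqrt t+\beta)s$ is at most $(\theta\sqrt t+\beta)^2/(4(1-\theta))$. Dividing by $\theta$ and choosing $\theta\approx\beta/\sqrt t$, together with $\theta\le1/2$, gives an error $O(\sqrt t\,\beta+\beta)=O(\sqrt{t(d-1)/(nm)}\cdot1)$. With $nm\gtrsim N^2/(4t)$ this is $O(t\sqrt{d-1}/N)$. Tracking constants should give the factor $4$; the small-$N$ regime, where $4t\sqrt{d-1}/N\ge1$, is trivial.

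For the exchangeable bound, purify with Lemma~\ref{lem:symmetric-purification} into local dimension $d^2$, apply the bosonic bound, and use that tracing out sitewise maps $\ketbra\phi^{\ot t}$ into $\SEPP^{(t)}$.

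I expect the main obstacle to be the coupled estimate. The first difficulty is obtaining the improved bound on $\norm{(P_{x^\perp}\ot P_{y^\perp})M_\theta(x\ot y)}$ that carries a $\sqrt{\delta_{\mathrm{sym}}}$ factor on the $\Pi_t$ part. The second is proving the linear-in-$t$ symmetry lemma sharply enough that the constants close to $4t$.
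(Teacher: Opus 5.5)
Your plan is essentially the paper's proof: the $1:(t-1)$ grouping into $n$ singletons and $m$ blocks, the mixed test $M_\theta=(1-\theta)\Pi_t+\theta M$, rectangular argmax rounding with the $\Pi_t$-part of $(P_{x^\perp}\ot P_{y^\perp})M_\theta(x\ot y)$ controlled by $\sqrt{\delta_{\mathrm{sym}}}$, the linear-in-$t$ symmetry estimate, and purification for the exchangeable bound. In the symmetry estimate, your insertion-identity computation closes once you add the paper's number-operator fact $\norm{C_x(y')}^2\le\frac{t-2}{t-1}\norm{y'}^2$ for symmetric $y'\perp x^{\ot(t-1)}$, and the comparison with $h_t(M)$ is cleanest via the pure-state trace distance $\sqrt{1-\abs{\braket{x^{\ot(t-1)}}{y}}^2}\le\sqrt{t\,\delta_{\mathrm{sym}}}$.

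For the explicit constant $4$, you need to fix three choices. First, set $m:=\lfloor N/(2(t-1))\rfloor$ and $n:=N-(t-1)m$; your definition is circular. Second, keep the factor $1-\theta$ on $\beta$. Third, take $\theta/(1-\theta)=\beta/\sqrt t$ exactly. This gives $(\sqrt t+1)\beta$, and with $nm\ge N^2/(8(t-1))$ for $N\ge4(t-1)$ it yields $4t\sqrt{d-1}/N$. Bounding $1/(1-\theta)\le2$ via $\theta\le1/2$, as you propose, doubles the $\sqrt t\,\beta$ term and does not reach $4$.
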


We first prove the bosonic bound; symmetric purification then gives the
exchangeable bound.
For $M\in\Herm(\Sym^t\cH)$, define
\[
 h_t(M)
 :=
 \max_{\norm{u}=1}
 \bra{u^{\ot t}}M\ket{u^{\ot t}}.
\]
This is the maximum expectation of $M$ over $\SEPH^{(t)}(d)$.

Trace-distance duality reduces the problem to bounding
$\Tr(M\rho_N^{(t)})-h_t(M)$.  We distinguish one tensor factor from the
other $t-1$ factors and view $\rho_N^{(t)}$ as a bipartite state on
\[
 \cH\ot\Sym^{t-1}(\cH).
\]
Grouping the $N$ sites into $n$ individual sites and $m$ blocks of
$t-1$ sites makes this state Bose-symmetric $(n,m)$-extendible.  This
two-sided extendibility lets rectangular argmax rounding produce a
product vector $x\ot y$, enforcing unentanglement across the
$1:(t-1)$ cut.  The challenge is to propagate this
unentanglement to the remaining cuts: $y$ need not equal
$x^{\ot(t-1)}$, so $x\ot y$ need not be a Hartree vector.

The symmetry test is what propagates this unentanglement.  We
incorporate it into the objective before performing argmax rounding.
The bosonic marginal passes the test with certainty, whereas the
rounded product loses value according to its symmetry-test rejection
probability.  The same rejection probability controls the cost of
replacing $x\ot y$ by $x^{\ot t}$.
Choosing the weight of the symmetry test to balance this cost against
the rectangular argmax rounding error gives the claimed rate.  The
rest of the section makes these steps precise.

We begin with the dual formulation of the bosonic bound.

\begin{lemma}[Exact trace-distance duality]
\label{lem:trace-distance-duality-higher}
For every $d\ge2$, $t>2$, and every $\gamma\in\D(\Sym^t\cH)$,
\begin{equation}
 \Dtr\bigl(\gamma,\SEPH^{(t)}(d)\bigr)
 =
 \sup_{\substack{M\in\Herm(\Sym^t\cH)\\
                  0\preceq M\preceq I}}
 \left\{
  \Tr(M\gamma)-h_t(M)
 \right\}.
 \label{eq:trace-distance-duality-higher}
\end{equation}
\end{lemma}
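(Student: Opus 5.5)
This is a direct specialization of Lemma~\ref{lem:compact-convex-distance-duality}, so the plan is to check its hypotheses in the right ambient space and identify the support function. The ambient Hilbert space will be $\Sym^t\cH$ rather than $\cH^{\ot t}$. Every Hartree state $\ketbra{u}^{\ot t}$ is supported on $\Sym^t\cH$, so $\SEPH^{(t)}(d)$ is a subset of $\D(\Sym^t\cH)$. It is nonempty and convex by definition. It is also compact, being the convex hull of the continuous image of the compact unit sphere, and convex hulls of compact sets in finite dimension are compact by Carath\'eodory. The state $\gamma$ lies in $\D(\Sym^t\cH)$ by hypothesis.

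Applying the first identity of \eqref{eq:compact-duality-state-trace} with $\cH$ replaced by $\Sym^t\cH$ gives
\[
 \Dtr\bigl(\gamma,\SEPH^{(t)}(d)\bigr)
 =
 \max_{0\preceq M\preceq I_{\Sym^t\cH}}
 \bigl\{\Tr(M\gamma)-h_{\SEPH^{(t)}(d)}(M)\bigr\}.
\]
The maximum is attained, so the supremum in \eqref{eq:trace-distance-duality-higher} is the same quantity. Here the trace distance computed inside $\Sym^t\cH$ agrees with the one computed in $\cH^{\ot t}$, because both states are supported on the subspace.

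It then remains to identify the support function with $h_t(M)$. A linear functional $\sigma\mapsto\Tr(M\sigma)$ on a convex hull attains its maximum at an extreme generator. Hence
\[
 h_{\SEPH^{(t)}(d)}(M)=\max_{\norm u=1}\Tr\bigl(M\ketbra{u}^{\ot t}\bigr)=h_t(M).
\]
Substituting this into the display above gives \eqref{eq:trace-distance-duality-higher}.

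No step here is a real obstacle. The only point needing care is that the witness class is the operators $0\preceq M\preceq I$ on $\Sym^t\cH$ rather than on $\cH^{\ot t}$. If one wanted witnesses on the full space instead, compressing by the projector onto $\Sym^t\cH$ would preserve $0\preceq M\preceq I$ and leave both $\Tr(M\gamma)$ and $h_t(M)$ unchanged, so the two formulations agree.
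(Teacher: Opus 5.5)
Your proof is correct and follows the paper's argument: identify the support function of $\SEPH^{(t)}(d)$ with $h_t(M)$, then invoke Lemma~\ref{lem:compact-convex-distance-duality}, using that every state involved is supported on $\Sym^t\cH$ to restrict the witnesses to that subspace. Taking $\Sym^t\cH$ as the ambient space from the start, as you do, is a slightly cleaner way of stating the paper's closing remark about restricting witnesses.
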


\begin{proof}
The support function of $\SEPH^{(t)}(d)$ is $h_t(M)$.  By duality
(Lemma~\ref{lem:compact-convex-distance-duality}), this gives
\eqref{eq:trace-distance-duality-higher}.  Since all states involved
are supported on $\Sym^t\cH$, the witness may be restricted to that
subspace.
\end{proof}

\subsection{Two-sided extendibility across the
\texorpdfstring{$1:(t-1)$}{1:(t-1)} cut}

Every Hartree vector factors across this cut as
\[
 u^{\ot t}=u\ot u^{\ot(t-1)}.
\]
The first step drops the requirement $y=x^{\ot(t-1)}$ and allows
arbitrary products $x\ot y$.  We choose the $1:(t-1)$ cut because its
smaller side has dimension $d$, which is the favorable dimension in the
rectangular argmax estimate, and because Bose symmetry supplies exactly
the corresponding two-sided extension.

Fix $t>2$ and set
\[
 \cH_A:=\cH,
 \qquad
 \cH_B:=\Sym^{t-1}(\cH).
\]
We use the natural inclusion
$\Sym^t(\cH)\subseteq\cH_A\ot\cH_B$ obtained by distinguishing the first
tensor factor.

\begin{fact}[Two-sided extendibility of the marginal]
\label{fact:one-tminusone-extension}
Let $n,m\ge1$ satisfy $n+(t-1)m\le N$.  For every
$\rho_N\in\D(\Sym^N\cH)$, the marginal $\rho_N^{(t)}$, viewed as a
state on $\cH_A\ot\cH_B$, is Bose-symmetric $(n,m)$-extendible.
\end{fact}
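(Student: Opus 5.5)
The plan is to build the required extension directly from $\rho_N$ by grouping sites. Set $N':=n+(t-1)m\le N$ and first trace out the last $N-N'$ sites. Since partial traces of bosonic states remain supported on the symmetric subspace, this gives $\rho_{N'}:=\Tr_{N'+1,\ldots,N}\rho_N\in\D(\Sym^{N'}\cH)$. Its $t$-site marginal is still $\rho_N^{(t)}$.

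Next, group the $N'$ sites as follows. The first $n$ sites become the registers $A_1,\ldots,A_n$, each a copy of $\cH_A=\cH$. The remaining $(t-1)m$ sites are split into $m$ consecutive blocks of $t-1$ sites, which become $B_1,\ldots,B_m$. The extension is $\widetilde\rho:=\rho_{N'}$, viewed on $\cH^{\ot n}\ot(\cH^{\ot(t-1)})^{\ot m}$.

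I then check the support condition. Let $\psi$ be any vector in $\Sym^{N'}\cH$, which suffices by spectral decomposition. It is invariant under every permutation of the $N'$ sites, and three subgroups of these permutations matter.
\begin{enumerate}
\item Permutations inside a single $B$-block show that each $B_j$ register lies in $\Sym^{t-1}(\cH)=\cH_B$. So $\psi\in\cH^{\ot n}\ot\cH_B^{\ot m}$.
\item Permutations of the first $n$ sites show that the $A$-block lies in $\Sym^n(\cH_A)$.
\item Block permutations, which move whole groups of $t-1$ sites, show that the $B$-block lies in $\Sym^m(\cH_B)$.
\end{enumerate}
Together these give $\psi\in\Sym^n(\cH_A)\ot\Sym^m(\cH_B)$, so $\widetilde\rho\in\D(\Sym^n(\cH_A)\ot\Sym^m(\cH_B))$.

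Finally, the $A_1B_1$ marginal of $\widetilde\rho$ is the marginal on site $1$ together with sites $n+1,\ldots,n+t-1$. By permutation invariance of $\rho_{N'}$, this equals the marginal on the first $t$ sites, namely $\rho_N^{(t)}$. Here the identification uses the inclusion $\Sym^t\cH\subseteq\cH_A\ot\cH_B$ obtained by distinguishing the first factor, and the relabeling permutation fixes site $1$, so the identification is consistent. This is exactly the definition \eqref{eq:two-sided-bext-definition}.

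The only mildly delicate step is the bookkeeping that the block-permutation invariance yields symmetry in $\Sym^m(\cH_B)$. This holds because the permutation of tensor factors induced on $\cH_B^{\ot m}$ by a block permutation is the restriction of a site permutation. Nothing else is substantive.
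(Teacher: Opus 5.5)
Your proposal is correct and follows essentially the same route as the paper. It traces out the surplus $N-n-(t-1)m$ sites, rebrackets the remaining ones into $n$ single sites and $m$ blocks of $t-1$ sites, uses full permutation symmetry of the resulting marginal to place its support in $\Sym^n(\cH_A)\ot\Sym^m(\cH_B)$, and identifies the $A_1B_1$ marginal with $\rho_N^{(t)}$ by permutation invariance; you only spell out the within-block, $A$-site, and block-permutation checks more explicitly than the paper does.
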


\begin{proof}
Trace out $N-n-(t-1)m$ sites and rebracket the remaining tensor factors
as $n$ individual sites and $m$ blocks of $t-1$ sites:
\[
 \underset{A_1}{\bullet}
 \ \cdots\
 \underset{A_n}{\bullet}
 \qquad\qquad
 \overset{t-1\text{ sites}}{\underset{B_1}{[\,\bullet\ \cdots\ \bullet\,]}}
 \ \cdots\
 \overset{t-1\text{ sites}}{\underset{B_m}{[\,\bullet\ \cdots\ \bullet\,]}}.
\]
Let $\widetilde\rho_{n,m}$ be the resulting marginal.  Because
$\rho_N$ is supported on $\Sym^N(\cH)$, this marginal is supported on
$\Sym^{n+(t-1)m}(\cH)$.  It is therefore supported on $\cH_B$ within each
$B_j$, and every permutation of the $n$ $A$-sites or of the $m$
$B$-blocks acts as the identity on its support.  Consequently,
$\widetilde\rho_{n,m}\in
\D\bigl(\Sym^n(\cH_A)\ot\Sym^m(\cH_B)\bigr)$.
Moreover, every $A_iB_j$ consists of $t$ distinct original sites, so
$(\widetilde\rho_{n,m})_{A_iB_j}=\rho_N^{(t)}$ for all $i,j$.
\end{proof}

\subsection{From product vectors to tensor powers}

Rectangular argmax rounding produces a product vector $x\ot y$ across
the chosen $1:(t-1)$ cut, but it need not satisfy
$y=x^{\ot(t-1)}$.  The symmetry test propagates this unentanglement to
the remaining cuts.

Let $\Pi_t$ be the orthogonal projector from
$\cH_A\ot\cH_B=\cH\ot\Sym^{t-1}(\cH)$ onto $\Sym^t(\cH)$.  It tests whether
the product vector lies in the symmetric subspace.  The following lemma
quantifies the test: a high probability of passing forces $y$ to be
close to $x^{\ot(t-1)}$.  It also bounds the component of
$\Pi_t(x\ot y)$ and $M(x\ot y)$ orthogonal to both rounded factors.

\begin{lemma}[Approximate symmetry of a product vector]
\label{lem:symmetry-projector}
Let $x\in\cH$ and $y\in\Sym^{t-1}(\cH)$ be unit vectors, and define
the symmetry-test rejection probability
\[
 \delta_{\mathrm{sym}}(x,y)
 :=
 1-\bra{x\ot y}\Pi_t\ket{x\ot y}.
\]
Set $P_\perp:=P_{x^\perp}\ot P_{y^\perp}$.
Then
\begin{equation}
 1-
 \abs{\braket{x^{\ot(t-1)}}{y}}^2
 \le
 t\,\delta_{\mathrm{sym}}(x,y).
 \label{eq:symmetry-overlap}
\end{equation}
If $0\preceq M\preceq I$ acts on $\Sym^t\cH$ and is extended by zero
to $\cH_A\ot\cH_B$, then
\begin{equation}
 \bra{x\ot y}M\ket{x\ot y}
 \le
 h_t(M)+\sqrt{t\,\delta_{\mathrm{sym}}(x,y)}.
 \label{eq:symmetry-objective}
\end{equation}
Moreover,
\begin{align}
 \norm{P_\perp\Pi_t(x\ot y)}_\pi
 &\le
 \sqrt{d-1}\,\sqrt{\delta_{\mathrm{sym}}(x,y)},
 \label{eq:symmetry-projector-component}\\
 \norm{P_\perp M(x\ot y)}_\pi
 &\le
 \sqrt{d-1}.
 \label{eq:symmetry-M-component}
\end{align}
\end{lemma}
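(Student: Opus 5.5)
The plan is to prove the four estimates in order, starting from the explicit form of $\Pi_t$ on $\cH\ot\Sym^{t-1}(\cH)$. Since the last $t-1$ factors are already symmetric, the symmetrizer over $\mathfrak S_t$ reduces to the transpositions moving the first site:
\[
 \Pi_t=\frac1t\sum_{j=1}^t U_{1j}.
\]
This is the same formula as \eqref{eq:symmetric-insertion}, with $\Pi_t(x\ot\Xi)=C_x^*(\Xi)$.

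\textbf{Computing the rejection probability.} Expanding exactly as in the proof of Fact~\ref{fact:symmetric-contraction-bessel} gives
\[
 1-\delta_{\mathrm{sym}}(x,y)=\frac1t\bigl(1+(t-1)\norm{C_x(y)}^2\bigr),
\]
that is,
\[
 t\,\delta_{\mathrm{sym}}=(t-1)\bigl(1-\norm{C_x(y)}^2\bigr).
\]

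\textbf{Proof of \eqref{eq:symmetry-overlap}.} We expect this to be the main step; the plan is a number-operator argument. Apply Fact~\ref{fact:symmetric-orthogonal-sum} to $\cH=\C x\oplus x^\perp$ on $\Sym^{t-1}\cH$. Write $y=\sum_{k=0}^{t-1}y_k$, where $y_k$ lies in the eigenspace on which $\sum_jP_j$ equals $k$, with $P=\ketbra x$. Set $q_k:=\norm{y_k}^2$, so that $\sum_k q_k=1$.

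By symmetry and \eqref{eq:contraction-marginals},
\[
 \norm{C_x(y)}^2=\bra{y}P_1\ket{y}=\frac{1}{t-1}\sum_k k\,q_k.
\]
The top summand $\Sym^{t-1}(\C x)$ is spanned by $x^{\ot(t-1)}$, so $\abs{\braket{x^{\ot(t-1)}}{y}}^2=q_{t-1}$. Since $(t-1)-k\ge1$ whenever $k<t-1$, we obtain
\[
 1-q_{t-1}\le\sum_k\bigl((t-1)-k\bigr)q_k=(t-1)\bigl(1-\norm{C_x(y)}^2\bigr)=t\,\delta_{\mathrm{sym}}.
\]
This is \eqref{eq:symmetry-overlap}, in fact with the slightly better constant $t-1$ implicit.

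\textbf{Proof of \eqref{eq:symmetry-objective}.} Choose the phase $e^{i\varphi}$ so that $\braket{e^{i\varphi}x^{\ot(t-1)}}{y}\ge0$, and compare the pure states $x\ot y$ and $e^{i\varphi}x^{\ot t}$. Their overlap has modulus $\abs{\braket{x^{\ot(t-1)}}{y}}$. For $0\preceq M\preceq I$, the difference of expectations is at most the trace distance of the two pure states, which equals $\sqrt{1-\abs{\braket{x^{\ot(t-1)}}{y}}^2}$. By \eqref{eq:symmetry-overlap} this is at most $\sqrt{t\,\delta_{\mathrm{sym}}}$. Finally,
\[
 \bra{x^{\ot t}}M\ket{x^{\ot t}}\le h_t(M).
\]
Here it is fine that $M$ has been extended by zero, because $x^{\ot t}\in\Sym^t\cH$.

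\textbf{Proof of \eqref{eq:symmetry-projector-component}.} Since $P_\perp(x\ot y)=0$,
\[
 P_\perp\Pi_t(x\ot y)=-P_\perp(I-\Pi_t)(x\ot y).
\]
Its Hilbert-space norm is therefore at most $\norm{(I-\Pi_t)(x\ot y)}=\sqrt{\delta_{\mathrm{sym}}}$. Its $A$-side lies in $x^\perp$, which has dimension $d-1$, so its Schmidt rank is at most $d-1$. Then \eqref{eq:tensor-norm-duality-rank} gives the factor $\sqrt{d-1}$.

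\textbf{Proof of \eqref{eq:symmetry-M-component}.} The same rank argument applies, combined with
\[
 \norm{P_\perp M(x\ot y)}\le\norm{M}_\infty\le1.
\]
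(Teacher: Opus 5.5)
Your proof is correct and is essentially the paper's argument: your decomposition $y=\sum_k y_k$ is the eigen-decomposition of the paper's number operator $\sum_j(Q_x)_j=(t-1)I-\sum_jP_j$, and you handle the other three estimates exactly as the paper does. The only inaccuracy is the aside about a ``slightly better constant $t-1$'': your chain gives exactly $t\,\delta_{\mathrm{sym}}(x,y)$, and this is sharp, since taking $y$ to be the normalized symmetrization of $v\ot x^{\ot(t-2)}$ with $v\perp x$ gives $1-\abs{\braket{x^{\ot(t-1)}}{y}}^2=1=t\,\delta_{\mathrm{sym}}(x,y)$, so $t$ cannot be replaced by $t-1$ in front of $\delta_{\mathrm{sym}}$.
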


\begin{proof}
Because $y$ is symmetric in its $t-1$ factors, averaging all
permutations reduces to averaging over the $t$ possible locations of
the distinguished first factor:
\[
 \Pi_t
 =
 \frac1t\sum_{j=1}^t U_{1j},
 \qquad U_{11}=I.
\]
Let $\gamma_y$ be the one-site marginal of $\ketbra y$.  Since $y$ is symmetric,
\[
 \bra{x\ot y}U_{1j}\ket{x\ot y}
 =
 \bra x\gamma_y\ket x
 \qquad(2\le j\le t).
\]
It follows that
\begin{equation}
 \bra{x\ot y}\Pi_t\ket{x\ot y}
 =
 \frac{1+(t-1)\bra x\gamma_y\ket x}{t}.
 \label{eq:symmetry-expectation}
\end{equation}
Put $Q_x=I-\ketbra x$.  On $\Sym^{t-1}(\cH)$, the operator
\[
 \mathcal N_x
 :=
 \sum_{j=1}^{t-1}(Q_x)_j
\]
counts how many of the $t-1$ factors lie in $x^\perp$.  Its kernel is
$\C x^{\ot(t-1)}$, and all other eigenvalues are at least one.  Therefore
\[
 I-\ketbra{x^{\ot(t-1)}}
 \preceq
 \mathcal N_x.
\]
Taking the expectation in $y$ and using \eqref{eq:symmetry-expectation},
\begin{align*}
 1-\abs{\braket{x^{\ot(t-1)}}{y}}^2
 &\le
 (t-1)\bigl(1-\bra x\gamma_y\ket x\bigr)\\
 &=
 t\,\delta_{\mathrm{sym}}(x,y),
\end{align*}
which proves \eqref{eq:symmetry-overlap}.

The trace distance between the pure states $\ketbra{x\ot y}$ and $\ketbra{x}^{\ot t}$ is
\[
 \sqrt{
  1-\abs{\braket{x^{\ot(t-1)}}{y}}^2
 }
 \le
 \sqrt{t\,\delta_{\mathrm{sym}}(x,y)}.
\]
Since $0\preceq M\preceq I$ and
$\bra{x^{\ot t}}M\ket{x^{\ot t}}\le h_t(M)$, trace-distance duality
gives \eqref{eq:symmetry-objective}.

For the bounds on the projected components, first note that orthogonality of $\Pi_t$
gives
\[
 \norm{(I-\Pi_t)(x\ot y)}^2
 =
 \delta_{\mathrm{sym}}(x,y).
\]
Since $P_\perp(x\ot y)=0$,
\[
 P_\perp\Pi_t(x\ot y)
 =
 -P_\perp(I-\Pi_t)(x\ot y).
\]
Thus its Hilbert-space norm is at most
$\sqrt{\delta_{\mathrm{sym}}(x,y)}$.  Every vector in
$x^\perp\ot y^\perp$ has Schmidt rank at most $d-1$, so
\eqref{eq:tensor-norm-duality-rank} proves
\eqref{eq:symmetry-projector-component}.  The same rank bound and
\[
 \norm{P_\perp M(x\ot y)}
 \le
 \norm{M}_\infty
 \le1
\]
prove \eqref{eq:symmetry-M-component}.
\end{proof}

A related, quantitatively weaker estimate appears in the
one-versus-many slicing de Finetti theorem of Jeronimo and
Wu~\cite[Theorem~23]{JW24}.  After normalizing $\Pi_t(x\ot y)$, that
theorem gives squared overlap
$1-O(t^3\delta_{\mathrm{sym}}(x,y))$ with some tensor power
$v^{\ot t}$.  Estimate~\eqref{eq:symmetry-overlap} improves the cubic
loss to a linear one and identifies the direction as $x$.  This sharper
dependence is essential for the linear dependence on $t$ in
Theorem~\ref{thm:higher-main}.

\subsection{Bounding the integrality gap}

We now combine rectangular argmax rounding with the symmetry estimate.
For $M_\theta=(1-\theta)\Pi_t+\theta M$, the grouped bosonic marginal is
feasible for the rectangular lift and retains the score
$1-\theta+\theta\Tr(M\rho_N^{(t)})$.  Rectangular argmax rounding
produces $x\ot y$, and Lemma~\ref{lem:symmetry-projector} controls its
replacement by $x^{\ot t}$.  Optimizing over $\theta$ balances the two
estimates and gives the integrality-gap bound required by
Lemma~\ref{lem:trace-distance-duality-higher}.

\begin{proposition}[Rounding across the $1:(t-1)$ cut]
\label{prop:one-tminusone-rounding}
Let $d\ge2$, $3\le t\le N$, and let $n,m\ge2$ satisfy
$n+(t-1)m\le N$.  For every
$\rho_N\in\D(\Sym^N\cH)$ and every
$0\preceq M\preceq I_{\Sym^t\cH}$,
\[
 \Tr\bigl(M\rho_N^{(t)}\bigr)-h_t(M)
 \le
 (\sqrt t+1)\sqrt{\frac{d-1}{nm}}.
\]
\end{proposition}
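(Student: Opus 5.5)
The plan is to run the mixed-test argument from the technical overview, with one rectangular argmax rounding step and one application of Lemma~\ref{lem:symmetry-projector}. Set $\cH_A:=\cH$ and $\cH_B:=\Sym^{t-1}(\cH)$, and extend $M$ by zero to $\cH_A\ot\cH_B$. For $0<\theta<1$ define the Hermitian operator
\[
 M_\theta:=(1-\theta)\Pi_t+\theta M .
\]
Write $\beta:=\sqrt{(d-1)/(nm)}$.

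\textbf{Step 1 (feasibility).} By Fact~\ref{fact:one-tminusone-extension}, $\rho_N^{(t)}$ has a Bose-symmetric $(n,m)$-extension $\widetilde\rho_{n,m}$ on $\Sym^n(\cH_A)\ot\Sym^m(\cH_B)$. Since $\rho_N^{(t)}$ is supported on $\Sym^t\cH$, we have $\Tr(\Pi_t\rho_N^{(t)})=1$. The lift identity \eqref{eq:rectangular-lift-expectation} then gives
\[
 1-\theta+\theta\Tr\bigl(M\rho_N^{(t)}\bigr)=\Tr\bigl(M_\theta^{[n,m]}\widetilde\rho_{n,m}\bigr)\le\lambda_{\max}\bigl(M_\theta^{[n,m]}\bigr).
\]

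\textbf{Step 2 (rounding).} Apply Proposition~\ref{prop:rectangular-argmax} to $M_\theta$. This produces unit vectors $x\in\cH$ and $y\in\Sym^{t-1}(\cH)$ with
\[
 \lambda_{\max}\bigl(M_\theta^{[n,m]}\bigr)\le a_{M_\theta}(x,y)+\frac{1}{\sqrt{nm}}\norm{P_\perp M_\theta(x\ot y)}_\pi .
\]
Write $\delta:=\delta_{\mathrm{sym}}(x,y)$. Expand
\[
 a_{M_\theta}(x,y)=(1-\theta)(1-\delta)+\theta\bra{x\ot y}M\ket{x\ot y},
\]
and bound the second term by $h_t(M)+\sqrt{t\delta}$ using \eqref{eq:symmetry-objective}. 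For the projective term, use the triangle inequality together with \eqref{eq:symmetry-projector-component} and \eqref{eq:symmetry-M-component}:
\[
 \norm{P_\perp M_\theta(x\ot y)}_\pi\le\sqrt{d-1}\,\bigl((1-\theta)\sqrt\delta+\theta\bigr).
\]
Subtracting $1-\theta$ from both sides of the combined inequality yields the master inequality \eqref{eq:intro-higher-master}:
\[
 \theta\bigl(\Tr(M\rho_N^{(t)})-h_t(M)\bigr)\le-(1-\theta)\delta+\bigl(\theta\sqrt t+\beta(1-\theta)\bigr)\sqrt\delta+\beta\theta .
\]
The key point is that $\delta$ is the same quantity in both the penalty term and the replacement cost, so the two stay coupled.

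\textbf{Step 3 (optimization).} Treat $s=\sqrt\delta$ as a free variable. Since $-(1-\theta)s^2+Bs\le B^2/(4(1-\theta))$ for any $B$, taking $B:=\theta\sqrt t+\beta(1-\theta)$ and dividing by $\theta$ gives
\[
 \Tr(M\rho_N^{(t)})-h_t(M)\le\beta+\frac{(r\sqrt t+\beta)^2}{4r},\qquad r:=\frac{\theta}{1-\theta}.
\]
Choose $r=\beta/\sqrt t$, which is positive, so $\theta=r/(1+r)$ lies in $(0,1)$. The fraction then equals $\beta\sqrt t$, and the bound becomes $(\sqrt t+1)\beta$, as claimed.

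The only delicate point I anticipate is the bookkeeping in Step 2. One must check that Proposition~\ref{prop:rectangular-argmax} applies to $M_\theta$, which is Hermitian on the full space $\cH_A\ot\cH_B$ and is not supported on $\Sym^t$. One must also check that the subtraction of $1-\theta$ leaves exactly the $-(1-\theta)\delta$ penalty term. After that, the remaining steps are elementary algebra.
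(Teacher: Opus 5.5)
Your proposal is correct and follows essentially the same route as the paper's proof. The steps match: the mixed test $M_\theta$, feasibility via the grouped extension, rectangular argmax rounding with Lemma~\ref{lem:symmetry-projector} controlling both the product value and the projective remainder through the same $\delta$, and completing the square with $\theta/(1-\theta)=\beta/\sqrt t$. Your expression $(r\sqrt t+\beta)^2/(4r)$ is just a compact form of the paper's expanded bound \eqref{eq:higher-theta-bound}.
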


\begin{proof}
Let $\widetilde\rho_{n,m}$ be the grouped state from
Fact~\ref{fact:one-tminusone-extension}.
For $0<\theta<1$, extend $M$ by zero outside $\Sym^t\cH$ and consider
the mixed test
\[
 M_\theta
 :=
 (1-\theta)\Pi_t+\theta M
 \quad\text{on }\cH_A\ot\cH_B.
\]
Because $\rho_N^{(t)}$ is supported on $\Sym^t\cH$, the projector
$\Pi_t$ acts as the identity on its support.  Therefore
\begin{align*}
 \Tr\bigl(M_\theta\rho_N^{(t)}\bigr)
 &=
 (1-\theta)\Tr\bigl(\Pi_t\rho_N^{(t)}\bigr)
 +\theta\Tr\bigl(M\rho_N^{(t)}\bigr)\\
 &=
 1-\theta+\theta\Tr\bigl(M\rho_N^{(t)}\bigr).
\end{align*}
By Fact~\ref{fact:one-tminusone-extension},
$(\widetilde\rho_{n,m})_{A_1B_1}=\rho_N^{(t)}$.  Hence the
rectangular-lift identity \eqref{eq:rectangular-lift-expectation} gives
\begin{align}
 1-\theta+\theta\Tr\bigl(M\rho_N^{(t)}\bigr)
 &=
 \Tr\bigl(M_\theta\rho_N^{(t)}\bigr)
 =
 \Tr\bigl(M_\theta^{[n,m]}\widetilde\rho_{n,m}\bigr)\notag\\
 &\le
 \lambda_{\max}\bigl(M_\theta^{[n,m]}\bigr).
 \label{eq:penalized-lower}
\end{align}

\par\medskip\noindent\emph{Rectangular argmax rounding.}\quad
Proposition~\ref{prop:rectangular-argmax}, applied to $M_\theta$,
gives the first-stage product vector $x\ot y$.  Set
$P_\perp:=P_{x^\perp}\ot P_{y^\perp}$.  Moving its product-value term
to the right-hand side gives
\begin{equation}
 \lambda_{\max}\bigl(M_\theta^{[n,m]}\bigr)
 \le
 \bra{x\ot y}M_\theta\ket{x\ot y}
 +\frac1{\sqrt{nm}}\norm{P_\perp M_\theta(x\ot y)}_\pi.
 \label{eq:penalized-argmax-bound}
\end{equation}
We use the symmetry test to round this product vector to $x^{\ot t}$
and estimate the two terms on the right using the same rejection
probability.  Write
$\delta=\delta_{\mathrm{sym}}(x,y)$ for this probability and
\[
 \beta
 :=
 \sqrt{\frac{d-1}{nm}}.
\]
The product value of $M_\theta$ is
\begin{align}
 \bra{x\ot y}M_\theta\ket{x\ot y}
 &=
 (1-\theta)(1-\delta)
 +\theta\bra{x\ot y}M\ket{x\ot y}
 \notag\\
 &\le
 (1-\theta)(1-\delta)
 +\theta\bigl(h_t(M)+\sqrt{t\delta}\bigr)
 \notag\\
 &=
 1-\theta+\theta h_t(M)
 -(1-\theta)\delta
 +\theta\sqrt{t\delta},
 \label{eq:penalized-product-bound}
\end{align}
where the inequality is \eqref{eq:symmetry-objective}.  The triangle
inequality and the two bounds on projected components in
Lemma~\ref{lem:symmetry-projector} give
\begin{align}
 \frac1{\sqrt{nm}}
 \norm{P_\perp M_\theta(x\ot y)}_\pi
 &\le
 \frac1{\sqrt{nm}}
 \left(
  (1-\theta)\norm{P_\perp\Pi_t(x\ot y)}_\pi
  +\theta\norm{P_\perp M(x\ot y)}_\pi
 \right)
 \notag\\
 &\le
 \sqrt{\frac{d-1}{nm}}
 \bigl((1-\theta)\sqrt\delta+\theta\bigr)
 \notag\\
 &=
 \beta\bigl((1-\theta)\sqrt\delta+\theta\bigr).
 \label{eq:penalized-component-bound}
\end{align}
Both estimates concern the pair selected in
\eqref{eq:penalized-argmax-bound}, and hence the same quantity $\delta$.
Combining \eqref{eq:penalized-lower},
\eqref{eq:penalized-argmax-bound},
\eqref{eq:penalized-product-bound}, and
\eqref{eq:penalized-component-bound}, we obtain
\begin{align*}
 1-\theta+\theta\Tr\bigl(M\rho_N^{(t)}\bigr)
 &\le
 1-\theta+\theta h_t(M)
 -(1-\theta)\delta
 +\theta\sqrt{t\delta}
 +\beta(1-\theta)\sqrt\delta
 +\beta\theta.
\end{align*}
Subtracting $1-\theta+\theta h_t(M)$ from both sides gives
\begin{equation}
 \theta\left(\Tr\bigl(M\rho_N^{(t)}\bigr)-h_t(M)\right)
 \le
 -(1-\theta)\delta
 +\bigl(\theta\sqrt t+\beta(1-\theta)\bigr)\sqrt\delta
 +\beta\theta.
 \label{eq:higher-before-opt}
\end{equation}
Keeping the negative symmetry-test term and the two error terms coupled
through the same $\delta$ is what allows the optimization to retain
linear dependence on $t$.

\par\medskip\noindent\emph{Optimizing the error bound.}\quad
Since $0\le\delta\le1$, the variable $s:=\sqrt\delta$ lies in
$[0,1]$.  In terms of $s$, \eqref{eq:higher-before-opt} reads
\begin{equation*}
 \theta\left(\Tr\bigl(M\rho_N^{(t)}\bigr)-h_t(M)\right)
 \le
 -(1-\theta)s^2
 +\bigl(\theta\sqrt t+\beta(1-\theta)\bigr)s
 +\beta\theta.
\end{equation*}
For every real $s$, completing the square gives
\begin{equation*}
 -(1-\theta)s^2
 +\bigl(\theta\sqrt t+\beta(1-\theta)\bigr)s
 \le
 \frac{\bigl(\theta\sqrt t+\beta(1-\theta)\bigr)^2}
      {4(1-\theta)}.
\end{equation*}
After dividing by $\theta>0$ and expanding the square, we obtain
\begin{align}
 \Tr\bigl(M\rho_N^{(t)}\bigr)-h_t(M)
 &\le
 \frac{\bigl(\theta\sqrt t+\beta(1-\theta)\bigr)^2}
      {4\theta(1-\theta)}
 +\beta
 \notag\\
 &=
 \frac{\theta t}{4(1-\theta)}
 +\frac{\beta\sqrt t}{2}
 +\frac{\beta^2(1-\theta)}{4\theta}
 +\beta.
 \label{eq:higher-theta-bound}
\end{align}
Set
\[
 \lambda:=\frac{\theta}{1-\theta}.
\]
As $\theta$ ranges over $(0,1)$, the parameter $\lambda$ ranges over
$(0,\infty)$.  Since $\beta>0$, we may choose
\[
 \lambda=\frac{\beta}{\sqrt t},
 \qquad\text{equivalently}\qquad
 \theta=\frac{\beta}{\beta+\sqrt t}.
\]
With this choice, \eqref{eq:higher-theta-bound} becomes
\begin{align*}
 \Tr\bigl(M\rho_N^{(t)}\bigr)-h_t(M)
 &\le
 \frac{t\lambda}{4}
 +\frac{\beta\sqrt t}{2}
 +\frac{\beta^2}{4\lambda}
 +\beta\\
 &=
 \beta(\sqrt t+1).
\end{align*}
Recalling the definition of $\beta$, we conclude that
\[
 \Tr\bigl(M\rho_N^{(t)}\bigr)-h_t(M)
 \le
 (\sqrt t+1)\sqrt{\frac{d-1}{nm}}.
\]
This proves the proposition.
\end{proof}

\subsection{Finishing the proof}

It remains to choose the two extension levels so that $nm$ is large.

\begin{proof}[Proof of Theorem~\ref{thm:higher-main}]
Both trace distances are at most one.  If $N<4(t-1)$, then both
right-hand sides in \eqref{eq:higher-bos-main} and
\eqref{eq:higher-ex-main} exceed one, so the result is immediate.  We
may therefore suppose that $N\ge4(t-1)$.

For the bosonic bound, fix $\rho_N\in\D(\Sym^N\cH)$ and an operator
$M$ on $\Sym^t\cH$ satisfying $0\preceq M\preceq I$.  Define
\[
 m:=\left\lfloor\frac{N}{2(t-1)}\right\rfloor,
 \qquad
 n:=N-(t-1)m.
\]
Then $n,m\ge2$ and
\begin{equation}
 n\ge\frac N2,
 \qquad
 m\ge\frac{N}{4(t-1)},
 \qquad
 nm\ge\frac{N^2}{8(t-1)}.
 \label{eq:one-tminusone-product}
\end{equation}
Proposition~\ref{prop:one-tminusone-rounding}
and \eqref{eq:one-tminusone-product} give
\begin{align*}
 \Tr\bigl(M\rho_N^{(t)}\bigr)-h_t(M)
 &\le
 (\sqrt t+1)\sqrt{\frac{d-1}{nm}}\\
 &\le
 2\sqrt2\,
 \frac{(\sqrt t+1)\sqrt{(t-1)(d-1)}}{N}\\
 &<
 \frac{4t\sqrt{d-1}}{N}.
\end{align*}
Here the last inequality follows from
$(\sqrt t+1)^2\le2(t+1)$, which gives
$\sqrt{t-1}(\sqrt t+1)<\sqrt2\,t$.
Taking the supremum over the witnesses $0\preceq M\preceq I$ and using
Lemma~\ref{lem:trace-distance-duality-higher} gives the same bound for
$\Dtr(\rho_N^{(t)},\SEPH^{(t)}(d))$.  Taking the supremum over the
bosonic states $\rho_N$ therefore proves
\[
 \DHtr\Bigl(
 \BOS_N^{(t)}(d),\,
 \SEPH^{(t)}(d)
 \Bigr)
 \le
 \frac{4t\sqrt{d-1}}{N}.
\]

For the exchangeable bound, let $\sigma_N$ be exchangeable on
$(\C^d)^{\ot N}$.  By Lemma~\ref{lem:symmetric-purification}, it has a
symmetric purification
$\widetilde\rho_N\in\D(\Sym^N(\C^d\ot\C^d))$.  Applying the bosonic
bound just proved, now in one-site dimension $d^2$, gives a probability
measure $\mu$ on unit vectors $\phi\in\C^d\ot\C^d$ such that
\[
 \Dtr\left(
  \widetilde\rho_N^{(t)},
 \int\ketbra\phi^{\ot t}\,d\mu(\phi)
 \right)
 \le
 \frac{4t\sqrt{d^2-1}}{N}.
\]
For $\tau_\phi:=\Tr_E\ketbra\phi$, sitewise partial trace maps
$\ketbra\phi^{\ot t}$ to $\tau_\phi^{\ot t}$ and
$\widetilde\rho_N^{(t)}$ to $\sigma_N^{(t)}$.  Hence contractivity of
trace distance gives
\[
 \Dtr\left(
  \sigma_N^{(t)},
 \int\tau_\phi^{\ot t}\,d\mu(\phi)
 \right)
 \le
 \frac{4t\sqrt{d^2-1}}{N}.
\]
The integral belongs to $\SEPP^{(t)}(d)$.  Taking the supremum over
$\sigma_N$ proves \eqref{eq:higher-ex-main} and completes the proof.
\end{proof}

\section{Applications of the optimal de Finetti theorem}
\label{sec:optimal-definetti-applications}
\label{sec:trace-algorithms}

The optimal trace-norm theorem has three applications whose resource
bounds are subexponential in the local dimension at fixed accuracy.  The
disentangler construction uses the statement that every two-sided
Bose-symmetric extendible state is close to separable.  The same
two-sided extension relaxation approximates Best Separable State, and
its error guarantee also gives an algorithm for trace-norm separability
testing.

\subsection{Subexponential disentanglers}
\label{sec:disentanglers}

Recall that a channel
\[
 \Lambda:\D(\C^D)\longrightarrow\D(\C^d\ot\C^d)
\]
is an $(\epsilon,\delta)$-\emph{disentangler} if~\cite{ABDFS08}
\begin{align}
 \sup_{\rho\in\D(\C^D)}
 \Dtr\bigl(\Lambda(\rho),\SEP(d,d)\bigr)
 &\le\epsilon,
 \label{eq:disentangler-soundness}\\
 \sup_{\sigma\in\SEP(d,d)}
 \inf_{\rho\in\D(\C^D)}\Dtr\bigl(\Lambda(\rho),\sigma\bigr)
 &\le\delta.
 \label{eq:disentangler-coverage}
\end{align}
The first condition is soundness: every output is close to separable.
The second is coverage: every separable state is close to some output.
A sufficiently small and efficiently implementable disentangler would
simulate two unentangled quantum proofs using one ordinary QMA witness.
The original Watrous disentangler conjecture predicts
$\log D=\Omega(d)$ for fixed $\epsilon,\delta$ satisfying
$\epsilon+\delta<1$~\cite{ABDFS08}. The construction below therefore disproves the original Watrous disentangler
conjecture.

\begin{theorem}[Subexponential disentanglers]
\label{thm:disentangler-main}
\label{thm:disentangler-restated}
Let $d\ge2$ and $n\ge1$, and set $\cH_A=\cH_B=\C^d$.  There is a
channel
\[
 \Lambda_{d,n}:
 \D\bigl(\cH_A^{\ot n}\ot\cH_B^{\ot n}\bigr)
 \longrightarrow
 \D(\cH_A\ot\cH_B)
\]
which is an $(\epsilon_{d,n},0)$-disentangler, where
\begin{equation}
 \epsilon_{d,n}
 :=
 \min\left\{1,\frac{\sqrt{d-1}}{2n}\right\}.
 \label{eq:disentangler-quantitative-error}
\end{equation}
Consequently, for every fixed $\epsilon\in(0,1)$, there is an
$(\epsilon,0)$-disentangler with input dimension $D$ satisfying
$\log D=O_\epsilon(\sqrt d\log d)$.  In particular, one may take
$(\epsilon,\delta)=(1/4,0)$.
\end{theorem}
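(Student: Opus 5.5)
The plan is to realize $\BEXT_{n,n}(d,d)$ exactly as the image of an explicit channel, as sketched in the technical overview. Let $\Pi_A$ and $\Pi_B$ be the symmetrizing projectors $\Pi_{\mathrm{sym}}^{(n)}$ onto $\Sym^n(\cH_A)$ and $\Sym^n(\cH_B)$, and put $\Pi:=\Pi_A\ot\Pi_B$. Fix a product state $\sigma_0=\ketbra{e_1}\ot\ketbra{e_1}$. Define
\[
 \Lambda_{d,n}(\rho)
 :=
 \Tr_{A_2\cdots A_nB_2\cdots B_n}\bigl(\Pi\rho\Pi\bigr)
 +\Tr\bigl((I-\Pi)\rho\bigr)\,\sigma_0 .
\]
This map is a composition of the two-outcome measurement $\{\Pi,I-\Pi\}$, a conditional state preparation, and a partial trace. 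It is therefore completely positive and trace preserving.

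For soundness, I will write $p:=\Tr(\Pi\rho)$. If $p>0$, then $\omega:=\Pi\rho\Pi/p$ lies in $\D(\Sym^n(\cH_A)\ot\Sym^n(\cH_B))$, so $\omega_{A_1B_1}\in\BEXT_{n,n}(d,d)$ by \eqref{eq:two-sided-bext-definition}. Theorem~\ref{thm:two-sided-bose-extension} with $m=n$ then gives a separable $\tau$ with $\Dtr(\omega_{A_1B_1},\tau)\le\frac12\sqrt{(d-1)/n^2}=\sqrt{d-1}/(2n)$. The output is $\Lambda_{d,n}(\rho)=p\,\omega_{A_1B_1}+(1-p)\sigma_0$, and the state $p\tau+(1-p)\sigma_0$ is separable. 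Joint convexity of trace distance bounds $\Dtr(\Lambda_{d,n}(\rho),\SEP)$ by $p\sqrt{d-1}/(2n)\le\sqrt{d-1}/(2n)$. The trivial bound $1$ supplies the $\min$ in \eqref{eq:disentangler-quantitative-error}.

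For exact coverage, I take $\sigma=\sum_k p_k\ketbra{x_k}\ot\ketbra{y_k}\in\SEP(d,d)$ and use the input
\[
 \rho:=\sum_k p_k\ketbra{x_k}^{\ot n}\ot\ketbra{y_k}^{\ot n}.
\]
Each tensor power is fixed by $\Pi$, so no mass is rejected. The $A_1B_1$ marginal of each term is $\ketbra{x_k}\ot\ketbra{y_k}$, hence $\Lambda_{d,n}(\rho)=\sigma$ and $\delta=0$.

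For the asymptotic claim, given a fixed $\epsilon\in(0,1)$ I choose $n:=\lceil\sqrt{d-1}/(2\epsilon)\rceil$, which makes $\epsilon_{d,n}\le\epsilon$. The input dimension is $D=d^{2n}$, so $\log D=2n\log d=O_\epsilon(\sqrt d\log d)$. Taking $\epsilon=1/4$ gives the stated special case.

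No step is a real obstacle, since the heavy lifting is done by Theorem~\ref{thm:two-sided-bose-extension}. The points that need care are showing the channel is CPTP despite the rejection branch, and handling the normalization by $p$ (including $p=0$) in the convexity step.
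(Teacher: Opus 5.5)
Your proposal is correct and follows the paper's argument: your channel is literally the paper's $\Lambda_{d,n}$, since the $A_1B_1$ marginal of the paper's replacement state $\ketbra{a_0}^{\ot n}\ot\ketbra{b_0}^{\ot n}$ is your $\sigma_0$, and the coverage argument and the choice $n=\lceil\sqrt{d-1}/(2\epsilon)\rceil$ also match. The one cosmetic difference is in the soundness step. The paper replaces rejected mass inside $\Sym^n(\cH_A)\ot\Sym^n(\cH_B)$ before tracing out, so every output lies directly in $\BEXT_{n,n}(d,d)$ and no convexity step or $p=0$ case is needed.
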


\begin{proof}
Let $\Pi_n^A$ and $\Pi_n^B$ be the orthogonal projections onto
$\Sym^n(\cH_A)$ and $\Sym^n(\cH_B)$, respectively, and set
\[
 P_n:=\Pi_n^A\ot\Pi_n^B.
\]
Choose unit vectors $a_0\in\cH_A$ and $b_0\in\cH_B$, and let
\[
 \omega_n:=\ketbra{a_0}^{\ot n}\ot\ketbra{b_0}^{\ot n}.
\]
The trace-preserving projection onto
$\Sym^n(\cH_A)\ot\Sym^n(\cH_B)$ is the channel
\begin{equation}
 \mathcal P_n(\rho)
 :=
 P_n\rho P_n
 +
 \Tr\bigl((I-P_n)\rho\bigr)\omega_n.
 \label{eq:symmetric-projection-channel}
\end{equation}
Indeed, the first term retains the symmetric component, while the
second replaces the rejected component by the fixed pure product state
$\omega_n$ in the same subspace.  Define the disentangler by retaining
the $A_1B_1$ registers:
\begin{equation}
 \Lambda_{d,n}(\rho)
 :=
 \Tr_{A_{2:n}B_{2:n}}\mathcal P_n(\rho)
 =
 \bigl(\mathcal P_n(\rho)\bigr)_{A_1B_1}.
 \label{eq:partial-trace-disentangler}
\end{equation}

The state $\mathcal P_n(\rho)$ is supported on
$\Sym^n(\cH_A)\ot\Sym^n(\cH_B)$.  Hence every output of
$\Lambda_{d,n}$ belongs to
$\BEXT_{n,n}(d,d)$, and
Theorem~\ref{thm:two-sided-bose-extension} gives
\begin{equation}
 \Dtr\bigl(\Lambda_{d,n}(\rho),\SEP(d,d)\bigr)
 \le
 \epsilon_{d,n}.
 \label{eq:partial-trace-disentangler-soundness}
\end{equation}
Coverage is exact: if
$\sigma=\sum_jp_j\ketbra{a_j}\ot\ketbra{b_j}$ is separable, then
\[
 \widetilde\sigma_n
 :=\sum_jp_j\ketbra{a_j}^{\ot n}\ot\ketbra{b_j}^{\ot n}
\]
is supported on the range of $P_n$.  Therefore
$\mathcal P_n(\widetilde\sigma_n)=\widetilde\sigma_n$ and
$\Lambda_{d,n}(\widetilde\sigma_n)=\sigma$.

For fixed $\epsilon\in(0,1)$, choose
\[
 n:=\left\lceil\frac{\sqrt{d-1}}{2\epsilon}\right\rceil.
\]
The input space has dimension $D_{d,n}=d^{2n}$, so
\[
 \log D_{d,n}
 =2n\log d
 =\Theta_{\epsilon}(\sqrt d\log d).
\]
This proves Theorem~\ref{thm:disentangler-main}.  Taking
$n=\lceil2\sqrt{d-1}\rceil$ gives the stated $(1/4,0)$ parameters.
\end{proof}

\subsection{Best Separable State without perfect completeness}
\label{sec:qma2-applications}

Let $\cH_A\cong\C^{d_A}$ and $\cH_B\cong\C^{d_B}$.  Best Separable
State (BSS) asks for the largest expectation of an operator
$0\preceq M\preceq I$ on a separable state.  When $M$ is the accepting
operator of a $\mathrm{QMA}(2)$ verifier, this is its largest
acceptance probability over two unentangled proofs.  For
$M\in\Herm(\cH_A\ot\cH_B)$, write
\begin{equation}
 h_{\SEP}(M)
 :=
 \max_{\sigma\in\SEP(d_A,d_B)}\Tr(M\sigma)
 =
 \max_{\substack{\norm{x}=1\\\norm{y}=1}}
 \bra{x\ot y}M\ket{x\ot y}.
 \label{eq:hsep-application}
\end{equation}
For an accepting operator $0\preceq M\preceq I$ and $n\ge1$, set
\begin{equation}
\begin{aligned}
 M^{[n,n]}
 &:=
 \frac1{n^2}
 \sum_{i=1}^n\sum_{j=1}^n M_{A_iB_j},
 \\
 h_{\BEXT_{n,n}}(M)
 &:=
 \max_{\rho\in\BEXT_{n,n}(d_A,d_B)}
 \Tr(M\rho)
 =
 \lambda_{\max}\bigl(M^{[n,n]}\bigr),
\end{aligned}
 \label{eq:two-sided-bss-relaxation}
\end{equation}
on $\Sym^n(\cH_A)\ot\Sym^n(\cH_B)$, with each summand acting as the
identity on all unlisted tensor factors.  The last equality follows from
permutation symmetry within each block.

\begin{theorem}[Subexponential BSS]
\label{thm:bss-main}
\label{thm:bss-two-sided-algorithm}
Let $0\preceq M\preceq I$ act on $\cH_A\ot\cH_B$.  For every $n\ge1$,
\begin{equation}
 h_{\SEP}(M)
 \le h_{\BEXT_{n,n}}(M)
 \le h_{\SEP}(M)
 +\frac{\sqrt{\min\{d_A,d_B\}-1}}{2n}.
 \label{eq:bss-integrality-gap}
\end{equation}
Consequently, for an explicitly given accepting operator $M$ on
$\C^d\ot\C^d$,
$h_{\SEP}(M)$ can be approximated to additive error
$\epsilon\in(0,1]$ deterministically.  Up to a factor polynomial in
the bit complexity of $M$, the running time is
\begin{equation}
 \exp\left(
 O\left(
 \frac{\sqrt d}{\epsilon}\log(2+\epsilon\sqrt d)
 \right)
 \right).
 \label{eq:bss-runtime-epsilon}
\end{equation}
The promise $h_{\SEP}(M)\ge c$ versus $h_{\SEP}(M)\le s$ is decidable within the same bound with $\epsilon=\Theta(c-s)$.
\end{theorem}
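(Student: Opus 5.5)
The plan has three parts: the sandwich \eqref{eq:bss-integrality-gap}, the algorithmic consequence by solving an eigenvalue problem, and the promise version.

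\textbf{The sandwich.} The lower inequality follows from the containment $\SEP(d_A,d_B)\subseteq\BEXT_{n,n}(d_A,d_B)$ in \eqref{eq:sep-in-two-sided-bext}. Maximizing $\Tr(M\cdot)$ over the larger set can only increase the value.

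For the upper inequality, fix a maximizer $\rho\in\BEXT_{n,n}(d_A,d_B)$ of $\Tr(M\rho)$. Theorem~\ref{thm:two-sided-bose-extension} with $m=n$ gives a separable $\sigma$ with $\Dtr(\rho,\sigma)\le\frac12\sqrt{\min\{d_A,d_B\}-1}/n$. Since $0\preceq M\preceq I$, the trace-distance duality in Lemma~\ref{lem:compact-convex-distance-duality} gives $\Tr(M\rho)-\Tr(M\sigma)\le\Dtr(\rho,\sigma)$. Combined with $\Tr(M\sigma)\le h_{\SEP}(M)$, this proves the right-hand inequality.

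Alternatively, Corollary~\ref{cor:rectangular-integrality-gap} applied to $M-\frac12 I$, which has $\norm{\cdot}_\infty\le\frac12$, gives the same constant directly. The identity $h_{\BEXT_{n,n}}(M)=\lambda_{\max}(M^{[n,n]})$ follows from \eqref{eq:rectangular-lift-expectation} together with the fact that every state on $\Sym^n\ot\Sym^n$ is a valid extension.

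\textbf{The algorithm.} Choose
\[
 n=\left\lceil\frac{\sqrt{d-1}}{\epsilon}\right\rceil,
\]
so that the error is at most $\epsilon/2$. Then compute $\lambda_{\max}(M^{[n,n]})$ to additive accuracy $\epsilon/2$ on the space $\Sym^n(\C^d)\ot\Sym^n(\C^d)$. I would build the matrix of $M^{[n,n]}$ explicitly in the occupation-number basis. Each matrix entry is a sum of $\mathrm{poly}(d)$ terms with computable rational combinatorial coefficients. The top eigenvalue can then be approximated deterministically, for instance by bisection with exact $LDL^\top$ inertia tests or by a standard Hermitian eigensolver with rounding, in time polynomial in the matrix dimension and the bit complexity. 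This requires no SDP.

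The matrix dimension is
\[
 D'=\binom{d+n-1}{n}^2,
\]
and the main estimate to carry out is
\[
 \log\binom{d+n-1}{n}=O\Bigl(n\log\bigl(2+\tfrac{d}{n}\bigr)\Bigr).
\]
With $n\approx\sqrt d/\epsilon$ we have $d/n\approx\epsilon\sqrt d$, which yields \eqref{eq:bss-runtime-epsilon}. One must also handle the regime $n\gtrsim d$, where the bound becomes $O(n)$. This is consistent with $\log(2+\epsilon\sqrt d)\ge\log 2$.

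\textbf{The promise problem.} Take $\epsilon=(c-s)/3$ and compare the estimate with the threshold $(c+s)/2$.

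I expect the main obstacle to be bookkeeping rather than mathematics: bounding the binomial coefficient uniformly in both regimes, and justifying deterministic polynomial-time eigenvalue approximation with bit-complexity dependence.
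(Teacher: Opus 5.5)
Your proposal is correct and follows the paper's proof essentially step for step: the inclusion $\SEP\subseteq\BEXT_{n,n}$ gives the lower bound, and Theorem~\ref{thm:two-sided-bose-extension} with $m=n$ together with $0\preceq M\preceq I$ gives the upper bound (your alternative via Corollary~\ref{cor:rectangular-integrality-gap} applied to $M-\tfrac12 I$ is equally valid). The algorithm then computes $\lambda_{\max}(M^{[n,n]})$ to precision $\epsilon/2$ at $n=\lceil\sqrt{d-1}/\epsilon\rceil$ and bounds $\binom{d+n-1}{n}^2$, and the promise version follows. One small correction to your implementation remark: in the normalized occupation-number basis the entries contain factors like $\sqrt{n_{a'}(n_a+1)}$, not rationals. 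For exact inertia tests you should therefore work with the congruent rational pencil $K-\mu G$ in the unnormalized basis, where $G$ is the diagonal Gram matrix of multinomial coefficients; alternatively, use a rounded Hermitian eigensolver and control the rounding error with Weyl's inequality.
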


\begin{proof}
Write
\[
 \delta_n:=
 \frac{\sqrt{\min\{d_A,d_B\}-1}}{2n}.
\]
The inclusion \eqref{eq:sep-in-two-sided-bext} gives the lower bound in
\eqref{eq:bss-integrality-gap}.  For the upper bound, every
$\rho\in\BEXT_{n,n}(d_A,d_B)$ is within $\delta_n$ of a
separable state by Theorem~\ref{thm:two-sided-bose-extension}; since
$0\preceq M\preceq I$,
$\Tr(M\rho)\le h_{\SEP}(M)+\delta_n$.

The two-sided lift acts on a space of dimension
\[
 L_{d_A,d_B,n}=
 \binom{d_A+n-1}{n}
 \binom{d_B+n-1}{n}.
\]
Its entries are computable in time polynomial in $L_{d_A,d_B,n}$ and the
bit complexity of $M$.  For $d_A=d_B=d$, take
$n=\lceil\sqrt{d-1}/\epsilon\rceil$ and compute its largest eigenvalue
to precision $\epsilon/2$.  Then $\delta_n\le\epsilon/2$, and the standard
binomial estimate gives
\[
 \log L_{d,d,n}
 =
 O\left(
 \frac{\sqrt d}{\epsilon}\log(2+\epsilon\sqrt d)
 \right).
\]
This proves \eqref{eq:bss-runtime-epsilon}; taking
$\epsilon=\Theta(c-s)$ proves the promise statement.
\end{proof}

The algorithm of Barak, Kothari, and Steurer for explicitly given BSS
instances assumes perfect completeness, meaning $h_{\SEP}(M)=1$ in the
yes case.  Theorem~\ref{thm:bss-main} removes this assumption while
retaining a running time subexponential in the local
dimension~\cite[Remark~1.4]{BKS17}.

\subsection{Trace-norm separability testing}
\label{sec:separability-testing}

Given a bipartite state $\rho$, trace-norm separability testing asks
whether $\rho$ is separable or is at trace distance at least $\epsilon$
from every separable state.  This is the weak-membership problem for the
separable set in trace distance.  The BSS algorithm from the preceding
subsection gives a semidefinite relaxation that approximates
$h_{\SEP}(M)$ with an additive error independent of $M$.  General
reductions between weak optimization and weak membership can lose
polynomial factors in the
dimension~\cite{GLS93}.  Harrow and Montanaro proved the direction needed
here with only a constant-factor loss in
accuracy~\cite[Proposition~16]{HM13}.  Applying their reduction to our
BSS approximation gives the claimed algorithm.  By duality
(Lemma~\ref{lem:compact-convex-distance-duality}),
\begin{equation}
 \Dtr\bigl(\rho,\SEP(d_A,d_B)\bigr)
 =\max_{0\preceq M\preceq I}
 \left\{
 \Tr(M\rho)-h_{\SEP}(M)
 \right\}.
 \label{eq:sep-distance-dual}
\end{equation}

For every operator $0\preceq M\preceq I$,
Theorem~\ref{thm:bss-main} gives
\begin{equation}
 h_{\SEP}(M)
 \le h_{\BEXT_{n,n}}(M)
 \le h_{\SEP}(M)+\delta_n,
 \qquad
 \delta_n:=
 \frac{\sqrt{\min\{d_A,d_B\}-1}}{2n}.
 \label{eq:sep-support-approximation}
\end{equation}
Replacing $h_{\SEP}$ by this relaxation in
\eqref{eq:sep-distance-dual}, define
\begin{equation}
 \widehat D_n(\rho)
 :=\max_{0\preceq M\preceq I}
 \left\{
 \Tr(M\rho)-h_{\BEXT_{n,n}}(M)
 \right\}.
 \label{eq:dual-separability-relaxation}
\end{equation}

\begin{proposition}[From weak optimization to weak membership]
\label{prop:dual-separability-relaxation}
For every state $\rho\in\D(\cH_A\ot\cH_B)$ and every $n\ge1$,
\begin{equation}
 \Dtr\bigl(\rho,\SEP(d_A,d_B)\bigr)
 -\delta_n
 \le \widehat D_n(\rho)
 \le \Dtr\bigl(\rho,\SEP(d_A,d_B)\bigr).
 \label{eq:dual-separability-sandwich}
\end{equation}
In particular, $\widehat D_n(\rho)=0$ whenever $\rho$ is separable.  Moreover, \eqref{eq:dual-separability-relaxation} is the semidefinite program
\begin{equation}
\begin{aligned}
 \textnormal{maximize}\quad & \Tr(M\rho)-t\\
 \textnormal{subject to}\quad
 &0\preceq M\preceq I,\\
 &M^{[n,n]}
 \preceq
 tI_{\Sym^n(\cH_A)\ot\Sym^n(\cH_B)},
\end{aligned}
\label{eq:dual-separability-sdp}
\end{equation}
with variables $M\in\Herm(\cH_A\ot\cH_B)$ and $t\in\mathbb R$.
\end{proposition}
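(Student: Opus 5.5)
The plan is to compare the two optimization problems \eqref{eq:sep-distance-dual} and \eqref{eq:dual-separability-relaxation} term by term, using the sandwich \eqref{eq:sep-support-approximation} pointwise in $M$, and then to identify the relaxation with an SDP by an epigraph reformulation of the largest eigenvalue.

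\emph{The sandwich.} Fix $\rho$ and $n$. For every $0\preceq M\preceq I$, \eqref{eq:sep-support-approximation} gives
\[
 \Tr(M\rho)-h_{\SEP}(M)-\delta_n
 \le
 \Tr(M\rho)-h_{\BEXT_{n,n}}(M)
 \le
 \Tr(M\rho)-h_{\SEP}(M).
\]
Taking the maximum over the compact witness set $\{0\preceq M\preceq I\}$ preserves both inequalities. The outer expressions then become $\Dtr(\rho,\SEP(d_A,d_B))-\delta_n$ and $\Dtr(\rho,\SEP(d_A,d_B))$ by \eqref{eq:sep-distance-dual}, which is Lemma~\ref{lem:compact-convex-distance-duality} applied to the compact convex set $\SEP(d_A,d_B)$. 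The middle expression is $\widehat D_n(\rho)$ by definition. All maxima are attained, since $h_{\BEXT_{n,n}}=\lambda_{\max}(M^{[n,n]})$ is continuous in $M$.

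\emph{The separable case.} If $\rho$ is separable, the upper bound gives $\widehat D_n(\rho)\le 0$. Taking $M=0$ gives the value $0-\lambda_{\max}(0)=0$, so $\widehat D_n(\rho)\ge 0$. Hence $\widehat D_n(\rho)=0$.

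\emph{The SDP form.} By \eqref{eq:two-sided-bss-relaxation}, $h_{\BEXT_{n,n}}(M)=\lambda_{\max}(M^{[n,n]})$. For fixed $M$, this equals $\min\{t: M^{[n,n]}\preceq tI\}$. Consider maximizing $\Tr(M\rho)-t$ jointly over $(M,t)$ subject to the constraints in \eqref{eq:dual-separability-sdp}. For each fixed $M$, the optimal $t$ is exactly $\lambda_{\max}(M^{[n,n]})$, so the joint optimum equals \eqref{eq:dual-separability-relaxation}. The map $M\mapsto M^{[n,n]}$ is linear: it is the compression to $\Sym^n(\cH_A)\ot\Sym^n(\cH_B)$ of an average of embedded copies of $M$. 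The constraints are therefore linear matrix inequalities, and the problem is a semidefinite program.

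There is no real obstacle. The only care needed is to confirm that $h_{\BEXT_{n,n}}$ is the top eigenvalue of the compressed lift, which was already recorded in \eqref{eq:two-sided-bss-relaxation}, and that the witness set is compact, so the maxima exist.
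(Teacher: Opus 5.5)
Your proposal is correct and follows the paper's proof essentially step for step: the pointwise sandwich maximized over $0\preceq M\preceq I$, the witness $M=0$ for the separable case, and the epigraph reformulation of $\lambda_{\max}(M^{[n,n]})$ together with linearity of $M\mapsto M^{[n,n]}$. Your remark that the maxima are attained, by continuity and compactness, is a harmless detail that the paper leaves implicit.
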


\begin{proof}
For every operator $0\preceq M\preceq I$,
\eqref{eq:sep-support-approximation} gives
\[
 \Tr(M\rho)-h_{\SEP}(M)-\delta_n
 \le \Tr(M\rho)-h_{\BEXT_{n,n}}(M)
 \le \Tr(M\rho)-h_{\SEP}(M).
\]
Maximizing over $M$ and applying \eqref{eq:sep-distance-dual} proves
\eqref{eq:dual-separability-sandwich}.  If $\rho$ is separable, the
upper bound gives $\widehat D_n(\rho)\le0$, while $M=0$ gives the reverse
inequality.  Hence $\widehat D_n(\rho)=0$.

By \eqref{eq:two-sided-bss-relaxation},
$h_{\BEXT_{n,n}}(M)
=\lambda_{\max}(M^{[n,n]})$.  For any Hermitian
operator $B$, the inequality $B\preceq tI$ holds exactly when
$\lambda_{\max}(B)\le t$.  Therefore,
\[
 M^{[n,n]}
 \preceq
 tI_{\Sym^n(\cH_A)\ot\Sym^n(\cH_B)}
 \quad\Longleftrightarrow\quad
 t\ge h_{\BEXT_{n,n}}(M).
\]
For fixed $M$, the objective $\Tr(M\rho)-t$ decreases as $t$ increases,
so its maximum is attained at the smallest feasible value,
$t=h_{\BEXT_{n,n}}(M)$.  Thus eliminating $t$ from
\eqref{eq:dual-separability-sdp} recovers exactly
\eqref{eq:dual-separability-relaxation}.  Finally,
$M\mapsto M^{[n,n]}$ is linear, so the matrix in the displayed
semidefinite constraint depends linearly on the optimization variables.
\end{proof}

\begin{theorem}[Subexponential trace-norm separability algorithm]
\label{thm:trace-norm-separability-algorithm}
Let $\rho$ be an explicitly given state on $\C^d\ot\C^d$.  For every
$\epsilon\in(0,1]$, the distance $\Dtr(\rho,\SEP(d,d))$ can be
approximated to additive error $\epsilon$ deterministically.  Up to a
factor polynomial in the bit complexity of $\rho$, the running time is
\begin{equation}
 \exp\left(
 O\left(
 \frac{\sqrt d}{\epsilon}
 \log(2+\epsilon\sqrt d)
 \right)
 \right).
 \label{eq:trace-norm-separability-runtime}
\end{equation}
In the same running time one can distinguish
\begin{equation}
 \rho\in\SEP(d,d)
 \qquad\text{from}\qquad
 \Dtr(\rho,\SEP(d,d))\ge\epsilon.
 \label{eq:trace-norm-separability-promise}
\end{equation}
\end{theorem}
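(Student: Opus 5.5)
The plan is to solve the semidefinite program \eqref{eq:dual-separability-sdp} of Proposition~\ref{prop:dual-separability-relaxation} with $d_A=d_B=d$ and a level $n$ chosen so that $\delta_n\le\epsilon/2$. Concretely, take $n:=\lceil\sqrt{d-1}/\epsilon\rceil$, which gives $\delta_n=\sqrt{d-1}/(2n)\le\epsilon/2$. By the sandwich \eqref{eq:dual-separability-sandwich},
\[
 \Dtr\bigl(\rho,\SEP(d,d)\bigr)-\frac{\epsilon}{2}
 \le\widehat D_n(\rho)
 \le\Dtr\bigl(\rho,\SEP(d,d)\bigr).
\]
Any approximation of $\widehat D_n(\rho)$ to additive accuracy $\epsilon/4$ is therefore an additive-$\epsilon$ approximation of the trace distance; in fact, adding $\epsilon/4$ to the computed value centers the error.

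Next, bound the cost of the SDP. The variables are $M\in\Herm(\C^d\ot\C^d)$, with $d^4$ real parameters, and a scalar $t$. The constraints are $0\preceq M\preceq I$ on a $d^2$-dimensional space and the linear matrix inequality $M^{[n,n]}\preceq tI$ on a space of dimension
\[
 L=\binom{d+n-1}{n}^{2}.
\]
The linear map $M\mapsto M^{[n,n]}$ can be written down explicitly in a basis of symmetric tensors, with entries computable in time $\mathrm{poly}(L)$. The feasible region is bounded: $\norm{M}_\infty\le1$, and $t$ may be restricted to $[-1,1]$ because the optimal $t$ equals $\lambda_{\max}(M^{[n,n]})\in[0,1]$. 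It also contains a ball, for example around $M=I/2$, $t=3/4$, of radius polynomially related to $1$. The ellipsoid method, or an interior-point method, therefore solves the SDP to accuracy $\epsilon/4$ deterministically in time polynomial in $L$, in $\log(1/\epsilon)$, and in the bit complexity of $\rho$.

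The standard binomial estimate already used in the proof of Theorem~\ref{thm:bss-main} gives
\[
 \log L=O\left(\frac{\sqrt d}{\epsilon}\log(2+\epsilon\sqrt d)\right),
\]
which yields \eqref{eq:trace-norm-separability-runtime}. For the promise problem \eqref{eq:trace-norm-separability-promise}, run the same procedure with accuracy $\epsilon/3$ and threshold at $\epsilon/2$. If $\rho$ is separable, then $\widehat D_n(\rho)=0$, so the output is at most $\epsilon/3$. If $\Dtr(\rho,\SEP(d,d))\ge\epsilon$, the output is at least $2\epsilon/3$. This separates the two cases within the same running time.

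The main obstacle is the numerical step. Solving an SDP in polynomial time requires bounded feasible sets, an interior point, and control over bit sizes, and one must make sure the precision needed does not blow up the runtime. This is handled by the explicit bounds $\norm{M}_\infty\le1$, $\abs t\le1$, and the strictly feasible point above. The resulting overhead is polynomial in $L$ and $\log(1/\epsilon)$, and is absorbed into the stated exponential bound.
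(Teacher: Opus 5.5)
Your proposal is correct and follows essentially the same route as the paper. Both arguments take the level $n=\lceil\sqrt{d-1}/\epsilon\rceil$, use the sandwich from Proposition~\ref{prop:dual-separability-relaxation}, solve the SDP with a constraint of dimension $\binom{d+n-1}{n}^2$ bounded by the binomial estimate, and decide the promise problem with the same $\epsilon/3$-accuracy threshold test at $\epsilon/2$. The differences are cosmetic: you solve to precision $\epsilon/4$ rather than $\epsilon/2$, and you spell out the boundedness and strict-feasibility conditions that the paper leaves to ``standard semidefinite optimization''.
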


\begin{proof}
Choose
\[
 n:=\left\lceil\frac{\sqrt{d-1}}{\epsilon}\right\rceil.
\]
Then $\delta_n\le\epsilon/2$, and
Proposition~\ref{prop:dual-separability-relaxation} gives
\begin{equation}
 \Dtr(\rho,\SEP(d,d))-\delta_n
 \le\widehat D_n(\rho)
 \le \Dtr(\rho,\SEP(d,d)).
 \label{eq:separability-algorithm-sandwich}
\end{equation}
Solve \eqref{eq:dual-separability-sdp} to additive precision
$\epsilon/2$, and denote the resulting value by
$\widetilde D_n(\rho)$.  By \eqref{eq:separability-algorithm-sandwich},
\[
 \left|\widetilde D_n(\rho)-\Dtr(\rho,\SEP(d,d))\right|
 \le\frac{\epsilon}{2}+\delta_n
 \le\epsilon.
\]
Thus $\widetilde D_n(\rho)$ is the required approximation.

The semidefinite constraint
$M^{[n,n]}\preceq tI_{\Sym^n(\C^d)\ot\Sym^n(\C^d)}$ has matrix dimension
\[
 L_{d,d,n}=\binom{d+n-1}{n}^{2}.
\]
The remaining constraints $0\preceq M\preceq I$ have dimension $d^2$.
The construction and dimension estimate in the proof of
Theorem~\ref{thm:bss-main}, together with standard
semidefinite optimization, therefore give
\eqref{eq:trace-norm-separability-runtime}.

For the promise problem, run the additive approximation just proved with
accuracy $\epsilon/3$.  Its output is at most $\epsilon/3$ when $\rho$ is
separable and at least $2\epsilon/3$ when
$\Dtr(\rho,\SEP(d,d))\ge\epsilon$.  Comparing the output with $\epsilon/2$
distinguishes the two cases.  Replacing $\epsilon$ by $\epsilon/3$
changes the running time only by constant factors in the exponent.
\end{proof}

\section{A dimension-free bosonic Hilbert--Schmidt de Finetti theorem}
\label{sec:hs-definetti}

Brand\~ao and Christandl identified local-dimension dependence as a
crucial bottleneck for quantum de Finetti theorems and showed that the
linear barrier can be broken under one-way LOCC measurements, obtaining
only logarithmic dependence on the local dimension~\cite{BC12}.  In
Hilbert--Schmidt distance, the theorem below removes the local-dimension
dependence entirely for bosonic approximation.

\begin{theorem}[Dimension-free bosonic Hilbert--Schmidt de Finetti upper bound]
\label{thm:HS-main}
For every $N,d\ge2$,
\begin{equation}
 \DHHS\Bigl(
  \BOS_N^{(2)}(d),\,
  \SEPH^{(2)}(d)
 \Bigr)
 \le
 \min\left\{
 \frac{\sqrt{d-1}}{N-1},
 \frac{8}{\sqrt{N-1}}
 \right\}.
 \label{eq:HS-upper}
\end{equation}
For each fixed $d$, the first term in \eqref{eq:HS-upper} retains the
faster rate $O(\sqrt d/N)$.
\end{theorem}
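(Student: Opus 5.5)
The first term comes straight from the trace-norm theorem. Since $\norm{X}_2\le\norm{X}_1$, Theorem~\ref{thm:bos-main} gives, for every $\rho\in\BOS_N^{(2)}(d)$, a Hartree mixture $\tau$ with $\norm{\rho-\tau}_2\le\norm{\rho-\tau}_1\le\sqrt{d-1}/(N-1)$. All the work is in the dimension-free term $8/\sqrt{N-1}$, and I would follow the spectral-truncation route sketched in the technical overview.

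\emph{Reductions.} By convexity of $\SEPH^{(2)}(d)$ and of the Hilbert--Schmidt distance, it suffices to treat pure $\rho_N=\ketbra\psi$. Let $\rho_1,\rho_2$ be its marginals. Set $P:=\mathbf 1_{(1/N,\infty)}(\rho_1)$ and $Q:=I-P$. Then $\rank P<N$, since $\Tr\rho_1=1$, and $Q\rho_1Q\preceq Q/N$. Decompose $\Sym^2\cH$ as in \eqref{eq:symmetric-two-summands} with $\cH_0=P\cH$, $\cH_1=Q\cH$, and use the error splitting \eqref{eq:intro-HS-decomposition}.

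\emph{Head term.} By Fact~\ref{fact:symmetric-orthogonal-sum}, write $\psi=\sum_k\sqrt{p_k}\,\psi_k$, where $\psi_k\in\Sym^k(P\cH)\ot\Sym^{N-k}(Q\cH)$ has exactly $k$ sites in the head. The compression $(P\ot P)\rho_2(P\ot P)$ kills cross terms between different $k$, because each is an eigenspace of the number operator and compressing two sites to $P$ preserves the eigenvalue difference. It therefore equals $\sum_k p_k\frac{k(k-1)}{N(N-1)}\sigma_k$, where $\sigma_k$ is the two-site marginal of a bosonic $k$-site state on $P\cH$ (after tracing the $Q$ factor). Apply Theorem~\ref{thm:bos-main} in dimension $\rank P<N$ and level $k$ to get Hartree mixtures $\tau_k$ on $\Sym^2(P\cH)$ with $\norm{\sigma_k-\tau_k}_2\le\sqrt{N}/(k-1)$. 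Then
\[
\norm{\rho_2^{\mathrm{head}}-\tau_{\mathrm{head}}}_2\le\sum_kp_k\frac{k(k-1)}{N(N-1)}\cdot\frac{\sqrt N}{k-1}\le\frac{\sqrt N}{N-1}=O(N^{-1/2}).
\]

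\emph{Tail terms.} This is the main obstacle. Write $\rho_2$ in block form with respect to $\Sym^2(P\cH)$ and its complement $(P\cH\vee Q\cH)\oplus\Sym^2(Q\cH)$. By Fact~\ref{fact:positive-block-discard}, $\norm{\rho_2-\rho_2^{\mathrm{head}}}_2^2\le 2\norm{B}_\infty$, where $B$ is the complement block. So I need $\norm{B}_\infty=O(1/N)$. By Fact~\ref{fact:positive-block}, it suffices to bound the two diagonal compressions.

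For the mixed block, take a Schmidt decomposition of a unit $\phi\in P\cH\vee Q\cH$ with $Q$-side vectors orthonormal $q_i$. Then $\bra\phi\rho_2\ket\phi=\norm{C_\phi\psi}^2$ by \eqref{eq:contraction-marginals}, and bounding this by Cauchy--Schwarz gives roughly $2\sum_i s_i^2\norm{C_{q_i}\psi}^2\cdot O(1)$. Using $\norm{C_{q}\psi}^2=\bra q\rho_1\ket q\le1/N$, this is $O(1/N)$.

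For $\Sym^2(Q\cH)$, use the Takagi decomposition $\phi=\sum_i s_iq_i^{\ot2}$ (Fact~\ref{lem:takagi}) with $\sum s_i^2=1$. Then $C_\phi\psi=\sum_i s_iC_{q_i}C_{q_i}\psi$, and I would bound it by $\sum_i s_i\norm{C_{q_i}\psi}\cdot\sup\norm{C_{q}}$ over $\Sym^{N-1}$, combined with Bessel \eqref{eq:contraction-Bessel}, aiming for $O(1/N)$.

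The delicate point is getting $1/N$ rather than $1/\sqrt N$ here. I expect to need the Bessel-type inequality for orthonormal families together with the adjoint bound \eqref{eq:insertion-Bessel} rather than naive triangle inequalities. I would try to express $\norm{C_\phi\psi}^2$ via $\Tr\bigl((\ketbra{\phi})\rho_2\bigr)\le\Tr((Q\ot Q)\rho_2 (Q\ot Q))^{1/2}\cdot$(operator bound), and to use $\rho_2\preceq\frac{N}{N-1}\rho_1\ot I$-type estimates restricted to $Q$. Combined, this should give $\norm{B}_\infty\le c/N$ with $c$ a small constant.

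\emph{Residual trace.} The discarded weight is $w=1-\Tr\rho_2^{\mathrm{head}}\le\Tr B$. Set $\tau_{\mathrm{res}}:=w\int\ketbra{q}^{\ot2}\,dq$, the Haar average over unit $q\in Q\cH$. This equals $w$ times the normalized projector onto $\Sym^2(Q\cH)$. Its Hilbert--Schmidt norm is $w/\sqrt{\dim\Sym^2 Q\cH}$. If $\dim Q\cH$ is small this may fail, so instead I would bound $w$ itself: $w\le 2\Tr(Q\rho_1)$ is not small in general, so the Haar average must be over a large $Q\cH$. If $\rank Q$ is small, I would simply enlarge the head so that $\rank P\le N$ still holds and the tail handling still applies. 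The target is $\norm{\tau_{\mathrm{res}}}_2=O(N^{-1/2})$.

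Summing the three contributions and tracking constants should give $8/\sqrt{N-1}$. Taking the minimum with the first bound finishes the proof.
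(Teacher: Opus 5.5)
Your architecture matches the paper's: the pure-state reduction, the spectral cut of $\rho_1$ at $1/N$, the slice decomposition of the head with the $k-1$ cancellation against Theorem~\ref{thm:bos-main}, and discarding everything meeting $Q\cH$ via Fact~\ref{fact:positive-block-discard}. Your head estimate is fine.

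The genuine gap is the tail--tail estimate $\norm{\Pi_{QQ}\rho_2\Pi_{QQ}}_\infty=O(1/N)$. This is the heart of the dimension-free claim, and you have no working argument for it. The bound you write, $\norm{C_\phi\psi}\le\sum_i s_i\norm{C_{q_i}\psi}$, yields only $1$ (via Bessel) or $\sqrt{m_\phi/N}$ (via $\norm{C_{q_i}\psi}^2\le1/N$). Here the Takagi rank $m_\phi$ can be as large as $\dim Q\cH$, i.e.\ of order $d$.

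Your fallback, a bound of the type $\rho_2\preceq\frac N{N-1}\rho_1\ot I$ on the tail, is false and cannot be repaired. Take the $O(m)$-invariant state $\chi_q^{(m)}\in\Sym^{2q}(\C^m)$ from the proof of Theorem~\ref{thm:HS-lower}, with $N=2q<m$. Then $\rho_1=I/m$, so $Q=I$, yet $\bra{b_m}\rho_2\ket{b_m}=\frac{m+N-2}{m(N-1)}\ge\frac1{N-1}$, while $\bra{b_m}\rho_1\ot I\ket{b_m}=1/m$. Any estimate routed through $\rho_1\ot I$ is therefore dimension dependent; the $1/N$ must come from the symmetry of the remaining $N-2$ sites.

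You name the right tool, \eqref{eq:insertion-Bessel}, but it cannot be applied to the whole Takagi sum: its constant $(\ell+N-2)/(N-1)$ grows with the number $\ell$ of orthonormal directions. The missing step runs as follows.
\begin{itemize}
\item Test against a unit $\Xi\in\Sym^{N-2}\cH$ and write $\braket{\Xi}{C_{q_i}C_{q_i}\psi}=\braket{C_{q_i}^*\Xi}{C_{q_i}\psi}$.
\item Order the Takagi weights $\mu_1\ge\mu_2\ge\cdots$ (your $s_i^2$) and split the sum at $i=N$.
\item For $i\le N$, put the weights on $\norm{C_{q_i}\psi}^2\le1/N$ and use \eqref{eq:insertion-Bessel} with $\ell=N$, which has constant $2$.
\item For $i>N$, use $\mu_i\le\mu_{N+1}\le1/(N+1)$ and put the weights on the adjoint side. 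The exact identity \eqref{eq:contraction-adjoint-norm} and Bessel give $\sum_{i>N}\mu_i\norm{C_{q_i}^*\Xi}^2<2/(N-1)$. Leave $\sum_i\norm{C_{q_i}\psi}^2\le1$ unweighted.
\end{itemize}
This yields $\norm{C_\phi\psi}\le\sqrt{2/N}+\sqrt{2/(N-1)}$, hence the block bound $8/(N-1)$.

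In the mixed block no split is needed, but your ``$O(1)$'' factor holds only because the Schmidt rank there is at most $\rank P<N$, so \eqref{eq:insertion-Bessel} has constant below $2$. You should state this.

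Finally, your residual-trace worry is unfounded, and enlarging the head is unnecessary. The flatness $Q\rho_1Q\preceq Q/N$ gives $w\le2\Tr(Q\rho_1)\le2s/N$ with $s=\rank Q$. So the Haar-averaged mixture on $\Sym^2(Q\cH)$ has $\norm{\tau_{\mathrm{res}}}_2=w\sqrt{2/(s(s+1))}\le2\sqrt2/N$ for every $s\ge1$.
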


\subsection{Spectral truncation and the dimension-free Hilbert--Schmidt bound}

The dimension-dependent bound in \eqref{eq:HS-upper} requires no new
argument.  Theorem~\ref{thm:bos-main} gives a state
$\tau\in\SEPH^{(2)}(d)$ such that
\[
 \norm{\rho_N^{(2)}-\tau}_2
 \le\norm{\rho_N^{(2)}-\tau}_1
 \le\frac{\sqrt{d-1}}{N-1}.
\]
It remains to prove the dimension-free bound.

\proofparagraph{Setup.}
By convexity of
$\DHS(\,\cdot\,,\SEPH^{(2)}(d))$, it is enough to consider
a pure bosonic state
$\rho_N=\ketbra{\psi}$ with $\psi\in\Sym^N\cH$ a unit vector.  Set
\[
 \rho_1:=\rho_N^{(1)},
 \qquad
 \rho_2:=\rho_N^{(2)}.
\]
Split the spectrum of $\rho_1$ at the scale $1/N$:
\begin{equation}
 P:=\mathbf 1_{(1/N,\infty)}(\rho_1),
 \qquad
 Q:=I-P,
 \qquad
 r:=\rank P.
 \label{eq:HS-PQ-def}
\end{equation}
We call $P\cH$ the spectral head and $Q\cH$ the spectral tail.  The
cutoff has two immediate consequences: since $\Tr\rho_1=1$, the head
has dimension $r<N$, and the tail is spectrally flat,
\begin{equation}
 Q\rho_1Q\preceq\frac1NQ.
 \label{eq:HS-spectral-tail-flatness}
\end{equation}
By Fact~\ref{fact:symmetric-orthogonal-sum}, the induced two-site
decomposition is
\begin{equation}
 \Sym^2\cH
 =
 \Sym^2(P\cH)\oplus(P\cH\vee Q\cH)\oplus\Sym^2(Q\cH).
 \label{eq:HS-two-site-decomp}
\end{equation}
Let $\Pi_{PP}$, $\Pi_{PQ}$, and $\Pi_{QQ}$ be the projections onto the
three summands.  Group the two summands containing at least one tail
direction by setting
\[
 \Pi_{\mathrm{tail}}
 :=
 \Pi_{PQ}+\Pi_{QQ}
 =
 I-\Pi_{PP}.
\]
The corresponding diagonal blocks of $\rho_2$ are
\begin{equation}
 \rho_2^{\mathrm{head}}
 :=
 \Pi_{PP}\rho_2\Pi_{PP},
 \qquad
 \rho_2^{\mathrm{tail}}
 :=
 \Pi_{\mathrm{tail}}\rho_2\Pi_{\mathrm{tail}}.
 \label{eq:HS-tail-block-def}
\end{equation}
Let $X$ denote the off-diagonal head--tail block,
\[
 X:=\Pi_{PP}\rho_2\Pi_{\mathrm{tail}}.
\]
Relative to
$\operatorname{range}\Pi_{PP}\oplus\operatorname{range}\Pi_{\mathrm{tail}}$,
the state has the positive block form
\begin{equation}
 \rho_2
 =
 \begin{pmatrix}
  \rho_2^{\mathrm{head}}&X\\
  X^\dagger&\rho_2^{\mathrm{tail}}
 \end{pmatrix}
 \succeq0.
 \label{eq:HS-head-tail-decomp}
\end{equation}
Since $\Tr\rho_2=1$,
Fact~\ref{fact:positive-block-discard} gives
\[
 \norm{\rho_2-\rho_2^{\mathrm{head}}}_2^2
 \le
 2\norm{\rho_2^{\mathrm{tail}}}_\infty.
\]
Thus the tail need not have small trace, and the off-diagonal block
$X$ need not be estimated separately: an operator-norm bound on the diagonal
tail block controls the entire discarded part.  The rank bound $r<N$
makes the head accessible to the optimal trace-norm de Finetti theorem, while
\eqref{eq:HS-spectral-tail-flatness} will force
$\rho_2^{\mathrm{tail}}$ to be small in operator norm.

\proofparagraph{Proof architecture.}
The proof consists of two substantive estimates followed by a
trace-restoration step.  First, we construct a Hartree mixture
$\tau_{\mathrm{head}}$ that approximates
$\rho_2^{\mathrm{head}}$ and has the same trace.  Second, we transfer
the one-site flatness \eqref{eq:HS-spectral-tail-flatness} to the
two-site estimate
$\norm{\rho_2^{\mathrm{tail}}}_\infty=O(N^{-1})$.
Fact~\ref{fact:positive-block-discard} then removes the tail at
Hilbert--Schmidt cost $O(N^{-1/2})$.  The missing trace is restored by a
Hartree mixture
$\tau_{\mathrm{res}}$ supported on $\Sym^2(Q\cH)$.  Set
\begin{equation}
 \tau
 :=
 \tau_{\mathrm{head}}+\tau_{\mathrm{res}}.
 \label{eq:HS-final-mixture}
\end{equation}
Their traces will add to one, so $\tau\in\SEPH^{(2)}(d)$.  The proof
is organized around the following error decomposition:
\begin{align}
 \norm{\rho_2-\tau}_2
 &=
 \norm{
 (\rho_2^{\mathrm{head}}-\tau_{\mathrm{head}})
  +(\rho_2-\rho_2^{\mathrm{head}})
  -\tau_{\mathrm{res}}
 }_2\notag\\
 &\le
 \norm{\rho_2^{\mathrm{head}}-\tau_{\mathrm{head}}}_2
 +\norm{\rho_2-\rho_2^{\mathrm{head}}}_2
 +\norm{\tau_{\mathrm{res}}}_2\notag\\
 &\le
 \frac1{\sqrt{N-1}}
 +\sqrt{\frac{24}{N-1}}
 +\frac{2\sqrt2}{N}
 <\frac8{\sqrt{N-1}}.
 \label{eq:HS-proof-roadmap}
\end{align}
The three terms in the triangle inequality are bounded, in order, by
Lemmas~\ref{lem:HS-round-head}, \ref{lem:HS-remove-tail}, and
\ref{lem:HS-residual-mass}.

\subsubsection{Hartree approximation of the spectral head}

We begin with the low-dimensional head, where the optimal trace-norm
de Finetti theorem enters.

\begin{lemma}[Rounding the spectral head]
\label{lem:HS-round-head}
There is a Hartree mixture
$\tau_{\mathrm{head}}$, supported on $\Sym^2(P\cH)$, such that
\[
 \Tr\tau_{\mathrm{head}}
 =
 \Tr\rho_2^{\mathrm{head}},
 \qquad
 \norm{\rho_2^{\mathrm{head}}-\tau_{\mathrm{head}}}_2
 \le
 \frac1{\sqrt{N-1}}.
\]
\end{lemma}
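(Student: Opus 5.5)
Following the technical overview, we decompose $\psi$ according to the number of sites in the head. By Fact~\ref{fact:symmetric-orthogonal-sum} applied to $\cH=P\cH\oplus Q\cH$, we write
\[
 \psi=\sum_{k=0}^N\sqrt{p_k}\,\psi_k,
 \qquad
 \psi_k\in\Sym^k(P\cH)\ot\Sym^{N-k}(Q\cH),
\]
with unit vectors $\psi_k$ and weights $p_k\ge0$ summing to one. These summands are eigenspaces of the head number operator $\sum_jP_j$. The compression $(P\ot P)\rho_2(P\ot P)$ is obtained by applying $P\ot P$ on the first two sites, which preserves the number of head sites among sites $3,\ldots,N$. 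Tracing out those sites therefore kills all cross terms between different $k$, and
\[
 \rho_2^{\mathrm{head}}=\sum_kp_k\,\Tr_{3..N}\bigl[(P\ot P)\ketbra{\psi_k}(P\ot P)\bigr].
\]
For fixed $k\ge2$ we view $\psi_k$, up to a unitary reordering of sites, as an element of $\Sym^k(P\cH)\ot\Sym^{N-k}(Q\cH)$. Requiring the first two sites to lie in $P\cH$ then amounts to choosing two of the $k$ head sites. So the $k$-th term equals $w_k\,\sigma_k$, where
\[
 w_k=\frac{k(k-1)}{N(N-1)},
 \qquad
 \sigma_k:=\text{the two-site marginal of the bosonic state on }\Sym^k(P\cH)\text{ obtained by tracing out the tail part of }\psi_k.
\]
Here $\sigma_k$ is a two-site marginal of a mixed bosonic $k$-site state in local dimension $r=\rank P$.

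For each $k\ge2$ we apply Theorem~\ref{thm:bos-main} in dimension $r$ and with $k$ sites. This gives a Hartree mixture $\tau_k$ supported on $\Sym^2(P\cH)$ with $\norm{\sigma_k-\tau_k}_1\le\sqrt{r-1}/(k-1)$. The case $r\le1$ is trivial, since then $\sigma_k$ is itself a tensor square. We set
\[
 \tau_{\mathrm{head}}:=\sum_{k\ge2}p_kw_k\tau_k,
\]
which has the required trace and support. Bounding the Hilbert--Schmidt norm by the trace norm and using $r<N$ gives
\[
 \norm{\rho_2^{\mathrm{head}}-\tau_{\mathrm{head}}}_2
 \le\sum_kp_k\frac{k(k-1)}{N(N-1)}\cdot\frac{\sqrt{r-1}}{k-1}
 \le\frac{\sqrt{N-1}}{N-1}\sum_kp_k\frac kN.
\]
Since $\sum_kp_kk/N\le1$, the right-hand side is at most $1/\sqrt{N-1}$. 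The cancellation of $k-1$ is the point of the argument.

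The main obstacle is the bookkeeping in the first step. We must check that the head compression is block diagonal in $k$ after the partial trace, and that the $k$-th block is exactly $w_k$ times a genuine bosonic two-site marginal on $\Sym^2(P\cH)$. I would verify this by writing $\psi_k=\binom Nk^{-1/2}\sum_{S}\Pi_S(\phi_k)$, where $\phi_k\in\Sym^k(P\cH)\ot\Sym^{N-k}(Q\cH)$ is placed on sites $S$ and the complement. Then $P_1P_2\psi_k$ keeps only the subsets $S\ni1,2$, whose number is $\binom{N-2}{k-2}$, and the ratio $\binom{N-2}{k-2}/\binom Nk=w_k$. The different surviving $S$ remain orthogonal after tracing out the other sites, except through the symmetrization inside $\phi_k$, and that symmetrization produces the marginal of $\phi_k$.
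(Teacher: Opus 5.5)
Your proposal is correct and follows the paper's proof essentially step for step. It uses the same slice decomposition of $\Sym^N\cH$ by head occupation number $k$ and the same pair weight $k(k-1)/(N(N-1))$; you verify that weight by counting subsets $S\ni 1,2$, where the paper uses the identity $\sum_{i\ne j}P_iP_j=k(k-1)I$ on the $k$th slice. It then applies Theorem~\ref{thm:bos-main} in dimension $r$ with $k$ sites, exactly as the paper does, and uses the same cancellation of $k-1$ together with $r<N$.

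One clarification for your bookkeeping paragraph. Cross terms between distinct surviving subsets $S\neq S'$ vanish under $\Tr_{3,\ldots,N}$ because $S$ and $S'$ assign $P\cH$ versus $Q\cH$ differently to some site $j\ge3$; this has nothing to do with the symmetrization inside $\phi_k$. That internal symmetry is what makes each diagonal term contribute the same two-site marginal of the head part of $\phi_k$.
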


\begin{proof}
Fact~\ref{fact:symmetric-orthogonal-sum}, applied to
$\cH=P\cH\oplus Q\cH$, gives
\[
 \Sym^N\cH
 \cong
 \bigoplus_{k=0}^N
 \Sym^k(P\cH)\ot\Sym^{N-k}(Q\cH).
\]
Let $\Pi_k$ be the projection onto the $k$th summand and set
\[
 p_k:=\Tr(\Pi_k\rho_N).
\]
The $PP$-compression is block diagonal with respect to this
decomposition and vanishes on the summands with $k<2$.  For $k\ge2$,
after projecting two sites into $P\cH$, the residual component lies in
$\Sym^{k-2}(P\cH)\ot\Sym^{N-k}(Q\cH)$.  These residual spaces are
orthogonal for distinct values of $k$, so only the diagonal blocks
$\Pi_k\rho_N\Pi_k$ contribute to $\rho_2^{\mathrm{head}}$.

For $p_k>0$, define
\[
 \sigma_k
 :=
 \frac{\Pi_k\rho_N\Pi_k}{p_k}
 \in\D(\operatorname{range}\Pi_k),
 \qquad
 \rho_{P,k}
 :=
 \Tr_{\Sym^{N-k}(Q\cH)}
 \left(\sigma_k\right)
 \in\D(\Sym^k(P\cH)).
\]
Write $\rho_{P,k}^{(2)}$ for its two-site marginal.
Within the $k$th slice, exactly $k$ of the $N$ sites lie in $P\cH$.
Thus the probability that two sites sampled without replacement both
lie in $P\cH$ is
\[
 \frac{\binom{k}{2}}{\binom{N}{2}}
 =
 \frac{k(k-1)}{N(N-1)}.
\]
Here $P_i$ denotes $P$ acting on the $i$th site.  Equivalently,
permutation symmetry and the identity
$\sum_{i\ne j}P_iP_j=k(k-1)I$ on
$\operatorname{range}\Pi_k$ give
\[
 \Tr(P_1P_2\sigma_k)
 =
 \frac{1}{N(N-1)}
 \Tr\left(\left(\sum_{i\ne j}P_iP_j\right)\sigma_k\right)
 =
 \frac{k(k-1)}{N(N-1)}.
\]
Under the canonical identification of the $k$th slice, its normalized
$PP$-block is $\rho_{P,k}^{(2)}$.  Therefore
\[
 \Pi_{PP}\Tr_{3,\ldots,N}(\sigma_k)\Pi_{PP}
 =
 \frac{k(k-1)}{N(N-1)}\rho_{P,k}^{(2)}.
\]
Using $\Pi_k\rho_N\Pi_k=p_k\sigma_k$ and summing the diagonal slice
contributions gives
\begin{equation}
 \rho_2^{\mathrm{head}}
 =
 \sum_{\substack{2\le k\le N\\p_k>0}}
 p_k\frac{k(k-1)}{N(N-1)}\rho_{P,k}^{(2)}.
 \label{eq:HS-P-slice-decomp}
\end{equation}

If $r\le1$, every term in \eqref{eq:HS-P-slice-decomp} is Hartree, so
we may take $\tau_{\mathrm{head}}:=\rho_2^{\mathrm{head}}$.  Assume
now that $r\ge2$.  Applying Theorem~\ref{thm:bos-main} to
$\rho_{P,k}$ gives a Hartree mixture $\tau_k$ supported on
$\Sym^2(P\cH)$ and satisfying
\begin{equation}
 \norm{\rho_{P,k}^{(2)}-\tau_k}_1
 \le\frac{\sqrt{r-1}}{k-1}.
 \label{eq:HS-P-trace-approx}
\end{equation}
Define the Hartree mixture
\begin{equation}
 \tau_{\mathrm{head}}
 :=\sum_{\substack{2\le k\le N\\p_k>0}}
 \frac{p_k k(k-1)}{N(N-1)}\tau_k.
 \label{eq:HS-tau-P}
\end{equation}
This operator has the same trace as $\rho_2^{\mathrm{head}}$.  The key
point is that the factor $k-1$ in the de Finetti error cancels the
same factor in the pair weight:
\begin{align}
 \norm{\rho_2^{\mathrm{head}}-\tau_{\mathrm{head}}}_2
 &\le
 \sum_{\substack{2\le k\le N\\p_k>0}}
 \frac{p_k k(k-1)}{N(N-1)}
 \norm{\rho_{P,k}^{(2)}-\tau_k}_1\notag\\
 &\le
 \frac{\sqrt{r-1}}{N(N-1)}
 \sum_{k=2}^N p_k k\notag\\
 &\le\frac{\sqrt{r-1}}{N-1}
 \le\frac1{\sqrt{N-1}}.
 \label{eq:HS-P-approx}
\end{align}
For $r\le1$, the same estimate holds with zero left-hand side.
\end{proof}

\subsubsection{Operator-norm control of the spectral tail}

The mixed and tail--tail diagonal blocks are bounded directly, after
which positive block estimates assemble them into an operator-norm
bound on $\rho_2^{\mathrm{tail}}$.

\begin{lemma}[Mixed and tail--tail block estimates]
\label{lem:HS-spectral-tail-blocks}
With the notation above,
\begin{align}
 \norm{\Pi_{PQ}\rho_2\Pi_{PQ}}_\infty
 &\le\frac4N,
 \label{eq:HS-PQ-op}\\
 \norm{\Pi_{QQ}\rho_2\Pi_{QQ}}_\infty
 &\le\frac8{N-1}.
 \label{eq:HS-QQ-op}
\end{align}
\end{lemma}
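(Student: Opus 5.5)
The plan is to reduce both operator norms to partial inner products via Fact~\ref{fact:symmetric-partial-inner-products}. Since $\Pi_{PQ}\rho_2\Pi_{PQ}$ is positive, its operator norm is $\sup_\phi\bra\phi\rho_2\ket\phi=\sup_\phi\norm{C_\phi(\psi)}^2$ over unit $\phi\in P\cH\vee Q\cH$, and similarly for $\Sym^2(Q\cH)$. The one input from the cutoff is that every unit $q\in Q\cH$ satisfies
\[
\norm{C_q(\psi)}^2=\bra q\rho_1\ket q\le\frac1N,
\]
by \eqref{eq:HS-spectral-tail-flatness}. To estimate $\norm{C_\phi(\psi)}$, we dualize against a unit $\chi\in\Sym^{N-2}\cH$, use $C_\phi=C_aC_b$ for $\phi=a\ot b$, and move one factor to the other side of the pairing as an adjoint $C^*_a\chi\in\Sym^{N-1}\cH$. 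The adjoint is then controlled by Fact~\ref{fact:symmetric-contraction-bessel}.

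\emph{Mixed block.} Identify $P\cH\vee Q\cH$ with $P\cH\ot Q\cH$ and take a Schmidt decomposition
\[
\phi=\sum_i s_i\,(p_i\vee q_i),\qquad \sum_i s_i^2=1,
\]
with $(p_i)$ orthonormal in $P\cH$ and $(q_i)$ orthonormal in $Q\cH$. The number of terms is at most $r<N$. Symmetry of $\psi$ makes both orderings in $p_i\vee q_i$ contribute equally, so
\[
C_\phi(\psi)=\sqrt2\sum_i s_i\,C_{p_i}C_{q_i}(\psi).
\]
Hence
\[
\abs{\braket{\chi}{C_\phi(\psi)}}\le\sqrt2\sum_i s_i\norm{C_{p_i}^*(\chi)}\,\norm{C_{q_i}(\psi)}\le\sqrt{\tfrac2N}\sum_i s_i\norm{C^*_{p_i}(\chi)}.
\]
By Cauchy--Schwarz and \eqref{eq:insertion-Bessel} with $k=N-1$ and $\ell\le r<N=k+1$, we have $\sum_i s_i\norm{C^*_{p_i}(\chi)}\le\sqrt2$. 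Therefore $\norm{C_\phi(\psi)}^2\le4/N$, which is \eqref{eq:HS-PQ-op}. The rank bound $r<N$ on the head is exactly what keeps this Bessel estimate dimension free.

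\emph{Tail--tail block.} By Fact~\ref{lem:takagi}, write $\phi=\sum_i s_iq_i^{\ot2}$ with $(q_i)$ orthonormal in $Q\cH$ and $\sum_i s_i^2=1$. Here the number of terms can be as large as $\rank Q$, so insertion-Bessel alone is useless, and this is the step I expect to be the main obstacle. The remedy is to use the exact formula \eqref{eq:contraction-adjoint-norm} with $k=N-1$:
\[
\norm{C^*_{q_i}(\chi)}\le\frac1{\sqrt{N-1}}+\norm{C_{q_i}(\chi)}.
\]
This splits
\[
\abs{\braket{\chi}{C_\phi(\psi)}}\le\sum_i s_i\norm{C^*_{q_i}(\chi)}\norm{C_{q_i}(\psi)}
\]
into two sums. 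In the first, Cauchy--Schwarz and the Bessel bound \eqref{eq:contraction-Bessel} applied to $\psi$ give
\[
\frac1{\sqrt{N-1}}\sum_i s_i\norm{C_{q_i}(\psi)}\le\frac1{\sqrt{N-1}}.
\]
In the second, we bound $\norm{C_{q_i}(\psi)}\le N^{-1/2}$ uniformly and apply \eqref{eq:contraction-Bessel} to $\chi$, giving
\[
\sum_i s_i\norm{C_{q_i}(\chi)}\norm{C_{q_i}(\psi)}\le N^{-1/2}.
\]
Thus $\norm{C_\phi(\psi)}\le2/\sqrt{N-1}$, so $\norm{C_\phi(\psi)}^2\le4/(N-1)\le8/(N-1)$, proving \eqref{eq:HS-QQ-op}. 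The point is to pair the possibly large family $(q_i)$ against Bessel sums, never against its cardinality.
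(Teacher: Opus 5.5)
Your proposal is correct. The mixed-block argument is essentially the paper's, while the tail--tail block takes a genuinely different and somewhat better route.

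\textbf{Mixed block.} You use the same ingredients as the paper: the Schmidt decomposition, $C_{p\vee q}(\psi)=\sqrt2\,C_pC_q(\psi)$, the cutoff $\norm{C_{q_i}(\psi)}^2\le 1/N$, and insertion-Bessel \eqref{eq:insertion-Bessel} with $\ell\le r<N$. You apply the uniform bound before Cauchy--Schwarz rather than weighting the $\psi$ side, which makes no difference.

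\textbf{Tail--tail block.} The paper splits the Takagi expansion at index $N$.
\begin{itemize}
\item The leading $N$ terms are handled by insertion-Bessel, exactly as in the mixed block.
\item For the remaining terms, the weights $\mu_i$ are moved onto $\norm{C_{q_i}^*(\Xi)}^2$, and monotonicity $\mu_{N+1}\le 1/(N+1)$ makes the contraction part of \eqref{eq:contraction-adjoint-norm} small.
\item This yields $(\sqrt{2/N}+\sqrt{2/(N-1)})^2\le 8/(N-1)$.
\end{itemize}

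You split the operator instead of the index set. Formula \eqref{eq:contraction-adjoint-norm} gives $\norm{C_{q_i}^*(\chi)}\le (N-1)^{-1/2}+\norm{C_{q_i}(\chi)}$. You pair the insertion part with Bessel for $\psi$, which does not even use the cutoff. You pair the contraction part with the cutoff and Bessel for $\chi$. This needs no truncation and no ordering of the Takagi weights. It also gives the sharper bound $(1/\sqrt{N-1}+1/\sqrt N)^2\le 4/(N-1)$, which would improve the constant in Lemma~\ref{lem:HS-remove-tail}. The paper's route keeps both blocks on one template, insertion-Bessel over at most $N$ directions, at the cost of a truncation step and a worse constant.

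\textbf{A correction to one remark.} Your splitting applies verbatim to the mixed block as well. It gives $\norm{C_\phi(\psi)}^2\le 2\bigl(1/\sqrt{N-1}+1/\sqrt N\bigr)^2\le 8/(N-1)$ without using $r<N$. So your remark that the rank bound is exactly what keeps the estimate dimension free holds only for the insertion-Bessel route. The rank bound is genuinely indispensable only in rounding the spectral head, Lemma~\ref{lem:HS-round-head}.
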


\begin{proof}
Recall that $\rho_N=\ketbra{\psi}$, where
$\psi\in\Sym^N\cH$ is a unit vector.

\proofparagraph{The mixed block.}
Let $\phi\in P\cH\vee Q\cH$ be a unit vector.  Its Schmidt decomposition is
\begin{equation}
 \phi=\sum_{i=1}^{r_\phi}\sqrt{\mu_i}\,p_i\vee q_i,
 \qquad
 \mu_i>0,
 \qquad
 \sum_i\mu_i=1,
 \label{eq:HS-PQ-Schmidt}
\end{equation}
where $(p_i)$ and $(q_i)$ are orthonormal families in $P\cH$ and $Q\cH$,
respectively, and $r_\phi\le r<N$.  Since $\psi$ is symmetric,
\[
 C_{p_i\vee q_i}(\psi)
 =
 \frac1{\sqrt2}
 \bigl(
 (\bra{p_i}\ot\bra{q_i}\ot I)
 +(\bra{q_i}\ot\bra{p_i}\ot I)
 \bigr)\psi
 =
 \sqrt2\,C_{p_i}\bigl(C_{q_i}(\psi)\bigr).
\]
Although the Schmidt directions $q_i$ need not be eigenvectors of
$\rho_1$, each is a unit vector in $Q\cH$, so
\eqref{eq:HS-spectral-tail-flatness} gives
\[
 \norm{C_{q_i}(\psi)}^2
 =
 \bra{q_i}\rho_1\ket{q_i}
 =
 \bra{q_i}Q\rho_1Q\ket{q_i}
 \le\frac1N.
\]
Since the Schmidt weights sum to one, it follows that
\begin{equation}
 \sum_i\mu_i\norm{C_{q_i}(\psi)}^2
 \le\frac1N\sum_i\mu_i
 =\frac1N.
 \label{eq:HS-PQ-weighted-tail}
\end{equation}
For every unit vector $\Xi\in\Sym^{N-2}\cH$, use the adjoint relation
\[
 \braket{\Xi}{C_{p_i}(C_{q_i}(\psi))}
 =
 \braket{C_{p_i}^*(\Xi)}{C_{q_i}(\psi)}.
\]
Together with Cauchy--Schwarz,
Fact~\ref{fact:symmetric-contraction-bessel}, specifically
\eqref{eq:insertion-Bessel}, with $k=N-1$ and $\ell=r_\phi$, and
\eqref{eq:HS-PQ-weighted-tail}, this gives
\begin{align*}
 \abs{\braket{\Xi}{C_\phi(\psi)}}
 &=
 \abs{\sum_i\sqrt{\mu_i}
 \braket{\Xi}{C_{p_i\vee q_i}(\psi)}}\\
 &=
 \sqrt2\abs{\sum_i\sqrt{\mu_i}
 \braket{\Xi}{C_{p_i}\bigl(C_{q_i}(\psi)\bigr)}}\\
 &=
 \sqrt2\abs{\sum_i\sqrt{\mu_i}
 \braket{C_{p_i}^*(\Xi)}{C_{q_i}(\psi)}}\\
 &\le\sqrt2
 \left(\sum_i\norm{C_{p_i}^*(\Xi)}^2\right)^{1/2}
 \left(\sum_i\mu_i\norm{C_{q_i}(\psi)}^2\right)^{1/2}\\
 &\le
 \sqrt2
 \sqrt{\frac{r_\phi+N-2}{N-1}}\,
 \frac1{\sqrt N}\\
 &\le\frac2{\sqrt N}.
\end{align*}
The last inequality uses $r_\phi<N$.
Taking the supremum over unit vectors
$\Xi\in\Sym^{N-2}\cH$ therefore gives
\[
 \norm{C_\phi(\psi)}
 \le
 \frac2{\sqrt N}.
\]
Since $\phi\in P\cH\vee Q\cH$, we have $\Pi_{PQ}\phi=\phi$.
Fact~\ref{fact:symmetric-partial-inner-products}, specifically the
partial-inner-product identity \eqref{eq:contraction-marginals}, now gives
\[
 \bra\phi\Pi_{PQ}\rho_2\Pi_{PQ}\ket\phi
 =
 \bra\phi\rho_2\ket\phi
 =
 \norm{C_\phi(\psi)}^2
 \le
 \frac4N.
\]
The compression $\Pi_{PQ}\rho_2\Pi_{PQ}$ is positive semidefinite, so
its operator norm is the supremum of the left-hand side over unit
vectors $\phi\in P\cH\vee Q\cH$.  This proves
\eqref{eq:HS-PQ-op}.

\proofparagraph{The tail--tail block.}
Let $\phi\in\Sym^2(Q\cH)$ be a unit vector.  A Takagi decomposition gives
\begin{equation}
 \phi=\sum_{i=1}^{m}\sqrt{\mu_i}\,q_i^{\ot2},
 \qquad
 \mu_1\ge\mu_2\ge\cdots\ge0,
 \qquad
 \sum_i\mu_i=1,
 \label{eq:HS-QQ-Takagi}
\end{equation}
for an orthonormal family $(q_i)\subseteq Q\cH$.  Split
\[
 \phi=\phi_{\le N}+\phi_{>N},
 \qquad
 \phi_{\le N}:=\sum_{i\le N}\sqrt{\mu_i}\,q_i^{\ot2},
 \qquad
 \phi_{>N}:=\sum_{i>N}\sqrt{\mu_i}\,q_i^{\ot2},
\]
with empty terms omitted.  The leading $N$ Takagi terms are controlled
by applying Cauchy--Schwarz with the weights $\mu_i$ multiplying
$\norm{C_{q_i}(\psi)}^2$, exactly as in the mixed block.  The same
pointwise tail estimate gives
\[
 \sum_{i\le N}\mu_i\norm{C_{q_i}(\psi)}^2
 \le\frac1N\sum_{i\le N}\mu_i
 \le\frac1N.
\]
Therefore, for every unit $\Xi\in\Sym^{N-2}\cH$,
\begin{align*}
 \abs{\braket{\Xi}{C_{\phi_{\le N}}(\psi)}}
 &\le
 \left(\sum_{i\le N}\norm{C_{q_i}^*(\Xi)}^2\right)^{1/2}
 \left(\sum_{i\le N}\mu_i\norm{C_{q_i}(\psi)}^2\right)^{1/2}\\
 &\le\sqrt{\frac2N}.
\end{align*}
Hence
\begin{equation}
 \norm{C_{\phi_{\le N}}(\psi)}^2\le\frac2N.
 \label{eq:HS-QQ-head}
\end{equation}

For the remaining Takagi terms, monotonicity of $(\mu_i)$ and
$\sum_i\mu_i=1$ imply
$\mu_{N+1}\le1/(N+1)$.  There may now be arbitrarily many terms, so we
instead attach the weights $\mu_i$ to
$\norm{C_{q_i}^*(\Xi)}^2$ and leave
$\norm{C_{q_i}(\psi)}^2$ unweighted.  Thus
\begin{align*}
 \abs{\braket{\Xi}{C_{\phi_{>N}}(\psi)}}
 &\le
 \left(\sum_{i>N}\mu_i\norm{C_{q_i}^*(\Xi)}^2\right)^{1/2}
 \left(\sum_{i>N}\norm{C_{q_i}(\psi)}^2\right)^{1/2}.
\end{align*}
To estimate the first factor, apply
Fact~\ref{fact:symmetric-contraction-bessel}, specifically
\eqref{eq:contraction-adjoint-norm}, with $k=N-1$:
\begin{align*}
 \sum_{i>N}\mu_i\norm{C_{q_i}^*(\Xi)}^2
 &=
 \frac1{N-1}
 \left(
 \sum_{i>N}\mu_i
 +(N-2)\sum_{i>N}\mu_i\norm{C_{q_i}(\Xi)}^2
 \right)\\
 &\le
 \frac1{N-1}
 \left(1+\frac{N-2}{N+1}\right)
 <\frac2{N-1}.
\end{align*}
Here $\sum_{i>N}\mu_i\le1$, while
\eqref{eq:contraction-Bessel} and
$\mu_i\le\mu_{N+1}\le1/(N+1)$ for $i>N$ give
\[
 \sum_{i>N}\mu_i\norm{C_{q_i}(\Xi)}^2
 \le
 \mu_{N+1}\sum_{i>N}\norm{C_{q_i}(\Xi)}^2
 \le\frac1{N+1}.
\]
The same Bessel bound, applied to $\psi$, gives
\[
 \sum_{i>N}\norm{C_{q_i}(\psi)}^2\le1.
\]
Therefore
\begin{equation}
 \norm{C_{\phi_{>N}}(\psi)}^2\le\frac2{N-1}.
 \label{eq:HS-QQ-tail}
\end{equation}
By linearity,
$C_\phi=C_{\phi_{\le N}}+C_{\phi_{>N}}$.  Hence the triangle
inequality, together with \eqref{eq:HS-QQ-head} and
\eqref{eq:HS-QQ-tail}, gives
\[
 \norm{C_\phi(\psi)}
 \le
 \norm{C_{\phi_{\le N}}(\psi)}
 +\norm{C_{\phi_{>N}}(\psi)}
 \le
 \sqrt{\frac2N}+\sqrt{\frac2{N-1}}.
\]
Since $\phi\in\Sym^2(Q\cH)$, we have $\Pi_{QQ}\phi=\phi$.
Fact~\ref{fact:symmetric-partial-inner-products}, specifically
\eqref{eq:contraction-marginals}, therefore implies
\[
 \bra\phi\Pi_{QQ}\rho_2\Pi_{QQ}\ket\phi
 =
 \bra\phi\rho_2\ket\phi
 =
 \norm{C_\phi(\psi)}^2
 \le
 \left(\sqrt{\frac2N}+\sqrt{\frac2{N-1}}\right)^2
 \le
 \frac8{N-1}.
\]
Finally, $\Pi_{QQ}\rho_2\Pi_{QQ}$ is positive semidefinite, so its
operator norm is the supremum of the left-hand side over unit vectors
$\phi\in\Sym^2(Q\cH)$.  This proves \eqref{eq:HS-QQ-op}.
\end{proof}

\begin{lemma}[Removing the spectral tail]
\label{lem:HS-remove-tail}
The head compression in \eqref{eq:HS-tail-block-def} satisfies
\[
 \norm{\rho_2-\rho_2^{\mathrm{head}}}_2
 \le
 \sqrt{\frac{24}{N-1}}.
\]
\end{lemma}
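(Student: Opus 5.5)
The plan is to combine the two diagonal estimates of Lemma~\ref{lem:HS-spectral-tail-blocks} into an operator-norm bound on the full tail block $\rho_2^{\mathrm{tail}}$, and then invoke Fact~\ref{fact:positive-block-discard} as already set up in \eqref{eq:HS-head-tail-decomp}. Since $\Tr\rho_2=1$, that fact gives
\[
 \norm{\rho_2-\rho_2^{\mathrm{head}}}_2^2
 \le
 2\norm{\rho_2^{\mathrm{tail}}}_\infty ,
\]
so it suffices to show $\norm{\rho_2^{\mathrm{tail}}}_\infty\le 12/(N-1)$.

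For this, decompose the range of $\Pi_{\mathrm{tail}}$ orthogonally as $(P\cH\vee Q\cH)\oplus\Sym^2(Q\cH)$, using \eqref{eq:HS-two-site-decomp}. Relative to this splitting, $\rho_2^{\mathrm{tail}}$ is a positive semidefinite $2\times2$ block operator. It is a compression of $\rho_2\succeq0$, and its diagonal blocks are exactly $\Pi_{PQ}\rho_2\Pi_{PQ}$ and $\Pi_{QQ}\rho_2\Pi_{QQ}$. The operator-norm estimate in \eqref{eq:positive-block-consequences} of Fact~\ref{fact:positive-block} then bounds the full block by the sum of the diagonal norms. Combining this with \eqref{eq:HS-PQ-op} and \eqref{eq:HS-QQ-op} gives
\[
 \norm{\rho_2^{\mathrm{tail}}}_\infty
 \le
 \frac4N+\frac8{N-1}
 \le
 \frac{12}{N-1}.
\]

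Substituting this into the discard estimate gives $\norm{\rho_2-\rho_2^{\mathrm{head}}}_2^2\le 24/(N-1)$, which is the claim after taking square roots. No step here seems to be a genuine obstacle: the substantive work is in Lemma~\ref{lem:HS-spectral-tail-blocks}. The only points to check are that $\rho_2^{\mathrm{tail}}$ is indeed positive semidefinite, so that Fact~\ref{fact:positive-block} applies (positivity of a compression suffices), and that the identification of its diagonal blocks with the two compressions is correct. The latter holds because $\Pi_{PQ}$ and $\Pi_{QQ}$ are orthogonal subprojections of $\Pi_{\mathrm{tail}}$ summing to it.
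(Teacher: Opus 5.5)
Your proposal is correct and follows essentially the same route as the paper: you bound $\norm{\rho_2^{\mathrm{tail}}}_\infty\le 4/N+8/(N-1)\le 12/(N-1)$ via the operator-norm estimate of Fact~\ref{fact:positive-block} on the $(P\cH\vee Q\cH)\oplus\Sym^2(Q\cH)$ splitting together with Lemma~\ref{lem:HS-spectral-tail-blocks}. You then apply the trace-one case of Fact~\ref{fact:positive-block-discard} to the head--tail block form to get $\norm{\rho_2-\rho_2^{\mathrm{head}}}_2^2\le 24/(N-1)$.
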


\begin{proof}
The range of $\Pi_{\mathrm{tail}}$ is the direct sum of the mixed and
tail--tail spaces in \eqref{eq:HS-two-site-decomp}.  Relative to
this decomposition,
\[
 \rho_2^{\mathrm{tail}}
 =
 \begin{pmatrix}
  \Pi_{PQ}\rho_2\Pi_{PQ}
  &
  \Pi_{PQ}\rho_2\Pi_{QQ}
  \\
  \Pi_{QQ}\rho_2\Pi_{PQ}
  &
  \Pi_{QQ}\rho_2\Pi_{QQ}
 \end{pmatrix}
 \succeq0.
\]
Fact~\ref{fact:positive-block} and
Lemma~\ref{lem:HS-spectral-tail-blocks} therefore give
\begin{align}
 \norm{\rho_2^{\mathrm{tail}}}_\infty
 &\le
 \norm{\Pi_{PQ}\rho_2\Pi_{PQ}}_\infty
 +\norm{\Pi_{QQ}\rho_2\Pi_{QQ}}_\infty
 \notag\\
 &\le\frac4N+\frac8{N-1}
 \le\frac{12}{N-1}.
 \label{eq:HS-tail-op}
\end{align}

Apply the trace-one case of
Fact~\ref{fact:positive-block-discard} to the block decomposition
\eqref{eq:HS-head-tail-decomp}.  This controls the diagonal tail and
the off-diagonal head--tail blocks simultaneously:
\begin{equation}
 \norm{\rho_2-\rho_2^{\mathrm{head}}}_2^2
 \le
 2\norm{\rho_2^{\mathrm{tail}}}_\infty
 \le\frac{24}{N-1}.
 \label{eq:HS-remove-tail}
\end{equation}
Thus positivity converts an $O(N^{-1})$ operator-norm estimate on the
discarded block into an $O(N^{-1/2})$ Hilbert--Schmidt estimate for
the entire discarded part, including the off-diagonal blocks.
\end{proof}

\subsubsection{Completion of the dimension-free bound}

At this point both substantive estimates are complete: the head has a
Hartree approximation of the correct trace, and the entire discarded
part is small in Hilbert--Schmidt norm.  It remains only to restore the
missing trace.  Spreading this mass uniformly over the symmetric tail
costs only $O(N^{-1})$.

\begin{lemma}[Restoring the missing trace]
\label{lem:HS-residual-mass}
There is a Hartree mixture
$\tau_{\mathrm{res}}$, supported on $\Sym^2(Q\cH)$, such that
\[
 \Tr\tau_{\mathrm{res}}
 =
 1-\Tr\rho_2^{\mathrm{head}},
 \qquad
 \norm{\tau_{\mathrm{res}}}_2
 \le
 \frac{2\sqrt2}{N}.
\]
\end{lemma}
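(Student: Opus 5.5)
Let $w:=1-\Tr\rho_2^{\mathrm{head}}\in[0,1]$ and $q:=\rank Q=d-r$. The plan is to take the Haar average of Hartree states over the tail. Let $\nu_Q$ be the unitarily invariant probability measure on unit vectors of $Q\cH$, and set
\[
 \tau_{\mathrm{res}}
 :=
 w\int\ketbra{v}^{\ot2}\,d\nu_Q(v)
 =
 \frac{w}{\dim\Sym^2(Q\cH)}\,\Pi_{QQ},
\]
where the second equality is the standard fact that the Haar twirl of $\ketbra v^{\ot2}$ is the normalized projector onto the symmetric square; see~\cite{Har13}. This operator is a nonnegative multiple of a Hartree mixture supported on $\Sym^2(Q\cH)$, and its trace is $w$, which gives the first claim directly.

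For the norm bound, $\norm{\tau_{\mathrm{res}}}_2=w/\sqrt{\dim\Sym^2(Q\cH)}=w\sqrt{2/(q(q+1))}$. Two ingredients are needed. First, $w$ should be bounded by the tail mass. Since $\Pi_{\mathrm{tail}}\preceq Q_1+Q_2$ on $\Sym^2\cH$, we get $w=\Tr(\Pi_{\mathrm{tail}}\rho_2)\le 2\Tr(Q\rho_1)\le 2q/N$, using \eqref{eq:HS-spectral-tail-flatness}. Second, we use the trivial bound $w\le1$. Combining them gives $\norm{\tau_{\mathrm{res}}}_2\le\sqrt2\,\min\{1,2q/N\}/\sqrt{q(q+1)}$. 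If $2q\le N$, this is at most $2\sqrt2\,q/(N\sqrt{q(q+1)})\le 2\sqrt2/N$. If $2q>N$, then $\sqrt{q(q+1)}>q>N/2$, so the bound is at most $2\sqrt2/N$ again.

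Edge cases need a short check. If $q=0$, then $Q=0$, so $\Pi_{\mathrm{tail}}=0$ and $w=0$, and we take $\tau_{\mathrm{res}}=0$. If $q=1$, the Haar mixture is just $\ketbra{v}^{\ot2}$ for the single tail direction, and the bound $w\le 2/N$ gives norm at most $2/N$.

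The main step to verify carefully is the operator inequality $\Pi_{\mathrm{tail}}\preceq Q_1+Q_2$ restricted to $\Sym^2\cH$. By Fact~\ref{fact:symmetric-orthogonal-sum}, $Q_1+Q_2$ is the number operator for the tail: it acts as $0$ on $\Sym^2(P\cH)$, as $1$ on $P\cH\vee Q\cH$, and as $2$ on $\Sym^2(Q\cH)$. It therefore dominates $\Pi_{PQ}+\Pi_{QQ}$, and permutation symmetry of $\rho_2$ then turns the trace into $2\Tr(Q\rho_1)$.
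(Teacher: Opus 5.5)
Your proposal is correct and follows the paper's proof essentially step for step. It uses the same Haar-averaged Hartree mixture $w\,\Pi_{QQ}/\dim\Sym^2(Q\cH)$ and the same trace bound $w\le 2\Tr(Q\rho_1)\le 2q/N$ via $\Pi_{\mathrm{tail}}\preceq Q_1+Q_2$. The case split using $w\le1$ is harmless but unnecessary, since $w\le 2q/N$ alone already gives $\norm{\tau_{\mathrm{res}}}_2\le \frac{2\sqrt2}{N}\sqrt{q/(q+1)}\le\frac{2\sqrt2}{N}$.
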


\begin{proof}
Set
\[
 \delta:=1-\Tr\rho_2^{\mathrm{head}},
 \qquad
 s:=\dim(Q\cH).
\]
Since $\Pi_{\mathrm{tail}}\preceq Q\ot I+I\ot Q$ on
$\Sym^2\cH$, the spectral-tail bound
\eqref{eq:HS-spectral-tail-flatness} gives
\[
 \delta
 =
 \Tr(\Pi_{\mathrm{tail}}\rho_2)
 \le
 \Tr\bigl((Q\ot I+I\ot Q)\rho_2\bigr)
 =
 2\Tr(Q\rho_1)
 \le
 \frac{2s}{N}.
\]
If $s=0$, then $\delta=0$, and we take $\tau_{\mathrm{res}}=0$.
Suppose that $s\ge1$, and let $\mu_Q$ denote Haar probability measure
on the unit sphere of $Q\cH$.  Define
\[
 \tau_{\mathrm{res}}
 :=
 \delta\frac{\Pi_{QQ}}{\dim\Sym^2(Q\cH)}
 =
 \delta\int\ketbra{u}^{\ot2}\,d\mu_Q(u).
\]
Thus $\tau_{\mathrm{res}}$ is a Hartree mixture supported on
$\Sym^2(Q\cH)$, and its trace is $\delta$.  Since
$\dim\Sym^2(Q\cH)=s(s+1)/2$,
\begin{align*}
 \norm{\tau_{\mathrm{res}}}_2
 &=
 \delta\sqrt{\frac{2}{s(s+1)}} \le
 \frac{2\sqrt2}{N}\sqrt{\frac{s}{s+1}}
 \le
 \frac{2\sqrt2}{N}.\qedhere
\end{align*}
\end{proof}

\begin{proof}[Completion of the proof of the dimension-free bound]
Let $\tau_{\mathrm{head}}$ and $\tau_{\mathrm{res}}$ be supplied by
Lemmas~\ref{lem:HS-round-head} and
\ref{lem:HS-residual-mass}, and let $\tau$ be their sum as in
\eqref{eq:HS-final-mixture}.
The two summands are Hartree mixtures whose traces add
to one, so $\tau\in\SEPH^{(2)}(d)$.  The three estimates used in
\eqref{eq:HS-proof-roadmap} are precisely the conclusions of
Lemmas~\ref{lem:HS-round-head}, \ref{lem:HS-remove-tail}, and
\ref{lem:HS-residual-mass}.  Hence
\eqref{eq:HS-proof-roadmap} gives
$\norm{\rho_2-\tau}_2<8/\sqrt{N-1}$.
This is the dimension-free term in \eqref{eq:HS-upper}.
\end{proof}

\subsection{Two-sided Bose-symmetric extendible states}
\label{sec:HS-two-sided-bose}

Since Theorem~\ref{thm:HS-main} is dimension free, one might first try
a tensor-product construction: pair $A_iB_i$ into $n$ bosonic sites
with local space $\cH_A\ot\cH_B$ and apply the theorem.  The desired
cross marginal is then obtained from the resulting two-site marginal
by tracing out $B_1A_2$, and the image of each Hartree state under this
partial trace is separable.  However, partial trace has
Hilbert--Schmidt $2\to2$ norm $\sqrt{d_Ad_B}$, which destroys dimension
independence.

To avoid this loss, we instead use a direct-sum construction: the two
types of registers are placed in orthogonally labeled summands of a
single bosonic system, so that the desired state is a compression of a
two-site marginal rather than a partial trace.   The
direct-sum construction is inspired by the symmetrization idea of Liu
and Wu~\cite[Sec.~4.2.1]{LW26}.

This label-bundling
construction is a convenient technical workaround for the applications in the next section.  A more systematic
approach would repeat the spectral-truncation argument of the preceding
subsection directly for two-sided extensions, using
Theorem~\ref{thm:two-sided-bose-extension} on the low-dimensional
spectral head; we leave such a direct proof for future work.

Let $\cH_A=\C^{d_A}$ and
$\cH_B=\C^{d_B}$, and set
\[
 \cH_\oplus
 =
 (\ket0\ot\cH_A)\oplus(\ket1\ot\cH_B).
\]
Under the standard decomposition of a symmetric power of an
orthogonal sum, $\Sym^n(\cH_A)\ot\Sym^n(\cH_B)$ identifies with the
summand of $\Sym^{2n}(\cH_\oplus)$ containing $n$ sites of each label.

The subspace of $\Sym^2(\cH_\oplus)$ containing one site of each label
is identified with $\cH_A\ot\cH_B$ by the isometry
\begin{equation}
 V:\cH_A\ot\cH_B\longrightarrow\Sym^2(\cH_\oplus),
 \qquad
 V(x\ot y)
 =
 \frac{
  \ket{0,x}\ot\ket{1,y}
  +\ket{1,y}\ot\ket{0,x}
 }{\sqrt2}.
 \label{eq:label-bundling-isometry}
\end{equation}

\begin{lemma}[Label-bundling embedding]
\label{lem:label-bundling-embedding}
Let $\rho\in\BEXT_{n,n}(d_A,d_B)$.  There is a bosonic state
$\widehat\rho_{2n}\in\D(\Sym^{2n}(\cH_\oplus))$ whose two-site marginal
satisfies
\begin{equation}
 V^\dagger\widehat\rho_{2n}^{(2)}V
 =
 q_n\rho,
 \qquad
 q_n:=\frac{n}{2n-1}.
 \label{eq:cross-label-probability}
\end{equation}
Thus the cross-label block on the left has trace $q_n$.  Equivalently,
measuring the two labels gives one label of each
type with probability $q_n$, and conditioned on this event the state on
$\cH_A\ot\cH_B$ is exactly $\rho$.
\end{lemma}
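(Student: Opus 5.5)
The plan is to realize $\widehat\rho_{2n}$ directly as the image of the extension under the labeled symmetrization isometry. Fix $\rho\in\BEXT_{n,n}(d_A,d_B)$ and an extension $\widetilde\rho\in\D(\Sym^n(\cH_A)\ot\Sym^n(\cH_B))$ with $\widetilde\rho_{A_1B_1}=\rho$. Define
\[
 W:\Sym^n(\cH_A)\ot\Sym^n(\cH_B)\to\Sym^{2n}(\cH_\oplus),
 \qquad
 W\Psi:=\binom{2n}{n}^{1/2}\Pi_{\mathrm{sym}}^{(2n)}\bigl((\ket0^{\ot n}\ot\Psi_A)\text{-labelled placement}\bigr),
\]
that is, tensor $\ket0$ onto each of the first $n$ sites and $\ket1$ onto each of the last $n$ sites, then symmetrize over all $2n$ sites. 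I would check that $W$ is an isometry onto the $(n,n)$-label summand of Fact~\ref{fact:symmetric-orthogonal-sum}. The $\binom{2n}{n}$ label patterns are mutually orthogonal, and for each pattern the input is symmetric within each label class. Hence $\norm{\Pi_{\mathrm{sym}}^{(2n)}\Phi}^2=\binom{2n}{n}^{-1}\norm\Phi^2$, and we set $\widehat\rho_{2n}:=W\widetilde\rho W^\dagger$. By linearity and convexity it suffices to treat $\widetilde\rho=\ketbra\Psi$ pure.

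Next compute the cross-label block of the two-site marginal. Write $W\Psi=\binom{2n}{n}^{-1/2}\sum_{S}\Phi_S$, where $S\subseteq[2n]$ with $|S|=n$ is the set of sites labelled $0$, and $\Phi_S$ is $\Psi$ placed with its $A$-factors on $S$ and its $B$-factors on $S^c$. These vectors are orthonormal. Take the partial trace over sites $3,\dots,2n$ and compress by $V^\dagger$. Only patterns with one label of each type on sites $1,2$ survive.

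Cross terms $S\neq S'$ vanish after the partial trace unless $S$ and $S'$ agree on sites $3,\ldots,2n$. Since $|S|=|S'|$, this leaves only pairs that differ by swapping the labels of sites $1$ and $2$. Those two terms are exactly what $V$, which symmetrizes $\ket{0,x}\ot\ket{1,y}$ with its swap, recombines. I expect each such pattern, together with its swap partner, to contribute $\Psi$'s $A_1B_1$-marginal, namely $\rho$, using Bose symmetry within each block to move the relevant factors into positions $A_1,B_1$. The count is as follows. The number of $S$ with site $1\in S$ and site $2\notin S$ is $\binom{2n-2}{n-1}$, and the swapped patterns give the same number again. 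Pairing each pattern with its swap and applying $V^\dagger(\cdot)V$ gives the factor
\[
 2\binom{2n-2}{n-1}\Big/\binom{2n}{n}=\frac{n}{2n-1}=q_n,
\]
matching~\eqref{eq:cross-label-probability}.

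The main obstacle is the careful bookkeeping of the $1/\sqrt2$ normalization in $V$ against the two orderings, including the diagonal versus cross-swap terms. Concretely, I must verify that $V^\dagger$ applied to the sum of the $\{1\in S,2\notin S\}$ and $\{2\in S,1\notin S\}$ contributions yields exactly the factor $2$, rather than $1$ or $4$. I will check this first on $n=1$, where $q_1=1$ and $W\Psi=V\Psi$ directly, which pins down the constant. The trace statement and the conditional-state interpretation then follow immediately, since $\Tr\rho=1$ and $VV^\dagger$ is the projector onto the one-of-each-label subspace.
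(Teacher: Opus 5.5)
Your proposal is correct and is essentially the paper's argument: the paper also identifies the extension with a bosonic state on the $(n,n)$-label summand of $\Sym^{2n}(\cH_\oplus)$ and reads off the cross-label block as $\binom{2n}{2}^{-1}\sum_{i,j}\widetilde\rho_{A_iB_j}=\frac{n^2}{\binom{2n}{2}}\rho$, which is your count $2\binom{2n-2}{n-1}/\binom{2n}{n}$ obtained from the other side of the same double counting. Your concern about the constant resolves as you expect: writing $J_{01}(x\ot y)=\ket{0,x}\ot\ket{1,y}$ and $J_{10}(x\ot y)=\ket{1,y}\ot\ket{0,x}$, each swap pair $(S,S')$ contributes $\binom{2n}{n}^{-1}(J_{01}+J_{10})\rho(J_{01}+J_{10})^\dagger=2\binom{2n}{n}^{-1}V\rho V^\dagger$ to the two-site marginal, which gives exactly the factor $2$.
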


\begin{proof}
Choose an extension $\widetilde\rho$ from
\eqref{eq:two-sided-bext-definition}.  By
Fact~\ref{fact:symmetric-orthogonal-sum},
\[
 \Sym^{2n}(\cH_\oplus)
 \cong
 \bigoplus_{k=0}^{2n}
 \Sym^k(\cH_A)\ot\Sym^{2n-k}(\cH_B).
\]
We may therefore regard $\widetilde\rho$ as a bosonic state
$\widehat\rho_{2n}$ supported on the $k=n$ summand.

It remains to compute the cross-label block of the two-site marginal.
Under this fixed-label-count identification, the block is the uniform
average of the $n^2$ cross marginals:
\[
 V^\dagger\widehat\rho_{2n}^{(2)}V
 =
 \binom{2n}{2}^{-1}
 \sum_{i=1}^n\sum_{j=1}^n
 \widetilde\rho_{A_iB_j}.
\]
Bose symmetry within each block makes every
$\widetilde\rho_{A_iB_j}$ equal to $\rho$.  Hence
\[
 V^\dagger\widehat\rho_{2n}^{(2)}V
 =
 \frac{n^2}{\binom{2n}{2}}\rho
 =
 \frac{n}{2n-1}\rho.
\]
The coefficient $n^2/\binom{2n}{2}=n/(2n-1)$ is precisely the
probability that a uniformly chosen pair contains one site of each
label.  This proves \eqref{eq:cross-label-probability}.
\end{proof}

We also record how the same isometry acts on Hermitian operators.

\begin{fact}
\label{fact:labeled-operator-embedding}
For $M\in\Herm(\cH_A\ot\cH_B)$, define
\begin{equation}
 A_M:=VMV^\dagger
 \label{eq:labeled-embedding}
\end{equation}
on $\Sym^2(\cH_\oplus)$, acting as zero on
$(\operatorname{range}V)^\perp$.  Then
\begin{equation}
 h(A_M)=\frac12\max\{h_{\SEP}(M),0\},
 \qquad
 \norm{A_M}_2=\norm{M}_2.
 \label{eq:labeled-operator-properties}
\end{equation}
If either $M\succeq0$ or $\Tr M=0$, then $h_{\SEP}(M)\ge0$ and hence
$h(A_M)=\frac12h_{\SEP}(M)$.
\end{fact}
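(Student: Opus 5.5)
The plan is a direct computation: parametrize every unit vector of $\cH_\oplus$ by its two label components, evaluate $\bra{u^{\ot2}}A_M\ket{u^{\ot2}}$ through the isometry $V$ of \eqref{eq:label-bundling-isometry}, and then optimize the label weights separately from the product directions. The norm identity and the final claim are short add-ons.

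\emph{Evaluating the quartic form.} I will write a unit vector $u\in\cH_\oplus$ as
\[
 u=\sqrt a\,\ket{0,x}+\sqrt b\,e^{i\theta}\ket{1,y},\qquad a,b\ge0,\quad a+b=1,
\]
with unit vectors $x\in\cH_A$ and $y\in\cH_B$. If one component vanishes, the corresponding unit vector is chosen arbitrarily. Expanding $u\ot u$ gives three orthogonal pieces. The two pieces $a\ket{0,x}^{\ot2}$ and $b\,e^{2i\theta}\ket{1,y}^{\ot2}$ are same-label and hence orthogonal to $\operatorname{range}V$. The cross-label piece is
\[
 \sqrt{ab}\,e^{i\theta}\bigl(\ket{0,x}\ot\ket{1,y}+\ket{1,y}\ot\ket{0,x}\bigr)
 =\sqrt{2ab}\,e^{i\theta}\,V(x\ot y).
\]
Since $V$ is an isometry, it follows that $V^\dagger(u\ot u)=\sqrt{2ab}\,e^{i\theta}\,x\ot y$. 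Therefore
\[
 \bra{u^{\ot2}}A_M\ket{u^{\ot2}}
 =\bra{V^\dagger u^{\ot2}}M\ket{V^\dagger u^{\ot2}}
 =2ab\,a_M(x,y).
\]

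\emph{Maximizing.} The weight $2ab$ ranges exactly over $[0,\tfrac12]$ as $(a,b)$ ranges over the simplex, and it is independent of $(x,y)$. The maximum over $u$ is therefore $\max_{x,y}\max_{s\in[0,1/2]}s\,a_M(x,y)$. If $a_M(x,y)>0$, the inner maximum is $\tfrac12 a_M(x,y)$; otherwise it is $0$. Taking the maximum over $(x,y)$ gives
\[
 h(A_M)=\tfrac12\max\{h_{\SEP}(M),0\},
\]
which is the first identity in \eqref{eq:labeled-operator-properties}. For the Hilbert--Schmidt identity, $V^\dagger V=I$ gives $\Tr\bigl((VMV^\dagger)^2\bigr)=\Tr(MV^\dagger VMV^\dagger V)=\Tr(M^2)$, so $\norm{A_M}_2=\norm M_2$.

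\emph{The final claim.} If $M\succeq0$, every value $a_M(x,y)$ is nonnegative, so $h_{\SEP}(M)\ge0$. If $\Tr M=0$, I will average $a_M(x,y)$ over independent Haar-random unit vectors $x$ and $y$. Since $\int\ketbra x\,dx=I/d_A$ and $\int\ketbra y\,dy=I/d_B$, this average equals $\Tr M/(d_Ad_B)=0$. Hence the maximum $h_{\SEP}(M)$ is at least $0$, and in both cases $h(A_M)=\tfrac12h_{\SEP}(M)$.

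\emph{Main obstacle.} The only step needing care is the cross-term bookkeeping: the $\sqrt2$ in $V$ must be tracked so that the overlap coefficient comes out as $\sqrt{2ab}$. The degenerate cases $a=0$ or $b=0$ must also be covered; in those cases the value is $0$, which is still attained within the parametrization.
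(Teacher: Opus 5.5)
Your proposal is correct and follows the paper's proof: you parametrize a unit vector of $\cH_\oplus$ by its two label components, use $V^\dagger u^{\ot2}=\sqrt{2ab}\,(\text{phase})\,x\ot y$ to get the value $2ab\,a_M(x,y)$, and then maximize over the weights and the product directions. Your explicit phase $e^{i\theta}$ (which the paper absorbs into $y$) and your computation of the Hilbert--Schmidt norm by trace cyclicity (the paper instead notes that $A_M$ is unitarily equivalent to $M\oplus0$) are cosmetic differences, and the Haar-average argument for the traceless case is the same as the paper's.
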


\begin{proof}
Write a unit vector in $\cH_\oplus$ as
\[
 z=\sqrt p\,\ket{0,x}+\sqrt q\,\ket{1,y},
\]
where $x,y$ are unit vectors and $p,q\ge0$ satisfy $p+q=1$.  Since
$V^\dagger z^{\ot2}=\sqrt{2pq}\,x\ot y$,
\[
 \bra{z^{\ot2}}A_M\ket{z^{\ot2}}
 =2pq\bra{x\ot y}M\ket{x\ot y}.
\]
Maximizing over $p,q,x,y$ gives the first identity in
\eqref{eq:labeled-operator-properties}.  Since $V$ is an isometry,
$A_M$ is unitarily equivalent to $M\oplus0$, which proves the
Hilbert--Schmidt norm identity.  Finally, $h_{\SEP}(M)\ge0$ is
immediate when $M\succeq0$.  If $\Tr M=0$, then the average of
$\bra{x\ot y}M\ket{x\ot y}$ over independent Haar-random unit vectors
$x$ and $y$ is $\Tr(M)/(d_Ad_B)=0$, so its maximum is nonnegative.
\end{proof}

Theorem~\ref{thm:two-sided-bose-extension} gives the
dimension-dependent estimate below.  Combining the label-bundling
embedding in Lemma~\ref{lem:label-bundling-embedding} with
Theorem~\ref{thm:HS-main} gives the dimension-free estimate.

\begin{corollary}[Dimension-free Hilbert--Schmidt de Finetti bound for
two-sided Bose-symmetric extendible states]
\label{cor:HS-two-sided-bose-extension}
For every $n,d_A,d_B\ge1$,
\begin{equation}
 \DHHS\Bigl(
 \BEXT_{n,n}(d_A,d_B),\,
 \SEP(d_A,d_B)
 \Bigr)
 \le
 \min\left\{
 \frac{\sqrt{\min\{d_A,d_B\}-1}}{n},
 \frac{8\sqrt{2n-1}}{n}
 \right\}.
 \label{eq:HS-two-sided-bext-upper}
\end{equation}
\end{corollary}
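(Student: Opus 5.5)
The bound in \eqref{eq:HS-two-sided-bext-upper} is a minimum of two terms, and I will prove each term separately.

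\emph{Dimension-dependent term.} I will read this off Theorem~\ref{thm:two-sided-bose-extension} with $m=n$. That theorem gives, for every $\rho\in\BEXT_{n,n}(d_A,d_B)$, a separable $\sigma$ with
\[
 \tfrac12\norm{\rho-\sigma}_1\le\tfrac12\sqrt{\min\{d_A,d_B\}-1}\,/n .
\]
Since $\norm{\cdot}_2\le\norm{\cdot}_1$, this yields $\DHS(\rho,\SEP)\le\sqrt{\min\{d_A,d_B\}-1}/n$.

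\emph{Dimension-free term: setup.} I will work on the dual side, through the traceless Hilbert--Schmidt form of Lemma~\ref{lem:compact-convex-distance-duality}, eq.~\eqref{eq:compact-duality-state-HS}. Fix $\rho\in\BEXT_{n,n}(d_A,d_B)$ and a Hermitian $M$ with $\Tr M=0$ and $\norm M_2\le1$. It suffices to show
\[
 \Tr(M\rho)-h_{\SEP}(M)\le 8\sqrt{2n-1}/n .
\]
Let $\widehat\rho_{2n}\in\D(\Sym^{2n}(\cH_\oplus))$ be the bosonic state from Lemma~\ref{lem:label-bundling-embedding}, and let $A_M=VMV^\dagger$ as in Fact~\ref{fact:labeled-operator-embedding}. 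Then \eqref{eq:cross-label-probability} gives
\[
 q_n\Tr(M\rho)=\Tr\bigl(M\,V^\dagger\widehat\rho^{(2)}_{2n}V\bigr)=\Tr\bigl(A_M\widehat\rho^{(2)}_{2n}\bigr).
\]

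\emph{Dimension-free term: main estimate.} Apply Theorem~\ref{thm:HS-main} with $N=2n$ and one-site dimension $d_A+d_B\ge2$; this needs $n\ge1$, so $N\ge2$. It gives $\tau\in\SEPH^{(2)}$ with $\norm{\widehat\rho^{(2)}_{2n}-\tau}_2\le 8/\sqrt{2n-1}$. By Cauchy--Schwarz in the Hilbert--Schmidt inner product, together with $\norm{A_M}_2=\norm M_2\le1$ and the fact that $\Tr(A_M\tau)\le h(A_M)$ because $\tau$ is a Hartree mixture, we get
\[
 q_n\Tr(M\rho)\le h(A_M)+\frac{8}{\sqrt{2n-1}}=\tfrac12h_{\SEP}(M)+\frac{8}{\sqrt{2n-1}} .
\]
The last equality holds because $\Tr M=0$; this is the case of Fact~\ref{fact:labeled-operator-embedding} that identifies $h(A_M)$ with $\frac12h_{\SEP}(M)$.

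\emph{Dimension-free term: conclusion.} Divide by $q_n=n/(2n-1)$:
\[
 \Tr(M\rho)\le\frac{2n-1}{2n}\,h_{\SEP}(M)+\frac{8\sqrt{2n-1}}{n}.
\]

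\emph{Main obstacle.} The step needing care is the mismatch between the factor $\frac{2n-1}{2n}$ and the desired coefficient $1$ in front of $h_{\SEP}(M)$. This is exactly why the traceless witness formulation is used. Fact~\ref{fact:labeled-operator-embedding} guarantees $h_{\SEP}(M)\ge0$ for traceless $M$, so $\frac{2n-1}{2n}h_{\SEP}(M)\le h_{\SEP}(M)$. Subtracting $h_{\SEP}(M)$ and taking the supremum over admissible $M$ and then over $\rho$ gives the second term of \eqref{eq:HS-two-sided-bext-upper}. Finally, the minimum of the two bounds holds since each holds individually.
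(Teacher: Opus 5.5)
Your proposal is correct and follows essentially the same argument as the paper. The dimension-dependent term comes from Theorem~\ref{thm:two-sided-bose-extension} via $\norm{\cdot}_2\le\norm{\cdot}_1$, and the dimension-free term from traceless Hilbert--Schmidt duality, the label-bundling embedding combined with Theorem~\ref{thm:HS-main} at $N=2n$, and $h_{\SEP}(M)\ge0$ for traceless $M$ to absorb the factor $\frac{2n-1}{2n}$.
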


\begin{proof}[Proof of Corollary~\ref{cor:HS-two-sided-bose-extension}]
Fix $\rho\in\BEXT_{n,n}(d_A,d_B)$.  By
Theorem~\ref{thm:two-sided-bose-extension}, there is a state
$\sigma\in\SEP(d_A,d_B)$ such that
\[
 \Dtr(\rho,\sigma)
 \le
 \frac{\sqrt{\min\{d_A,d_B\}-1}}{2n}.
\]
Therefore,
\[
 \DHS(\rho,\sigma)
 \le
 \norm{\rho-\sigma}_1
 =
 2\Dtr(\rho,\sigma)
 \le
 \frac{\sqrt{\min\{d_A,d_B\}-1}}{n}.
\]
This proves the dimension-dependent estimate.

For the dimension-free estimate, by duality
(Lemma~\ref{lem:compact-convex-distance-duality}),
\begin{equation}
 \DHS\bigl(\rho,\SEP(d_A,d_B)\bigr)
 =
 \max_{\substack{M=M^\dagger\\\norm{M}_2\le1}}
 \left\{\Tr(M\rho)-h_{\SEP}(M)\right\}.
 \label{eq:HS-separability-duality}
\end{equation}
The maximum may be restricted to traceless $M$: replacing $M$ by
$M-\frac{\Tr M}{d_Ad_B}I$ leaves the objective unchanged and can only
decrease its Hilbert--Schmidt norm.
For $A_M:=VMV^\dagger$, Fact~\ref{fact:labeled-operator-embedding} and
the tracelessness of $M$ give
\[
 h(A_M)=\frac12h_{\SEP}(M),
 \qquad
 \norm{A_M}_2=\norm{M}_2.
\]

Let $\widehat\rho_{2n}$ be the bosonic state supplied by
Lemma~\ref{lem:label-bundling-embedding}, and set
\[
 \varepsilon_{2n}
 :=
 \DHHS\Bigl(
  \BOS_{2n}^{(2)}(d_A+d_B),\,
  \SEPH^{(2)}(d_A+d_B)
 \Bigr).
\]
There is a Hartree mixture
$\tau\in\SEPH^{(2)}(d_A+d_B)$ such that
$\norm{\widehat\rho_{2n}^{(2)}-\tau}_2\le\varepsilon_{2n}$.
Since $\Tr(A_M\tau)\le h(A_M)$, Hilbert--Schmidt Cauchy--Schwarz and
Fact~\ref{fact:labeled-operator-embedding} give
\[
 \Tr\bigl(A_M\widehat\rho_{2n}^{(2)}\bigr)
 \le
 \frac12h_{\SEP}(M)+\varepsilon_{2n}\norm{M}_2.
\]
On the other hand, the label-bundling identity
\eqref{eq:cross-label-probability} gives
$\Tr(A_M\widehat\rho_{2n}^{(2)})=\frac{n}{2n-1}\Tr(M\rho)$.
Thus, for every traceless $M$ with $\norm{M}_2\le1$,
\begin{align*}
 \Tr(M\rho)-h_{\SEP}(M)
 &\le
 \frac{2n-1}{n}\varepsilon_{2n}\norm{M}_2
 -\frac1{2n}h_{\SEP}(M)\\
 &\le
 \frac{2n-1}{n}\varepsilon_{2n},
\end{align*}
where the last inequality uses $h_{\SEP}(M)\ge0$.  Taking the maximum
in \eqref{eq:HS-separability-duality} and then the supremum over $\rho$
gives, by \eqref{eq:Hausdorff-outer-relaxation},
\begin{equation*}
 \DHHS\Bigl(
  \BEXT_{n,n}(d_A,d_B),\,
  \SEP(d_A,d_B)
 \Bigr)
 \le
 \frac{2n-1}{n}\varepsilon_{2n}.
\end{equation*}
Theorem~\ref{thm:HS-main}, applied with $N=2n$, gives
$\varepsilon_{2n}\le8/\sqrt{2n-1}$ and therefore
\begin{equation*}
 \DHHS\Bigl(
  \BEXT_{n,n}(d_A,d_B),\,
  \SEP(d_A,d_B)
 \Bigr)
 \le
 \frac{8\sqrt{2n-1}}{n}.
\end{equation*}
Together with the dimension-dependent estimate above, this proves
\eqref{eq:HS-two-sided-bext-upper}.
\end{proof}

\section{Hilbert--Schmidt separability algorithms}
\label{sec:hs-applications}

The dimension-free estimate for two-sided Bose-symmetric extendible
states has two algorithmic consequences.  The two-sided extension
relaxation approximates $h_{\SEP}(M)$ for Hermitian objectives whose
traceless part has bounded Hilbert--Schmidt norm.  Through
Hilbert--Schmidt duality, the same approximation also gives
Hilbert--Schmidt separability testing.
At fixed accuracy, the required extension level is independent of the
local dimensions, so both programs have polynomial size in those
dimensions.

Let $\cH_A\cong\C^{d_A}$ and $\cH_B\cong\C^{d_B}$.  For $n\ge1$, write
\begin{equation}
 \delta_n^{\mathrm{HS}}(d_A,d_B)
 :=
 \min\left\{
  \frac{\sqrt{\min\{d_A,d_B\}-1}}{n},
  \frac{8\sqrt{2n-1}}{n}
 \right\}.
 \label{eq:HS-bext-error}
\end{equation}

\subsection{Separable optimization with bounded Hilbert--Schmidt norm}

For $M\in\Herm(\cH_A\ot\cH_B)$, set
\begin{equation}
 m:=\frac{\Tr M}{d_Ad_B},
 \qquad
 M_0:=M-mI.
 \label{eq:HS-BSS-centering}
\end{equation}
Recall from Subsection~\ref{sec:qma2-applications} that
\[
 h_{\BEXT_{n,n}}(M)
 :=
 \max_{\rho\in\BEXT_{n,n}(d_A,d_B)}\Tr(M\rho)
 =\lambda_{\max}\bigl(M^{[n,n]}\bigr).
\]
Adding a scalar multiple of the identity shifts the value of $M$ by
the same amount on every state.  Hence
\begin{equation}
\begin{aligned}
 h_{\SEP}(M)&=m+h_{\SEP}(M_0),\\
 h_{\BEXT_{n,n}}(M)&=m+h_{\BEXT_{n,n}}(M_0),\\
 \norm{M_0}_2&\le\norm{M}_2.
\end{aligned}
 \label{eq:HS-BSS-centering-properties}
\end{equation}
The last inequality holds because $M\mapsto M_0$ is the orthogonal
projection onto the traceless subspace in the Hilbert--Schmidt inner
product.

\begin{theorem}[Separable optimization with bounded Hilbert--Schmidt norm]
\label{thm:HS-BSS}
For every Hermitian $M$ and every $n\ge1$,
\begin{equation}
 h_{\SEP}(M)
 \le
 h_{\BEXT_{n,n}}(M)
 \le
 h_{\SEP}(M)
 +\delta_n^{\mathrm{HS}}(d_A,d_B)\norm{M_0}_2.
 \label{eq:HS-BSS-gap}
\end{equation}
Consequently, given an explicit rational description of $M$ and
$\epsilon\in(0,1]$, one can approximate $h_{\SEP}(M)$ to additive
error $\epsilon$ deterministically.  Up to a factor polynomial in
$\operatorname{bits}(M)$ and $\log(1/\epsilon)$, the running time is
\begin{equation}
 \exp\left(
 O\left(
 \left(
  1+
  \min\left\{
   \frac{\sqrt{\min\{d_A,d_B\}-1}\norm{M_0}_2}{\epsilon},
   \frac{\norm{M_0}_2^2}{\epsilon^2}
  \right\}
 \right)
 \log\left(d_A+d_B+\frac{\norm{M_0}_2}{\epsilon}+2\right)
 \right)
 \right).
 \label{eq:HS-BSS-runtime}
\end{equation}
In particular, for bounded $\norm{M_0}_2$ and fixed $\epsilon$, the
running time is polynomial in the local dimensions.
\end{theorem}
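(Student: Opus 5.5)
The plan is to reduce to the traceless part and then invoke Corollary~\ref{cor:HS-two-sided-bose-extension} through Hilbert--Schmidt Cauchy--Schwarz.

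\textbf{Sandwich inequality.} By the centering identities \eqref{eq:HS-BSS-centering-properties}, both sides of \eqref{eq:HS-BSS-gap} shift by the same constant $m$. It therefore suffices to treat $M_0$.

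The lower bound follows from the inclusion $\SEP\subseteq\BEXT_{n,n}$ in \eqref{eq:sep-in-two-sided-bext}.

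For the upper bound, fix any $\rho\in\BEXT_{n,n}(d_A,d_B)$. Corollary~\ref{cor:HS-two-sided-bose-extension} and compactness supply a separable $\sigma$ with $\norm{\rho-\sigma}_2\le\delta_n^{\mathrm{HS}}(d_A,d_B)$. Hilbert--Schmidt Cauchy--Schwarz then gives
\[
 \Tr(M_0\rho)\le\Tr(M_0\sigma)+\norm{M_0}_2\norm{\rho-\sigma}_2\le h_{\SEP}(M_0)+\delta_n^{\mathrm{HS}}\norm{M_0}_2 .
\]
Maximizing over $\rho$ and adding back $m$ gives \eqref{eq:HS-BSS-gap}.

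\textbf{Algorithm.} Compute $m$ and $M_0$ exactly in rational arithmetic. Then choose $n$ as the smaller of two admissible values:
\begin{itemize}
\item[] $n_1=\lceil 2\sqrt{\min\{d_A,d_B\}-1}\,\norm{M_0}_2/\epsilon\rceil$, which makes the first term of $\delta_n^{\mathrm{HS}}$ times $\norm{M_0}_2$ at most $\epsilon/2$;
\item[] $n_2=\lceil C\norm{M_0}_2^2/\epsilon^2\rceil$ with $C=2\cdot 256$, which works because $8\sqrt{2n-1}/n\le 8\sqrt2/\sqrt n$.
\end{itemize}
In practice I will use an upper bound on $\norm{M_0}_2$ computable to within a constant factor, since only that is needed. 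Take $n=\max\{1,\min\{n_1,n_2\}\}$.

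Next, form the lift $M_0^{[n,n]}$ on $\Sym^n(\C^{d_A})\ot\Sym^n(\C^{d_B})$. Its dimension is
\[
 L=\binom{d_A+n-1}{n}\binom{d_B+n-1}{n},
\]
and its entries can be written explicitly in a symmetric occupation-number basis, with bit size polynomial in $\operatorname{bits}(M)$. Compute $\lambda_{\max}$ to precision $\epsilon/2$ by a standard deterministic eigenvalue method, in time $\operatorname{poly}(L,\operatorname{bits}(M),\log(1/\epsilon))$. Output $m$ plus this value.

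\textbf{Main obstacle: the runtime bookkeeping.} The exponent is controlled by
\[
 \log L\le n\log(e(d_A+n)/n)+n\log(e(d_B+n)/n)=O\bigl(n\log(d_A+d_B+n+2)\bigr).
\]
Since $n\le 1+\min\{n_1,n_2\}$, this matches \eqref{eq:HS-BSS-runtime} once we use $\log n=O(\log(d_A+d_B+\norm{M_0}_2/\epsilon+2))$. That bound holds because $n_1$ is polynomial in $d$ and $\norm{M_0}_2/\epsilon$, and $n_2$ is polynomial in $\norm{M_0}_2/\epsilon$.

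The only care needed is the degenerate case $\norm{M_0}_2\ll\epsilon$, which the ``$1+$'' term absorbs, and the rational-precision eigenvalue computation. For bounded $\norm{M_0}_2$ and fixed $\epsilon$, $n=O(1)$, so $L$ is polynomial in $d_A,d_B$.
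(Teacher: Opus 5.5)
Your proposal is correct and follows the paper's proof essentially step for step. The sandwich comes from $\SEP\subseteq\BEXT_{n,n}$ together with Corollary~\ref{cor:HS-two-sided-bose-extension} and Hilbert--Schmidt Cauchy--Schwarz applied to the traceless part $M_0$. The algorithm uses the same choice of $n$, with the same constants $2$ and $512$, followed by the same binomial estimate on the dimension of $\Sym^n(\C^{d_A})\ot\Sym^n(\C^{d_B})$.

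The differences are cosmetic. You lift $M_0$ and add $m$ back afterward, where the paper lifts $M$ directly. You are also slightly more explicit than the paper about compactness (attainment of the closest separable $\sigma$) and about the $\log n$ bookkeeping.
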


\begin{proof}
The lower bound in \eqref{eq:HS-BSS-gap} follows from
$\SEP(d_A,d_B)\subseteq\BEXT_{n,n}(d_A,d_B)$.  For the upper bound,
fix $\rho\in\BEXT_{n,n}(d_A,d_B)$.  Corollary~\ref{cor:HS-two-sided-bose-extension}
gives $\sigma\in\SEP(d_A,d_B)$ such that
\[
 \norm{\rho-\sigma}_2
 \le\delta_n^{\mathrm{HS}}(d_A,d_B).
\]
Since $M$ and $M_0$ differ by a scalar multiple of the identity,
Hilbert--Schmidt Cauchy--Schwarz gives
\begin{align*}
 \Tr(M\rho)-h_{\SEP}(M)
 &=\Tr(M_0\rho)-h_{\SEP}(M_0)\\
 &\le\Tr\bigl(M_0(\rho-\sigma)\bigr)\\
 &\le\delta_n^{\mathrm{HS}}(d_A,d_B)\norm{M_0}_2.
\end{align*}
Maximizing over $\rho$ proves the upper bound.

For the algorithm, take
\[
 n:=1+\left\lceil
 \min\left\{
  \frac{2\sqrt{\min\{d_A,d_B\}-1}\norm{M_0}_2}{\epsilon},
  \frac{512\norm{M_0}_2^2}{\epsilon^2}
 \right\}
 \right\rceil.
\]
Then the error term in \eqref{eq:HS-BSS-gap} is at most $\epsilon/2$.
Compute the largest eigenvalue of $M^{[n,n]}$ to additive precision
$\epsilon/2$.  The lift acts on a space of dimension
\[
 L_{d_A,d_B,n}
 =
 \binom{d_A+n-1}{n}
 \binom{d_B+n-1}{n}.
\]
The standard binomial estimate and
$n=O\left(
1+\min\left\{
 \frac{\sqrt{\min\{d_A,d_B\}-1}\norm{M_0}_2}{\epsilon},
 \frac{\norm{M_0}_2^2}{\epsilon^2}
\right\}
\right)$ give
\eqref{eq:HS-BSS-runtime}.
\end{proof}

\subsection{Hilbert--Schmidt separability testing}

Given a bipartite state $\rho$, Hilbert--Schmidt separability testing
asks whether $\rho$ is separable or is at Hilbert--Schmidt distance at
least $\epsilon$ from every separable state.  This is weak membership
for the separable set in Hilbert--Schmidt distance.  By duality
(Lemma~\ref{lem:compact-convex-distance-duality}),
\begin{equation}
 \DHS\bigl(\rho,\SEP(d_A,d_B)\bigr)
 =
 \max_{\substack{M\in\Herm(\cH_A\ot\cH_B)\\
                  \Tr M=0,\ \norm{M}_2\le1}}
 \left\{
  \Tr(M\rho)-h_{\SEP}(M)
 \right\}.
 \label{eq:HS-sep-distance-dual}
\end{equation}
We replace $h_{\SEP}$ in
\eqref{eq:HS-sep-distance-dual} by the relaxation from the preceding
subsection.

\begin{theorem}[Polynomial-time Hilbert--Schmidt separability testing]
\label{thm:HS-separability-testing}
\label{cor:HS-separability-ptas}
Given an explicit rational state
$\rho\in\D(\C^{d_A}\ot\C^{d_B})$ and $\epsilon\in(0,1]$, one can
approximate $\DHS(\rho,\SEP(d_A,d_B))$ to additive error
$\epsilon$ deterministically.  Up to a factor polynomial in
$\operatorname{bits}(\rho)$ and $\log(1/\epsilon)$, the running time is
\begin{equation}
 \exp\left(
 O\left(
 \left(
  1+
  \min\left\{
   \frac{\sqrt{\min\{d_A,d_B\}-1}}{\epsilon},
   \frac1{\epsilon^2}
  \right\}
 \right)
 \log(d_A+d_B+\epsilon^{-1})
 \right)
 \right).
 \label{eq:HS-separability-ptas-runtime}
\end{equation}
In the same running time one can distinguish
\begin{equation}
 \rho\in\SEP(d_A,d_B)
 \qquad\text{from}\qquad
 \DHS\bigl(\rho,\SEP(d_A,d_B)\bigr)\ge\epsilon.
 \label{eq:HS-separability-promise}
\end{equation}
Thus, for every fixed $\epsilon>0$, Hilbert--Schmidt separability
testing is polynomial-time solvable in the explicit input dimensions.
\end{theorem}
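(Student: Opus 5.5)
The plan is to mirror the trace-norm algorithm of Theorem~\ref{thm:trace-norm-separability-algorithm}, replacing the trace-norm witness set with the traceless Hilbert--Schmidt unit ball from \eqref{eq:HS-sep-distance-dual} and using Theorem~\ref{thm:HS-BSS} as the BSS input. Concretely, I would define
\[
 \widehat D_n^{\mathrm{HS}}(\rho)
 :=
 \max_{\substack{M\in\Herm(\cH_A\ot\cH_B)\\ \Tr M=0,\ \norm{M}_2\le1}}
 \left\{\Tr(M\rho)-h_{\BEXT_{n,n}}(M)\right\}.
\]
For traceless $M$ we have $M_0=M$ and $\norm{M_0}_2\le1$, so \eqref{eq:HS-BSS-gap} gives $h_{\SEP}(M)\le h_{\BEXT_{n,n}}(M)\le h_{\SEP}(M)+\delta_n^{\mathrm{HS}}(d_A,d_B)$ uniformly over the witness set. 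Subtracting from $\Tr(M\rho)$, maximizing over $M$, and invoking \eqref{eq:HS-sep-distance-dual} yields the sandwich
\[
 \DHS(\rho,\SEP)-\delta_n^{\mathrm{HS}}\le\widehat D_n^{\mathrm{HS}}(\rho)\le\DHS(\rho,\SEP),
\]
exactly as in Proposition~\ref{prop:dual-separability-relaxation}. Taking $M=0$ shows the value is $0$ on separable $\rho$.

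Next I would write $\widehat D_n^{\mathrm{HS}}$ as a convex program with variables $M$ and $t$: maximize $\Tr(M\rho)-t$ subject to $\Tr M=0$, $\norm{M}_2\le1$, and $M^{[n,n]}\preceq tI$ on $\Sym^n(\cH_A)\ot\Sym^n(\cH_B)$. The equivalence comes from eliminating $t$ as before, using $h_{\BEXT_{n,n}}(M)=\lambda_{\max}(M^{[n,n]})$ from \eqref{eq:two-sided-bss-relaxation}. The Hilbert--Schmidt ball constraint is a second-order cone, and it can also be written as an SDP constraint via a Schur complement on the vectorization of $M$. So the problem is a semidefinite program whose largest block has dimension $L_{d_A,d_B,n}$, and all of whose data are explicit linear images of $M$.

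For the parameters, I would choose
\[
 n:=1+\left\lceil\min\left\{\frac{2\sqrt{\min\{d_A,d_B\}-1}}{\epsilon},\frac{512}{\epsilon^2}\right\}\right\rceil,
\]
so that $\delta_n^{\mathrm{HS}}\le\epsilon/2$. For the second branch one checks that $8\sqrt{2n-1}/n\le 8\sqrt2/\sqrt n\le\epsilon/2$ once $n\ge512/\epsilon^2$; the first branch is immediate. Solving the program to precision $\epsilon/2$ then gives total error at most $\epsilon$. The running time follows from standard SDP solvers, polynomial in $L_{d_A,d_B,n}$, $\operatorname{bits}(\rho)$, and $\log(1/\epsilon)$, together with the binomial estimate $\log L=O(n\log(d_A+d_B+n))$. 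Since $n=O(1+\min\{\sqrt{\min\{d_A,d_B\}}/\epsilon,\epsilon^{-2}\})$, this gives \eqref{eq:HS-separability-ptas-runtime}. The promise problem follows by running the approximation with accuracy $\epsilon/3$ and thresholding at $\epsilon/2$.

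The only step needing care is the solver claim: the feasible set must be bounded with known inner and outer radii for ellipsoid or interior-point guarantees. Here $M$ ranges over the HS unit ball. The variable $t$ can be restricted to $[-1,1]$ because $\abs{\lambda_{\max}(M^{[n,n]})}\le\norm{M}_\infty\le1$, and then $(M,t)=(0,1/2)$ paired with a small ball around it is strictly feasible. So the polynomial-time guarantee applies, and I expect no genuine obstacle beyond this bookkeeping.
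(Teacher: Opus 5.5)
Your proposal is correct and follows essentially the same route as the paper. Both use the traceless Hilbert--Schmidt dual relaxation $\widehat D_n^{\mathrm{HS}}$, the sandwich from Theorem~\ref{thm:HS-BSS}, the Schur-complement SDP, the same choice of $n$, and the same runtime accounting. The differences are minor: for the promise problem you rerun the approximation at accuracy $\epsilon/3$ (as in the trace-norm theorem), whereas the paper raises $n$ so that $\delta_n^{\mathrm{HS}}\le\epsilon/4$ and solves to precision $\epsilon/8$; and your boundedness and strict-feasibility check for the SDP solver is bookkeeping that the paper leaves implicit.
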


\begin{proof}
For $n\ge1$, define
\begin{equation}
 \widehat D_n^{\mathrm{HS}}(\rho)
 :=
 \max_{\substack{M\in\Herm(\cH_A\ot\cH_B)\\
                  \Tr M=0,\ \norm{M}_2\le1}}
 \left\{
  \Tr(M\rho)-h_{\BEXT_{n,n}}(M)
 \right\}.
 \label{eq:HS-dual-separability-relaxation}
\end{equation}
For every feasible $M$, Theorem~\ref{thm:HS-BSS} gives
\[
 h_{\SEP}(M)
 \le h_{\BEXT_{n,n}}(M)
 \le h_{\SEP}(M)+\delta_n^{\mathrm{HS}}(d_A,d_B).
\]
Subtracting these inequalities from $\Tr(M\rho)$, maximizing over $M$,
and applying \eqref{eq:HS-sep-distance-dual} gives
\begin{equation}
 \DHS\bigl(\rho,\SEP(d_A,d_B)\bigr)
 -\delta_n^{\mathrm{HS}}(d_A,d_B)
 \le
 \widehat D_n^{\mathrm{HS}}(\rho)
 \le
 \DHS\bigl(\rho,\SEP(d_A,d_B)\bigr).
 \label{eq:HS-bext-distance-sandwich}
\end{equation}

Since
$h_{\BEXT_{n,n}}(M)=\lambda_{\max}(M^{[n,n]})$,
the relaxation \eqref{eq:HS-dual-separability-relaxation} is the
semidefinite program
\begin{equation}
\begin{aligned}
 \textnormal{maximize}\quad &\Tr(M\rho)-t\\
 \textnormal{subject to}\quad
 &\Tr M=0,\\
 &\begin{pmatrix}
   I_{(d_Ad_B)^2} & \operatorname{vec}(M)\\
   \operatorname{vec}(M)^\dagger & 1
  \end{pmatrix}\succeq0,\\
 &M^{[n,n]}
 \preceq tI_{\Sym^n(\cH_A)\ot\Sym^n(\cH_B)},
\end{aligned}
 \label{eq:HS-separability-sdp}
\end{equation}
with variables $M\in\Herm(\cH_A\ot\cH_B)$ and $t\in\mathbb R$,
where $\operatorname{vec}(M)$ denotes the vectorization of $M$.
The first matrix inequality is equivalent to $\norm{M}_2\le1$ by the
Schur complement, while the second is equivalent to
$t\ge h_{\BEXT_{n,n}}(M)$.  Hence the SDP has value
$\widehat D_n^{\mathrm{HS}}(\rho)$.

For the approximation algorithm, take
\[
 n:=1+\left\lceil
 \min\left\{
  \frac{2\sqrt{\min\{d_A,d_B\}-1}}{\epsilon},
  \frac{512}{\epsilon^2}
 \right\}
 \right\rceil.
\]
Then $\delta_n^{\mathrm{HS}}(d_A,d_B)\le\epsilon/2$.
Solving \eqref{eq:HS-separability-sdp} to additive precision
$\epsilon/2$ therefore gives an additive-$\epsilon$ approximation by
\eqref{eq:HS-bext-distance-sandwich}.

The two semidefinite constraints have dimensions $(d_Ad_B)^2+1$ and
\[
 L_{d_A,d_B,n}
 =
 \binom{d_A+n-1}{n}
 \binom{d_B+n-1}{n},
\]
respectively.
Standard algorithms for semidefinite programs, together with the
binomial estimate, give \eqref{eq:HS-separability-ptas-runtime}.

For the promise problem, increase $n$ by a constant factor so that
$\delta_n^{\mathrm{HS}}(d_A,d_B)\le\epsilon/4$ and solve to precision
$\epsilon/8$.  The relaxation value is zero when $\rho$ is separable
and is at least
$3\epsilon/4$ when
$\DHS(\rho,\SEP(d_A,d_B))\ge\epsilon$.  Thus the numerical output is
at most $\epsilon/8$ in the separable case and at least $5\epsilon/8$
in the far case.  Comparing it with $\epsilon/2$ distinguishes the two
cases.
\end{proof}

For fixed additive accuracy, this improves the quasipolynomial
Hilbert--Schmidt weak-membership guarantee of Brand\~ao, Christandl,
and Yard~\cite{BCY11algorithm} to polynomial time in the local
dimensions.  These algorithms take explicit matrices as input and do
not give a copy-efficient test for an unknown quantum state.  Since
Hilbert--Schmidt distance is not contractive under general quantum
channels, they do not replace trace- or LOCC-norm separability testing.

\bigskip
\section*{Acknowledgment}
The authors used GPT-5.6 Pro for exploratory analysis, idea testing,
editorial polishing, and assistance with Lean formalization.
The authors reviewed and finalized all mathematical claims and proofs, and take full responsibility for the correctness, exposition, and attribution in the final manuscript.

\bibliographystyle{alpha}
\bibliography{reference}

\appendix

\section{Rectangular argmax rounding}
\label{sec:rectangular-argmax-appendix}

This appendix proves Proposition~\ref{prop:rectangular-argmax}.  The
eigenvector-contraction step is the same as in
Proposition~\ref{prop:tau-bound}; the remaining ingredient is an argmax
estimate on a product of two complex spheres.  This product-sphere
estimate is the complex pure-state form of the argmax argument
underlying our argmax-SoS analysis of Best Separable
State~\cite[Sec.~3]{JWX26}.

Let $\cH_A$ and $\cH_B$ be finite-dimensional Hilbert spaces, and use the
rectangular lift $M^{[n,m]}$ from
Subsubsection~\ref{sec:rectangular-argmax-main}.
We use throughout the injective and projective tensor norms defined in
Section~\ref{sec:tensor-norms}.

The vectors $x^{\ot n}\ot y^{\ot m}$ span
$\Sym^n(\cH_A)\ot\Sym^m(\cH_B)$.  Fix integers $n,m\ge1$ and a unit vector
$\psi$ in this space.  Choose unit vectors $x,y$ maximizing
$\abs{\braket{x^{\ot n}\ot y^{\ot m}}{\psi}}$.  After adjusting the
phase of $x$, set
\begin{equation}
 c
 :=
 \braket{x^{\ot n}\ot y^{\ot m}}{\psi}
 >0,
 \label{eq:rectangular-c-def}
\end{equation}
and contract all but one tensor factor in each block:
\begin{equation}
 \eta_{x,y}
 :=
 \Bigl[
  (\bra{x}^{\ot(n-1)}\ot I_{\cH_A})
  \ot
  (\bra{y}^{\ot(m-1)}\ot I_{\cH_B})
 \Bigr]\psi
 \in\cH_A\ot\cH_B.
 \label{eq:rectangular-contraction}
\end{equation}

\begin{lemma}[Product-sphere argmax estimate]
\label{lem:product-sphere-argmax}
With the notation above,
\begin{equation}
 \eta_{x,y}
 =
 c\,x\ot y+\zeta_{x,y},
 \qquad
 \zeta_{x,y}\in x^\perp\ot y^\perp,
 \label{eq:rectangular-contraction-decomp}
\end{equation}
and
\begin{equation}
 \norm{\zeta_{x,y}}_\varepsilon
 \le
 \frac{c}{\sqrt{nm}}.
 \label{eq:rectangular-injective-bound}
\end{equation}
\end{lemma}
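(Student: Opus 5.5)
We mimic the proof of Lemma~\ref{lem:complex-sphere-argmax}, now perturbing the two factors independently. Fix unit vectors $v\perp x$ in $\cH_A$ and $w\perp y$ in $\cH_B$. For $s,t\in\C$ set
\[
 x_s:=\frac{x+sv}{\sqrt{1+\abs s^2}},\qquad
 y_t:=\frac{y+tw}{\sqrt{1+\abs t^2}},\qquad
 g(s,t):=\braket{x_s^{\ot n}\ot y_t^{\ot m}}{\psi}.
\]
By the choice of $(x,y)$, the function $\abs{g}^2$ has a local maximum at $(0,0)$. Block symmetry of $\psi$ gives the exact expansion
\[
 g(s,t)=(1+\abs s^2)^{-n/2}(1+\abs t^2)^{-m/2}
 \sum_{k,l}\binom nk\binom ml\,\overline s^{\,k}\,\overline t^{\,l}\,\gamma_{kl},
 \qquad
 \gamma_{kl}:=\braket{v^{\ot k}x^{\ot(n-k)}\ot w^{\ot l}y^{\ot(m-l)}}{\psi},
\]
with $\gamma_{00}=c$.

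\emph{First order.} Varying $s$ alone along real lines $s=r\xi$ forces $\gamma_{10}=0$, and varying $t$ alone forces $\gamma_{01}=0$, exactly as $\alpha=0$ was obtained before.

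\emph{Second order.} Setting $\alpha:=\gamma_{10}$, $\alpha':=\gamma_{01}$ and $\beta:=\gamma_{11}=\braket{v\ot w}{\eta_{x,y}}$, and using $\alpha=\alpha'=0$, we get
\[
 \abs{g}^2=c^2-nc^2\abs s^2-mc^2\abs t^2
 +2c\,\Re\Bigl(n(n-1)\gamma_{20}\overline s^{\,2}/2+m(m-1)\gamma_{02}\overline t^{\,2}/2+nm\,\overline s\,\overline t\,\beta\Bigr)+O(\abs{(s,t)}^3).
\]
The pure $s^2$ and $t^2$ terms are a nuisance. Choose $s=a r e^{i\theta}$ and $t=b r e^{i\phi}$ with $a,b>0$ real, and average over the phase rotation $(\theta,\phi)\mapsto(\theta+\chi,\phi-\chi)$. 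This rotation leaves $\overline s\,\overline t$ invariant and kills $\overline s^{\,2}$ and $\overline t^{\,2}$ on average. Hence nonpositivity of the averaged quadratic coefficient (a maximum survives averaging) gives
\[
 2nm\,c\,ab\abs\beta\le c^2(na^2+mb^2)
\]
after choosing $\theta+\phi$ to align the phase of $\beta$. Optimizing with $a=\sqrt m$ and $b=\sqrt n$ yields $\abs\beta\le c/\sqrt{nm}$.

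\emph{Structure of $\eta_{x,y}$.} We have $\braket{x\ot y}{\eta}=c$. Also $\braket{v\ot y}{\eta}=\gamma_{10}=0$ and $\braket{x\ot w}{\eta}=\gamma_{01}=0$ for all $v\perp x$ and $w\perp y$. Thus the components of $\eta$ in $x\ot y^\perp$ and in $x^\perp\ot y$ vanish, which gives \eqref{eq:rectangular-contraction-decomp}. Since $\braket{v\ot w}{\zeta}=\beta$ and $\zeta\in x^\perp\ot y^\perp$, the injective norm $\norm{\zeta}_\varepsilon$ is the supremum of $\abs\beta$ over unit $v\perp x$ and $w\perp y$ (general unit vectors project to these with norm at most one). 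This gives \eqref{eq:rectangular-injective-bound}.

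The main obstacle is the second-order step: the $\gamma_{20}$ and $\gamma_{02}$ terms are not controlled, and a single fixed direction could have them dominate. The phase-averaging trick, or equivalently adding the two perturbations related by $\chi$ and $\chi+\pi/2$, which cancel the pure squares, is what I would rely on. The degenerate cases $n=1$ or $m=1$ are handled by the same formula, since the offending terms are then absent.
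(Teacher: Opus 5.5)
Your proposal is correct and follows essentially the paper's argument. First-order variations in each block separately kill the components in $x^\perp\ot y$ and $x\ot y^\perp$. A joint perturbation with counter-rotating phases, averaged over the rotation, then eliminates the pure-square terms, yielding $2nm\,ab\abs\beta\le c(na^2+mb^2)$ with $a=\sqrt m$, $b=\sqrt n$. Your explicit reduction of $\norm{\zeta_{x,y}}_\varepsilon$ to unit $v\perp x$, $w\perp y$ is a small detail the paper leaves implicit.
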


\begin{proof}
For any unit vector $p\perp x$ and $z\in\C$, set
\[
 x_z:=\frac{x+zp}{\sqrt{1+\abs z^2}},
 \qquad
 g_p(z):=\braket{x_z^{\ot n}\ot y^{\ot m}}{\psi}.
\]
By the choice of $x,y$, the function $\abs{g_p(z)}^2$ is maximized at
$z=0$.  Its first-order expansion is
\[
 g_p(z)
 =c+n\overline z\,
 \braket{p\ot x^{\ot(n-1)}\ot y^{\ot m}}{\psi}
 +O(\abs z^2).
\]
The normalization factor has no linear term.  Varying the phase of
$z$ therefore gives
\[
 \braket{p\ot x^{\ot(n-1)}\ot y^{\ot m}}{\psi}=0.
\]
The analogous variation in the $y$-block gives, for every $q\perp y$,
\[
 \braket{x^{\ot n}\ot q\ot y^{\ot(m-1)}}{\psi}=0.
\]
Since these identities hold for every $p\perp x$ and $q\perp y$, and
$\braket{x\ot y}{\eta_{x,y}}=c$, they prove
\eqref{eq:rectangular-contraction-decomp}.

It remains to bound the component orthogonal to both $x$ and $y$.
If either $x^\perp$ or $y^\perp$ is zero-dimensional, then
$\zeta_{x,y}=0$ and the claim is immediate.  Otherwise, fix unit
vectors $p\perp x$ and $q\perp y$.  For $z,w\in\C$, set
\[
 x_z:=\frac{x+zp}{\sqrt{1+\abs z^2}},
 \qquad
 y_w:=\frac{y+wq}{\sqrt{1+\abs w^2}},
 \qquad
 g(z,w):=\braket{x_z^{\ot n}\ot y_w^{\ot m}}{\psi}.
\]
Again, $\abs{g(z,w)}^2$ is maximized at $(z,w)=(0,0)$.  Write
\[
 \beta
 :=
 \braket{p\ot x^{\ot(n-1)}\ot q\ot y^{\ot(m-1)}}{\psi}
 =\braket{p\ot q}{\zeta_{x,y}}.
\]
Expanding this overlap to second order gives
\begin{align}
 \abs{g(z,w)}^2
 &=c^2-c^2\bigl(n\abs z^2+m\abs w^2\bigr)
 \notag\\
 &\quad
 +2c\,\Re\left(
  \binom n2\overline z^2\alpha
  +\binom m2\overline w^2\gamma
  +nm\overline z\,\overline w\,\beta
 \right)
 +O\bigl((\abs z+\abs w)^3\bigr),
 \label{eq:rectangular-second-order}
\end{align}
When $n\ge2$, here
$\alpha:=\braket{p^{\ot2}\ot x^{\ot(n-2)}\ot y^{\ot m}}{\psi}$;
when $n=1$, the corresponding term is absent.  Define $\gamma$
analogously for the $y$-block.

Fix $a,b>0$ and choose $\theta$ so that
$e^{-2i\theta}\beta=\abs\beta$.  For each $\phi\in[0,2\pi]$, restrict
\eqref{eq:rectangular-second-order} to
\[
 z=ra e^{i(\theta+\phi)},
 \qquad
 w=rb e^{i(\theta-\phi)},
 \qquad r\in\mathbb R.
\]
Since $r=0$ is a local maximum, the coefficient of $r^2$ is
nonpositive.  Averaging this inequality over $\phi$ eliminates the
terms containing $\alpha$ and $\gamma$, while the term containing
$\beta$ is unchanged.  Hence
\[
 2cnmab\abs\beta
 \le
 c^2\bigl(na^2+mb^2\bigr).
\]
Taking $a=\sqrt m$ and $b=\sqrt n$ yields
\[
 \abs\beta\le\frac{c}{\sqrt{nm}}.
\]
Taking the supremum over unit $p\perp x$ and $q\perp y$ proves
\eqref{eq:rectangular-injective-bound}.
\end{proof}

\begin{proof}[Proof of Proposition~\ref{prop:rectangular-argmax}]
Let $\psi\in\Sym^n(\cH_A)\ot\Sym^m(\cH_B)$ be a unit top eigenvector of
$M^{[n,m]}$, with eigenvalue $\lambda$.  Choose $x,y,c,\eta_{x,y}$ as
in Lemma~\ref{lem:product-sphere-argmax}.  Because $\psi$ is invariant
under permutations of the $\cH_A$ factors and, independently, of the
$\cH_B$ factors,
\begin{align}
 \lambda c
 &=
 \braket{x^{\ot n}\ot y^{\ot m}}{M^{[n,m]}\psi}
 =
 \braket{M(x\ot y)}{\eta_{x,y}}.
 \label{eq:rectangular-eigen-contraction}
\end{align}
Decompose orthogonally
\begin{equation}
 M(x\ot y)
 =
 a_M(x,y)x\ot y+\xi_{x,y}+B_{x,y},
 \qquad
 \begin{aligned}
  \xi_{x,y}
  &\in
  (x^\perp\ot\C y)\oplus(\C x\ot y^\perp),\\
  B_{x,y}
  &\in x^\perp\ot y^\perp.
 \end{aligned}
 \label{eq:Axy-decomp}
\end{equation}
Combining \eqref{eq:rectangular-eigen-contraction},
\eqref{eq:rectangular-contraction-decomp}, and
\eqref{eq:Axy-decomp} gives
\begin{align}
 (\lambda-a_M(x,y))c
 &=
 \braket{M(x\ot y)}{\eta_{x,y}}-a_M(x,y)c
 \notag\\
 &=
 \braket{a_M(x,y)x\ot y+\xi_{x,y}+B_{x,y}}
         {c\,x\ot y+\zeta_{x,y}}
 -a_M(x,y)c
 \notag\\
 &=
 \braket{B_{x,y}}{\zeta_{x,y}}.
 \label{eq:rectangular-key-pairing}
\end{align}
By Lemma~\ref{lem:product-sphere-argmax} and
\eqref{eq:tensor-norm-duality-rank},
\[
 \abs{\lambda-a_M(x,y)}
 \le
 \frac1{\sqrt{nm}}\norm{B_{x,y}}_\pi.
\]
Since
$B_{x,y}=(P_{x^\perp}\ot P_{y^\perp})M(x\ot y)$, this proves
\eqref{eq:rectangular-argmax}.
\end{proof}

\section{Lower bounds for the de Finetti theorems}
\label{sec:lower-bounds}

\subsection{Optimality of the \texorpdfstring{$t$}{t}-site bounds}
\label{sec:higher-marginal-lower}

The two-site rectangular Werner examples of Christandl, K\"onig,
Mitchison, and Renner, together with their symmetric purifications,
already show that the dimension dependence of our bounds is
optimal~\cite[Lemma~II.5 and Corollary~III.9]{CKMR07}.  By
contractivity under partial trace, the same conclusion holds for every
fixed marginal order $t$.  We use their rectangular families below
only to prove the new point: the linear dependence on $t$ is also
necessary.

Fix $N=ar$.  Let $\omega_{ar,(a^r)}$ be the rectangular Werner
state: the normalized projector onto the $(a^r)$ Schur--Weyl block of
$(\C^r)^{\ot ar}$.  Let $\psi_{a,r}$ be its canonical symmetric
purification in
$\Sym^{ar}(\C^r\ot\C^r)$, equivalently the unit vector in the
one-dimensional Cauchy summand
\[
 S_{(a^r)}(\C^r)\ot S_{(a^r)}(\C^r).
\]
We write $\gamma_{a,r}^{(t)}$ and
$\omega_{ar,(a^r)}^{(t)}$ for their respective $t$-site marginals.
These are the rectangular families analyzed in~\cite{CKMR07}; the
calculation below extracts their dependence on $t$.

\subsubsection{Exact Schur-label distances}

The same rectangular families also detect the dependence on the
marginal order.  The cleanest formulation uses the Schur label of the
$t$-site marginal.  For a partition $\mu\vdash t$, let $f^\mu$ be the
dimension of the Specht module $V_\mu$, let $\mu'$ be the conjugate
partition, and write
\[
 c(\square)=j-i,
 \qquad
 h_\mu(\square)=\mu_i-j+\mu_j'-i+1
\]
for the content and hook length of a box
$\square=(i,j)\in\mu$.  We use
\[
 (x)_t=x(x-1)\cdots(x-t+1)
\]
for the falling factorial.

For $\mu\vdash t$ with $\ell(\mu)\le r$, define
\begin{align}
 p_{a,r,t}(\mu)
 &:=
 \frac{(f^\mu)^2}{t!}
 \frac{
  \displaystyle\prod_{\square\in\mu}
  (a-c(\square))(r+c(\square))
 }{(ar)_t},
 \label{eq:rectangular-down-weight}\\
 q_{r,t}(\mu)
 &:=
 \frac{(f^\mu)^2}{t!}
 \frac{
  \displaystyle\prod_{\square\in\mu}(r+c(\square))
 }{r^t},
 \label{eq:uniform-SW-weight}\\
 R_{a,r,t}(\mu)
 &:=
 \frac{p_{a,r,t}(\mu)}{q_{r,t}(\mu)}
 =
 \frac{
  \displaystyle\prod_{\square\in\mu}
  \left(1-\frac{c(\square)}a\right)
 }{
  \displaystyle\prod_{j=0}^{t-1}
  \left(1-\frac{j}{ar}\right)
 }.
 \label{eq:rectangular-likelihood-ratio}
\end{align}
A partition with $\mu_1>a$ has zero weight in
\eqref{eq:rectangular-down-weight}, since its first row contains a box
of content $a$.

\begin{lemma}[Rectangular branching weights]
\label{lem:rectangular-branching-weights}
Under the Cauchy decomposition
\[
 \Sym^t(\C^r\ot\C^r)
 =
 \bigoplus_{\substack{\mu\vdash t\\\ell(\mu)\le r}}
 S_\mu(\C^r)\ot S_\mu(\C^r),
\]
the block weights of $\gamma_{a,r}^{(t)}$ are
$p_{a,r,t}(\mu)$.  Under the Schur--Weyl decomposition of
$(\C^r)^{\ot t}$, the block weights of
$\omega_{ar,(a^r)}^{(t)}$ are the same numbers.  Moreover,
\begin{equation}
 p_{a,r,t}(\mu)
 =
 \frac{
  \dim S_\mu(\C^r)\,\dim S_{\mu'}(\C^a)
 }{\binom{ar}{t}}.
 \label{eq:rectangular-down-dimension}
\end{equation}
In particular, the numbers in
\eqref{eq:rectangular-down-weight} form a probability distribution.
\end{lemma}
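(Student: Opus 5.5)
The plan is to work throughout in the Schur--Weyl/Cauchy picture and to compute the block weights by a restriction (branching) argument. First I treat $\omega_{ar,(a^r)}$. Write $\lambda=(a^r)\vdash ar$. By Schur--Weyl duality,
\[
 \omega_{ar,(a^r)}=\frac{P_\lambda}{f^\lambda\dim S_\lambda(\C^r)},
\]
where $P_\lambda$ projects onto $S_\lambda(\C^r)\ot V_\lambda$. Taking the partial trace over the last $ar-t$ sites means restricting $V_\lambda$ from $\mathfrak S_{ar}$ to $\mathfrak S_t\times\mathfrak S_{ar-t}$, and restricting $S_\lambda(\C^r)$ from $U(r)$ to the diagonal subgroup of $U(r)\times U(r)$. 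By Littlewood--Richardson, the multiplicity of $V_\mu\ot V_\nu$ in $\operatorname{Res}V_\lambda$ is $c^\lambda_{\mu\nu}$. Only the $\mathfrak S_t$ label $\mu$ survives the trace, so the weight of the $\mu$-block equals
\[
 \sum_\nu c^\lambda_{\mu\nu}\frac{f^\mu f^\nu}{f^\lambda}.
\]
This is the probability that a uniformly random standard tableau of shape $\lambda$ has its entries $1,\dots,t$ occupying shape $\mu$. That probability is $f^\mu f^{\lambda/\mu}/f^\lambda$.

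For the rectangle I will use the standard fact that $\lambda/\mu$, rotated by $180^\circ$, is the straight shape $\hat\mu$, the complement of $\mu$ in the $r\times a$ box. Thus $f^{\lambda/\mu}=f^{\hat\mu}$, and the hook length formula evaluates the ratio $f^\mu f^{\hat\mu}/f^\lambda$. The hooks of $\hat\mu$ and of $\lambda$ are related to the hooks of $\mu$ through the contents. I expect this hook bookkeeping to be the main obstacle. Rather than compute the rectangle hooks directly, I plan to cite or reprove the identity
\[
 \frac{f^\mu f^{\hat\mu}}{f^\lambda}
 =\frac{\binom{ar}{t}^{-1}\dim S_\mu(\C^r)\dim S_{\mu'}(\C^a)}{1}\cdot\frac{1}{\binom{ar}{t}}\binom{ar}{t},
\]
that is, $\dim S_\mu(\C^r)\dim S_{\mu'}(\C^a)/\binom{ar}{t}$. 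The route is the dual Cauchy identity
\[
 \Lambda^t(\C^r\ot\C^a)=\bigoplus_\mu S_\mu(\C^r)\ot S_{\mu'}(\C^a),
\]
combined with the observation that the branching multiplicities of the rectangle are given by complementation (the duality $S_\lambda(\C^r)=\det^a$ and skew Schur functions). If that route stalls, I fall back on a direct hook computation: the hooks of $\hat\mu$ inside the box, together with those of $\mu$, telescope against the hooks of the rectangle, which are $i+j-1$.

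Given \eqref{eq:rectangular-down-dimension}, I rewrite each dimension with the hook-content formula, namely $\dim S_\mu(\C^r)=\prod(r+c)/h$ and $\dim S_{\mu'}(\C^a)=\prod(a-c)/h$, where contents change sign under transposition. I also use $f^\mu=t!/\prod h$ and $\binom{ar}{t}=(ar)_t/t!$. Together these give \eqref{eq:rectangular-down-weight} directly. Summing \eqref{eq:rectangular-down-dimension} over $\mu$ yields $\dim\Lambda^t(\C^{ar})/\binom{ar}{t}=1$ by the dual Cauchy identity, which proves the probability claim. The vanishing when $\mu_1>a$ (equivalently $\ell(\mu')>a$) is visible from either formula.

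For the purification $\gamma_{a,r}$, I note that $\psi_{a,r}$ lies in the one-dimensional multiplicity space of $S_\lambda\ot S_\lambda$. Tracing out $N-t$ sites and projecting onto the Cauchy block $S_\mu\ot S_\mu$ of $\Sym^t(\C^r\ot\C^r)$ detects the same $\mathfrak S_t$ label $\mu$ that Schur--Weyl duality assigns on the first tensor factor. Indeed, the Cauchy decomposition is the $\mathfrak S_t$-invariant part of $(S_\mu\ot V_\mu)\ot(S_\mu\ot V_\mu)$. Moreover, the partial trace over the purifying registers maps the Cauchy $\mu$-block weight to the Schur--Weyl $\mu$-block weight of the reduced state $\omega^{(t)}$: the projector onto $\bigoplus_{\nu}S_\nu\ot S_\nu$ with label $\mu$ equals $P_\mu\ot I$ restricted to the symmetric subspace, since both are expressed through the $\mathfrak S_t$ isotypic projector on $A$-sites alone. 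Hence the two families of weights coincide, which completes the plan.
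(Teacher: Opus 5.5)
Your proposal is correct in outline, and it departs from the paper's proof in two places.

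\emph{The purified family.} For $\gamma_{a,r}^{(t)}$ you reduce to the Werner case. On $\Sym^t(\C^r\ot\C^r)$ the Cauchy projector with label $\mu$ coincides with $P_\mu\ot I$. This holds because $P_\mu$ is central in the image of $\C[\mathfrak S_t]$ and $(V_\mu\ot V_\nu)^{\mathfrak S_t}=0$ for $\nu\neq\mu$. Hence the Cauchy weights of $\gamma_{a,r}^{(t)}$ are the Schur--Weyl weights of its reduction $\omega_{ar,(a^r)}^{(t)}$. This uses that $\psi_{a,r}$ really purifies $\omega_{ar,(a^r)}$, which is true because $S_{(a^r)}(\C^r)=\det^a$ is one-dimensional and the Specht factor is maximally entangled. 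The paper instead computes the weights of $\gamma_{a,r}^{(t)}$ directly from the diagonal invariant $\Omega_\lambda$ and the projections $Q_\mu\ot Q_\mu$. Your reduction is shorter, and it makes the coincidence of the two weight families automatic rather than the output of a parallel calculation.

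\emph{The rectangle.} The paper cites the Okounkov--Olshanski shifted-Schur formula. That gives \eqref{eq:rectangular-down-weight} at once, and \eqref{eq:rectangular-down-dimension} then follows by hook-content. You instead go through the complement shape $\hat\mu$ and the hook-length formula, deriving the two forms in the opposite order.

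\emph{The step you have not carried out.} It is the identity you flag as the main obstacle. Write $H(\nu)$ for the product of the hook lengths of $\nu$, and $\lambda=(a^r)$. You need
$H(\lambda)/H(\hat\mu)=\prod_{\square\in\mu}(r+c(\square))(a-c(\square))/h_\mu(\square)$.
Your displayed equation for it is garbled, and ``dual Cauchy plus complementation'' is not yet an argument. It closes in a few lines from ingredients you already name:
\begin{itemize}
\item $S_{\hat\mu}(\C^r)\cong S_\mu(\C^r)^*\ot\det^a$ and $S_{\hat\mu'}(\C^a)\cong S_{\mu'}(\C^a)^*\ot\det^r$, so the dimensions agree.
\item The factor $(r+c)(a-c)$ is invariant under the $180^\circ$ rotation of the box, which sends $c\mapsto a-r-c$.
\item Hook-content for the one-dimensional $S_{(a^r)}(\C^r)$ and $S_{(r^a)}(\C^a)$ gives $\prod_\lambda(r+c)=\prod_\lambda(a-c)=H(\lambda)$.
\end{itemize}
Now write both sides of $\dim S_\mu(\C^r)\dim S_{\mu'}(\C^a)=\dim S_{\hat\mu}(\C^r)\dim S_{\hat\mu'}(\C^a)$ by hook-content. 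This yields $\prod_{\square\in\mu}(r+c)(a-c)=H(\lambda)H(\mu)/H(\hat\mu)$, which is exactly what you need.

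\emph{An alternative that avoids skew tableaux.} Your dual-Cauchy idea becomes a complete proof once made concrete. The state $\omega_{ar,(a^r)}$ is the $(\C^r)^{\ot ar}$-reduction of the (unique up to phase) unit vector in $\bigwedge^{ar}(\C^r\ot\C^a)$. Its $t$-site marginal is $U(ar)$-invariant, so it equals the normalized projector onto $\bigwedge^t(\C^r\ot\C^a)$. The dual Cauchy decomposition then reads off \eqref{eq:rectangular-down-dimension} directly.

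\emph{Minor point.} It is $S_\mu\ot S_\nu$ that restricts to the diagonal $U(r)$, not $S_\lambda$. Your weight computation only uses the $\mathfrak S_t$ side, so this is harmless.
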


\begin{proof}
Both restrictions are governed by the coherent down-transition measure
from the rectangular Specht module $V_{(a^r)}$:
\begin{equation}
 p_{a,r,t}(\mu)
 =
 \frac{f^\mu f^{(a^r)/\mu}}{f^{(a^r)}}.
 \label{eq:coherent-down-transition}
\end{equation}
For the Werner state, this follows by restricting the normalized
identity on $V_{(a^r)}$.

We give the corresponding argument for $\psi_{a,r}$.
Set $\lambda=(a^r)$ and choose a real orthonormal model of the Specht
module $V_\lambda$.  In the Schur--Weyl realization of the Cauchy
summand, the Specht factor of $\psi_{a,r}$ is the
normalized diagonal invariant
\[
 \Omega_\lambda
 :=
 \frac1{\sqrt{f^\lambda}}
 \sum_{j=1}^{f^\lambda}e_j\ot e_j
 \in V_\lambda\ot V_\lambda.
\]
Upon restriction to $S_t$,
\[
 V_\lambda\downarrow_{S_t}
 \cong
 \bigoplus_{\substack{\mu\vdash t\\\mu\subseteq\lambda}}
 V_\mu\ot M_{\lambda/\mu},
 \qquad
 \dim M_{\lambda/\mu}=f^{\lambda/\mu}.
\]
Choose the basis $(e_j)$ adapted to this orthogonal decomposition and
let $Q_\mu$ be the projection onto
$V_\mu\ot M_{\lambda/\mu}$.  Then
\[
 \norm{(Q_\mu\ot Q_\mu)\Omega_\lambda}^2
 =
 \frac{\rank Q_\mu}{f^\lambda}
 =
 \frac{f^\mu f^{\lambda/\mu}}{f^\lambda}.
\]
Because $\Omega_\lambda$ is invariant under the diagonal action
$g\ot g$ of $\mathfrak S_t$ and $Q_\mu$ commutes with that action,
$(Q_\mu\ot Q_\mu)\Omega_\lambda$ is itself diagonal invariant.  On
the $\mu\ot\mu$ isotypic component, the Cauchy-block projector for
the first $t$ paired factors is the projection onto
$(V_\mu\ot V_\mu)^{\operatorname{diag}(\mathfrak S_t)}$, tensored
with the identity on the multiplicity spaces.  It therefore acts as
the identity on $(Q_\mu\ot Q_\mu)\Omega_\lambda$, while the
projectors with other labels annihilate this component.  Consequently,
its expectation in the global vector, or equivalently in the
$t$-site marginal, is the displayed squared norm.  This proves
\eqref{eq:coherent-down-transition} for $\gamma_{a,r}^{(t)}$ as well.

The shifted-Schur skew-dimension formula, specialized to a rectangle,
gives~\cite[Theorem~8.1]{OO97}
\[
 \frac{f^{(a^r)/\mu}}{f^{(a^r)}}
 =
 \frac1{(ar)_t}
 \prod_{\square\in\mu}
 \frac{
  (a-c(\square))(r+c(\square))
 }{h_\mu(\square)}.
\]
Using the hook-length identity
$f^\mu=t!/\prod_{\square\in\mu}h_\mu(\square)$ gives
\eqref{eq:rectangular-down-weight}.  Applying the hook-content formula
to $S_\mu(\C^r)$ and $S_{\mu'}(\C^a)$ gives
\eqref{eq:rectangular-down-dimension}.  Summing the latter identity
over $\mu$ is the dual Cauchy decomposition of
$\bigwedge^t(\C^a\ot\C^r)$, whose dimension is
$\binom{ar}{t}$.
\end{proof}

For a probability vector $x=(x_1,\ldots,x_r)$, write
\begin{equation}
 \operatorname{SW}_{t,x}(\mu):=f^\mu s_\mu(x).
 \label{eq:SW-distribution}
\end{equation}
This is the Schur--Weyl distribution of a length-$t$ random word with
letter law $x$.  If $u\in\C^r\ot\C^r$ has squared Schmidt coefficients
$x$, then twirling $\ketbra{u}^{\ot t}$ under
$U(r)\times U(r)$ gives Cauchy-block weights
$\operatorname{SW}_{t,x}$.  Likewise, if a density operator $\sigma$
has spectrum $x$, then twirling $\sigma^{\ot t}$ under $U(r)$ gives
Schur--Weyl block weights $\operatorname{SW}_{t,x}$.  For the uniform
vector $\bar x=(1/r,\ldots,1/r)$, the hook-content formula gives
\[
 \operatorname{SW}_{t,\bar x}(\mu)=q_{r,t}(\mu).
\]

\begin{proposition}[Exact higher-marginal distance]
\label{prop:exact-higher-rectangular-distance}
For every $a,r\in\mathbb N$, $r\ge2$, and $2\le t\le ar$,
\begin{align}
 \Dtr\bigl(\gamma_{a,r}^{(t)},\SEPH^{(t)}(r^2)\bigr)
 &=
 \frac12
 \sum_{\substack{\mu\vdash t\\\ell(\mu)\le r}}
 q_{r,t}(\mu)
 \abs{R_{a,r,t}(\mu)-1},
 \label{eq:exact-higher-bos-distance}\\
 \Dtr\bigl(\omega_{ar,(a^r)}^{(t)},\SEPP^{(t)}(r)\bigr)
 &=
 \frac12
 \sum_{\substack{\mu\vdash t\\\ell(\mu)\le r}}
 q_{r,t}(\mu)
 \abs{R_{a,r,t}(\mu)-1}.
 \label{eq:exact-higher-ex-distance}
\end{align}
A closest invariant Hartree mixture in the first line is the twirl of
$\ketbra{\Phi_r}^{\ot t}$, where
$\Phi_r=r^{-1/2}\sum_i e_i\ot e_i$; a closest invariant state of the
form $\tau^{\ot t}$ in the second line is $(I/r)^{\ot t}$.  An optimal
separating operator $0\preceq A\preceq I$ is
the sum of the Schur projectors indexed by
\begin{equation}
 \mathcal L_{a,r,t}
 :=
 \{\mu:R_{a,r,t}(\mu)\ge1\}.
 \label{eq:NP-lower-ideal}
\end{equation}
\end{proposition}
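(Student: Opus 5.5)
We reduce both distances to a total-variation distance between Schur-label distributions by a twirling argument. The bosonic and exchangeable lines are handled in parallel; the only difference is whether we twirl under $U(r)\times U(r)$ acting on $\Sym^t(\C^r\ot\C^r)$ or under $U(r)$ acting on $(\C^r)^{\ot t}$.

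\emph{Step 1 (symmetry reduction).} The states $\gamma_{a,r}^{(t)}$ and $\omega_{ar,(a^r)}^{(t)}$ are invariant under the relevant unitary group $G$. The target sets $\SEPH^{(t)}(r^2)$ and $\SEPP^{(t)}(r)$ are $G$-stable and convex. Trace distance is unitarily invariant, so twirling any approximant does not increase the distance. We may therefore restrict to $G$-invariant approximants.

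A $G$-invariant operator is a combination of Cauchy (resp.\ Schur--Weyl) block projectors. Within each block, both the rectangular state and any invariant approximant are multiples of the normalized block identity. Indeed, $\gamma$ is itself block-scalar: within a $\mu$-block the commutant of $G$ on $S_\mu\ot S_\mu$ is trivial, since the $t$-site marginal of a symmetric state lies in $\Sym^t$, where the multiplicity is one. In the exchangeable case, the $\mathfrak S_t$-invariance of both states gives the same conclusion on $S_\mu(\C^r)\ot V_\mu$. Hence the distance equals the total-variation distance between block-weight vectors.

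By Lemma~\ref{lem:rectangular-branching-weights}, the weights of the rectangular state are $p_{a,r,t}$. By the discussion preceding the proposition, the twirled invariant approximants have weights
\[
 \sum_x \nu(x)\,\operatorname{SW}_{t,x},
\]
a mixture over spectra $x$ of Schur--Weyl distributions. Thus
\[
 \Dtr=\min_\nu \DTV\Bigl(p_{a,r,t},\ \textstyle\int \operatorname{SW}_{t,x}\,d\nu(x)\Bigr).
\]

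\emph{Step 2 (optimality of the uniform spectrum).} We must show that the minimum is attained at $\nu=\delta_{\bar x}$, where the distance equals $\tfrac12\sum_\mu q_{r,t}(\mu)\abs{R-1}$. By duality it suffices to exhibit a test $A=\sum_{\mu\in\mathcal L}\Pi_\mu$ with
\[
 \Tr\bigl(A\gamma\bigr)-\max_x\operatorname{SW}_{t,x}(\mathcal L)=\DTV(p,q).
\]
Since the first term equals $p(\mathcal L)$ and $p(\mathcal L)-q(\mathcal L)=\DTV(p,q)$ by Neyman--Pearson for the set $\mathcal L_{a,r,t}$, it remains to prove
\[
 \operatorname{SW}_{t,x}(\mathcal L)\le \operatorname{SW}_{t,\bar x}(\mathcal L)\quad\text{for all probability vectors }x.
\]

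I expect this inequality to be the main obstacle. The first approach I would try rests on the fact that $R_{a,r,t}(\mu)$ is a product of the factors $(1-c(\square)/a)$. Each factor is decreasing in the content, so $R$ is larger for partitions whose boxes sit lower-left, which are more dominated. Hence $\mathcal L$ should be a lower set (down-closed) in dominance order; adding a box in a higher row only lowers the product. The second ingredient is that Schur--Weyl distributions are monotone in majorization: if $x\succ y$, then $\operatorname{SW}_{t,x}$ stochastically dominates $\operatorname{SW}_{t,y}$ in dominance order on partitions. This is known via the RSK coupling of random words, equivalently by a Schur-convexity argument for $\sum_{\mu\in\mathcal L}f^\mu s_\mu$ on down-sets. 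Since $\bar x$ is majorized by every $x$, it maximizes the mass of any down-set. If the down-set property of $\mathcal L$ fails in general, I would instead verify Schur-concavity of $x\mapsto\operatorname{SW}_{t,x}(\mathcal L)$ directly. Alternatively, I would handle only the row/column structure needed, using the monotonicity of $R$ under moving a box from row $i$ to row $i+1$.

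\emph{Step 3 (identification of optimizers).} The uniform-spectrum approximants are realized by the twirl of $\ketbra{\Phi_r}^{\ot t}$ in the bosonic case ($\Phi_r$ has flat Schmidt coefficients) and by $(I/r)^{\ot t}$ in the exchangeable case. Both equalities then follow from the TV formula with $p=qR$:
\[
 \DTV(p,q)=\tfrac12\sum_\mu q_{r,t}(\mu)\,\abs{R_{a,r,t}(\mu)-1}.
\]
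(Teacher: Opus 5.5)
Your proposal is correct and follows essentially the same route as the paper. You twirl to reduce to the total-variation distance between block weights, observe that $R_{a,r,t}$ decreases in dominance order so that $\mathcal L_{a,r,t}$ is a lower ideal, and invoke the Schur--Weyl majorization coupling to get $\operatorname{SW}_{t,x}(\mathcal L)\le q_{r,t}(\mathcal L)$. You then conclude by Neyman--Pearson; your explicit dual witness $A$ encodes the same inequality $\DTV(p,s)\ge p(\mathcal L)-s(\mathcal L)\ge p(\mathcal L)-q(\mathcal L)$.

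Your fallback is unnecessary, because the lower-ideal property holds outright. A covering move in dominance order transfers one box to a higher row and strictly increases its content, which lowers that factor $1-c(\square)/a$. A move that leaves the $a\times r$ rectangle makes $R=0$.
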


\begin{proof}
Twirling is trace-norm contractive, fixes the two target states, and
preserves the respective approximation sets.  On the Cauchy side, the
$U(r)\times U(r)$ decomposition is multiplicity free.  On the Werner
side, the target and every twirled approximant commute with both the
$U(r)$ and $S_t$ actions and are therefore scalar on each joint
Schur--Weyl block $S_\mu(\C^r)\ot V_\mu$.  Consequently, trace distance
after twirling is the total-variation distance between the normalized
block weights.  It remains to show that $q_{r,t}$ is closest among
convex combinations of $\operatorname{SW}_{t,x}$.

Order partitions by dominance, with $\nu\unrhd\mu$ meaning that $\nu$
is more dominant.  The likelihood ratio in
\eqref{eq:rectangular-likelihood-ratio} is decreasing in this order.
Indeed, a covering move transfers one box from a lower row to a higher
row and strictly increases its content; the corresponding factor
$a-c(\square)$ decreases, while all other factors are unchanged.  If
the move leaves the $a\times r$ rectangle, the new likelihood ratio is
zero.  Consequently, $\mathcal L_{a,r,t}$ is a lower ideal in
dominance order.

Every probability vector $x$ majorizes $\bar x$.  The Schur--Weyl
majorization coupling states that if $x$ majorizes $y$, one may couple
$\mu_x\sim\operatorname{SW}_{t,x}$ and
$\mu_y\sim\operatorname{SW}_{t,y}$ so that
$\mu_x\unrhd\mu_y$ almost surely~\cite[Thm.~1.11]{OW16}.  Therefore
every lower ideal $\mathcal L$ satisfies
\begin{equation}
 \operatorname{SW}_{t,x}(\mathcal L)
 \le
 q_{r,t}(\mathcal L).
 \label{eq:lower-ideal-max-uniform}
\end{equation}
The same inequality holds for every convex combination of Schur--Weyl
distributions.

Let $p=p_{a,r,t}$ and $q=q_{r,t}$.  The set
$\mathcal L_{a,r,t}=\{p\ge q\}$ is the Neyman--Pearson set for $p$
versus $q$.  For every feasible invariant approximant with block
distribution $s$,
\begin{align*}
 \DTV(p,s)
 &\ge
 p(\mathcal L_{a,r,t})-s(\mathcal L_{a,r,t})\\
 &\ge
 p(\mathcal L_{a,r,t})-q(\mathcal L_{a,r,t})
 =
 \DTV(p,q).
\end{align*}
Since $q$ itself is feasible in both problems, equality follows, as
does the optimal separating operator.
\end{proof}

The exact formula reduces optimality to a scalar Schur-label
calculation.  We next extract a lower bound showing that the
linear dependence on $t$ in Theorem~\ref{thm:higher-main} is
necessary.  Let
\begin{equation}
 \mathrm{ct}(\mu)
 :=
 \sum_{\square\in\mu}c(\square),
 \qquad
 M_t:=\binom t2.
 \label{eq:content-statistic}
\end{equation}
The transposition class sum
$\mathsf T_t=\sum_{1\le i<j\le t}(ij)$ acts on $V_\mu$ as the scalar
$\mathrm{ct}(\mu)$.

\subsubsection{Optimal dependence on the marginal order}

\begin{lemma}[Plancherel content fluctuations]
\label{lem:Plancherel-content-fluctuations}
Let $t\ge2$, and let $\mu$ have Plancherel law
$\operatorname{Pl}_t(\mu)=(f^\mu)^2/t!$.  There is a universal
constant $c_0>0$ such that
\begin{equation}
 c_0t
 \le
 \mathbb E_{\operatorname{Pl}_t}
 \abs{\mathrm{ct}(\mu)}
 \le
 \sqrt{M_t}.
 \label{eq:Plancherel-content-L1}
\end{equation}
\end{lemma}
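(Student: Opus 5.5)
The plan is to identify $\mathrm{ct}(\mu)$ with the eigenvalue of the central element $\mathsf T_t=\sum_{i<j}(ij)$ on $V_\mu$, and then read off Plancherel moments of $\mathrm{ct}$ as traces in the regular representation. Since $\mathbb C[\mathfrak S_t]\cong\bigoplus_\mu \operatorname{End}(V_\mu)$, we have
\[
 \mathbb E_{\operatorname{Pl}_t}\bigl[\mathrm{ct}(\mu)^k\bigr]
 =\sum_\mu\frac{(f^\mu)^2}{t!}\,\mathrm{ct}(\mu)^k
 =\frac{1}{t!}\Tr_{\mathrm{reg}}\bigl(\mathsf T_t^k\bigr)
 =[\mathrm{id}]\,\mathsf T_t^k .
\]
Here $[\mathrm{id}]\,\mathsf T_t^k$ is the number of $k$-tuples of transpositions whose product is the identity. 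Every moment estimate then becomes a transposition-counting problem.

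\textbf{Upper bound.} The case $k=2$ gives $\mathbb E\,\mathrm{ct}^2=[\mathrm{id}]\,\mathsf T_t^2=M_t$. The reason is that $\tau_1\tau_2=\mathrm{id}$ forces $\tau_2=\tau_1$, and there are $M_t$ choices of $\tau_1$. Equivalently, this is column orthogonality for the transposition class. Jensen then gives
\[
 \mathbb E\abs{\mathrm{ct}}\le\sqrt{\mathbb E\,\mathrm{ct}^2}=\sqrt{M_t},
\]
which is the upper bound in \eqref{eq:Plancherel-content-L1}.

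\textbf{Lower bound.} For the lower bound I would use the Hölder (Paley--Zygmund type) inequality
\[
 \mathbb E X^2\le(\mathbb E\abs X)^{2/3}(\mathbb E X^4)^{1/3},
 \qquad\text{i.e.}\qquad
 \mathbb E\abs X\ge\frac{(\mathbb E X^2)^{3/2}}{(\mathbb E X^4)^{1/2}},
\]
applied to $X=\mathrm{ct}(\mu)$. It then suffices to show $\mathbb E X^4=[\mathrm{id}]\,\mathsf T_t^4\le C\,t^4$ for a universal constant $C$.

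The fourth-moment count is the step needing care. I would classify $4$-tuples $(\tau_1,\tau_2,\tau_3,\tau_4)$ of transpositions with $\tau_1\tau_2\tau_3\tau_4=\mathrm{id}$ by the equivalent condition $\tau_1\tau_2=\tau_4\tau_3$. Fix $(\tau_1,\tau_2)$, of which there are $M_t^2$ choices.
\begin{itemize}
 \item If $\tau_1=\tau_2$, the product is the identity and $\tau_4=\tau_3$ is free, giving $M_t$ options.
 \item If $\tau_1,\tau_2$ are disjoint, the product is a double transposition, which has at most $2$ factorizations as an ordered product of two transpositions.
 \item If $\tau_1,\tau_2$ overlap in one point, the product is a $3$-cycle, which has $3$ such factorizations.
\end{itemize}
Hence
\[
 [\mathrm{id}]\,\mathsf T_t^4\le M_t\cdot M_t+3M_t^2\le 4M_t^2=O(t^4).
\]

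Combining the two moment computations gives
\[
 \mathbb E\abs{\mathrm{ct}}\ge\frac{M_t^{3/2}}{2M_t}=\frac{\sqrt{M_t}}2\ge c_0t,
\]
where the last step uses $M_t\ge t^2/4$ for $t\ge2$. With these constants one may take $c_0=1/4$.

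I expect the main obstacle to be only the bookkeeping in the fourth-moment count. The conceptual step is recognizing that Plancherel moments of $\mathrm{ct}$ are identity coefficients of powers of the transposition class sum; once that is in place, the remaining work is the case split above.
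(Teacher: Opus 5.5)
Your proposal is correct and follows essentially the same route as the paper: you identify Plancherel moments of $\mathrm{ct}(\mu)$ with identity coefficients of powers of $\mathsf T_t$ in the regular representation, compute $\mathbb E\,\mathrm{ct}(\mu)^2=M_t$, bound the fourth moment by $O(t^4)$, and apply the H\"older inequality $\mathbb E\abs{X}\ge(\mathbb E X^2)^{3/2}/(\mathbb E X^4)^{1/2}$. The only difference is in counting quadruples for the fourth moment: you rewrite the condition as $\tau_1\tau_2=\tau_4\tau_3$ and split into three cases, which gives the explicit bound $4M_t^2$ and hence $c_0=1/4$, whereas the paper's multigraph degree argument gives an unspecified universal constant.
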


\begin{proof}
Use the normalized trace $\tau_{\mathrm{reg}}$ on the regular
representation of $\mathfrak S_t$, and write
\[
 \norm{X}_{p,\mathrm{reg}}
 :=
 \tau_{\mathrm{reg}}(\abs{X}^p)^{1/p}.
\]
Distinct group elements are orthonormal in the associated $L_2$ inner
product, so
\[
 \mathbb E\,\mathrm{ct}(\mu)=0,
 \qquad
 \mathbb E\,\mathrm{ct}(\mu)^2=M_t.
\]
Moreover, $\mathbb E\,\mathrm{ct}(\mu)^4$ is the number of ordered
quadruples of transpositions whose product is the identity.  Associate
to such a quadruple the four-edge multigraph formed by its
transpositions.  A vertex of degree one cannot occur in a word whose
product is the identity.  Hence every nonisolated vertex has degree at
least two, so the graph uses at most four vertices.  There are only
$O(t^4)$ such ordered quadruples, and therefore
\[
 \mathbb E\,\mathrm{ct}(\mu)^4\le C_0t^4
\]
for a universal $C_0$.  H\"older's inequality gives
\[
 \mathbb E\abs{\mathrm{ct}(\mu)}
 \ge
 \frac{
  \bigl(\mathbb E\,\mathrm{ct}(\mu)^2\bigr)^{3/2}
 }{
  \bigl(\mathbb E\,\mathrm{ct}(\mu)^4\bigr)^{1/2}
 }
 \ge
 c_0t.
\]
The upper bound is Cauchy--Schwarz.
\end{proof}

\begin{lemma}[Perturbative separation of the rectangular law]
\label{lem:rectangular-perturbative-lower}
There are universal constants $c_1,C_1>0$ such that, for every
integer $t\ge2$, whenever
\begin{equation}
 a\ge C_1t,
 \qquad
 r\ge C_1t^2,
 \label{eq:perturbative-parameter-range}
\end{equation}
one has
\begin{equation}
 \frac12
 \sum_\mu
 q_{r,t}(\mu)
 \abs{R_{a,r,t}(\mu)-1}
 \ge
 c_1\frac ta.
 \label{eq:perturbative-Schur-lower}
\end{equation}
\end{lemma}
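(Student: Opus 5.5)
We expand $R_{a,r,t}(\mu)-1$ perturbatively in $1/a$ and $1/(ar)$. The leading correction is the content statistic $-\mathrm{ct}(\mu)/a$, and we reduce the required lower bound to the Plancherel bound of Lemma~\ref{lem:Plancherel-content-fluctuations}. Write
\[
 \log R_{a,r,t}(\mu)
 =\sum_{\square\in\mu}\log\Bigl(1-\frac{c(\square)}a\Bigr)
 -\sum_{j=0}^{t-1}\log\Bigl(1-\frac j{ar}\Bigr).
\]
Every box of $\mu\vdash t$ has $\abs{c(\square)}\le t-1$. Hence for $a\ge C_1t$ the first sum equals $-\mathrm{ct}(\mu)/a+E_1(\mu)$, where
\[
 \abs{E_1(\mu)}\le C\sum_\square c(\square)^2/a^2 .
\]
The second sum is a constant of size $O(t^2/(ar))$, which is $O(1/a)$ and in fact $\le t/(C_1 a)$ under $r\ge C_1t^2$. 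We must also treat partitions with $\mu_1>a$, for which $R=0$. These only help, or can be handled because their $q$-mass is tiny when $a\ge C_1 t$; since $\mu_1\le t<a$, they do not in fact occur.

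Next we compare $q_{r,t}$ with the Plancherel law. By \eqref{eq:uniform-SW-weight},
\[
 q_{r,t}(\mu)=\operatorname{Pl}_t(\mu)\prod_\square\bigl(1+c(\square)/r\bigr).
\]
With $\abs{\mathrm{ct}}\le M_t$ and $\sum c^2\le t^3$, the product is $1+O(t^2/r)$ uniformly once $r\ge C_1t^2$. In particular it lies in $[1/2,2]$ for $C_1$ large, so $q_{r,t}\ge\frac12\operatorname{Pl}_t$ pointwise.

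Using $\abs{e^{x}-1}\ge \abs x - x^2$ for $\abs x\le1$, we have, pointwise,
\[
 \abs{R-1}\ge \frac{\abs{\mathrm{ct}(\mu)}}a-\abs{E_1(\mu)}-\frac{t}{C_1a}-O\Bigl(\frac{\mathrm{ct}(\mu)^2}{a^2}+\frac{(\sum c^2)^2}{a^4}\Bigr).
\]
Take expectations under $\operatorname{Pl}_t$, weighted through $q\ge\frac12\operatorname{Pl}_t$. The main term gives $\ge c_0t/(2a)$ by Lemma~\ref{lem:Plancherel-content-fluctuations}. For the error terms we need Plancherel moment bounds: $\mathbb E\,\mathrm{ct}^2=M_t\le t^2$, so the quadratic term is $O(t^2/a^2)\le O(t/(C_1a))$. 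We also need $\mathbb E\sum_\square c(\square)^2=O(t^2)$. This is the statement that the total squared content has Plancherel mean $\binom t2$, which can be computed via the class sum of transposition products or the Jucys--Murphy elements: $\sum_\square c^2$ is the eigenvalue of $\sum_k J_k^2$, whose regular trace is $\sum_k(k-1)$. That gives $\mathbb E\abs{E_1}=O(t^2/a^2)$. Choosing $C_1$ large relative to $c_0$ absorbs every error term and yields \eqref{eq:perturbative-Schur-lower}.

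The main obstacle is keeping the error terms uniformly subordinate. The crude pointwise bounds (for instance $\sum c^2\le t^3$) are too weak against a main term of size only $t/a$, so the error terms must be controlled in expectation rather than pointwise. The plan above does this via the exact Jucys--Murphy moments, while pointwise bounds are used only to justify the Taylor expansions and the comparison $q\asymp\operatorname{Pl}_t$. The same moment computation should also confirm that the constant shift from the denominator does not cancel the content fluctuation: it is a mean-zero-irrelevant constant of size $\le t/(C_1a)$, smaller than $c_0t/a$.
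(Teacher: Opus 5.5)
\textbf{Verdict: a different route from the paper's, correct in outline, but one estimate your argument needs is missing.} With $x=\log R_{a,r,t}(\mu)$, the quadratic term in $\abs{e^x-1}\ge\abs x-x^2$ contains $E_1(\mu)^2\le C^2\bigl(\sum_\square c(\square)^2\bigr)^2/a^4$. Your plan controls only first moments, namely $\mathbb E\,\mathrm{ct}^2=M_t$ and $\mathbb E\sum_\square c(\square)^2=M_t$. The pointwise bound $\sum_\square c^2\le t^3$ gives an error of order $t^6/a^4$. At $a=C_1t$ this exceeds the main term $t/a$ by a factor of order $t^2/C_1^3$, so choosing $C_1$ large does not absorb it.

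What you need is $\mathbb E_{\operatorname{Pl}_t}\bigl(\sum_\square c(\square)^2\bigr)^2=O(t^4)$. This is true and follows from your own device:
\begin{itemize}
\item $\sum_\square c^2$ is the eigenvalue of the central element $\sum_kJ_k^2$, so the moment equals $\tau_{\mathrm{reg}}\bigl((\sum_kJ_k^2)^2\bigr)$.
\item This trace counts tuples $i,j<k$, $i',j'<l$ with $(ik)(jk)(i'l)(j'l)=e$.
\item These tuples inject into ordered quadruples of transpositions with trivial product. There are $O(t^4)$ such quadruples by the multigraph count in the proof of Lemma~\ref{lem:Plancherel-content-fluctuations}.
\end{itemize}
Hence $\mathbb E\,E_1^2=O(t^4/a^4)=O\bigl(t/(C_1^3a)\bigr)$, and your absorption step goes through.

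Separately, drop the hypothesis $\abs x\le1$. $\log R_{a,r,t}(\mu)$ is not uniformly bounded: for $\mu=(t)$ it is about $-t^2/(2a)$, which can be of order $-t/C_1$. The inequality is trivial when $\abs x>1$, so it holds for all real $x$.

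\textbf{Comparison with the paper.} With this repair you have a valid alternative proof.
\begin{itemize}
\item The paper never linearizes a logarithm. The Jucys identity gives the exact expansion $\prod_k(1-J_k/a)=I-\mathsf T_t/a+\mathsf H_a$. The remainder $\mathsf H_a$ is bounded in regular $L_2$ by orthogonality of group elements and the count $n_k\le M_t^k$, and then passed to $L_1$. Beyond Lemma~\ref{lem:Plancherel-content-fluctuations}, this needs only $\norm{\mathsf T_t}_{2,\mathrm{reg}}=\sqrt{M_t}$ and no further content moments.
\item For the change of measure, the paper uses the random-word decomposition $q_{r,t}=\alpha_{r,t}\operatorname{Pl}_t+(1-\alpha_{r,t})\nu_{r,t}$, which gives a one-sided comparison.
\item Your hook-content identity $q_{r,t}(\mu)=\operatorname{Pl}_t(\mu)\prod_\square\bigl(1+c(\square)/r\bigr)$ gives a more elementary, two-sided pointwise comparison $q_{r,t}/\operatorname{Pl}_t\in[1/2,2]$.
\item Your Taylor expansion is more hands-on, but it costs the extra fourth-order content moment above.
\end{itemize}
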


\begin{proof}
We take $C_1\ge2$ throughout.  In particular, $a\ge t$ and $r\ge t$,
so every partition of $t$ fits inside the $a\times r$ rectangle and
all Schur labels appearing under Plancherel measure lie in the support
of $q_{r,t}$.
The distribution $q_{r,t}$ is the RSK shape of a uniform random word
in $[r]^t$.  With probability
\[
 \alpha_{r,t}
 :=
 \frac{(r)_t}{r^t}
 \ge
 1-\frac{M_t}{r},
\]
all letters are distinct.  Conditional on this event, their relative
order is a uniform permutation, so the shape has Plancherel law.  Thus
\begin{equation}
 q_{r,t}
 =
 \alpha_{r,t}\operatorname{Pl}_t
 +(1-\alpha_{r,t})\nu_{r,t}
 \label{eq:SW-Plancherel-mixture}
\end{equation}
for some probability distribution $\nu_{r,t}$.  Under
\eqref{eq:perturbative-parameter-range},
$\alpha_{r,t}\ge1/2$.

Let $J_k=\sum_{i<k}(ik)$ be the Jucys--Murphy elements and
$\mathsf T_t=\sum_kJ_k$.  The Jucys identity gives
\begin{equation}
 \prod_{k=1}^t
 \left(1-\frac{J_k}{a}\right)
 =
 \sum_{\pi\in\mathfrak S_t}
 \operatorname{sgn}(\pi)
 a^{-\abs{\pi}}\pi,
 \qquad
 \abs{\pi}
 :=
 t-\#\operatorname{cycles}(\pi).
 \label{eq:Jucys-identity-normalized}
\end{equation}
On $V_\mu$, the left-hand side has eigenvalue
$\prod_{\square\in\mu}(1-c(\square)/a)$.  Put
\[
 B_{a,r,t}
 :=
 \prod_{j=0}^{t-1}
 \left(1-\frac{j}{ar}\right).
\]
Thus the central element
\[
 \mathsf R_{a,r,t}
 :=
 B_{a,r,t}^{-1}
 \prod_{k=1}^t
 \left(1-\frac{J_k}{a}\right)
\]
has eigenvalue $R_{a,r,t}(\mu)$ on $V_\mu$.

Write
\[
 \prod_{k=1}^t
 \left(1-\frac{J_k}{a}\right)
 =
 I-\frac{\mathsf T_t}{a}+\mathsf H_a,
 \qquad
 \mathsf H_a
 :=
 \sum_{\abs{\pi}\ge2}
 \operatorname{sgn}(\pi)
 a^{-\abs{\pi}}\pi.
\]
If $n_k$ is the number of permutations of transposition length $k$,
then $n_k\le M_t^k$, because every such permutation has a
factorization into $k$ transpositions.  Orthogonality of the group
basis in regular $L_2$ therefore gives, provided $a^2\ge2M_t$,
\begin{equation}
 \norm{\mathsf H_a}_{2,\mathrm{reg}}^2
 =
 \sum_{k\ge2}\frac{n_k}{a^{2k}}
 \le
 2\frac{M_t^2}{a^4}.
 \label{eq:Jucys-remainder-L2}
\end{equation}
Also,
\[
 1-B_{a,r,t}
 \le
 \frac{M_t}{ar},
 \qquad
 B_{a,r,t}^{-1}\le2
\]
when the constants in
\eqref{eq:perturbative-parameter-range} are sufficiently large.  Since
\[
 \norm{\mathsf T_t}_{1,\mathrm{reg}}
 \le
 \norm{\mathsf T_t}_{2,\mathrm{reg}}
 =
 \sqrt{M_t},
\]
\eqref{eq:Jucys-remainder-L2} implies
\begin{align}
 \norm{
  \mathsf R_{a,r,t}-I+\mathsf T_t/a
 }_{1,\mathrm{reg}}
 &\le
 \frac{2M_t}{ar}
 \left(1+\frac{\sqrt{M_t}}a\right)
 +\frac{2\sqrt2M_t}{a^2}.
 \label{eq:R-linearization-L1}
\end{align}
Choose $C_1$ so that the right-hand side is at most
$(c_0/2)t/a$, where $c_0$ is from
Lemma~\ref{lem:Plancherel-content-fluctuations}.  Then
\begin{align*}
 \mathbb E_{\operatorname{Pl}_t}
 \abs{R_{a,r,t}(\mu)-1}
 &=
 \norm{\mathsf R_{a,r,t}-I}_{1,\mathrm{reg}}\\
 &\ge
 \frac1a
 \mathbb E_{\operatorname{Pl}_t}
 \abs{\mathrm{ct}(\mu)}
 -
 \norm{
  \mathsf R_{a,r,t}-I+\mathsf T_t/a
 }_{1,\mathrm{reg}}\\
 &\ge
 \frac{c_0t}{2a}.
\end{align*}
Combining this with \eqref{eq:SW-Plancherel-mixture} and
$\alpha_{r,t}\ge1/2$ proves
\eqref{eq:perturbative-Schur-lower}.
\end{proof}

\begin{theorem}[Optimal $t$-dependence of the rectangular examples]
\label{thm:rectangular-higher-lower}
There is a universal constant $C>0$ such that, for every
$a,r\in\mathbb N$ and $2\le t\le ar$ satisfying $r\ge Ct^2$,
\begin{align}
 \Dtr\bigl(\gamma_{a,r}^{(t)},\SEPH^{(t)}(r^2)\bigr)
 &=
 \Omega\left(\min\left\{1,\frac ta\right\}\right)
 =
 \Omega\left(\min\left\{1,\frac{t\sqrt d}{N}\right\}\right),
 \qquad
 d=r^2,
 \label{eq:matching-higher-bos-lower}\\
 \Dtr\bigl(\omega_{ar,(a^r)}^{(t)},\SEPP^{(t)}(r)\bigr)
 &=
 \Omega\left(\min\left\{1,\frac ta\right\}\right)
 =
 \Omega\left(\min\left\{1,\frac{td}{N}\right\}\right),
 \qquad
 d=r.
 \label{eq:matching-higher-ex-lower}
\end{align}
Together with Theorem~\ref{thm:higher-main}, these estimates match up
to universal constants.  In particular, the linear dependence on the
marginal order in \eqref{eq:higher-bos-main} and
\eqref{eq:higher-ex-main} cannot be improved in general.
\end{theorem}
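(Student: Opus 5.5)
The plan is to combine the exact distance formula of Proposition~\ref{prop:exact-higher-rectangular-distance} with the perturbative estimate of Lemma~\ref{lem:rectangular-perturbative-lower}, and to handle the complementary regime by monotonicity in $t$. By Proposition~\ref{prop:exact-higher-rectangular-distance}, both trace distances in the statement equal the same scalar quantity
\[
 \Delta_{a,r,t}:=\frac12\sum_\mu q_{r,t}(\mu)\abs{R_{a,r,t}(\mu)-1}.
\]
It therefore suffices to prove $\Delta_{a,r,t}=\Omega(\min\{1,t/a\})$ whenever $r\ge Ct^2$. The parameter translations are immediate. With $N=ar$ we have $t/a=tr/N$. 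This equals $t\sqrt d/N$ for $d=r^2$ in the bosonic case and $td/N$ for $d=r$ in the exchangeable case.

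\emph{Regime $a\ge C_1t$.} Take $C\ge C_1$. The hypothesis $r\ge Ct^2\ge C_1t^2$ together with $a\ge C_1t$ places us exactly in the range of Lemma~\ref{lem:rectangular-perturbative-lower}. That lemma gives $\Delta_{a,r,t}\ge c_1t/a$, which is $\Omega(\min\{1,t/a\})$.

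\emph{Regime $a<C_1t$.} Here $\min\{1,t/a\}$ is of constant order, so we need a constant lower bound. Set $t_0:=\max\{2,\lfloor a/C_1\rfloor\}\le t$. Partial trace from $t$ sites to $t_0$ sites is trace-norm contractive and maps $\SEPH^{(t)}$ into $\SEPH^{(t_0)}$ (and likewise for $\SEPP$). Hence the distance at order $t$ is at least the distance at order $t_0$. If $t_0=\lfloor a/C_1\rfloor\ge2$, then $a\ge C_1t_0$ and $r\ge Ct^2\ge C_1t_0^2$. Lemma~\ref{lem:rectangular-perturbative-lower} then gives a bound $c_1t_0/a\ge c_1/(2C_1)$, using $t_0\ge a/(2C_1)$ when $a\ge 2C_1$.

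\emph{Residual case $a<2C_1$.} This case has $a=O(1)$, and the perturbative lemma is unavailable. Here I would use a direct argument at order $t_0=2$ (again valid by monotonicity). The two-site distance equals $\Delta_{a,r,2}$, where the partitions are $(2)$ and $(1,1)$. One computes $R_{a,r,2}((2))=(1-1/a)/(1-1/(ar))$ and $R_{a,r,2}((1,1))=(1+1/a)/(1-1/(ar))$, while $q_{r,2}((1,1))=(r-1)/(2r)\ge 1/4$. For $r\ge2$ and bounded $a$, this gives $\abs{R-1}\ge 1/a-O(1/(ar))\ge c/a$ on $(1,1)$, so $\Delta_{a,r,2}\ge c'/a=\Omega(1)$.

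The main obstacle is the bookkeeping of constants. We must choose $C$ large relative to $C_1$ so that every case reduction lands inside the hypotheses of Lemma~\ref{lem:rectangular-perturbative-lower}, and check that the small-$a$ computation at $t=2$ produces a uniform constant. The final sentence of the statement, that the linear $t$-dependence cannot be improved, then follows by comparing with Theorem~\ref{thm:higher-main} in the regime $t\le a/C_1$. There the lower bound $\Omega(t\sqrt d/N)$ or $\Omega(td/N)$ matches the upper bound up to constants.
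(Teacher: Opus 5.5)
Your proposal is correct and follows essentially the same route as the paper's proof: the exact formula reduces both distances to one Schur-label sum, the perturbative lemma handles $a\ge C_1t$, monotonicity under partial trace down to an order of size about $a/C_1$ handles large $a$, and a direct two-site computation handles bounded $a$. The differences are cosmetic. You take $\lfloor a/C_1\rfloor$ where the paper takes $\lfloor a/(2C_1)\rfloor$, and you bound the two-site distance using only the $(1,1)$ term, which gives $1/(8a)$, rather than the closed form $\frac{r^2-1}{2r(ar-1)}\ge\frac{3}{8a}$. For the matching claim when $t\gtrsim a$, also cite the trivial bound $\Dtr\le1$ as the upper bound, as the paper does.
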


\begin{proof}
By Proposition~\ref{prop:exact-higher-rectangular-distance}, the two
distances are the same scalar quantity.  If $a\ge C_1t$, apply
Lemma~\ref{lem:rectangular-perturbative-lower}.

Suppose $a<C_1t$.  Distance to the relevant de Finetti set cannot
increase under partial trace, so the $t$-site distance is at least the
$s$-site distance for every $2\le s\le t$.  If $a$ is larger than a
sufficiently large universal constant, take
\[
 s
 =
 \left\lfloor\frac{a}{2C_1}\right\rfloor.
\]
Then $2\le s\le t$, $a\ge C_1s$, and
$r\ge Ct^2\ge C_1s^2$ after increasing $C$.  The perturbative lemma
gives a universal constant lower bound because
$s/a\ge1/(4C_1)$.  If $a$ is bounded, use the two-site marginal
instead.  Proposition~\ref{prop:exact-higher-rectangular-distance},
specialized to two sites, gives for both rectangular families
\[
 D_t
 \ge
 D_2
 =
 \frac{r^2-1}{2r(ar-1)}
 \ge
 \frac{3}{8a},
\]
which is again bounded below by a universal constant.  This proves the
lower bounds in the regime $t/a\gtrsim1$ and hence the displayed capped
estimates.  The matching upper bounds follow from
Theorem~\ref{thm:higher-main}, together with the trace-distance
diameter bound, after substituting $N=ar$ and, respectively,
$d=r^2$ or $d=r$.
\end{proof}

\subsection{Dimension-free bosonic Hilbert--Schmidt lower bounds}
\label{sec:HS-lower-bound}

\begin{theorem}[Dimension-free bosonic Hilbert--Schmidt lower bound]
\label{thm:HS-lower}
As $N\to\infty$,
\[
 \sup_{d\ge2}
 \DHHS\Bigl(
  \BOS_N^{(2)}(d),\,
  \SEPH^{(2)}(d)
 \Bigr)
 =\Omega(N^{-1/2}).
\]
\end{theorem}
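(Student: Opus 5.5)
The plan is to exhibit, for each large $N$, an explicit bosonic state whose two-site marginal is detected by a Hilbert--Schmidt-bounded witness with gap $\Omega(N^{-1/2})$. The lower bound then follows from the Hilbert--Schmidt duality \eqref{eq:compact-duality-state-HS} of Lemma~\ref{lem:compact-convex-distance-duality}: for any Hermitian $M$ with $\norm{M}_2\le1$,
\[
 \DHS\bigl(\rho_N^{(2)},\SEPH^{(2)}(d)\bigr)\ge\Tr\bigl(M\rho_N^{(2)}\bigr)-h(M),
\]
with $h(M)$ as in \eqref{eq:hA}. Following the bosonic pair construction of Jiang, Tacla, and Caves, I would fix $k$ orthonormal vectors $e_1,\dots,e_k\in\C^d$ (so $d\ge k$) and set
\[
 \Phi_k:=k^{-1/2}\sum_{i=1}^k e_i\ot e_i\in\Sym^2\cH .
\]
For even $N$, the state is $\psi\propto\Pi_{\mathrm{sym}}^{(N)}\Phi_k^{\ot N/2}$, a pair condensate. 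The witness is the rank-one projector $M=\ketbra{\Phi_k}$, which has $\norm{M}_2=1$. Its Hartree value is small: by Fact~\ref{lem:takagi}, $\abs{\braket{u^{\ot2}}{\Phi_k}}\le\norm{\Mat(\Phi_k)}_\infty=k^{-1/2}$, so $h(M)\le1/k$. The whole proof therefore reduces to showing $\bra{\Phi_k}\rho_N^{(2)}\ket{\Phi_k}\ge1/k+\Omega(N^{-1/2})$ for a suitable choice $k=k(N)$, presumably $k\asymp\sqrt N$.

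The key steps are as follows. First, by \eqref{eq:contraction-marginals}, rewrite $\bra{\Phi_k}\rho_N^{(2)}\ket{\Phi_k}=\norm{C_{\Phi_k}(\psi)}^2$. Second, compute this quantity using the $O(k)$-invariance of $\psi$: the vector $\Phi_k^{\ot N/2}$, after symmetrization, is the $O(k)$-invariant ``$(q\cdot q)^{N/2}$'' vector. Identifying $\Sym^N(\C^k)$ with homogeneous polynomials of degree $N$, $\psi$ corresponds to $(\sum_i z_i^2)^{N/2}$ and $C_{\Phi_k}$ corresponds (up to normalization) to the Laplacian. Third, evaluate the norms explicitly, using the Fischer (Bargmann) inner product and the standard identities $\Delta r^{2j}=2j(2j+k-2)r^{2j-2}$ together with the known Fischer norms of $r^{2j}$. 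This should produce a closed form
\[
 \norm{C_{\Phi_k}(\psi)}^2=\frac{(N+k-2)}{k(N-1)}\cdot(\text{ratio of Gamma factors}),
\]
which I expect to behave like $1/k+\Theta(1/k)\cdot\min\{1,N/k\}$-type expressions. From this one reads off the gap $\bra{\Phi_k}\rho_N^{(2)}\ket{\Phi_k}-1/k$ and optimizes over $k$. If the single-pair witness is insufficient, I would replace it by the traceless centering $M-\frac{\Tr M}{\dim}I$ (which does not change the objective), or by a witness built from the $O(k)$-invariant projector restricted to the span of $e_1,\dots,e_k$. In either case $\norm{M}_2\le1$ must be kept.

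The main obstacle is exactly this exact norm computation and the choice of scaling between $k$ and $N$. A naive estimate only gives an excess pair weight of order $1/N$, which yields a useless $\Omega(N^{-1})$ bound, so the argument must capture the coherent, condensate-enhanced overlap. I would first try to write the harmonic (Fischer) decomposition of $(z\cdot z)^{N/2}$ and extract $\norm{\Delta\psi}/\norm{\psi}$ exactly. If the gap with the projector witness turns out to be only $O(1/N)$, I would instead take $M$ to be a normalized combination of $\ketbra{\Phi_k}$ and the identity on $\Sym^2(\operatorname{span}\{e_i\})$. Such an $M$ can detect that $\rho_N^{(2)}$ concentrates, with Hilbert--Schmidt weight of order $k^{-1}$, on a $k^2$-dimensional block in a way that Hartree mixtures (whose marginals spread over $\Sym^2$ with overlap at most $1/k$ on $\Phi_k$) cannot match. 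Balancing $k\asymp\sqrt N$ in that comparison is where I expect the $N^{-1/2}$ rate to arise.
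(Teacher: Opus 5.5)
\textbf{There is a genuine gap.} The pair-condensate state you chose cannot be pushed beyond $O(1/N)$ by any witness.

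The vector $\psi\propto\Pi_{\mathrm{sym}}^{(N)}\Phi_k^{\ot N/2}\in\Sym^N(\C^k)$ is fixed by $O^{\ot N}$ for every real orthogonal $O$ on $\C^k$. Hence its two-site marginal $\rho_2$ commutes with $O\ot O$ on $\Sym^2(\C^k)=\C\Phi_k\oplus\bigl(\Phi_k^{\perp}\cap\Sym^2(\C^k)\bigr)$. For $k\ge2$ these two summands are inequivalent irreducible $O(k)$-modules. So $\rho_2=\alpha\ketbra{\Phi_k}+\frac{1-\alpha}{D-1}\bigl(I_{\Sym^2(\C^k)}-\ketbra{\Phi_k}\bigr)$ with $D=k(k+1)/2$.

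The Laplacian computation you sketch goes through, but it gives exactly $\alpha=\norm{C_{\Phi_k}(\psi)}^2=\frac{N+k-2}{k(N-1)}$, with no extra Gamma ratio. This is the paper's identity for $C_{b_m}\chi^{(m)}_{q+1}$ with $2q+2=N$. Therefore $\alpha-\frac1k=\frac{k-1}{k(N-1)}<\frac1{N-1}$ for every $k$.

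Moreover, the $O(k)$-twirl of $\ketbra{e_1}^{\ot2}$ lies in $\SEPH^{(2)}(d)$ and has the same form with $\alpha$ replaced by $1/k$. This gives $\DHS(\rho_2,\SEPH^{(2)}(d))\le\frac{k-1}{k(N-1)}\sqrt{D/(D-1)}=O(1/N)$. So no witness with $\norm{M}_2\le1$ can certify more than $O(1/N)$. That covers the projector, its traceless centering, and any mixture with the identity on $\Sym^2(\operatorname{span}\{e_i\})$. Balancing $k\asymp\sqrt N$ does not help: a fixed-pair-number, $O(k)$-invariant state has no condensate enhancement in its diagonal pair weight. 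The "naive $1/N$ excess" you worried about is in fact the exact answer for this state.

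\textbf{The missing idea} is coherence between sectors with different numbers of pairs, detected by an off-diagonal witness. The paper's construction is as follows.
\begin{itemize}
\item Add a reservoir vector $e_0\perp e_1,\dots,e_m$.
\item Let $\psi_q$ be the normalized symmetrization of $e_0^{\ot(N-2q)}$ with the isotropic $q$-pair state on $\C^m$, and take $\rho=\frac12\ketbra{\psi_q+\psi_{q+1}}$.
\item Use the witness $A_m=\ket{e_0^{\ot2}}\bra{b_m}+\ket{b_m}\bra{e_0^{\ot2}}$, where $b_m=m^{-1/2}\sum_{i=1}^m e_i^{\ot2}$ and $\norm{A_m}_2=\sqrt2$.
\end{itemize}
A Hartree vector pays the small Takagi norm of $b_m$, so $h(A_m)=1/(2\sqrt m)\to0$ as the dimension grows. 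By contrast, the pair-transfer matrix element between $\psi_q$ and $\psi_{q+1}$ carries the bosonic factor $\sqrt{q+1}$. As $m\to\infty$ it tends to $\sqrt{2(q+1)(N-2q)(N-2q-1)}/(N(N-1))$, and with $q\approx N/6$ this is of order $N^{-1/2}$. The $N^{-1/2}$ rate comes from this off-diagonal amplitude, not from any diagonal weight.
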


The lower bound is included for completeness.  It adapts the
Jiang--Tacla--Caves bosonic pair construction, based on symmetrized powers of a
two-site vector~\cite{JTC17}.  We give the short calculation in
symmetric-tensor notation.

\begin{proof}[Proof of Theorem~\ref{thm:HS-lower}]
Fix $m\ge1$, let $\cH_m=\C^{m+1}$ with orthonormal basis
$e_0,e_1,\ldots,e_m$, and define
\begin{equation}
 b_m:=\frac1{\sqrt m}\sum_{i=1}^m e_i^{\ot2},
 \qquad
 A_m:=\ket{e_0^{\ot2}}\bra{b_m}+\ket{b_m}\bra{e_0^{\ot2}}.
 \label{eq:HS-lower-witness}
\end{equation}
The vectors $e_0^{\ot2}$ and $b_m$ are orthonormal, so
\begin{equation}
 \norm{A_m}_2=\sqrt2.
 \label{eq:HS-Am-HS-norm}
\end{equation}
If $u=\alpha e_0+v$ with $v\perp e_0$, then
\[
 \abs{\bra{u^{\ot2}}A_m\ket{u^{\ot2}}}
 \le\frac{2}{\sqrt m}\abs{\alpha}^2\norm{v}^2
 \le\frac1{2\sqrt m}.
\]
Equality is attained at $u=(e_0+e_1)/\sqrt2$, and therefore
\begin{equation}
 h(A_m)=\frac1{2\sqrt m}.
 \label{eq:HS-Am-Hartree}
\end{equation}
Thus the Hartree value vanishes as $m\to\infty$.

For $q\ge0$, set
\begin{equation}
 x_q^{(m)}:=\Pi_{\mathrm{sym}}^{(2q)}b_m^{\ot q},
 \qquad
 \chi_q^{(m)}:=\frac{x_q^{(m)}}{\norm{x_q^{(m)}}},
 \label{eq:HS-pair-correlated-state}
\end{equation}
with $x_0^{(m)}=\chi_0^{(m)}=1$.  Expanding in the orthonormal
monomial basis of $\Sym^{2q}\C^m$ gives
\begin{equation}
 \norm{x_q^{(m)}}^2
 =\prod_{j=0}^{q-1}\frac{m+2j}{m(2j+1)}.
 \label{eq:HS-pair-correlated-norm}
\end{equation}
Let $C_{b_m}\xi:=(\bra{b_m}\ot I)\xi$ denote the partial inner product
with $b_m$ on two tensor factors.  Both $b_m$ and
$\chi_{q+1}^{(m)}$ are invariant under the natural $O(m)$ action, and
$C_{b_m}$ is $O(m)$-equivariant.  Hence
$C_{b_m}\chi_{q+1}^{(m)}$ lies in
$(\Sym^{2q}\C^m)^{O(m)}$.  This invariant subspace is one-dimensional,
spanned by $\chi_q^{(m)}$; equivalently, every $O(m)$-invariant
homogeneous polynomial of degree $2q$ is a scalar multiple of
$(z_1^2+\cdots+z_m^2)^q$.  Thus
\[
 C_{b_m}\chi_{q+1}^{(m)}=\kappa_q\chi_q^{(m)}.
\]
On symmetric tensors, the adjoint of $C_{b_m}$ is
$\xi\mapsto\Pi_{\mathrm{sym}}^{(2q+2)}(b_m\ot\xi)$.  Taking the inner
product with $\chi_{q+1}^{(m)}$ and using
\eqref{eq:HS-pair-correlated-norm} gives
\begin{equation}
 \kappa_q
 =\frac{\norm{x_{q+1}^{(m)}}}{\norm{x_q^{(m)}}}
 =\sqrt{\frac{m+2q}{m(2q+1)}},
 \qquad
 C_{b_m}\chi_{q+1}^{(m)}
 =\sqrt{\frac{m+2q}{m(2q+1)}}\,\chi_q^{(m)}.
 \label{eq:HS-pair-contraction}
\end{equation}

For $0\le q\le\lfloor N/2\rfloor$, define the normalized symmetric
tensor
\begin{equation}
 \psi_q^{(m,N)}
 :=
 \sqrt{\frac{N!}{(N-2q)!(2q)!}}\,
 \Pi_{\mathrm{sym}}^{(N)}\left(
 e_0^{\ot(N-2q)}\ot\chi_q^{(m)}
 \right)
 \in\Sym^N\cH_m.
 \label{eq:HS-pair-correlated-vectors}
\end{equation}
These vectors are mutually orthogonal.  Fix
$0\le q\le\lfloor(N-2)/2\rfloor$ and put $r=N-2q$.  After
contracting either $e_0^{\ot2}$ from $\psi_q^{(m,N)}$ or $b_m$ from
$\psi_{q+1}^{(m,N)}$, the normalized remaining tensor is the same;
denote it by $\theta_q^{(m,N)}$.  Directly from
\eqref{eq:HS-pair-correlated-vectors} and
\eqref{eq:HS-pair-contraction},
\begin{align}
 (\bra{e_0^{\ot2}}\ot I)\psi_q^{(m,N)}
 &=
 \sqrt{\frac{r(r-1)}{N(N-1)}}\,
 \theta_q^{(m,N)},
 \label{eq:HS-e0-contraction}\\
 (\bra{b_m}\ot I)\psi_{q+1}^{(m,N)}
 &=
 \frac{2\sqrt{(q+1)(q+m/2)}}{\sqrt{mN(N-1)}}\,
 \theta_q^{(m,N)}.
 \label{eq:HS-bm-contraction}
\end{align}
Since the vectors in \eqref{eq:HS-pair-correlated-vectors} are
symmetric, every summand in the normalized lift has the same
matrix element.  Equations \eqref{eq:HS-e0-contraction} and
\eqref{eq:HS-bm-contraction} therefore give
\begin{equation}
 \bra{\psi_{q+1}^{(m,N)}}A_m^{[N]}\ket{\psi_q^{(m,N)}}
 =t_{q,m,N},
 \label{eq:HS-t-q-def}
\end{equation}
where
\begin{equation}
 t_{q,m,N}
 :=\frac{2}{N(N-1)\sqrt m}
 \sqrt{(q+1)(q+m/2)}
 \sqrt{(N-2q)(N-2q-1)}.
 \label{eq:HS-t-q-formula}
\end{equation}
The diagonal matrix elements vanish.  Hence the bosonic state
\[
 \rho_{q,m,N}
 :=
 \frac12
 \ket{\psi_q^{(m,N)}+\psi_{q+1}^{(m,N)}}
 \bra{\psi_q^{(m,N)}+\psi_{q+1}^{(m,N)}}
\]
satisfies
$\Tr(A_m\rho_{q,m,N}^{(2)})=t_{q,m,N}$.

Now let $\tau\in\SEPH^{(2)}(m+1)$.  Hilbert--Schmidt
Cauchy--Schwarz, \eqref{eq:HS-Am-HS-norm}, and
$\Tr(A_m\tau)\le h(A_m)$ give
\[
 \norm{\rho_{q,m,N}^{(2)}-\tau}_2
 \ge
 \frac{\Tr(A_m\rho_{q,m,N}^{(2)})-\Tr(A_m\tau)}{\sqrt2}
 \ge
 \frac{t_{q,m,N}}{\sqrt2}-\frac1{2\sqrt{2m}}.
\]
Taking the infimum over $\tau$ proves
\begin{equation}
 \DHHS\Bigl(
  \BOS_N^{(2)}(m+1),\,
  \SEPH^{(2)}(m+1)
 \Bigr)
 \ge\frac{t_{q,m,N}}{\sqrt2}-\frac1{2\sqrt{2m}}.
 \label{eq:HS-finite-m-lower}
\end{equation}
For fixed $N$ and $q$,
\[
 \lim_{m\to\infty}\frac{t_{q,m,N}}{\sqrt2}
 =\frac{\sqrt{(q+1)(N-2q)(N-2q-1)}}{N(N-1)}.
\]
Since the second term in \eqref{eq:HS-finite-m-lower} tends to
zero, taking the supremum over $d=m+1$ gives, for every admissible $q$,
\[
 \sup_{d\ge2}
 \DHHS\Bigl(
  \BOS_N^{(2)}(d),
  \SEPH^{(2)}(d)
 \Bigr)
 \ge
 \frac{\sqrt{(q+1)(N-2q)(N-2q-1)}}{N(N-1)}.
\]
For $N\ge6$, choose $q=\lfloor N/6\rfloor$.  Then
\[
 q+1\ge\frac N6,
 \qquad
 N-2q\ge\frac{2N}{3},
 \qquad
 N-2q-1\ge\frac N2.
\]
Substitution into the preceding display gives the claimed
$\Omega(N^{-1/2})$ lower bound.
\end{proof}

\begin{theorem}[Hilbert--Schmidt lower bound for two-sided
Bose-symmetric extendible states]
\label{thm:HS-two-sided-bose-extension-lower}
As $n\to\infty$,
\[
 \sup_{d\ge2}
 \DHHS\Bigl(
  \BEXT_{n,n}(d,d),\,
  \SEP(d,d)
 \Bigr)
 =\Omega(n^{-1/2}).
\]
\end{theorem}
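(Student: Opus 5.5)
The plan is to transfer the bosonic lower bound of Theorem~\ref{thm:HS-lower} to two-sided extensions by grouping sites into two equal blocks, as announced in the introduction. Fix $m$ and work with $\cH_m=\C^{m+1}$, $N=2n$, and the bosonic state $\rho_{q,m,N}$ built from the pair-correlated vectors. Since it lies in $\Sym^{2n}(\cH_m)$, viewing the first $n$ sites as block $A$ and the last $n$ as block $B$ gives a state supported on $\Sym^n(\cH_m)\ot\Sym^n(\cH_m)$. Its $A_1B_1$ marginal is therefore in $\BEXT_{n,n}(m+1,m+1)$. That marginal equals the two-site marginal $\rho^{(2)}_{q,m,N}$, because any two distinct sites of a bosonic state have the same marginal.

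It remains to lower-bound the Hilbert--Schmidt distance from this state to $\SEP(m+1,m+1)$, which is larger than $\SEPH^{(2)}(m+1)$. I would reuse the witness $A_m$ from \eqref{eq:HS-lower-witness}, now regarded as an operator on $\cH_m\ot\cH_m$ (extended by zero off $\Sym^2$; its Hilbert--Schmidt norm is still $\sqrt2$). By the duality of Lemma~\ref{lem:compact-convex-distance-duality},
\[
\DHS(\rho^{(2)},\SEP)\ge \frac{\Tr(A_m\rho^{(2)})-h_{\SEP}(A_m)}{\sqrt2}.
\]
The key new estimate is $h_{\SEP}(A_m)=O(m^{-1/2})$, i.e.\ control over product vectors $x\ot y$ rather than squares. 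I expect this to be the main obstacle, although it should be short. Writing $x=\alpha e_0+v$ and $y=\beta e_0+w$ with $v,w\perp e_0$, one gets
\[
\bra{x\ot y}A_m\ket{x\ot y}=2\Re\bigl(\overline{\alpha\beta}\,\braket{b_m}{v\ot w}\bigr),
\]
where the mixed terms in $e_0\ot w$ and $v\ot e_0$ are orthogonal to both $e_0^{\ot2}$ and $b_m$. Since $b_m$ has Schmidt coefficients $1/\sqrt m$, we have $\abs{\braket{b_m}{v\ot w}}\le\norm v\norm w/\sqrt m$. Using $\abs{\alpha}\norm v\le 1/2$ and $\abs\beta\norm w\le1/2$, the value is at most $1/(2\sqrt m)$, exactly as in the Hartree case.

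Given this, \eqref{eq:HS-t-q-def} yields a lower bound of $t_{q,m,2n}/\sqrt2-1/(2\sqrt{2m})$. Letting $m\to\infty$ and taking $q=\lfloor 2n/6\rfloor$, as in the proof of Theorem~\ref{thm:HS-lower}, gives $\Omega((2n)^{-1/2})=\Omega(n^{-1/2})$. The supremum over $d=m+1$ absorbs the limit, which completes the argument.
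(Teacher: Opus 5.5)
Your proposal is correct and follows essentially the same route as the paper's proof. Both group the $2n$-site pair-correlated state $\rho_{q,m,2n}$ into two blocks of $n$ sites, bound $h_{\SEP}(A_m)\le 1/(2\sqrt m)$ by the same decomposition $x=\alpha e_0+v$, $y=\beta e_0+w$, and reuse the estimate $t_{q,m,2n}/\sqrt2-1/(2\sqrt{2m})$ with $q=\lfloor n/3\rfloor$, letting $m\to\infty$ to get the $1/(6\sqrt n)$ rate.
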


\begin{proof}
Use the witness $A_m$ from
\eqref{eq:HS-lower-witness}.  If
$x=\alpha e_0+v$ and $y=\beta e_0+w$, then
\[
 \abs{\bra{x\ot y}A_m\ket{x\ot y}}
 \le
 \frac{2}{\sqrt m}
 \abs{\alpha}\norm{v}\abs{\beta}\norm{w}
 \le\frac1{2\sqrt m}.
\]
Equality holds for $x=y=(e_0+e_1)/\sqrt2$, and hence
\begin{equation}
 h_{\SEP}(A_m)=\frac1{2\sqrt m}=h(A_m).
 \label{eq:HS-Am-separable-support}
\end{equation}
Set $q=\lfloor n/3\rfloor$ and take the bosonic $2n$-site state
$\rho_{q,m,2n}$ from the proof of Theorem~\ref{thm:HS-lower}.  Grouping
its first and last $n$ sites gives a two-sided Bose-symmetric extension
whose cross-block marginal is $\rho_{q,m,2n}^{(2)}$.  The
Hilbert--Schmidt-normalized witness $A_m/\sqrt2$ and
\eqref{eq:HS-Am-separable-support} give the
finite-$m$ estimate \eqref{eq:HS-finite-m-lower}, now with
$N=2n$ and the separable set in place of $\SEPH^{(2)}(m+1)$.
With $q=\lfloor n/3\rfloor$, the right-hand side tends, as
$m\to\infty$, to at least $1/(6\sqrt n)$ by the calculation in the
proof of Theorem~\ref{thm:HS-lower}.  For each $n\ge3$, choose a
sufficiently large finite $m$ and set $d_n=m+1$.  The resulting
distance is at least $1/(12\sqrt n)$, and taking the supremum over
finite dimensions proves the claim.
\end{proof}

\end{document}